\newcommand*{\dP}{\mathds{P}_n}
\newcommand*{\dI}{\mathbb{I}}
\newcommand*{\cE}{\mathcal{E}}
\newcommand*{\cI}{\mathcal{I}}
\newcommand*{\cL}{\mathcal{L}}
\newcommand*{\cT}{\mathcal{T}}
\newcommand*{\cO}{\mathcal{O}}
\newcommand*{\cU}{\mathcal{U}}
\newcommand*{\cN}{\mathcal{N}}
\newcommand*{\cH}{\mathcal{H}}
\newcommand*{\cM}{\mathcal{M}}
\newcommand*{\cG}{\mathcal{G}}
\newcommand*{\cB}{\mathcal{B}}
\newcommand*{\fS}{\mathfrak{S}}
\newcommand{\bigzero}{\mbox{\normalfont\huge $0$}}
\newcommand{\bigeye}{\mbox{\normalfont\huge $\dI$}}
\newcommand{\rvline}{\hspace*{-\arraycolsep}\vline\hspace*{-\arraycolsep}}
\newtheorem{theorem}{Theorem}[section]
\newtheorem*{theorem*}{Theorem}
\newtheorem{lemma}[theorem]{Lemma}
\newtheorem{corollary}[theorem]{Corollary}
\newtheorem{definition}[theorem]{Definition}
\newtheorem{proposition}[theorem]{Proposition}
\newtheorem{remark}[theorem]{Remark}
\newtheorem{problem}[theorem]{Problem}
\DeclareMathOperator{\W}{Wg}
\DeclareMathOperator{\Ber}{Bern}
\DeclareMathOperator{\Haar}{Haar}
\DeclareMathOperator{\unif}{Uniform}
\DeclareMathOperator{\KL}{KL}
\DeclareMathOperator{\TV}{TV}
\newcommand*{\eps}{\varepsilon}
\newcommand*{\id}{\mathrm{id}}
\newcommand*{\tr}{\mathrm{Tr}}
\newcommand*{\pr}[1]{\mathbb{P}\left(#1 \right)}
\newcommand*{\ex}[1]{\mathbb{E}\left[#1 \right]}
\newcommand{\norm}[1]{\left\lVert#1\right\rVert}
\newcommand*{\ptr}[2]{\mathrm{Tr}_{#1}\left[#2\right]}
\newcommand*{\exs}[2]{\mathbb{E}_{#1}\left[#2 \right]}
\newcommand*{\prs}[2]{\mathbb{P}_{#1}\left[#2 \right]}
\newcommand*{\spr}[2]{\langle #1 | #2 \rangle}
\newcommand{\nnorm}[1]{{\left\vert\kern-0.25ex\left\vert\kern-0.25ex\left\vert #1 
    \right\vert\kern-0.25ex\right\vert\kern-0.25ex\right\vert}}
\newcommand{\stkout}[1]{\ifmmode\text{\sout{\ensuremath{#1}}}\else\sout{#1}\fi}
\newif\ifverbose
\newcommand{\ins}[1]{\ifverbose\textcolor{blue}{#1}\else#1\fi}
\newcommand{\edit}[2]{\ifverbose\textcolor{red}{\stkout{#1} #2}\else#2\fi}
\newcommand{\del}[1]{\ifverbose\textcolor{red}{\stkout{#1}}\fi}
\title{Hamiltonian Property Testing}
\author[1]{Andreas Bluhm}
\affil[1]{\small{Univ.\ Grenoble Alpes, CNRS, Grenoble INP, LIG, 38000 Grenoble, France}}
\author[2,3]{Matthias C.~Caro}
\affil[2]{\small{Dahlem Center for Complex Quantum Systems, Freie Universit\"at Berlin, Berlin, Germany}}
\affil[3]{\small{Department of Computer Science, University of Warwick, Coventry, UK}}
\author[4,5,6]{Aadil Oufkir}
\affil[4]{\small{Institute for Quantum Information, 
  RWTH Aachen University,
  Aachen, Germany}}
  \affil[5]{\small{Univ Lyon, Inria, ENS Lyon, UCBL, LIP,
  Lyon, France}}
  \affil[6]{\small Mohammed VI Polytechnic University, Rocade Rabat-Salé, Technopolis, Morocco}
\date{}
\begin{document}

\maketitle

\begin{abstract}   
    %\textbf{Context:} 
    Locality is a fundamental feature of many physical time evolutions.
    Assumptions on locality and related structural properties also underlie recently proposed procedures for learning an unknown Hamiltonian from access to the induced time evolution.    
    %\textbf{Research Gap:} 
    However, no protocols to rigorously test whether an unknown Hamiltonian is in fact local were known.    
    %\textbf{Results:} 
    We investigate Hamiltonian locality testing as a property testing problem, where the task is to determine whether an unknown $n$-qubit Hamiltonian $H$ is $k$-local or $\varepsilon$-far from all $k$-local Hamiltonians, given access to the time evolution along $H$. 
    First, we emphasize the importance of the chosen distance measure: With respect to the operator norm, a worst-case distance measure, incoherent quantum locality testers require $\tilde{\Omega}(2^n)$ many time evolution queries and an expected total evolution time of $\tilde{\Omega}(2^n / \varepsilon)$, and even coherent testers need $\Omega(2^{n/2})$ many queries and $\Omega(2^{n/2}/\varepsilon)$ total evolution time.
    In contrast, when distances are measured according to the normalized Frobenius norm, corresponding to an average-case distance, we give a %sample-, time-, and 
    computationally efficient incoherent Hamiltonian locality testing algorithm with query complexity $\mathcal{O}(1/\varepsilon^4)$ and total evolution time $\mathcal{O}(1/\varepsilon^3)$, based on randomized measurements. In fact, our procedure can be used to simultaneously test a wide class of Hamiltonian properties beyond locality.
    Finally, we prove that learning a general Hamiltonian remains exponentially hard with this average-case distance, thereby establishing an exponential separation between Hamiltonian testing and learning.
    %\textbf{Methods:}
    %\textbf{Impact:} 
    Our work initiates the study of property testing for quantum Hamiltonians, demonstrating that a broad class of Hamiltonian properties is efficiently testable even with limited quantum capabilities, and positioning Hamiltonian testing as an independent area of research alongside Hamiltonian learning.
\end{abstract}

\section{Introduction}

\edit{Time evolution of a quantum system according to the unitary group generated by some self-adjoint Hamiltonian is a crucial ingredient to quantum physics, going back to the early days of quantum mechanics \cite{schrodinger1926quantisierung, schrodinger1926undulatory, stone1932one-parameter}.}{In quantum mechanics, systems evolve according to the unitary group generated by some self-adjoint Hamiltonian \cite{schrodinger1926quantisierung, schrodinger1926undulatory, stone1932one-parameter}.}
This makes extracting information about an unknown Hamiltonian from access to the corresponding time evolution a central task when it comes to understanding fundamental processes in physics\ins{, which is studied for instance in quantum sensing \cite{allen2025quantumcomputingenhancedsensing}}.
Moreover, given recent progress in experimental implementations of early quantum devices, \edit{the same task}{extracting Hamiltonian information} also gains technological relevance, for instance for benchmarking applications in quantum simulation and quantum computing.

In many physically relevant quantum systems, the Hamiltonian describing the time evolution is not an arbitrary self-adjoint operator but has additional structure. In particular, as physical processes often arise from local interactions, $k$-local Hamiltonians -- Hamiltonians that can be written as a sum of terms that act non-trivially only on $k$ out of the overall $n$ many subsystems, often with $k=\mathcal{O}(\log(n))$ or $k=\mathcal{O}(1)$ -- play an important role in modeling real-world systems. Additionally, some proposed quantum computing architectures \ins{such as \cite{arute2019quantum, google2025quantum}} also come with natural locality constraints, and are thus naturally described by (possibly even geometrically) local Hamiltonian\edit{s}{ interaction terms}.
In addition to fundamental and practical relevance, locality also carries theoretical importance. For instance, a majority of recent works establishing rigorous guarantees for learning an unknown Hamiltonian from either its Gibbs state or from access to its time evolution (see \Cref{subsection:related-work} for an overview) \edit{relies}{rely} on structural assumptions on the Hamiltonian that in particular require locality, such as having a bounded-degree interaction graph.
However, no protocols for testing whether an unknown Hamiltonian satisfies such assumptions were known. \ins{The testing protocol we propose in this work can hence be used to check whether a given unknown Hamiltonian satisfies this assumption and thus serve as a preprocessing step for learning. Furthermore, by iteratively applying our protocol for different locality parameters, we can estimate the locality parameter and thereby quantify the amount of resources needed for Hamiltonian learning. Additionally, our protocol can be used to check the implementation of a known Hamiltonian, as any implementation will very likely be imperfect due to cross-talk between qubits and other sources of noise.}

In this work, we investigate the tasks of testing locality and more general properties of an unknown Hamiltonian\ins{, such as sparsity and being low-intersection (see \Cref{remark:application-sparse}),} in the framework of property testing. 
Concretely, given access to the time evolution\footnote{\ins{We note that we do not assume access to inverse time evolution. While inverse time evolution access is natural if the unknown quantum process is realized via a quantum circuit, reversing time can be hard in physically motivated scenarios; as argued in \cite{tang2025amplitudeamplificationestimationrequire}, inverse time evolution access is a non-trivial resource.}} according to an unknown Hamiltonian $H$, we aim to determine whether $H$ is $k$-local or far from all $k$-local Hamiltonians.
In fact, versions of this Hamiltonian locality testing task have already been proposed as interesting problems, albeit not studied, in \cite{montanaro2016survey, she2022unitary}.
We demonstrate that the feasibility of Hamiltonian locality testing crucially relies on how distances between Hamiltonians are measured. On the one hand, we establish hardness of locality testing with the distance measured by the operator norm. On the other hand, for the normalized Frobenius norm as distance measure, we give an efficient Hamiltonian locality tester. In fact, we show that our algorithm can be modified to efficiently test for a variety of Hamiltonian properties, specified by subsets of all possible Pauli strings. Finally, we highlight a crucial difference between Hamiltonian testing and learning: While we achieve efficient Hamiltonian property testing with respect to the normalized Frobenius norm, we show hardness of learning an arbitrary Hamiltonian with the same notion of distance.

\subsection{Problem statement: Hamiltonian locality testing}

Throughout, we consider the Pauli expansion $H=\sum_{P\in\mathds{P}_n} \alpha_P P$ of an $n$-qubit Hamiltonian $H$, where $\mathds{P}_n=\{\mathbb{I}, X,Y,Z\}^{\otimes n}$ is the set of $n$-qubit Paulis and where the $\alpha_P = \nicefrac{\tr[HP]}{2^n}$ are the (real) Pauli basis coefficients of $H$.
We call the Hamiltonian $H$ $k$-local if $\alpha_P=0$ holds for all $P\in\mathds{P}_n$ with $\lvert P\rvert >k$.
Here, $\lvert P\rvert$ denotes the weight of a Pauli string $P$, that is, the number of non-identity tensor factors in $P$.
\ins{We w.l.o.g.~assume that $\alpha_{\mathbb{I}^{\otimes n}}=0$ to avoid physically irrelevant global phases.}
We phrase the problem of testing whether an unknown Hamiltonian is (at most) $k$-local as a property testing problem:

\begin{definition}[Hamiltonian locality testing]\label{inf-def:hamiltonian-locality-testing}
    Given a locality parameter $1\leq k\leq n$, a norm $\nnorm{\cdot}$, and an accuracy parameter $\varepsilon\in (0,1)$, the Hamiltonian $k$-locality testing problem, denoted as $\mathcal{T}_{\nnorm{\cdot}}^{\mathrm{loc}}(\varepsilon)$, is the following task:
    Given access to the time evolution according to an unknown Hamiltonian $H$, decide, with success probability $\geq 2/3$, whether
    \begin{enumerate}
        \item[(i)] $H$ is $k$-local, or
        \item[(ii)] $H$ is $\varepsilon$-far from being $k$-local, that is, $\nnorm{ H - \Tilde{H}}\geq \varepsilon$ for all $k$-local Hamiltonians $\Tilde{H}$.
    \end{enumerate}
    If $H$ satisfies neither (i) nor (ii), then any output of the tester is considered valid.
\end{definition}

Here, we can use different norms $\nnorm{\cdot}$ to measure the distance between two Hamiltonians.
Motivated by operational interpretations as worst-case and average-case notions of distance, respectively, we focus on the Schatten $\infty$-norm $\norm{\cdot}_\infty$ (aka operator norm) and the normalized Schatten $2$-norm $\frac{1}{\sqrt{2^n}}\norm{\cdot}_2$ (aka normalized Frobenius norm). 

In this work, we study how easy or hard it is to achieve Hamiltonian locality testing \edit{w.r.t.~}{with respect to} these different norms.
On the one hand, we consider incoherent quantum algorithms, which can only perform measurements on single copies of time-evolved states, and which cannot interleave Hamiltonian time evolution with control operations. 
Here, we do, however, allow for adaptively chosen input states and measurements in the different experiments performed for testing.
On the other hand, we also consider more general coherent quantum algorithms, describing the most general quantum experiments that can be performed when given access to an unknown Hamiltonian time evolution.
Additionally, we may restrict the algorithms to have access to no or only few auxiliary qubits. 
See \Cref{subsection:types-of-strategies} for a detailed discussion of the kinds of quantum protocols that we consider.
Our negative results cover both the incoherent and the coherent case.
Our positive results are phrased in the framework of simple incoherent quantum testing algorithms. As such, they serve as examples of the power of simple-to-implement protocols for testing properties of a Hamiltonian, which may be feasible in short- or mid-term quantum devices.

\subsection{Main results}

An immediate approach towards solving the testing problem from \Cref{inf-def:hamiltonian-locality-testing} is to (approximately) learn the unknown Hamiltonian to a sufficient accuracy and to then decide based on whether the learned hypothesis Hamiltonian has the desired property or not.
Given that the unknown Hamiltonian $H$ can be arbitrary, instantiating this approach requires Hamiltonian learning protocols that work without any structural assumptions. Such procedures were recently proposed in \cite{caro2023learning, castaneda2023hamiltonian}.
However, as we discuss in more detail in \Cref{sec:upper-bounds-from-hamiltonian-learning}, this naive ``testing via learning'' strategy with the protocols of \cite{caro2023learning, castaneda2023hamiltonian} has an undesirable feature: 
When working with any norm except for the $\ell_\infty$-norm on the coefficient vector $(\alpha_P)_P$, $\lVert\cdot\rVert_{\mathrm{Pauli},\infty}$, it uses a number of queries to the unknown Hamiltonian and a total evolution time that both scale exponentially in $n$.
This raises the question: 

\begin{center}
    \emph{For which norms is it possible to efficiently test Hamiltonian locality?}
\end{center}

We provide the first rigorous answers to this question
In our first result \ins{(see \Cref{thm:lower-Schatten-indep,sequential-LB-op-norm} for a formal statement)}, we show that the locality testing task is hard with the Schatten $\infty$-norm:

\begin{theorem}[Hardness of Hamiltonian locality testing \edit{w.r.t.~}{with respect to }the operator norm -- Informal]\label{inf-thm:hamiltonian-locality-testing-hardness}
    For $k\leq\Tilde{\mathcal{O}}(n)$, any ancilla-free, incoherent, adaptive quantum algorithm that solves the $k$-locality testing problem $\mathcal{T}_{\norm{\cdot}_\infty}^{\mathrm{loc}}(\varepsilon)$, even only under the additional promise that the unknown Hamiltonian $H$ satisfies $\tr[H]=0$ and $\norm{H}_\infty\leq 1$, has to make at least $N\geq \tilde{\Omega}\left(2^n\right)$ queries to the unknown Hamiltonian and has to use an expected total evolution time of at least $\mathbb{E}[T]\geq \tilde{\Omega}\left(\frac{2^n}{\varepsilon}\right)$.
    Even any coherent quantum algorithm achieving the same has to make at least $N\geq \Omega\left(2^{n/2}\right)$ many queries and has to use a total evolution time of at least $T\geq \Omega\left(\frac{2^{n/2}}{\varepsilon}\right)$.
\end{theorem}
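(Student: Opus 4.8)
The plan is to reduce $\cT^{\mathrm{loc}}_{\norm{\cdot}_\infty}(\varepsilon)$ to distinguishing two carefully chosen ensembles of Hamiltonians. Fix a $k$-local Hamiltonian $H_0$ with $\tr[H_0]=0$ and $\norm{H_0}_\infty\le 1-\varepsilon$ that is ``rich''---e.g.\ a (random) $k$-local Hamiltonian whose time evolution scrambles---and let $\mathcal{S}$ be a family of $2^{\Omega(n)}$ Pauli strings of weight exceeding $k$. The ``yes'' instance is $H_0$ (which is $k$-local) and the ``no'' instance is $H_0+\varepsilon P^\star$ with $P^\star$ drawn uniformly from $\mathcal{S}$; each such $H$ satisfies $\tr[H]=0$ and $\norm{H}_\infty\le 1$, and, since $\alpha_{P^\star}(H_0)=0$, we get $\alpha_{P^\star}(H)=\varepsilon$, so $\norm{H-\Tilde H}_\infty\ge|\alpha_{P^\star}(H-\Tilde H)|=\varepsilon$ for every $k$-local $\Tilde H$, i.e.\ $H$ is $\varepsilon$-far from $k$-local. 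Any algorithm solving $\cT^{\mathrm{loc}}_{\norm{\cdot}_\infty}(\varepsilon)$ with success probability $\ge 2/3$ therefore distinguishes $H_0$ from a uniformly random $H_0+\varepsilon P^\star$ with constant advantage, and it remains to lower bound the cost of this distinguishing task.

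For an ancilla-free incoherent adaptive tester, describe a run by its transcript $\tau=(x_1,\dots,x_N)$, where the $j$-th experiment prepares a state, evolves it under $H$ for time $t_j$, and measures a POVM, all chosen as functions of $x_1,\dots,x_{j-1}$. With $q_H$ the transcript law, I would bound
\begin{equation*}
  \exs{P^\star}{\norm{q_{H_0+\varepsilon P^\star}-q_{H_0}}_{\TV}^2}\ \le\ \tfrac14\,\exs{P^\star}{\chi^2\!\left(q_{H_0+\varepsilon P^\star}\,\big\|\,q_{H_0}\right)}
\end{equation*}
and expand the $\chi^2$-divergence along the adaptive product structure (a likelihood-ratio/martingale chain rule, as in incoherent learning lower bounds). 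The crux is a per-experiment estimate: in the interaction picture the $j$-th step perturbs the outcome distribution only through $e^{-i\varepsilon\int_0^{t_j}\tilde P^\star(s)\,ds}$ with $\tilde P^\star(s)=e^{iH_0 s}P^\star e^{-iH_0 s}$, which is $\min\{\varepsilon t_j,2\}$-close to the identity in operator norm; averaging over $P^\star\in\mathcal{S}$, a Pauli-twirl/orthogonality argument (distinct Paulis conjugated by $H_0$ span orthogonal subspaces, and a richly scrambling $H_0$ keeps the reference outcome probabilities $\gtrsim 2^{-n}$ so the $\chi^2$-denominators are under control) yields a contribution $O\!\left(\min\{(\varepsilon t_j)^2,1\}\cdot 2^{-\Omega(n)}\right)$, the suppression coming from the $2^{\Omega(n)}$ candidate Paulis. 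Summing, using $\sum_j\min\{(\varepsilon t_j)^2,1\}\le\min\{N,\varepsilon\sum_j t_j\}$, and taking expectations with the concavity of $s\mapsto\min\{N,s\}$ gives $\exs{P^\star}{\norm{q_{H_0+\varepsilon P^\star}-q_{H_0}}_{\TV}^2}\lesssim 2^{-\Omega(n)}\min\{N,\varepsilon\ex{T}\}$; for this to be $\Omega(1)$ one needs $\min\{N,\varepsilon\ex{T}\}\ge 2^{\Omega(n)}$, i.e.\ both $N\ge\tilde\Omega(2^n)$ and $\ex{T}\ge\tilde\Omega(2^n/\varepsilon)$.

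For a general coherent tester---which interleaves the evolutions $e^{-iHt_j}$ on the principal register with arbitrary control unitaries and measures only at the end---the bound follows from a hybrid (BBBV-type)/adversary argument with the same averaging over $P^\star$: from $\norm{|\psi_N^{H_0+\varepsilon P^\star}\rangle-|\psi_N^{H_0}\rangle}\le\sum_j\norm{(e^{-i(H_0+\varepsilon P^\star)t_j}-e^{-iH_0 t_j})\otimes\dI\,|\phi_j\rangle}$ and the twirled estimate $\exs{P^\star}{\norm{(e^{-i(H_0+\varepsilon P^\star)t}-e^{-iH_0 t})\otimes\dI\,|\phi\rangle}^2}\lesssim\min\{\varepsilon^2 t^2\,2^{-\Omega(n)},1\}$ one obtains the quadratically weaker conclusion. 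More transparently for the query count, one can reduce from unstructured search over the $2^{\Omega(n)}$ candidates $P^\star$: a duration-$t$ time-evolution query simulates $O(\varepsilon t)$ calls to a ``$P^\star$-partial-reflection'' oracle, so a Grover-type lower bound for partial-reflection oracles forces $N\ge\Omega(2^{n/2})$ rounds and $\varepsilon T\ge\Omega(2^{n/2})$, i.e.\ $T\ge\Omega(2^{n/2}/\varepsilon)$.

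The main difficulty is the first step: making the hidden non-locality genuinely undetectable rather than merely invisible to a naive strategy. A single-Pauli perturbation $\varepsilon P^\star$ on the zero---or on any efficiently characterisable---background is easy to detect (a random probe reveals that the state has moved, and for $t\approx\pi/(2\varepsilon)$ the evolution essentially realises the gate $P^\star$, whose support and weight one can then read off), so $H_0$ must be $k$-local yet rich enough that it cannot be learnt and ``undone'' within the budget and keeps the reference distributions spread out; this is why the statement is restricted to $k\le\Tilde{\mathcal{O}}(n)$, the regime where $k$-local Hamiltonians have super-polynomially many parameters. The second delicate point is making the per-experiment averaging bound hold uniformly in $t_j$: the interaction-picture linearisation and Pauli twirl give the clean $2^{-\Omega(n)}$ suppression when $\varepsilon t_j=O(1)$, but experiments with $\varepsilon t_j\gg 1$ need separate treatment (each already consumes $\gg 1/\varepsilon$ of the time budget, or one falls back on the crude operator-norm bound $\le 2$), and converting $\sum_j\min\{(\varepsilon t_j)^2,1\}$ into $\min\{N,\varepsilon\ex{T}\}$ while respecting adaptivity and the promise $\norm{H}_\infty\le 1$ requires care. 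Finally, the quadratic gap between the coherent ($2^{n/2}$) and incoherent ($2^n$) bounds is intrinsic---it reflects Grover-type speedups---and pinning down the exact exponent in the coherent time bound requires the hybrid/search estimate to be tight in its $t$-dependence.
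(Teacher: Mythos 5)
Your reduction hinges on the ensemble $H_0$ versus $H_0+\varepsilon P^\star$ with $P^\star$ a uniformly random high-weight Pauli, together with the per-experiment suppression $O\left(\min\{(\varepsilon t_j)^2,1\}\cdot 2^{-\Omega(n)}\right)$. That step is the gap, and in fact the ensemble itself is not hard. The lower bound must hold against arbitrary, computationally unbounded, ancilla-free incoherent testers, and $H_0$ is fixed and known; such a tester can prepare $\ket{0^n}$ (and, in a second experiment type, $\ket{+^n}$), evolve for a constant time $t$, and measure in the rotated orthonormal basis $\{e^{-\mathrm{i}H_0t}\ket{j}\}_j$ (resp.\ the rotated Hadamard basis). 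Under the null the outcome is $j=0$ with probability $1$; under the alternative, the interaction-picture first-order amplitude to leak to $j\neq0$ is $-\mathrm{i}\varepsilon\int_0^t\bra{j}e^{\mathrm{i}H_0s}P^\star e^{-\mathrm{i}H_0s}\ket{0}\,ds$, and since $\norm{H_0}_\infty\le1$ the conjugated Pauli drifts slowly, so for every $P^\star$ with an $X$ or $Y$ factor (all-$Z$ strings are caught by the $\ket{+^n}$ probe) the total leakage probability per constant-time experiment is $\Theta(\varepsilon^2 t^2)$ --- with no $2^{-\Omega(n)}$ factor, however scrambling $H_0$ is. Averaging over the $2^{\Omega(n)}$ candidates does not help: Pauli orthogonality only spreads the signal over different outcomes $j$, and TV/$\chi^2$ sum over outcomes; worse, the test is one-sided (any $j\neq0$ certifies the alternative), which is exactly the zero-denominator regime your $\chi^2$ chain rule cannot tolerate. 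So $O(1/\varepsilon^2)$ experiments and total time $O(1/\varepsilon^2)$ distinguish your hypotheses, independently of $n$, and no refinement of the divergence machinery can produce $\tilde\Omega(2^n)$ from this construction. The same counterexample ($H_0=0$, $\ket{\phi}=\ket{0^n}$) falsifies your coherent per-step estimate $\exs{P^\star}{\norm{(e^{-\mathrm{i}(H_0+\varepsilon P^\star)t}-e^{-\mathrm{i}H_0t})\ket{\phi}}^2}\lesssim\varepsilon^2t^2\,2^{-\Omega(n)}$, because a Pauli is unitary and full rank, and the Grover analogy also fails: identifying an unknown Pauli from its time evolution is a Bernstein--Vazirani-type task solvable with $O(1)$ entangled queries (or $O(1)$ product-state probes of strength $\varepsilon t=\Theta(1)$), not an unstructured search.

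The paper's construction differs in exactly the respect your argument needs: the alternative is $H=\varepsilon\left(V\ket{0}\bra{0}V^\dagger-\dI/d\right)$ with $V$ Haar-random, a \emph{rank-one} perturbation in a Haar-random direction. For any fixed input state and rank-one POVM element, $\exs{V}{V\ket{0}\bra{0}V^\dagger}=\dI/d$ and Weingarten calculus give an expected per-experiment signal of order $(1-\cos\eta t)/d$, which is the genuine source of the $2^{-n}$ suppression that a random Pauli direction cannot provide; the price is that farness from all $k$-local Hamiltonians is no longer a one-line Pauli-coefficient bound and is instead proved by Lipschitz concentration for Haar unitaries plus a union bound over an $\varepsilon$-net of the $k$-local class (this is where $k\le\mathcal{O}(n/\log n)$ enters). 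The paper also confronts precisely the small/zero-denominator problem exhibited by the attack above: it replaces the alternative by the mixture $\alpha\,\id+(1-\alpha)\,\cU_t$ with $\alpha=1/(10N)$, proves a generalized Le Cam bound for this mixture, and splits paths according to whether $\bra{\phi}\rho\ket{\phi}\le(1-\cos\eta t)/d^2$, which is also where the extra $\log N$ in the $\tilde\Omega$ originates; the coherent bound then follows from a telescoping hybrid in which the rank-one structure of $U_t-\dI$ again supplies the $1/\sqrt{d}$ factor. Salvaging your route would require randomizing $H_0$ itself and proving hardness against a composite null, a substantially different argument that your transcript-law analysis (which fixes $q_{H_0}$ as the reference) does not cover.
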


Here and throughout, we use $\Tilde{\Omega}$ to hide factors that are polylogarithmic in the leading order term.
\Cref{inf-thm:hamiltonian-locality-testing-hardness} says that testing $k$-locality is not possibly query-efficiently or with efficient evolution time when using the Schatten $\infty$-norm as a distance measure. In fact, we show this for all Schatten $p$-norms, $p\geq 1$.
With our next result, we demonstrate that this changes significantly when considering the normalized Schatten $2$-norm instead, which (see \Cref{appendix:normalized-frobenius-norm}) corresponds to a an average-case distance measure for unitaries and channels \cite{nielsen2002simple} that has recently been considered in a variety of testing and learning contexts \cite{montanaro2016survey, huang2022quantum, caro2022out-of-distribution, caro2023learning, zhao2023learning, bao2023testing,  huang2024learning, nadimpalli2024paulispectrum, vasconcelos2024learningshallowquantumcircuits}. 
With this distance measure, we can achieve efficiency in terms of the number of queries, the total evolution time, and even with respect to the classical post-processing time.
In particular, we give a procedure that tests whether an unknown Hamiltonian is $k$-local and that achieves this with a polynomial number of queries, a polynomial overall evolution time, and with polynomial classical post-processing time.

\begin{theorem}[Efficient Hamiltonian locality testing \edit{w.r.t.~}{with respect to }normalized Frobenius norm -- Informal]\label{inf-thm:hamiltonian-locality-testing-normalized-frobenius}
    Let $k\leq \Tilde{\mathcal{O}}(n)$.
    When promised that the unknown Hamiltonian $H$ satisfies $\tr[H]=0$ and $\norm{H}_\infty\leq 1$, there is an ancilla-free, incoherent, non-adaptive quantum algorithm that solves the Hamiltonian $k$-locality testing problem $\mathcal{T}_{\frac{1}{\sqrt{2^n}}\norm{\cdot}_2}^{\mathrm{loc}}(\varepsilon)$ using $\mathcal{O}\left(\varepsilon^{-4}\right)$ many queries to the unknown Hamiltonian, a total evolution time of $\mathcal{O}\left(\varepsilon^{-3}\right)$, and a classical post-processing time of $\mathcal{O}\left(\frac{n^{k+3}}{\varepsilon^4}\right)$. 
    Moreover, the testing algorithm uses only stabilizer states as inputs and stabilizer basis measurements at the output.
\end{theorem}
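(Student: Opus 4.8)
The plan is to estimate, from randomized (classical-shadow-type) measurements on time-evolved stabilizer states, a low-degree polynomial in $t$ whose coefficients reveal the ``high-weight Pauli mass'' $\sum_{|P|>k}\alpha_P^2$ of the unknown Hamiltonian. The key observation is that the squared normalized Frobenius distance from $H$ to the nearest $k$-local Hamiltonian is exactly this mass: since the Pauli basis is orthogonal w.r.t.\ the normalized Hilbert--Schmidt inner product, the optimal $k$-local approximant is obtained by truncating the Pauli expansion to weight $\le k$, so $\left(\tfrac{1}{\sqrt{2^n}}\norm{H-\tilde H}_2\right)^2 = \sum_{|P|>k}\alpha_P^2$ for the best $\tilde H$. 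Thus the tester must distinguish the case where this quantity is $0$ from the case where it is $\ge\eps^2$.

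Concretely, I would run the evolution $e^{-iHt}$ for a short time $t = \Theta(\eps)$ on Haar-random (equivalently, uniformly random stabilizer) single-qubit product input states, and perform random Pauli / stabilizer-basis measurements at the output, collecting the resulting classical shadows. Expanding $e^{-iHt} = \mathbb{I} - iHt + O(t^2)$, the channel $\rho \mapsto e^{-iHt}\rho e^{iHt}$ has, to leading order in $t$, a piece proportional to $t^2$ that encodes the Pauli coefficients $\alpha_P$ quadratically; averaging the appropriate shadow estimator over random inputs/outputs projects precisely onto $\sum_P w(|P|)\,\alpha_P^2$ for an explicit, known, strictly positive weight function $w$. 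By taking a second (known) choice of measurement weighting — or, more cleanly, by using the structure that the weight depends only on $|P|$ — one can form an estimator for $\sum_{|P|>k}\alpha_P^2$. The number of terms one must track in the classical post-processing is the number of Paulis of weight $\le k$, which is $\sum_{j\le k}\binom{n}{j}3^j = O(n^k)$, explaining the $n^{k+3}/\eps^4$ post-processing bound; the extra $n^3/\eps^4$ absorbs the per-sample shadow bookkeeping. Sample-complexity-wise, each shadow gives a variance-$O(1)$ estimate of the relevant degree-$\le 4$ polynomial evaluated at $t=\Theta(\eps)$, but we are trying to resolve a signal of size $\Theta(t^2 \cdot \eps^2) = \Theta(\eps^4)$ against $O(1)$ fluctuations — hence $O(1/\eps^8)$? — so the cleaner route is to extract the $t^2$-coefficient itself by fitting/finite-differencing across a few times $t=\Theta(1)$ (or $t=\Theta(\eps)$ only in the final gap), giving a signal of order $\eps^2$ and the claimed $O(1/\eps^4)$ samples, each of evolution time $O(1)$ but with the total dominated by the $O(\eps^{-3})$ regime needed for the Taylor remainder to be controlled.

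The main obstacle is controlling the higher-order terms in the Dyson/Taylor expansion of $e^{-iHt}$: the estimator only cleanly equals $\sum_P w(|P|)\alpha_P^2$ at exactly order $t^2$, and the $O(t^3)$ and higher corrections involve $\norm{H}_\infty$ (which is why the promise $\norm{H}_\infty\le 1$ is used) and must be shown to contribute $o(\eps^2)$ to the fitted coefficient — this is where the evolution time budget $t$ (and hence the $O(\eps^{-3})$ total time) gets set. The second delicate point is verifying that the randomized-measurement estimator's variance is $O(1)$ uniformly, so that a mean estimator with $O(1/\eps^4)$ samples concentrates to additive error $o(\eps^2)$; this follows from standard classical-shadow variance bounds for bounded observables, but needs to be combined carefully with the finite-difference scheme so that no $1/t$ blow-up creeps in. Finally, one must check that restricting to stabilizer inputs and stabilizer-basis measurements (rather than general Haar-random ones) still yields a $3$-design-like averaging sufficient to isolate the weight-dependent quadratic form — this is a now-standard fact about stabilizer $3$-designs and only affects constants.
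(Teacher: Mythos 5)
Your high-level setup matches the paper's: Parseval shows the squared normalized Frobenius distance to the nearest $k$-local Hamiltonian is exactly $\sum_{|P|>k}\alpha_P^2$, and the paper likewise evolves random stabilizer states for a short time $t=\Theta(\eps)$ and measures in a stabilizer basis. But the heart of your argument -- how the claimed $\mathcal{O}(\eps^{-4})$ queries and $\mathcal{O}(\eps^{-3})$ total time are actually achieved -- has a genuine gap. Your own accounting gives a signal of size $\Theta(t^2\eps^2)=\Theta(\eps^4)$ against $O(1)$ shadow fluctuations, i.e.\ $\mathcal{O}(\eps^{-8})$ samples, and the proposed repair (extracting the $t^2$ coefficient by fitting or finite-differencing at $t=\Theta(1)$) does not work: with only the promise $\norm{H}_\infty\le 1$, the cubic and higher Taylor terms at constant $t$ are $O(1)$, not $o(\eps^2)$, so they cannot be separated from the $\eps^2$-sized coefficient without genuine derivative-estimation machinery (Chebyshev interpolation at many small times, as in the Hamiltonian-learning literature), which would change both the algorithm and its complexity. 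The assertion that the total evolution time ends up ``dominated by the $\mathcal{O}(\eps^{-3})$ regime'' is never derived. You also never control the false-positive side: when $H$ is exactly $k$-local, nothing in the shadow-estimation scheme prevents the $O(t^3)$, $O(t^4)$ corrections from being mistaken for high-weight mass.

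The missing idea is to replace mean estimation by detection of a rare binary event, exploiting the MUB/stabilizer structure. For the stabilizer MUBs one has $|\bra{\phi_{i,\ell}}P\ket{\phi_{i,j}}|\in\{0,1\}$ for every Pauli $P$, so if the measured outcome $\ell$ is \emph{not} related to the input $j$ by any Pauli of weight $\le k$, then both $\bra{\phi_{i,\ell}}\dI\ket{\phi_{i,j}}$ and $\bra{\phi_{i,\ell}}H\ket{\phi_{i,j}}$ vanish exactly, so the per-round ``violation detected'' probability is at most $t^4$ when $H$ is $k$-local, while the $2$-design property of the MUB ensemble gives a violation probability at least $\tfrac{t^2\eps^2}{2}$ when $H$ is $\eps$-far (for $t=\eps/6$ and after absorbing the $|S|/d$ term). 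Since the variance of an indicator with mean $p$ is of order $p$ rather than $O(1)$, distinguishing violation rate $\gtrsim\eps^4$ from $\lesssim \eps^4/\mathrm{const}$ takes only $N=\mathcal{O}(1/(t^2\eps^2))=\mathcal{O}(\eps^{-4})$ rounds, giving total evolution time $Nt=\mathcal{O}(\eps^{-3})$; this is exactly the step your variance argument cannot supply.
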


In fact, while we state \Cref{inf-thm:hamiltonian-locality-testing-normalized-frobenius} \ins{(the formal version of which can be found in \Cref{corollary:locality-testing-upper-bound})} in terms of locality testing, our procedure allows us to establish a more general Hamiltonian property testing result. 
In particular, we show \ins{(see \Cref{thm:upper bound on testing} for a detailed statement)}: Let $S\subset\mathds{P}_n$ be a subset with $\lvert S\rvert\leq \mathcal{O}(\mathrm{poly}(n))$, let $\varepsilon\geq \Omega(1/\mathrm{poly(n)})$.
Then, we can efficiently test, even tolerantly, whether $H$ consists only of Pauli terms in $S$ or whether $H$ is $\varepsilon$-far \edit{w.r.t.~}{with respect to }$\frac{1}{\sqrt{2^n}}\norm{\cdot}_2$ from the set of all such Hamiltonians. 
More precisely, we can do so with $\mathcal{O}\left(1/\varepsilon^4\right)\leq \mathcal{O}(\mathrm{poly}(n))$ many queries to $H$, with a total evolution time of $\mathcal{O}\left(1/\varepsilon^3\right)\leq \mathcal{O}(\mathrm{poly}(n))$, and with a classical post-processing time of $\mathcal{O}\left(n^2 \lvert S\rvert/\varepsilon^4\right)\leq \mathcal{O}(\mathrm{poly}(n))$.
Again, this testing procedure uses only stabilizer state inputs and stabilizer basis measurements at the output, both of which are efficiently implementable.
Thus, by choosing a suitable set $S$, we can for example test whether an unknown Hamiltonian is exactly $k$-local, (at most or exactly) geometrically $k$-local, or whether it has a desired interaction graph.

Our algorithm achieving the guarantees in \Cref{inf-thm:hamiltonian-locality-testing-normalized-frobenius} and its extended version to more general Hamiltonian properties, specified by some subset $S$, are novel additions to the randomized measurement framework \cite{elben2022randomized}. In particular, we inherit the ``measure first, ask questions later'' feature. That is, the data in our testing algorithm can be collected even without knowing the Hamiltonian property that is to be tested as long as an a priori bound on its size and the desired accuracy are known in advance; and once collected the data can be used to test multiple properties simultaneously.
This then allows us to test properties such as whether a Hamiltonian has a sparse Pauli basis expansion or whether it is a low-intersection Hamiltonian.

\ins{While \Cref{inf-thm:hamiltonian-locality-testing-normalized-frobenius} assumes $\norm{H}_\infty\leq 1$, we can easily extend the result. Namely, assuming $\norm{H}_\infty\leq B$, we can rescale the accuracy to $\tilde{\varepsilon}=\varepsilon/B$ and test whether $\Tilde{H}=H/B$ is $k$-local or $\Tilde{\varepsilon}$-far from $k$-local. As all relevant complexities scale inverse-polynomially in the accuracy, this results in polynomial dependencies on $B$. Hence, our procedure is efficient for $B\leq\mathcal{O}(\mathrm{poly}(n))$, which is for instance the case if $H$ is a sum of polynomially many Pauli terms with bounded Pauli coefficients.}

In our final result, we highlight a large separation between Hamiltonian testing and learning. 
On the one hand, we have shown that normalizing the Frobenius norm makes it possible to efficiently test arbitrary Hamiltonian properties. 
On the other hand, even after normalizing the distance measure, learning a general Hamiltonian from time evolution access remains hard:

\begin{theorem}[Hardness of Hamiltonian learning \edit{w.r.t.~}{with respect to }normalized Frobenius norm -- Informal]\label{inf-thm:hardness-hamiltonian-learning}
    Any (even coherent) quantum algorithm that, when given time evolution access to an arbitrary $n$-qubit Hamiltonian $H$, promised to satisfy $\tr[H]=0$ and $\norm{H}_\infty\leq 1$, with success probability $\geq 2/3$, outputs (the classical description of) a Hamiltonian $\hat{H}$ such that $\frac{1}{\sqrt{2^n}}\norm{H - \hat{H}}_2\leq\varepsilon$ has to make at least $\tilde{\Omega}\left(2^{2n}\right)$ many queries to $H$.
    Any non-adaptive incoherent quantum algorithm achieving the same without auxiliary qubits has to use a total evolution time of at least $\tilde{\Omega}\left(\frac{2^{2n}}{\varepsilon}\right)$.
\end{theorem}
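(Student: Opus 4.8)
The plan is a packing-plus-Fano lower bound. First I would exhibit a large family of candidate Hamiltonians $\{H_x\}_x$ that are pairwise far apart in normalized Frobenius norm, so that any learner meeting the stated guarantee must, with probability $\geq 2/3$, recover the index $x$; Fano's inequality then forces the mutual information between a uniformly random $x$ and the learner's transcript to be at least $\Omega(\log\lvert\mathcal{X}\rvert)$, where $\mathcal{X}$ is the family. The natural choice is the random-sign Pauli ensemble: for $x\in\{\pm1\}^{4^n-1}$ put $H_x=\theta\sum_{P\in\mathds{P}_n\setminus\{\mathbb{I}\}}x_P P$ with $\theta\asymp\varepsilon/2^n$ (up to a $\mathrm{polylog}(2^n)$ factor). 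Each $H_x$ is traceless with $\tfrac{1}{\sqrt{2^n}}\norm{H_x}_2=\norm{(\alpha_P)_x}_2\asymp\varepsilon$, and matrix concentration (non-commutative Khintchine, or the sharper spectral estimates for random Pauli sums which yield $\norm{H_x}_\infty=O(\tfrac{1}{\sqrt{2^n}}\norm{H_x}_2)$ for almost every $x$) shows $\norm{H_x}_\infty\leq 1$ for a $(1-o(1))$-fraction of $x$. Restricting to those and invoking Gilbert--Varshamov produces a subfamily $\mathcal{X}$ with $\log\lvert\mathcal{X}\rvert=\Omega(4^n)$ whose members are pairwise $\geq\varepsilon$-separated (up to constants), so that $I(x:\mathrm{transcript})\geq\Omega(4^n)$ for any successful learner.

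It then remains to upper bound this mutual information by the available resources. For a non-adaptive incoherent ancilla-free algorithm the $N$ rounds are conditionally independent given $x$, so $I(x:\mathrm{transcript})\leq\sum_{j=1}^N I(x:o_j)$, where round $j$ uses input $\rho_j$, evolution time $t_j$, and a (without loss of generality rank-one) POVM $\{M_j^o=\lambda_o|\psi_o\rangle\langle\psi_o|\}$. I would bound each $I(x:o_j)$ in two ways. In the short-time regime $t_j\lesssim 1/\varepsilon$ I would start from the exact identity $I(x:o_j)\leq\mathbb{E}_x\chi^2(p_j^x\|\bar p_j)=\sum_o\mathrm{Var}_x\big(\tr[M_j^o e^{-iH_x t_j}\rho_j e^{iH_x t_j}]\big)/\tr[M_j^o\bar\rho_j]$, expand $e^{-iH_x t_j}\rho_j e^{iH_x t_j}$ to first order in $t_j$ (the $x$-average kills the linear term), and apply the Pauli--Parseval identity $\mathbb{E}_x|\tr[H_x A]|^2=\theta^2\sum_P|\tr[PA]|^2=\theta^2 2^n\norm{A}_2^2$ with $A=[\rho_j,M_j^o]$ together with $\norm{[\rho_j,|\psi_o\rangle\langle\psi_o|]}_2^2\leq 2\langle\psi_o|\rho_j|\psi_o\rangle$; the weights $\langle\psi_o|\rho_j|\psi_o\rangle$ cancel between numerator and denominator, and summing over $o$ gives $I(x:o_j)\lesssim t_j^2\theta^2 4^n\asymp t_j^2\varepsilon^2/\mathrm{polylog}(2^n)$. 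For arbitrary $t_j$ I would instead bound $I(x:o_j)\leq\mathbb{E}_x\KL(p_j^x\|q_j)$ against the maximally mixed reference $q_j(o)=\tr[M_j^o\mathbb{I}/2^n]$ (using that $\bar p_j$ minimizes the average $\KL$); since $\langle\psi_o|\rho_j^x|\psi_o\rangle\leq 1$, each $\KL(p_j^x\|q_j)=\sum_o p_j^x(o)\log(2^n\langle\psi_o|\rho_j^x|\psi_o\rangle)\leq n\log 2$, so $I(x:o_j)=O(n)$ for every $t_j$.

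Combining, $\Omega(4^n)\leq\sum_j I(x:o_j)\leq\sum_{j:\,t_j\lesssim 1/\varepsilon}t_j^2\varepsilon^2/\mathrm{polylog}(2^n)+\#\{j:t_j\gtrsim 1/\varepsilon\}\cdot O(n)$. If the second term carries the load then $N\geq\#\{j:t_j\gtrsim 1/\varepsilon\}\geq\tilde{\Omega}(4^n)$ and $T=\sum_j t_j\geq\#\{j:t_j\gtrsim 1/\varepsilon\}/\varepsilon\geq\tilde{\Omega}(4^n/\varepsilon)$. Otherwise $\sum_{j:t_j\lesssim 1/\varepsilon}t_j^2\geq\tilde{\Omega}(4^n/\varepsilon^2)$; since each such $t_j$ is $\lesssim 1/\varepsilon$ this forces $N\geq\tilde{\Omega}(4^n)$, and by Cauchy--Schwarz $T\geq\sum_{j:t_j\lesssim 1/\varepsilon}t_j^2/\max_j t_j\geq\tilde{\Omega}(4^n/\varepsilon^2)\cdot\varepsilon=\tilde{\Omega}(4^n/\varepsilon)$. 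Either way this gives $N\geq\tilde{\Omega}(2^{2n})$ and $T\geq\tilde{\Omega}(2^{2n}/\varepsilon)$ for the non-adaptive incoherent ancilla-free model. For a general coherent algorithm with $N$ queries, the same packing reduces learning to identifying $x$, and I would bound $I(x:\mathrm{transcript})\leq\tilde{O}(N)$ via the tree (branching-process) representation of adaptive oracle algorithms, controlling the likelihood ratio along each root-to-leaf path query by query; morally, a single use of the $n$-qubit time evolution conveys at most $\tilde{O}(1)$ bits about which member of the ensemble is present, which yields $N\geq\tilde{\Omega}(2^{2n})$.

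The main obstacle is making the per-query estimate rigorous uniformly in the evolution time. The short-time $\chi^2$/Parseval bound is perturbative, and obtaining its validity up to $t_j\lesssim 1/\varepsilon$ — which is what one needs to harvest the $1/\varepsilon$ factor in the time bound — rests on all $H_x$ in the packing having operator norm $O(\varepsilon)$ rather than merely $\leq 1$, hence on the sharper spectral estimate for random Pauli sums; and for intermediate times one cannot simply switch to a relative-entropy bound against $\bar\rho_j$, since the $x$-averaged time-evolved state may still be close to pure. A convenient non-perturbative tool here is the exact identity $\mathbb{E}_x\norm{e^{-iH_x t}\rho e^{iH_x t}-\bar\rho}_2^2=\tfrac12\mathbb{E}_{x,x'}\norm{[e^{iH_x t}e^{-iH_{x'} t},\rho]}_2^2\leq 2t^2\,\mathbb{E}_{x,x'}\norm{H_x-H_{x'}}_\infty^2$, which together with matrix concentration controls the relevant variances for all $t$, but it must still be interfaced carefully with the $\chi^2$ denominators. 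In the coherent case the analogous difficulty is bounding the per-query information in the presence of unbounded auxiliary registers and full adaptivity, which one handles by adapting the branching-process lower-bound machinery developed for learning unitary channels.
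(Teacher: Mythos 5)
Your overall architecture — a doubly-exponential packing of admissible Hamiltonians, Fano's inequality to force $\mathcal I(X:\text{transcript})\geq\Omega(4^n)$, and per-round information upper bounds (quadratic in $t_j$ for short times, $O(n)$ otherwise) combined through conditional independence in the non-adaptive incoherent case — is the same strategy the paper uses, and your case analysis correctly extracts both $N\geq\tilde\Omega(4^n)$ and $T\geq\tilde\Omega(4^n/\varepsilon)$ for that model. The differences are in the ingredients: the paper packs with $H_U=\varepsilon UOU^\dagger$ for Haar-random $U$ (so that $H_U^2=\varepsilon^2\mathbb{I}$ and $\mathrm{e}^{\mathrm{i}tH_U}=\cos(t\varepsilon)\mathbb{I}+\mathrm{i}\sin(t\varepsilon)UOU^\dagger$ exactly, with Weingarten calculus controlling the Haar averages for \emph{all} $t$), whereas you pack with random-sign Pauli sums and a perturbative $\chi^2$/Parseval bound. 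This is a legitimate alternative, and you correctly identify its main analytic cost (Taylor remainders and small $\chi^2$ denominators); the paper's ensemble is chosen precisely to sidestep that obstacle. One technical point you gloss over in either ensemble: Fano requires the mutual information for $X$ uniform on the \emph{pruned} packing set (your GV code intersected with the bounded-operator-norm event), while your Parseval/variance computation uses the full i.i.d.\ sign ensemble; some transfer argument is needed (the paper handles the analogous mismatch between its finite packing and the Haar average by a Hoeffding concentration step, which is where its $\sqrt{\mathrm{poly}(2^n)/M}$ correction comes from).

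The genuine gap is the coherent query lower bound, which is the first claim of the theorem. You reduce it to showing $\mathcal I(X:\text{transcript})\leq\tilde O(N)$ for a fully coherent learner, but the mechanism you propose — the learning-tree/branching-process representation with likelihood ratios along root-to-leaf paths — is a tool for \emph{incoherent} (single-copy, measure-each-round) protocols; a coherent algorithm with an unbounded ancilla interleaves the $N$ uses of $\mathrm{e}^{-\mathrm{i}t_kH}$ with arbitrary channels and performs a single final measurement, so there are no per-query classical outcomes or likelihood ratios to track, and that machinery does not apply as stated. The needed bound is in fact much simpler than what you sketch: upper-bound the mutual information by the Holevo quantity of the ensemble of final states and telescope it over the $N$ channel uses, using subadditivity and the Araki–Lieb triangle inequality for the von Neumann entropy to show each use of the unknown evolution (which acts on only $n$ qubits, whatever the ancilla size) increases the Holevo information by at most $2n$; this yields $\mathcal I(X:Y)\leq 2Nn$ and hence $N\geq\Omega(4^n/n)$, which is exactly how the paper proves this part. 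Without replacing your coherent-case sketch by an argument of this type (or another argument valid with unbounded quantum memory), the first statement of the theorem is not established.
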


Juxtaposing \Cref{inf-thm:hamiltonian-locality-testing-normalized-frobenius}, and its extension to general properties, with \Cref{inf-thm:hardness-hamiltonian-learning} \ins{(and its formal version, given as \Cref{thm:hardness-hamiltonian-learning})}, we see that the naive ``testing via general learning'' approach fails for Hamiltonian property testing if we do not have prior promises on the structure of the unknown Hamiltonian.
In fact, Hamiltonian locality testing, and even more general Hamiltonian property testing, is significantly easier than the infeasible task of general Hamiltonian learning, but requires approaches tailored specifically to testing.
Finally, we note that the reasoning behind \Cref{inf-thm:hardness-hamiltonian-learning} also gives rise to an ${\Omega}\left(\frac{2^{2k}}{k}\right)$ query complexity lower bound for any coherent quantum algorithm  that learns an unknown $k$-local Hamiltonian to accuracy $\varepsilon$ in normalized Frobenius norm. Thus we obtain a superexponential-in-$k$ separation between learning under a $k$-locality promise and testing for that same promise, which constitutes a learning versus testing separation in the sense commonly considered in property testing.

\subsection{Related work}\label{subsection:related-work}

\paragraph{Classical and quantum property testing.}

Since its origins \cite{blum1993self, rubinfeld1996robust, goldreich1998property}, property testing has evolved into an important area of theoretical computer science, with connections to, among others, learning theory and probabilistically checkable proofs \cite{ron2008property, goldreich2017introduction, bhattacharyya2022property}.
Among the plethora of classical property tasks, those of low-degree testing \cite{alon2003testing} and junta testing \cite{fischer2004testing,blais2009testing} can be viewed as counterparts of quantum Hamiltonian locality testing. Similarly, testing for Fourier sparsity of a Boolean function \cite{gopalan2011testing} can be viewed as a classical version of testing whether a Hamiltonian has a sparse Pauli basis expansion.
More recently, the field of quantum property testing \cite{montanaro2016survey} has emerged. 
It includes a long line of works on testing properties of classical objects from quantum data access \cite{deutsch1985quantum, deutsch1992rapid, simon1997power, atici2007quantum, buhrman2008quantum, chakraborty2010newresults, ambainis2011quantum, bravyi2011quantum, hillery2011quantum, aaronson2018forrelation, ambainis2016efficient, gilyen2020distributional}; investigations into testing properties of quantum states \cite{harrow2013testing, odonnell2015quantum, harrow2017sequential, carmeli2017probing, buadescu2020lower, gross2021schur, soleimanifar2022testing, grewal2023improved}; proofs of proximity for unitary properties \cite{dallagnol2022quantumproofsof}; unitary and channel versions of junta testing \cite{chen2023testing, bao2023testing}; unitary property testing more broadly \cite{laborde2022quantum, she2022unitary}; and Hamiltonian symmetry testing \cite{laborde2022quantum}. 

\paragraph{Hamiltonian learning.}
One way to infer properties of a Hamiltonian clearly is to learn the coefficients of the Hamiltonian. There has been a lot of work on this topic, hence the references below should not be understood as a complete review of the field. For our purposes, there are two approaches to Hamiltonian learning. In the first, one has access to the unitary dynamics of the Hamiltonian of interest \cite{Silva2011Practical, Bairey2019Learning, Zubida2021Optimal, haah2022optimal, wilde2022scalably, yu2023robust, caro2023learning, Dutkiewicz.2023, huang2023heisenberg, castaneda2023hamiltonian, li2023heisenberglimited, möbus2023dissipationenabled, franca2024efficient, Gu2022Practical}. We are allowed to prepare appropriate initial states, choose how long the system should evolve, and perform measurements of our choosing afterwards. In the second, our aim is to learn the Hamiltonian from copies of Gibbs states, i.e., thermal equilibrium states \cite{anshu2021sample, haah2022optimal, rouze2023learning, onorati2023efficient, bakshi2023learning, Gu2022Practical}. For these algorithms to be efficient, it is usually assumed that the Hamiltonian to be learned is local. Our results can therefore be seen as complementary to the above protocols for Hamiltonian learning: we first determine the locality of the Hamiltonian using our results and then run an appropriate learning protocol.

\paragraph{Coherent vs.~incoherent quantum learning.}
Quantum-enhanced learning algorithms can use advanced quantum processing, such as multi-copy measurements and coherent long-time evolutions with interleaving control operations, to achieve a quantum advantage over conventional algorithms.
Recent work has investigated such advantages in learning and testing quantum states \cite{bubeck2020entanglement, huang2021information-theoretic, chen2022exponential, huang2022quantum, arunachalam2023optimal, chen2023does, fawzi_adaptivity_2023} as well as unitaries and quantum channels \cite{aharonov2022quantum, chen2022exponential, huang2022quantum, caro2023learning, chen2023unitarity, oufkir2023sample, fawzi2023lower,fawzi2023quantum-channel-certification}, and in learning Hamiltonians \cite{huang2023heisenberg, li2023heisenberglimited}. 
Our lower bounds constitute an addition to the toolkit developed in these prior works.
Notably, the algorithms achieving our upper bounds use only simple quantum processing as is common in the randomized measurement paradigm \cite{huang2020predicting, elben2022randomized}.

\subsection{Techniques and proof overview}
Let $d=2^n$ be the dimension of an $n$-qubit system. 

\paragraph{Hamiltonian locality testing lower bound.}

To prove the first Hamiltonian locality testing lower bound of \Cref{inf-thm:hamiltonian-locality-testing-hardness}, we identify an underlying Hamiltonian many-vs-one distinguishing problem that can be solved with high success probability by any successful locality tester, and then establish lower bounds for this distinguishing task.
Concretely, we consider the following task: Given access to the time evolution along a Hamiltonian $H$ that is promised to satisfy either (i) $H=0$ or (ii) $H=\varepsilon (V\proj{0}V^\dagger-\dI/d)$, where $V$ is a Haar-random $n$-qubit unitary, decide whether (i) or (ii) is the case.

We show that Hamiltonian locality testing indeed suffices to solve this distinguishing problem via a concentration of measure argument. Namely, using the concentration of a Lipschitz function of Haar-random unitaries around its mean \cite{meckes2013spectral}, we show that $\varepsilon (V\proj{0}V^\dagger-\dI/d)$  with Haar-random $V$ is $(\varepsilon/2)$-far \edit{w.r.t.~}{with respect to }$\norm{\cdot}_\infty$ from any fixed, traceless Hamiltonian $K$ with $\norm{K}_\infty\leq 1$ with probability $\geq 1 - \exp(- \Omega(2^n))$.
As the set  of $k$-local Hamiltonians admits an $\norm{\cdot}_\infty$-covering net $\mathcal{H}_\varepsilon$ whose size satisfies $\log\lvert\mathcal{H}_\varepsilon\rvert\leq \min\{(k+1)(3n)^k,4^n\}$, a union bound now implies that $\varepsilon (V\proj{0}V^\dagger-\dI/d)$  with Haar-random $V$ is simultaneously $(\varepsilon/4)$-far, again \edit{w.r.t.~}{with respect to }$\norm{\cdot}_\infty$, from all $k$-local Hamiltonians with high probability.
Therefore, any high-probability algorithm for testing Hamiltonian $k$-locality to accuracy $\varepsilon/4$ in the operator norm also manages to distinguish between (i) $H=0$ and (ii) $H=\varepsilon (V\proj{0}V^\dagger-\dI/d)$ with Haar-random $V$ with high success probability.
In particular, any lower bound for this distinguishing task immediately implies a lower bound for Hamiltonian locality testing.

To establish such a lower bound, we follow \cite{fawzi2023quantum-channel-certification}. To explain the proof of the lower bound, we
use the learning tree representation of \cite{chen2022exponential}. That is, we think of the possible outcomes of an adaptive distinguishing algorithm as leaves in a tree, where the observed measurement outcomes in each round determine how the learner moves from the root to a leaf.
Viewed this way, Le Cam's two-point method implies that solving the distinguishing task with constant success probability requires the leaf distributions induced by the different hypothesis Hamiltonians to have at least a constant total variation (TV) distance. 
For technical reasons, we in fact consider a slightly different test: Distinguish between the channels (i) $\cU_t=\id$ or (ii) $\cU_t(\cdot)=\alpha\, \id(\cdot) + (1-\alpha)\mathrm{e}^{-\mathrm{i}tH}(\cdot )\mathrm{e}^{\mathrm{i}tH}$ for a random $H$ as above. 
We show that as long as $\alpha\le \frac{1}{10N}$\ins{, where $N$ is the number of queries}, the leaf distributions induced by these two different hypotheses still have at least a constant TV distance. 
Next, we use Pinsker's inequality to upper bound (the square of) this TV distance by the Kullback Leibler (KL) divergence, as the KL divergence is more amenable to decoupling the dependence between the random observations at different steps. 
Taking a mixture of the identity channel and the time evolution under the alternate hypothesis as in (ii) is important to make a second order Taylor expansion of the logarithm function that occurs in the KL divergence possible. 
However, this alone is not sufficient since the (expected) second order term in this expansion can diverge if the input states and measurement projectors are orthogonal. 
To address this issue, at each step $k\in [N]$, we distinguish between two types of paths of length $k$: those for which the overlap between the input state and the measurement operator (corresponding to the final node in the path) is too small, and those for which the overlap is not. 
When the overlap is too small, we use a simple lower bound on Born's probability under $\cU_t(\cdot)=\alpha\, \mathrm{id}(\cdot) + (1-\alpha)\mathrm{e}^{-\mathrm{i}tH}(\cdot )\mathrm{e}^{\mathrm{i}tH}$ in terms of the probability under $\cU_t=\id$ (here again taking the  mixture proves to be essential).  When the overlap is not too small, we can Taylor expand the logarithm to second order and control the resulting term using Weingarten calculus. 
Overall, we achieve a TV distance upper bound of $\sum_{k=1}^N \mathcal{O}\left(\frac{\log(N)}{d}\cdot\mathbb{E}[\min(1,\varepsilon t_k)]\right)$, which when compared to the constant lower bound implies the claimed bounds on $N$, the number of queries, and on $\mathbb{E}[\sum_{k=1}^N t_k]$, the expected total evolution time. This completes the sketch of the lower bound for incoherent learners, which we view as our main contribution for testing lower bounds. 

For coherent testers, we obtain lower bounds that are quadratically weaker, using the same distinguishing problem. Lower bounds with the same scaling but based on a simpler distinguishing problem have appeared in the literature on quantum phase estimation, see for example \cite[Section 3]{made2023tight}. While similar in spirit, the simpler distinguishing task does not seem to yield the stronger linear-in-$d$ scaling for the incoherent case. Also, it has previously been considered for access models different from ours. For example, the lower bound in \cite{made2023tight} assumes access to the unknown unitary only at a fixed time, but requires also access to its inverse.
\paragraph{Hamiltonian locality testing upper bound -- Commuting case.}

For clarity of exposition, we begin with a simpler setting, focusing on commuting Hamiltonians consisting of terms from $\{ \dI, X\}^{\otimes n}$. That is, we assume that we can expand the Hamiltonian as $H = \sum_{P \in \{ \dI, X\}^{\otimes n}} \alpha_P P$.
Our concrete task under consideration thus becomes: Given access to the time evolution along an unknown Hamiltonian $H = \sum_{P \in \{ \dI, X\}^{\otimes n}} \alpha_P P$ with $\tr[H]=0$ and $\norm{H}_\infty\leq 1$, decide, with success probability $\geq 2/3$, whether $H$ is $k$-local or $\varepsilon$-far from $k$-local in normalized Schatten $2$-norm distance.
The underlying idea for locality testing in this special case will carry over to the general setting.

Our algorithm for solving this property testing problem is as follows:
We prepare the state $\proj{0}$ and let it evolve under the unknown Hamiltonian for time $t = \mathcal O(\eps)$. At the end of this evolution, we perform a measurement in the computational basis $\{\proj{i}\}_{i \in \{0,1\}^n}$. 
We repeat this procedure $N = \mathcal O(1/\mathrm{poly}(\eps))$ times. If at least one of the $N$ rounds produces an $n$-bit string with at least $k+1$ non-zero entries, i.e., with Hamming weight $\geq k+1$, as measurement outcome, we conclude that the Hamiltonian $H$ is $\varepsilon$-far from being $k$-local. Otherwise, we claim that $H$ is $k$-local.

The proof of correctness for this algorithm has the following structure. 
Observe that $\ket{j} = X^j \ket{0}$ for any $j \in \{0,1\}^n$ and that $U_t = \mathrm{e}^{\mathrm{i}tH} \approx \dI+ \mathrm{i}t H$ holds for short times $t$. Here, we write $X^j$ to mean $X^{j_1} \otimes \ldots \otimes X^{j_n}$, where $X^0 = \dI$.
Consequently, we have $\bra{j} U_t \ket{0} \approx \delta_{j,0} + \mathrm{i}t \alpha_{X^j}$ for all $j \in \{0,1\}^n$.
In particular, for any $n$-bit string $j$ with weight $|j| > 0$, 
\begin{equation*}
    |\bra{j} U_t \ket{0}|^2 \approx  t^2 |\alpha_{X^j}|^2\, .
\end{equation*}
If $H$ is indeed $k$-local, then $\alpha_{X^j} = 0$ holds whenever $|j|>k$, and we find that 
\begin{equation*}
    \sum_{j: |j| > k}  |\bra{j} U_t \ket{0}|^2 \approx 0\, ,
\end{equation*}
so the probability that the algorithm falsely claims that $H$ is far from $k$-local when it is indeed $k$-local is approximately zero.

Conversely, if the Hamiltonian $H$ is $\eps$-far from any $k$-local Hamiltonian in normalized Schatten $2$-norm, this means that $\sum_{j: |j| > k}  |\alpha_{X^j}|^2 \geq \eps^2$ because of Parseval. We infer that
\begin{equation*}
      \sum_{j: |j| > k}  |\bra{j} U_t \ket{0}|^2 \gtrapprox t^2 \eps^2 \, .
\end{equation*}
Note that we have to choose $t$ small enough for the approximation to be correct.
By repeating the algorithm $\mathcal O(t^{-2} \eps^{-2})=\mathcal{O}(\eps^{-4})$ many times, we can increase the probability of the algorithm to be correct to a constant. To make the above reasoning precise, we use the Taylor expansion of $\mathrm{e}^{\mathrm{i}tH}$ in order to bound the error in the approximation $\mathrm{e}^{\mathrm{i}tH} \approx \dI+ \mathrm{i}t H$. 

\paragraph{Hamiltonian locality testing upper bound -- General case.}
In the case of a general Hamiltonian $H$, we can no longer limit ourselves to preparing and measuring in the computational basis. Instead, we need to prepare and measure in several different bases. A convenient choice are some $d+1$ \emph{mututally unbiased bases} (MUBs) $\mathcal B_i$, which are known to exist in our case since $d=2^n$ is a prime power \cite{wootters1989optimal}. We write
\begin{equation*}
    \mathcal B_i = \{\ket{\phi_{i,j}}\}_{j \in \{1, \ldots, d\}}, ~1\leq i\leq d+1\, .
\end{equation*}
MUBs are known to be particularly suited for determining the state of a quantum system \cite{wootters1989optimal}. Moreover, MUBs were used for Pauli channel learning \cite{flammia2020efficient}. They can be explicitly constructed from covering the Pauli group with $d+1$ stabilizer groups that only have the identity element in common but are otherwise disjoint.

Motivated by our previous discussion, we consider for a Pauli operator $P$ with weight $|P|$ the overlap $|\bra{\phi_{i,\ell}} P \ket{\phi_{i,j}}|$ and find that it is either $0$ or $1$.
This motivates the following algorithm:
We choose $(i,j) \in [d] \times [d+1]$ uniformly at random and prepare the state $\proj{\phi_{i,j}}$. We let it evolve under the unknown Hamiltonian for time $t = \mathcal O(\eps)$. At the end of this evolution, we perform a measurement in the basis $\mathcal B_i$. We repeat this procedure $N=\mathcal O(1/\mathrm{poly}(\eps))$ times. 
If at least one of the $N$ rounds produces an output $\ell$ such that all Pauli strings $P$ with $|P|\le k$ satisfy $|\bra{\phi_{i,\ell}} P \ket{\phi_{i,j}}|=0$ -- which we denote by $\ket{\phi_{i,\ell}} \nsim_k \ket{\phi_{i,j}}$, meaning that we detected a non-locality --, then we conclude that the Hamiltonian is $\eps$-far from being local.
Otherwise, we claim that $H$ is $k$-local.

By construction, if $\ket{\phi_{i,\ell}} \nsim_k \ket{\phi_{i,j}}$, then $|\bra{\phi_{i,\ell}} H^m \ket{\phi_{i,j}}| = 0$ for $m \in \{0,1\}$. Using $\mathrm{e}^{\mathrm{i}tH} \approx \dI+ \mathrm{i}t H$, we find that if the Hamiltonian is $k$-local, then in any single round, the probability of our procedure falsely detecting a non-locality is
\begin{equation*}
    \pr{\ket{\phi_{i,\ell}} \nsim_k \ket{\phi_{i,j}} } =\frac{1}{d(d+1)} \sum_{i=1}^{d+1}\sum_{j\neq \ell}|\bra{\phi_{i,\ell}} \mathrm{e}^{\mathrm{i}tH} \ket{\phi_{i,j}}|^2 \mathbf{1}\left(\left\{ \ket{\phi_{i,\ell}} \nsim_k \ket{\phi_{i,j}}  \right\}\right) \approx 0 \, .
\end{equation*}
If the Hamiltonian is $\epsilon$-far from being $k$-local, we can use the fact that $d+1$ MUBs in dimension $d$ form a 2-design \cite{klappenecker2005mutually} to infer the following upper bound on the probability of not detecting the non-locality in any single round:
\begin{equation} \label{eq:2-design-used}
    \pr{\ket{\phi_{i,\ell}} \sim_k \ket{\phi_{i,j}}} \leq \sum_{P: |P|\le k} \frac{d}{d(d+1)} + \sum_{P: |P|\le k} \frac{\left|\tr\left(P \mathrm{e}^{\mathrm{i}tH}\right)\right|^2}{d(d+1)} \, .
\end{equation}
We use $\mathrm{e}^{\mathrm{i}tH} \approx \dI+ \mathrm{i}t H$ to conclude $\tr\left(P \mathrm{e}^{\mathrm{i}tH}\right) \approx \mathrm{i}t d \alpha_P$ if $P \neq \dI$. Furthermore, expanding $\mathrm{e}^{\mathrm{i}tH}$ to second order, we can approximate
\begin{equation*}
    |\tr\left(\mathrm{e}^{\mathrm{i}tH}\right)|^2 \approx d^2 - d^2 t^2 \sum_P |\alpha_P|^2 \, .
\end{equation*}
Combining this with $H$ being $\varepsilon$-far from $k$-local in normalized Frobenius norm, which means $\sum_{P: |P| > k} |\alpha_P|^2 \geq \eps^2$, the second term in \Cref{eq:2-design-used} can be upper bounded by $1-t^2\eps^2$. The first term can be seen to quickly approach $0$ as $n$ grows. Hence,
\begin{equation*}
    \pr{\ket{\phi_{i,\ell}} \sim_k \ket{\phi_{i,j}}} \lessapprox 1- \frac{t^2\eps^2}{2} \, .
\end{equation*}
Note that we again have to choose $t$ small enough for the approximation to be correct. By repeating the algorithm $N= \mathcal O(t^{-2} \eps^{-2}) = \mathcal{O}(\eps^{-4})$ many times, we reduce the error probability to a small enough constant.  
Again, we use the Taylor expansion of $\mathrm{e}^{\mathrm{i}tH}$ to make the above reasoning precise. Choosing $t = \mathcal O(\eps)$ lets us control the higher order terms appearing in this expansion. 

\paragraph{Hamiltonian learning lower bound.}

Similarly to the reasoning behind \Cref{inf-thm:hamiltonian-locality-testing-hardness}, we follow~\cite{flammia2012quantum,haah2017sample,bubeck2020entanglement,lowe2022lower,fawzi2023lower,oufkir2023sample} and begin by identifying a distinguishing problem, whose existence we guarantee through a probabilistic argument, and that any successful general Hamiltonian learner can solve. We then establish lower bounds for that distinguishing task through information-theoretic arguments. 

To set up our distinguishing problem, let $O=\mathrm{diag}(+1,\ldots,+1, -1,\ldots,-1)$ be the diagonal $2^n\times 2^n$ matrix with half of the diagonal entries equal to $+1$ and the other half equal to $-1$. We consider Hamiltonians $H$ of the form $H=\varepsilon U O U^\dagger$, where $U$ is a Haar-random $n$-qubit unitary.
Using well known expressions for the first and second moments of the Haar measure, we show that the expected square of the normalized Frobenius distance between two such Hamiltonians satisfies $\mathbb{E}_{U,V\sim\mathrm{Haar}_n}[\frac{1}{2^n}\norm{H_U - H_V}_2^2]=2\varepsilon^2$, and that the second moment of this quantity is bounded as $\mathbb{E}_{U,V\sim\mathrm{Haar}_n}[\frac{1}{2^{2n}}\norm{H_U - H_V}_2^4]\leq 6\varepsilon^2$. Hölder's inequality then implies that the expected normalized Frobenius distance satisfies $\mathbb{E}_{U,V\sim\mathrm{Haar}_n}[\frac{1}{\sqrt{2^n}}\norm{H_U - H_V}_2] > 1.1\varepsilon$. 
Combining this with a Lipschitz concentration argument, we conclude that $\frac{1}{\sqrt{2^n}}\norm{H_U - H_V}_2>\varepsilon$ holds with probability $\geq 1 - \exp(-\Omega(2^{2n}))$.
Therefore, by a union bound, there exists a set of $M=\exp(\Omega(4^n))$ unitaries $U_x$, $1\leq x\leq M$, such that the Hamiltonians $H_x= \varepsilon U_x O U_x^\dagger$ are pairwise $\varepsilon$-far apart \edit{w.r.t.~}{with respect to }$\frac{1}{\sqrt{2^n}}\norm{\cdot}_2$. In particular, any algorithm that can learn an unknown Hamiltonian to average-case accuracy $\varepsilon$ is able to distinguish between these $M$ candidate Hamiltonians. Via Fano's inequality, this implies the mutual information lower bound $\mathcal{I}(X:Y)\geq \Omega(\log M) \geq \Omega(4^n)$, where $X\sim \mathrm{Uniform}([M])$ and where the random variable $Y$ describes the outcomes observed by the learner.

We then provide complementary mutual information upper bounds for two different scenarios. First, we consider coherent procedures that use the Hamiltonian time evolution sequentially interspersed with control channels. The mutual information can be upper bounded by the Holevo information $\chi$ \cite{holevo1973bounds}. This quantity can be decomposed as $\chi= \sum_{k=1}^N \chi_k$ where $\chi_k$ reflects the amount of information acquired after the $k$-th use of the Hamiltonian time evolution. Using standard properties of the von Neumann entropy, we can bound $\chi_k\leq 2n$. Comparing this with the previous mutual information lower bound, we conclude that any such learner has to query the Hamiltonian time evolution at least $N\geq \Omega ( 4^n/n)  $ many times. 

Second, for incoherent learners that use neither an auxiliary system nor adaptivity in their choice of experiments, we decompose the overall mutual information as $\mathcal{I}(X:Y) \le \sum_{\ell=1}^N \mathcal{I}(X:Y_\ell)$, where the random variable $Y_\ell$ describes the measurement outcome that the learner observes in the $\ell^{\mathrm{th}}$ experiment. That is, we have $\mathbb{P}[Y_\ell = y_\ell|X=x] = \lambda_{y_\ell}\bra{\phi^\ell_{y_\ell}}\mathrm{e}^{-it_\ell H_x}\rho_\ell \mathrm{e}^{it_\ell H_x}\ket{\phi^\ell_{y_\ell}}$, where $t_\ell$ is the evolution time, $\rho_\ell$ is the input state, and the measurement is described by $\{\lambda_{y_\ell}\proj{\phi^\ell_{y_\ell}}\}_{y_\ell}$. Using that $H_x^2=\eps^2\dI$ and therefore $\mathrm{e}^{\mathrm{i}tH_x}=\cos(t\eps)\dI+ \mathrm{i} \sin(t\eps)U_xOU_x^\dagger $, we show via a careful analysis that $\mathcal{I}(X:Y_\ell)\leq \tilde{\mathcal{O}}(t_\ell \varepsilon)$ and hence we obtain $\mathcal{I}(X:Y)\leq \mathcal{O}(\varepsilon\sum_{\ell=1}^N t_\ell + \sqrt{\mathrm{poly}(2^n)/M})$. Comparing this to our Fano-based mutual information lower bound of $\mathcal{I}(X:Y)\geq \Omega(4^n)$ and recalling that $M$ is doubly exponential in $n$, we obtain the lower bounds on the total evolution time and the query complexity stated in \Cref{inf-thm:hardness-hamiltonian-learning}.

\subsection{Directions for future work}

Motivated by the question of how to test whether an unknown Hamiltonian is $k$-local, we proposed a framework for Hamiltonian property testing.
Here, we considered different ways of measuring distances between Hamiltonians.
With worst-case distance measures, exemplified by the operator norm, we showed that exponentially many queries as well as exponentially long time are required to solve the locality testing problem.
In contrast, for the normalized Frobenius norm, leading to an average-case notion of distance, we gave a broadly applicable Hamiltonian testing algorithm that uses only single-copy measurements on short-time evolutions of certain randomized input states, and that is resource-efficient \edit{w.r.t.~}{with respect to }the number of queries, total evolution time, and computation time involved.
Finally, still in the regime of average-case distances, we showed that learning is exponentially harder than testing.

There are several promising ways of extending our Hamiltonian property testing framework, for instance by changing the notions of distance and access.
First, one may consider other physically motivated notions of distance, such as quantum Wasserstein distances \cite{depalma2021quantum} or distances relative to (some distribution over) a set of input states and a set of output observables of interest. The former would lead to Hamiltonian testing (and learning) that takes an underlying locality structure into account, wheres the latter would be reminiscent of classical shadows \cite{huang2020predicting, huang2023learning} and shadow tomography \cite{aaronson2020shadow, buadescu2021improved}.
Second, one may change the form of access to the unknown Hamiltonian from time evolution access to access to copies of a Gibbs state, an access model already well studied in Hamiltonian learning (compare \Cref{subsection:related-work}).
Also, exploring the possibility for quantitative improvements in our bounds seems important.
For instance, concerning upper bounds, coherent quantum algorithms may be able to achieve Heisenberg-limited scaling for testing, as they already have for Hamiltonian learning \cite{huang2023heisenberg, li2023heisenberglimited} and unitary tomography \cite{haah2023queryoptimal, zhao2023learning}.
Regarding lower bounds, it would be interesting to generalize  \Cref{inf-thm:hamiltonian-locality-testing-hardness,inf-thm:hardness-hamiltonian-learning} to ancilla-assisted incoherent quantum algorithms.
Finally, we highlight another possible extension to the Hamiltonian testing framework proposed here: Whereas we focus on \edit{Hamiltonian testing}{property testing for time-independent Hamiltonians}, more general \del{property }testing questions for \ins{time-dependent Hamiltonians or even} GKLS generators \cite{lindblad1976generators, gorini1976completely} from access to the generated quantum dynamical semigroups could be of interest. 

\paragraph{Note added.} After the first version of our work appeared on the arXiv, Francisco Escudero Gutiérrez shared with us his approach to tolerant Hamiltonian property testing and to local Hamiltonian learning based on entangled inputs \cite{escudero2024testing}. We are grateful for this exchange, which motivated us to improve upon the first version of our work, tightening the analysis underlying our upper bounds to remove the $n$-dependence in the total evolution time and number of experiments as well as to achieve a tolerant version of our tester.

\section{Preliminaries}

\subsection{Notation and basic definitions}
Let $\log$ denote the natural logarithm and $\log_2$ the logarithm to base $2$. For compactness, for $n \in \mathbb N$ we will abbreviate $[n]:=\{1, \ldots, n\}$. We write $\mathbf{1}(\mathcal X)$ for the indicator function on the set or event $\mathcal X$. For a complex number $z \in \mathbb C$, we will denote by $\Re(z)$ its real part and $\Im(z)$ its imaginary part. 

We will make extensive use of the Schatten $p$-norms $\|\cdot\|_p$, for $p \in \mathbb N \cup \{\infty\}$. They are defined for any $n \times m$ matrix $X$ as
\begin{equation*}
    \|X\|_p := \operatorname{Tr}[|X|^p]^{\frac{1}{p}} =\operatorname{Tr}[(\sqrt{X^\dagger X})^p]^{\frac{1}{p}} \, .
\end{equation*}
Here, $X^\dagger$ is the Hermitian conjugate of $X$. The case $p=\infty$ is the operator norm of $X$. The Schatten $2$-norm is also known as the Frobenius norm, whereas the Schatten $1$-norm also goes by the names of trace or nuclear norm.  Here, we can employ different measures of distance between Hamiltonians. Moreover, we consider the normalized Schatten $p$-norms $\frac{1}{2^{n/p}}\lVert H\rVert_p$. And finally, we use $\lVert H\rVert_{\mathrm{Pauli},p} \equiv \left(\sum_{P\in\mathds{P}_n} \lvert\alpha_P\rvert^p\right)^{1/p}$ to denote the norm induced by the $\ell_p$-norm of the coefficient vector of $H$. Note that $\frac{1}{\sqrt{2^n}}\lVert H\rVert_2 = \lVert H\rVert_{\mathrm{Pauli},2}$ by Parseval's identity.

Depending on context, we will mean by a quantum state either a unit vector $\ket{\psi} \in \mathbb C^d$ for some appropriate dimension $d \in \mathbb N$ or a density matrix $\rho$, i.e., a positive-semidefinite $d \times d$ matrix of unit trace. A linear map $\mathcal N$ from $d_1 \times d_1$ to $d_2 \times d_2$ matrices will be called positive if it maps positive-semidefinite matrices to positive-semidefinite matrices. Moreover, it will be called completely positive if $\mathcal{N} \otimes \mathrm{id}_n$ is positive for all $n$, where $\mathrm{id}_n$ is the identity map on an additional $n$-dimensional system. If a completely positive map is in addition trace-preserving, we will call it a quantum channel. For general background on quantum information theory, we refer the reader to a textbook such as \cite{watrous2018theory}.

Finally, we will need some objects from classical information theory. Given two probability distributions $P$, $Q$ on a finite alphabet $\mathcal Y$, their total variation distance is
\begin{equation*}
    \TV(P, Q) := \frac{1}{2} \sum_{y \in \mathcal Y} |P(y) - Q(y)| \, .
\end{equation*}
Their Kullback-Leibler divergence is 
\begin{equation*}
    \KL(P \| Q) := \sum_{y \in \mathcal Y} P(y) \log\left(\frac{P(y)}{Q(y)}\right)
\end{equation*}
if $\operatorname{supp} P \subseteq \operatorname{supp} Q$, where we define $0 \log 0 := 1$, and $+\infty$ otherwise.

Given two random variables $X$, $Y$ on finite alphabets $\mathcal X$, $\mathcal Y$, respectively, and joint probability distribution $P_{XY}$ on $\mathcal X \times \mathcal Y$ with marginals $P_X$ on $\mathcal X$ and $P_Y$ on $\mathcal Y$, their mutual information is 
\begin{equation*}
    \mathcal I(X:Y) := \KL(P_{XY} \| P_X \otimes P_Y)\, .
\end{equation*}

\subsection{Mutually unbiased bases of stabilizer states}

We define the Pauli matrices in the standard way
\begin{equation*}
    X = \begin{pmatrix}
        0 & 1 \\ 1 & 0
    \end{pmatrix}\, , \qquad
    Y = \begin{pmatrix}
        0 & -\mathrm{i} \\ \mathrm{i} & 0
    \end{pmatrix}\, , \qquad
    Z = \begin{pmatrix}
        1 & 0 \\ 0 & -1
    \end{pmatrix}\, .
\end{equation*}
The set of $n$-strings formed by the Pauli operators together with the identity matrix will be written as $\mathds{P}_n=\{\dI, X,Y,Z\}^{\otimes n}$. Note that unlike the set 
\begin{equation*}
    \left\{ \mathrm{e}^{\mathrm{i}\theta \pi/2 }\sigma_1\otimes \dots \otimes \sigma_n \;\big|\; \theta = 0,1,2,3 , \; \sigma_i \in \{\dI, X,Y,Z\}\right\}\, ,
\end{equation*} 
which is the Pauli group, the set $\mathds{P}_n$ is not a group. We will also need the quotient group of the Pauli group with its centralizer. We will write this Abelian group as
\[\mathbf{P}_n=\left\{ \mathrm{e}^{\mathrm{i}\theta \pi/2 }\sigma_1\otimes \dots \otimes \sigma_n \;\big|\; \theta = 0,1,2,3 , \; \sigma_i \in \{\dI, X,Y,Z\} \right\}/\{\pm1, \pm \mathrm{i}\} \, . \] 
There is a bijection between elements in $\mathbf{P}_n$ and elements in $\mathds{P}_n$, and we will write $P(g)$ for the element in $\mathds{P}_n$ corresponding to $g \in \mathbf{P}_n$.

We will make extensive use of the following known two lemmas, see e.g. \cite{bandyopadhyay2002new,flammia2020efficient}. 

\begin{lemma}
	$\mathbf{P}_n$ can be covered by $d+1$ stabilizer  groups $G_1,\dots, G_{d+1}$ satisfying for all $i\neq j$:
	\begin{itemize}
		\item $|G_i|=d$,
		\item $C_{G_i}=G_i$,
		\item $G_i\cap G_j= \{\dI\}$.
	\end{itemize}
 Here, $C_{G_i} =\{g \in \mathbf{P}_n: [P(g), P(h)] = 0~\forall h \in G_i \}$. It can be thought of as the centralizer of $G_i$ in the Pauli group, where we are only interested in elements with a fixed choice of sign.
\end{lemma}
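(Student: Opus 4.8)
The statement to prove is the structural lemma that $\mathbf{P}_n$, the quotient of the $n$-qubit Pauli group by its center $\{\pm 1, \pm i\}$, can be covered by $d+1$ maximal abelian (stabilizer) subgroups $G_1,\dots,G_{d+1}$, each of size $d=2^n$, each equal to its own centralizer, and pairwise intersecting only in the identity. Here is how I would prove it.

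\medskip

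\textbf{Plan.} The plan is to identify $\mathbf{P}_n$ with the symplectic vector space $\mathbb{F}_2^{2n}$ and translate the group-theoretic claims into linear-algebraic ones about Lagrangian (maximal isotropic) subspaces with respect to the symplectic form. First I would set up the isomorphism: each element $g\in\mathbf{P}_n$ is represented uniquely by a pair $(a,b)\in\mathbb{F}_2^n\times\mathbb{F}_2^n$ via $P(g)\sim X^aZ^b$ (up to phase), and this is a group isomorphism $\mathbf{P}_n\cong\mathbb{F}_2^{2n}$ because modding out by the center makes the product abelian with the componentwise addition. The key point is that $P(g)$ and $P(h)$ commute if and only if the symplectic form $\langle (a,b),(a',b')\rangle_{\mathrm{sp}} = a\cdot b' + b\cdot a' \pmod 2$ vanishes. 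Thus $C_{G_i}=G_i$ becomes the statement that $G_i$ (viewed as a subspace $V_i\subseteq\mathbb{F}_2^{2n}$) equals its own symplectic complement, i.e., $V_i$ is Lagrangian, which forces $\dim V_i = n$, hence $|G_i| = 2^n = d$.

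\medskip

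\textbf{Construction of the cover.} Next I would exhibit an explicit family of $d+1$ Lagrangian subspaces that pairwise intersect trivially. The standard construction: take $V_\infty = \{(0,b): b\in\mathbb{F}_2^n\}$ (the ``$Z$-type'' subgroup) together with, for each symmetric $n\times n$ matrix $M$ over $\mathbb{F}_2$, the subspace $V_M = \{(a, Ma): a\in\mathbb{F}_2^n\}$ — wait, this gives $2^{n(n+1)/2}$ subspaces, too many and with nontrivial intersections in general. Instead, the cleanest route for the counting $d+1$ is via the finite field $\mathbb{F}_{2^n}$: identify $\mathbb{F}_2^n\cong\mathbb{F}_{2^n}$ and, using the trace form, for each $\lambda\in\mathbb{F}_{2^n}$ set $V_\lambda = \{(x,\lambda x): x\in\mathbb{F}_{2^n}\}$ with the symplectic form realized as $\langle(x,y),(x',y')\rangle = \mathrm{Tr}_{\mathbb{F}_{2^n}/\mathbb{F}_2}(xy' - x'y)$ (in characteristic $2$, $xy'+x'y$); then $V_\lambda$ is isotropic because $\mathrm{Tr}(\lambda xx' + x'\lambda x) = \mathrm{Tr}(2\lambda xx') = 0$, it has dimension $n$, and $V_\lambda\cap V_\mu = \{0\}$ for $\lambda\neq\mu$ since $(x,\lambda x)=(x,\mu x)$ forces $(\lambda-\mu)x = 0$ hence $x=0$; adjoining $V_\infty=\{(0,y)\}$ gives $2^n+1 = d+1$ Lagrangians, which are pairwise trivially intersecting. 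Pulling this back through the isomorphism gives the subgroups $G_i$ with all three listed properties. An alternative, if one prefers to avoid field arithmetic, is to cite that the $d+1$ classes of a spread of $\mathbb{F}_2^{2n}$ into Lagrangians exist (equivalently, a complete set of MUBs exists for prime-power dimension, here $2^n$), but the field construction is self-contained.

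\medskip

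\textbf{Main obstacle.} The genuinely substantive step is verifying that the exhibited subspaces cover $\mathbf{P}_n$, i.e., that $\bigcup_{i=1}^{d+1} V_i = \mathbb{F}_2^{2n}$: since each $V_i$ has $2^n$ elements, all sharing only the zero vector, the union has $(d+1)(d-1) + 1 = d^2 = 2^{2n}$ elements, exactly $|\mathbb{F}_2^{2n}|$ — so the covering is a pure counting consequence once the pairwise-trivial-intersection and the count $d+1$ are established. Thus the real work is (a) the symplectic dictionary (commuting $\Leftrightarrow$ orthogonal, centralizer $\Leftrightarrow$ symplectic complement, maximal abelian $\Leftrightarrow$ Lagrangian of dimension $n$), which is routine but needs care about the center quotient, and (b) checking isotropy and trivial pairwise intersection for the field-based family, which is a short computation with the trace form. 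I expect (a) — correctly handling the passage to $\mathbf{P}_n = \text{Pauli group}/\{\pm1,\pm i\}$ so that the group becomes the \emph{additive} group $\mathbb{F}_2^{2n}$ and commutation is detected by a well-defined $\mathbb{F}_2$-bilinear form — to be the only place where a subtlety can creep in; everything else is bookkeeping. Since this lemma is cited as standard (see \cite{bandyopadhyay2002new, flammia2020efficient}), in the paper it would suffice to give the symplectic reformulation and the explicit $\mathbb{F}_{2^n}$-indexed spread, then invoke the counting argument to conclude.
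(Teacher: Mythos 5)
The paper does not prove this lemma; it cites it as known from \cite{bandyopadhyay2002new,flammia2020efficient}. Your proof is correct and is essentially the standard argument from those references: translate $\mathbf{P}_n$ into the symplectic vector space $\mathbb{F}_2^{2n}$, observe that self-centralizing abelian subgroups correspond to Lagrangian subspaces, and exhibit the spread $\{V_\lambda\}_{\lambda\in\mathbb{F}_{2^n}}\cup\{V_\infty\}$ via finite-field multiplication, with the covering following from the count $(d+1)(d-1)+1 = d^2$. The one point you flag as ``bookkeeping'' deserves one extra sentence if this were written out in full: when you realize the symplectic form as $\mathrm{Tr}_{\mathbb{F}_{2^n}/\mathbb{F}_2}(xy'+x'y)$, you must either choose a trace-self-dual basis of $\mathbb{F}_{2^n}$ over $\mathbb{F}_2$ (so that $\mathrm{Tr}(\alpha\beta)$ becomes the standard dot product of coordinates) or, equivalently, invoke the fact that all non-degenerate alternating forms on $\mathbb{F}_2^{2n}$ are equivalent by a change of basis, so the Lagrangian spread you built for one form pulls back to a Lagrangian spread for the Pauli commutation form. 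With that said, the argument is complete and matches the approach the paper relies on implicitly.
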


From these stabilizer groups, we can construct sets of pure states: However, let us first note that $\{P(g) : g \in G_i\}$ are not groups. We can, however, turn them into groups: there are signs $\zeta_{i,g} \in \{\pm 1\}$ such that $\{\zeta_{i,g} P(g) : g \in G_i\}$ is an Abelian group, the stabilizer group. 
There are two ways to see this. Either, we start with a set of generators for $G_i$ and keep track of the signs when multiplying the corresponding $P(g)$ in order to generate the group. Alternatively, we choose a common eigenstate $\ket{\varphi}$ of all the $P(g)$, $g \in G_i$, which exists since all the $P(g)$ commute and define $P(g) \ket{\varphi} = \zeta_{i,g}\ket{\varphi}$.
In this case, $\{S(g) :=\zeta_{i,g} P(g) : g \in G_i\}$ will be a stabilizer group for $\ket{\varphi}$.

\begin{lemma}
	Let $G\in \{G_1,\dots, G_{d+1}\}$. The set $\cM_G  =  \{M_G^r  =  \frac{1}{d}\sum_{p\in G} (-1)^{p\circ r}S(p)\}_{r\in A_G}$ forms an orthonormal basis consisting of rank-$1$ stabilizer states.
\end{lemma}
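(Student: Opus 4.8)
The plan is to recognize the matrices $M_G^r$ as the orthogonal projectors onto the common eigenspaces of the stabilizer group $\{S(p):p\in G\}$, after which the statement reduces entirely to the character theory of the finite abelian group $G$. Recall from the discussion preceding the lemma that $\{S(p)=\zeta_{G,p}P(p):p\in G\}$ is an honest group of pairwise commuting Hermitian operators with $S(p)^2=\dI$, $S(\dI)=\dI$ (the identity element of $G$ maps to the identity operator), and $S(p)S(q)=S(pq)$; moreover $\lvert G\rvert=d$. The index set $A_G$ together with the pairing $p\circ r$ is set up precisely so that $r\mapsto\chi_r$, where $\chi_r(p):=(-1)^{p\circ r}$, is a bijection from $A_G$ onto the character group $\widehat{G}$ (every character of $G$ is $\{\pm1\}$-valued since every element of $G$ has order dividing $2$). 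Under this identification $M_G^r=\Pi_{\chi_r}$ with $\Pi_\chi:=\frac1d\sum_{p\in G}\chi(p)S(p)$, and it suffices to show that $\{\Pi_\chi\}_{\chi\in\widehat G}$ is a complete family of mutually orthogonal rank-one projectors onto stabilizer states.

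The computation then proceeds in four short steps. First, $\Pi_\chi^\dagger=\Pi_\chi$ because the $S(p)$ are Hermitian and $\chi(p)$ is real. Second, using $S(p)S(q)=S(pq)$, the substitution $r=pq$, and the character orthogonality relation $\sum_{q\in G}\chi(q)\chi'(q)=\lvert G\rvert\,\delta_{\chi,\chi'}$, one obtains $\Pi_\chi\Pi_{\chi'}=\frac1{d^2}\sum_{p,q}\chi(p)\chi'(q)S(pq)=\delta_{\chi,\chi'}\Pi_\chi$; in particular each $\Pi_\chi$ is an orthogonal projector and distinct ones are mutually orthogonal. Third, $\sum_{\chi\in\widehat G}\Pi_\chi=\frac1d\sum_{p\in G}S(p)\sum_{\chi\in\widehat G}\chi(p)=\frac1d\cdot d\cdot S(\dI)=\dI$, since $\sum_{\chi\in\widehat G}\chi(p)$ equals $\lvert G\rvert$ when $p$ is the identity of $G$ and $0$ otherwise. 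Fourth, $\tr[\Pi_\chi]=\frac1d\sum_{p\in G}\chi(p)\tr[S(p)]=1$, because $\tr[S(p)]=\tr[\zeta_{G,p}P(p)]=0$ for every $p\in G$ other than the identity element (the unique element with $P(p)=\dI$, by injectivity of the bijection $\mathbf{P}_n\to\mathds{P}_n$), where it equals $d$.

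Combining these: each $M_G^r=\Pi_{\chi_r}$ is an orthogonal projector of trace one, hence of the form $\ket{\psi_r}\bra{\psi_r}$ for a unit vector $\ket{\psi_r}$; the relations $\Pi_{\chi_r}\Pi_{\chi_{r'}}=0$ for $r\neq r'$ force $\langle\psi_r|\psi_{r'}\rangle=0$; and since there are $\lvert A_G\rvert=d$ of them, the $\ket{\psi_r}$ form an orthonormal basis of $\mathbb{C}^d$. Finally, $S(p)\ket{\psi_r}=\chi_r(p)\ket{\psi_r}$ for all $p\in G$ exhibits $\{\chi_r(p)S(p):p\in G\}$ as a stabilizer group of $\ket{\psi_r}$, so each $M_G^r$ is a rank-one stabilizer state. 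The only points requiring care — the closest thing to an obstacle here — are bookkeeping ones: confirming that the signs $\zeta_{G,p}$ genuinely turn $\{S(p)\}$ into a group so that $S(p)S(q)=S(pq)$ (already arranged in the preceding paragraph), and unpacking the definitions of $A_G$ and the pairing $\circ$ so that $r\mapsto\chi_r$ is the asserted bijection onto $\widehat{G}$; neither step involves any genuine difficulty.
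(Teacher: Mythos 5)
Your proof is correct. Note first that the paper does not actually supply a proof of this lemma --- it is quoted as a known fact with a pointer to the literature --- so there is no in-paper argument to compare against; your derivation is the standard character-theoretic one and is essentially what one finds in the cited sources. The decisive observations are all in place: $G$ is an elementary abelian $2$-group of order $d$, so $\widehat{G}$ is $\{\pm1\}$-valued and has order $d$; the map $r\mapsto\chi_r=(-1)^{(\cdot)\circ r}$ is well defined and injective on $A_G=\mathbf{P}_n/G$ precisely because $C_G=G$ (two representatives of the same coset pair identically with every $p\in G$), and hence is a bijection onto $\widehat{G}$ by the cardinality count $\lvert A_G\rvert=\lvert\widehat{G}\rvert=d$; and with $\{S(p)\}_{p\in G}$ being a genuine abelian group (already arranged in the text, which also forces $S(e)=\dI$ and $S(p)S(q)=S(pq)$), your four computations --- Hermiticity, idempotence and mutual orthogonality via the first orthogonality relation after the substitution $q\mapsto pq$, resolution of identity via the second orthogonality relation, and unit trace from $\tr[S(p)]=d\,\delta_{p,e}$ --- are all valid. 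The one step you assert without showing is $S(q)\ket{\psi_r}=\chi_r(q)\ket{\psi_r}$; it follows from $S(q)\Pi_{\chi}=\chi(q)\Pi_{\chi}$ by the same change of variables you used for idempotence, after which $\{\chi_r(p)S(p):p\in G\}$ is a stabilizer group of $\ket{\psi_r}$ (it contains $\dI$ and not $-\dI$, since only $p=e$ has $P(p)=\dI$). With that sentence added, the argument is complete.
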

Here we use the notation $A_G= \mathbf{P}_n/G$ and $p\circ r=0$ if $P(p) P(r)=P(r)P(p)$ and $p\circ r=1$ if $P(p)P(r)=-P(r)P(p)$ (note that we could have put arbitrary signs in front of the $P(p)$, $P(r)$ without changing the value of $p\circ r$). 
To not overload the notation, we define $A_{G_i}= \{ r_j^i\}_{j=1}^d$  and introduce the notation
\begin{align} \label{eq:stabilizer_MUB}
	\proj{\phi_{i,j}} = M_{G_i}^{r_j^i}= \frac{1}{d}\sum_{p\in G_i} (-1)^{p\circ r^i_j}S(p)
\end{align}
Now we have $d+1$ mutually unbiased bases (MUBs) $\cB_i=\{\ket{\phi_{i,j}}\}_{j=1}^d$ for $i=1, \dots, d+1$. We will use the fact that MUBs of stabilizer states form a $2$-design. $2$-designs were used for the problem of testing the mixedness of states by \cite{yu2021sample}. Stabilizer states and measurements were used for the problem of Pauli channel learning by \cite{flammia2020efficient}. 

\begin{proposition}[Pauli group, MUB and $2$-design]
    The bases $\left\{\cB_i=\{\ket{\phi_{i,j}}\}_{j=1}^d\right\}_{i=1}^{d+1}$ form an MUB and a $2$-design. That is:
\begin{align*}
	\frac{1}{d(d+1)}\sum_{i=1}^{d+1}\sum_{j=1}^d \proj{\phi_{i,j}}\otimes \proj{\phi_{i,j}}= \frac{\dI+ F}{d(d+1)}
\end{align*}
where $F=\sum_{x,y} \ket{xy}\bra{yx}$ is the flip operator.
\end{proposition}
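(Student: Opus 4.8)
The plan is to verify the displayed $2$-design identity by a direct computation in the Pauli basis, using the three structural properties of the covering stabilizer groups ($|G_i|=d$, $C_{G_i}=G_i$, and $G_i\cap G_j=\{\dI\}$ for $i\neq j$). First I would substitute the explicit expansion \eqref{eq:stabilizer_MUB} of each projector and expand the tensor square to obtain
\[
\sum_{i=1}^{d+1}\sum_{j=1}^{d}\proj{\phi_{i,j}}\otimes\proj{\phi_{i,j}}
=\sum_{i=1}^{d+1}\frac{1}{d^2}\sum_{p,q\in G_i}\Bigl(\sum_{j=1}^{d}(-1)^{(p\circ r_j^i)+(q\circ r_j^i)}\Bigr)\,S(p)\otimes S(q)\,.
\]
The crux is evaluating the inner sum over $j$. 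Since $\circ$ is bilinear on $\mathbf{P}_n$, we have $(p\circ r)+(q\circ r)=(pq)\circ r$, so $r\mapsto(-1)^{(pq)\circ r}$ is a character of $\mathbf{P}_n$; and because $p,q\in G_i$ with $G_i$ a group, $pq\in G_i=C_{G_i}$, so this character vanishes on $G_i$ and descends to a character of $A_{G_i}=\mathbf{P}_n/G_i$. Summing a character of a finite abelian group over the whole group gives its order $|A_{G_i}|=d$ when the character is trivial and $0$ otherwise, and by non-degeneracy of the symplectic form (the only element of $\mathbf{P}_n$ commuting with every Pauli is $\dI$) triviality is equivalent to $pq=\dI$, i.e.\ $p=q$. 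As $S(p)\otimes S(p)=P(p)\otimes P(p)$ since $\zeta_{i,p}^2=1$, the whole expression collapses to $\tfrac1d\sum_{i=1}^{d+1}\sum_{p\in G_i}P(p)\otimes P(p)$.

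Next I would exploit the covering property to replace the double sum over the $G_i$ by a single sum over $\mathbf{P}_n$: each non-identity element of $\mathbf{P}_n$ lies in exactly one $G_i$ while $\dI$ lies in all $d+1$ of them, so $\sum_{i}\sum_{p\in G_i}P(p)\otimes P(p)=d\,\dI\otimes\dI+\sum_{P\in\dP}P\otimes P$. Inserting the standard identity $\sum_{P\in\dP}P\otimes P=d\,F$ (verified on one qubit and tensorized) then yields $\sum_{i,j}\proj{\phi_{i,j}}\otimes\proj{\phi_{i,j}}=\dI+F$, which is the claim after dividing by $d(d+1)$. For the MUB assertion, orthonormality within each $\cB_i$ is already recorded in the preceding lemma, while for $i\neq i'$ the same expansion gives $|\langle\phi_{i,j}|\phi_{i',j'}\rangle|^2=\tr\!\bigl[M_{G_i}^{r_j^i}M_{G_{i'}}^{r_{j'}^{i'}}\bigr]$, in which $\tr[S(p)S(q)]$ vanishes unless $p=q$; this forces $p=q\in G_i\cap G_{i'}=\{\dI\}$ and leaves only the term $\tfrac1{d^2}\tr[\dI]=\tfrac1d$.

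I expect the only genuinely delicate point to be the character-sum evaluation: one must check carefully that $r\mapsto(-1)^{(pq)\circ r}$ really descends to the coset space $A_{G_i}$ — which is exactly where the self-centralizing property $C_{G_i}=G_i$ enters — and that it fails to be trivial unless $pq=\dI$, which is where non-degeneracy of $\circ$ is used. Everything after that is routine bookkeeping with the covering property and the elementary Pauli identity.
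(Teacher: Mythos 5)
Your proof is correct, but it takes a genuinely different route from the paper. The paper verifies that the bases are mutually unbiased (which follows from $G_i \cap G_k = \{\dI\}$ exactly as in your last paragraph) and then invokes the frame-potential criterion of Klappenecker--R\"otteler, checking $\sum |\langle \phi_{i,j}|\phi_{k,\ell}\rangle|^4 = \tfrac{2}{d(d+1)}$ directly from the MUB overlaps; the $2$-design conclusion is thereby imported from a cited theorem. You instead compute the symmetrized frame operator $\sum_{i,j}\proj{\phi_{i,j}}\otimes\proj{\phi_{i,j}}$ head-on via the Pauli expansion of \eqref{eq:stabilizer_MUB}, reducing it to a character sum over $A_{G_i}=\mathbf{P}_n/G_i$, then using the covering property and the identity $\sum_{P\in\dP}P\otimes P = d\,F$. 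The key steps check out: $(pq)\circ r$ is well-defined on the quotient precisely because $pq\in G_i = C_{G_i}$, the character sum gives $d\,\mathbf{1}(p=q)$ by non-degeneracy of $\circ$, the signs $\zeta_{i,p}^2=1$ drop out, and the inclusion-exclusion $(d+1)\,\dI\otimes\dI + \sum_{P\neq\dI}P\otimes P = d\,\dI\otimes\dI + dF$ is right. What your approach buys is self-containment --- the $2$-design property is derived rather than cited, with the relevant structural facts ($C_{G_i}=G_i$, the covering, the disjointness) each used transparently and in one place --- at the cost of somewhat more calculation. The paper's route is shorter given the external reference and reuses the MUB overlap computation that is needed anyway.
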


\begin{proof}One way to see this is to apply \cite[Theorem 1]{klappenecker2005mutually}. We only need to prove that:
\begin{align*}
   \sum_{i,j} \frac{1}{d(d+1)}\sum_{k,\ell} \frac{1}{d(d+1)} |\spr{\phi_{i,j}}{\phi_{k,\ell}}|^4= \int_{\Haar} |\spr{0}{\phi}|^4 d\phi= \frac{2}{d(d+1)}.
\end{align*}
This identity can be checked easily since $|\spr{\phi_{i,j}}{\phi_{k,\ell}}|=\frac{1}{\sqrt{d}}$ if $i\neq k$ and $|\spr{\phi_{i,j}}{\phi_{i,\ell}}|=\mathbf{1}(\{j=l\})$. Indeed, we can check  $|\spr{\phi_{i,j}}{\phi_{k,\ell}}|=\frac{1}{\sqrt{d}}$ if $i\neq k$:
\begin{align*}
    |\spr{\phi_{i,j}}{\phi_{k,\ell}}|^2 &= \tr\left( \proj{\phi_{i,j}} \proj{\phi_{k,\ell}}\right)
    \\&= \tr\left( \frac{1}{d}\sum_{p\in G_i}(-1)^{p\circ r^i_j}S(p) \cdot  \frac{1}{d}\sum_{p^\prime\in G_k}(-1)^{p^\prime\circ r^k_l} S(p^\prime)\right)
        \\&= \frac{1}{d}\sum_{p\in G_i\cap G_k}(-1)^{p\circ r^i_j}(-1)^{p\circ r^k_l}=\frac{1}{d}
\end{align*}
since $G_i\cap G_k=\{\dI\}$. Alternatively, we could have argued using \cite[Theorem 3]{klappenecker2005mutually}.
\end{proof}

Finally, the operation $p \circ q$ has a nice property that we'll need later, which has already been used, for example, in \cite{flammia2020efficient}. 
\begin{lemma} \label{lem:summing-signs-over-groups}
    Let $q \in \mathbf P_n$ and let $G$ be any subgroup of $\mathbf P_n$. Then
    \begin{equation*}
        \frac{1}{|G|} \sum_{p \in G} (-1)^{p \circ q} = \mathbf{1}(q \in C_G).
    \end{equation*}
\end{lemma}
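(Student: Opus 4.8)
The plan is to recognize $p \mapsto (-1)^{p \circ q}$ as a character of the finite abelian group $G$ and then invoke the standard orthogonality relation for characters: the sum of a character over a group is $|G|$ if the character is trivial and $0$ otherwise. So the first step is to check that $\chi_q : G \to \{\pm 1\}$ defined by $\chi_q(p) = (-1)^{p \circ q}$ is a group homomorphism. This amounts to verifying that $p \circ q$ is bilinear (additive in $p$) modulo $2$, i.e.\ that $(p_1 p_2) \circ q = (p_1 \circ q) + (p_2 \circ q) \pmod 2$. This follows from the fact that the commutator pairing on the Pauli group descends to the symplectic form on $\mathbf{P}_n \cong \mathbb{F}_2^{2n}$: writing $P(p_1)P(q) = (-1)^{p_1 \circ q} P(q) P(p_1)$ and similarly for $p_2$, one computes $P(p_1 p_2) P(q) = (-1)^{p_1 \circ q + p_2 \circ q} P(q) P(p_1 p_2)$ by moving $P(q)$ past each factor in turn (the extra phases from the non-canonical signs cancel since $p \circ q$ is insensitive to signs, as already noted in the text). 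Hence $\chi_q$ is a character of $G$.

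The second step is the dichotomy. If $q \in C_G$, then by definition $P(p)$ and $P(q)$ commute for every $p \in G$, so $p \circ q = 0$ for all $p \in G$, $\chi_q$ is the trivial character, and $\frac{1}{|G|}\sum_{p \in G}(-1)^{p\circ q} = 1 = \mathbf{1}(q \in C_G)$. If $q \notin C_G$, then there is some $p_0 \in G$ with $p_0 \circ q = 1$, so $\chi_q$ is a nontrivial character of the abelian group $G$; by orthogonality of characters (equivalently, the standard trick: multiplying the sum $\Sigma = \sum_{p \in G}\chi_q(p)$ by $\chi_q(p_0) = -1$ and reindexing $p \mapsto p_0 p$ gives $-\Sigma = \Sigma$, hence $\Sigma = 0$), we get $\frac{1}{|G|}\sum_{p \in G}(-1)^{p \circ q} = 0 = \mathbf{1}(q \in C_G)$. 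Combining the two cases yields the claim.

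The only genuinely substantive point is the bilinearity of $p \circ q$ in its first argument modulo $2$, i.e.\ that $\chi_q$ is multiplicative; everything else is the textbook character-sum argument. I do not expect this to be hard — it is a direct consequence of the symplectic structure of the Pauli group and is implicitly what makes the definition $p \circ r$ well-defined in the first place — but it is the step that should be spelled out, since the rest is immediate. One could alternatively phrase the whole argument purely in $\mathbb{F}_2$-linear-algebra language ($p \circ q$ is the symplectic inner product $\omega(p,q)$ of the images of $p,q$ in $\mathbb{F}_2^{2n}$, and $\sum_{v \in W}(-1)^{\omega(v,q)}$ over a subspace $W$ is $|W|$ or $0$ according as $q$ is or is not in the symplectic complement $W^\perp = C_G$), which makes the bilinearity manifest and the dichotomy a one-line consequence of the rank–nullity formula for the linear functional $v \mapsto \omega(v,q)$ restricted to $W$.
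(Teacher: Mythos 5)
Your proof is correct: the character-sum (orthogonality/shift) argument, together with the observation that $p\mapsto(-1)^{p\circ q}$ is multiplicative on $G$ because $\circ$ descends to the symplectic form on $\mathbf{P}_n\cong\mathbb{F}_2^{2n}$, is exactly the standard route, and the paper itself gives no proof here (it cites prior work) while verifying the same bilinearity identity $r\circ p+s\circ p=(rs)\circ p$ later in the proof of its testing upper bound. So your argument fills the omitted proof in the way the authors intend, with no gaps.
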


\subsection{Problem statement: Hamiltonian property testing}

We consider $n$-qubit Hamiltonians $H$ expanded in the Pauli basis, $H=\sum_{P\in\mathds{P}_n} \alpha_P P$, with $\mathds{P}_n=\{\dI, X,Y,Z\}^{\otimes n}$ the set of $n$-qubit Paulis and with (real) coefficients $\alpha_P = \nicefrac{\tr[HP]}{2^n}$. 
Throughout this work, the Hamiltonian properties of interest are characterized by a subset $S\subseteq \mathds{P}_n$, and we say that $H$ has property $\Pi_S$, write $H\in \Pi_S$, if $\alpha_P=0$ for all $P\not\in S$.
For instance, the property of being (at most) $k$-local is characterized by the subset $S_{k\mathrm{-loc}} = \{P\in\mathds{P}_n: \lvert P\rvert \leq k\}$, where we use $\lvert P\rvert$ to denote the weight of $P$, that is, the number of non-identity tensor factors.
Other examples of properties that our framework encompasses are geometric locality or having a given interaction graph.

We now define Hamiltonian property testing of $\Pi_S$ as the task of deciding, given access to the time evolution according to an unknown Hamiltonian $H$, whether $H$ has property $\Pi_S$ or whether $H$ is far from all Hamiltonians that have property $\Pi_S$.

\begin{problem}[Hamiltonian property testing]\label{def:hamiltonian-testing}
    Given a property $\Pi_S$ associated to a subset $S\subseteq \mathds{P}_n$, a norm $\nnorm{\cdot}$, and an accuracy parameter $\varepsilon\in (0,1)$, we denote by $\mathcal{T}_{\nnorm{\cdot}}^{\Pi_S}(\varepsilon)$ the following Hamiltonian property testing problem:
    Given access to the time evolution according to an unknown Hamiltonian $H$, decide, with success probability $\geq 2/3$, whether
    \begin{enumerate}
        \item[(i)] $H$ has property $\Pi_S$, that is, $H\in \Pi_S$, or 
        \item[(ii)] $H$ is $\varepsilon$-far from having property $\Pi_S$, that is, $\forall \Tilde{H}\in \Pi_S$: $\nnorm{ H - \Tilde{H}}\geq \varepsilon$\, .
    \end{enumerate}
     If $H$ satisfies neither (i) nor (ii), then any output of the tester is considered valid. 
\end{problem}

\begin{remark}
    A short discussion of the physical interpretation for the different norms used above is in order. 
    Among the Schatten $p$-norms, we highlight the Schatten $\infty$-norm (aka operator norm) as a worst-case distance. Namely, in \Cref{appendix:operator-norm}, we show that, at least for short evolution times, the Hamiltonian distance $\norm{H - \Tilde{H}}_\infty$ is tightly related to, among others, the worst-case output fidelity $\max_{\ket{\psi}}\lvert\bra{\psi}\mathrm{e}^{-\mathrm{i}tH}\mathrm{e}^{\mathrm{i}t\Tilde{H}}\ket{\psi}\rvert^2$ and the diamond norm distance between the unitary time evolution channels $\mathrm{e}^{-\mathrm{i}tH}(\cdot)\mathrm{e}^{\mathrm{i}tH}$ and $\mathrm{e}^{-\mathrm{i}t\Tilde{H}}(\cdot)\mathrm{e}^{\mathrm{i}t\Tilde{H}}$.

    Second, to illustrate the relevance of the normalization factor in normalized Schatten $p$-norms, we single out the normalized Frobenius norm and interpret it as an average-case distance measure. Here, we demonstrate in \Cref{appendix:normalized-frobenius-norm} that $\frac{1}{\sqrt{2^n}}\norm{H - \Tilde{H}}_2$ is, again for short times, connected to the average output fidelity $\mathbb{E}_{\ket{\psi}\sim\mathrm{Haar}_n}\left[\lvert\bra{\psi}\mathrm{e}^{-\mathrm{i}tH}\mathrm{e}^{\mathrm{i}t\Tilde{H}}\ket{\psi}\rvert^2\right]$, the normalized Frobenius norm distance between the time evolutions $\mathrm{e}^{-\mathrm{i}tH}$ and $\mathrm{e}^{-\mathrm{i}t\Tilde{H}}$, and other average-case distance measures.

    Finally, the norms $\lVert \cdot\rVert_{\mathrm{Pauli},p}$, in particular for $p=2,\infty$, have featured in recent work on Hamiltonian learning, see \Cref{subsection:related-work}.
    They make intuitive sense for learning and testing tasks in which the interest is in estimating or validating properties of interaction strength parameters in a Hamiltonian. 
\end{remark}

The central goal of this work is to understand how easy or hard this Hamiltonian property testing problem is.
More specifically, we are interested in the number of queries to the time evolution as well as in the total evolution time necessary and sufficient to solve the testing task. Additionally, we consider whether one can successfully approach Hamiltonian testing via Hamiltonian learning.

\subsection{Types of strategies for Hamiltonian property testing}\label{subsection:types-of-strategies}

In order to solve the Hamiltonian property testing problem, we consider several different scenarios, depending on what kind of access to the quantum system and how many additional resources we grant the tester.

\paragraph{Incoherent strategies.}
\begin{figure}
	\centering
	\tikzset{
		meter/.append style={fill=black!20}
	}
	\begin{tikzpicture}
	\node[circle, draw] at (4, 7)   (a) {$i_1$};
	\node[circle, draw] at (4, 4)   (b) {$i_2$};
	\node[circle, draw] at (4, 0)   (c) {$i_N$};
	\filldraw[black] (4, 7.1)  node[anchor=east]{	\begin{quantikz}[thin lines] 
		\gategroup[wires=2,steps=5,style={rounded corners,fill=blue!20,inner sep=5pt},background]{}&\lstick[2]{$\rho_{1}$} & \gate[style={fill=red!30}]{\cU_{t_1}} \qw& \gate[2, style={fill=green!30}]{\cM_1} \qw &\meter{}   &\cw\rstick[2]{ } \\
		&&  \qw&  \qw &\meter{}   &\cw
		\end{quantikz}};
	\filldraw[black] (4,4.1)  node[anchor=east]{\begin{quantikz}[thin lines] 
		\gategroup[wires=2,steps=5,style={rounded corners,fill=blue!20,inner sep=9pt},background]{}&\lstick[2]{$\rho_{2}$} & \gate[style={fill=red!30}]{\cU_{t_2}} \qw& \gate[2, style={fill=green!30}]{\cM_2} \qw &\meter{}   &\cw\rstick[2]{ } \\
		&&  \qw&  \qw &\meter{}   &\cw
		\end{quantikz}};
	%\\$\underset{~\dots}{}$
	\filldraw[black] (1,2.2) node[anchor=west]{$\vdots$}; 
	\filldraw[black] (4,0.1)  node[anchor=east]{
		\begin{quantikz}[thin lines] 
		\gategroup[wires=2,steps=5,style={rounded corners,fill=blue!20,inner sep=17pt},background]{}&\lstick[2]{$\rho_{N}$} & \gate[style={fill=red!30}]{\cU_{t_N}} \qw& \gate[2, style={fill=green!30}]{\cM_N} \qw &\meter{} &\cw  &\cw\rstick[2]{ } \\
		&&  \qw&  \qw &\meter{}   &\cw&\cw
		\end{quantikz}};
	\node[] at (8, 4)   (d) {$\includegraphics[width=.16\linewidth]{./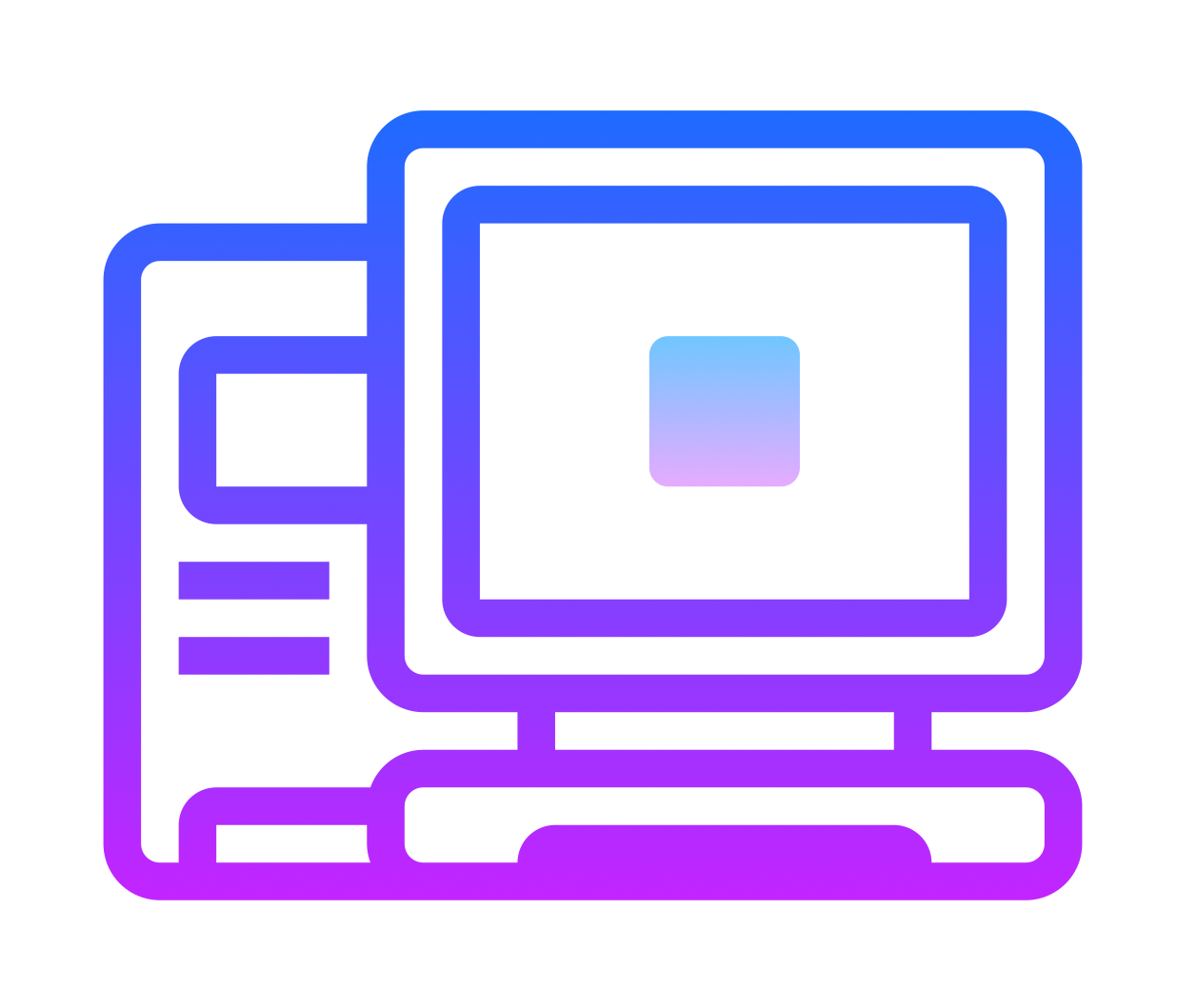} $};
	\node[] at (11, 4) (e) { output};
	\draw[->] [-{Implies},double][-{Stealth[scale=2]}]   (d) --  (e)   ;
	%.. controls +(right:7mm) and +(left:1mm) ..
	\draw[->] [-{Implies},double][-{Stealth[scale=2]}]   (c) --  (d)   ;
	\draw[->]   [-{Implies},double][-{Stealth[scale=2]}]  (b) -- (d)   ;
	\draw[->]   [-{Implies},double][-{Stealth[scale=2]}]  (a) -- (d)   ;
	\end{tikzpicture}
	\caption{Illustration of a non-adaptive   incoherent  strategy for learning/testing properties of a Hamiltonian $H$ from its time evolution channel $\cU_t(\cdot)= \mathrm{e}^{-\mathrm{i}tH} (\cdot) \mathrm{e}^{\mathrm{i}tH}$. It is called ancilla-free if the auxiliary systems have dimension $1$, otherwise it is called ancilla-assisted. 
	The classical computer  processes the observations $(i_1, \dots, i_N)$ to distinguish between two hypotheses $H_0/H_1$ (in testing) or to produce an approximate Hamiltonian $\hat{H}$ (in learning).}
	\label{fig: NonAdap-ancill-ass}
\end{figure}
\begin{figure}
	\centering
	
	\tikzset{
		%operator/.append style={fill=red!30},
		meter/.append style={fill=black!20}
	}
    \scalebox{0.93}{
	\begin{tikzpicture}
	\node[circle, draw] at (4, 6)   (a) {$i_1$};
	\node[circle, draw] at (5.5, 2.7)   (b) {$i_2$};	
	\node[circle, draw] at (7.4, -1.2)   (c) {$i_N$};	
	\draw[-]  [-{Implies},double]  (a) -- (4,4.5) -- node[pos=0.9,below=-0.1cm]{$i_1$}(-1,4.5) --(-1,2.8)--(-0.65, 2.8);
	\filldraw[black] (a)++(0,0.1)  node[anchor=east]{	\begin{quantikz}[thin lines] 
		\gategroup[wires=2,steps=5,style={rounded corners,fill=blue!20,inner sep=5pt},background]{}&\lstick[2]{$\rho_{1}$} & \gate[style={fill=red!30}]{\cU_{t_1}} \qw& \gate[2, style={fill=green!30}]{\cM_1} \qw &\meter{}   &\cw\rstick[2]{ } \\
		&&  \qw&  \qw &\meter{}   &\cw
		\end{quantikz}};
	\filldraw[black] (b)++(0,0.1)  node[anchor=east]{	\begin{quantikz}[thin lines] 
		\gategroup[wires=2,steps=5,style={rounded corners,fill=blue!20,inner sep=9pt},background]{}&\lstick[2]{$\rho_{2}^{i_1}$} & \gate[style={fill=red!30}]{\cU_{t_2}} \qw& \gate[2, style={fill=green!30}]{\cM_2^{i_1}} \qw &\meter{}   &\cw\rstick[2]{ } \\
		&&  \qw&  \qw &\meter{}   &\cw
		\end{quantikz}};
	\draw[-] [-{Implies},double]  (b)  --  (5.5,1) --(5.4,1)  (3.5,1)-- node[pos=0.6,below=-0.08cm]{$i_1, \dots, i_{N-1}$}(-0.3, 1)--(-0.3, -1.05)  -> (0.15, -1.05)  ;
	\draw[dotted]  [double] (5.4, 1) -- (3.5, 1)     ;
	\filldraw[black] (c)++(-0.05,0.1)   node[anchor=east]{	\begin{quantikz}[thin lines] 
		\gategroup[wires=2,steps=5,style={rounded corners,fill=blue!20,inner sep=17pt},background]{}&\lstick[2]{$\rho_{N}^{i_{<N}}$} & \gate[style={fill=red!30}]{\cU_{t_N}} \qw& \gate[2, style={fill=green!30}]{\cM_N^{i_{<N}}} \qw &\meter{} &\cw  &\cw\rstick[2]{ } \\
		&&  \qw&  \qw &\meter{}   &\cw&\cw
		\end{quantikz}};
	
	\node[] at (10, 2.7)   (d) {$\includegraphics[width=.16\linewidth]{./comp.png} $};
	%.. controls +(right:7mm) and +(left:1mm) ..
	\draw[->] [-{Implies},double][-{Stealth[scale=2]}]   (c) --(d)   ;%.. controls +(up:14mm) ..
	\draw[->]   [-{Implies},double][-{Stealth[scale=2]}]  (b) -- (d)   ;
	\draw[->]   [-{Implies},double][-{Stealth[scale=2]}]  (a) -- (d)   ;
	\node[] at (13, 2.7) (e) { output};
	\draw[->] [-{Implies},double][-{Stealth[scale=2]}]   (d) --  (e)   ;
	\end{tikzpicture}}
	\caption{Illustration of an adaptive   incoherent  strategy for learning properties of a Hamiltonian $H$ from its time evolution channel $\cU_t(\cdot)= \mathrm{e}^{-\mathrm{i}tH}(\cdot) \mathrm{e}^{\mathrm{i}tH}$. It is called ancilla-free if the auxiliary systems have dimension $1$, otherwise it is called ancilla-assisted. 
	The classical computer  processes the observations $(i_1, \dots, i_N)$ to distinguish between two hypotheses $H_0/H_1$ (in testing) or to produce an approximate Hamiltonian $\hat{H}$ (in learning).}
	\label{fig: Adap-ancill-ass}
\end{figure}
\begin{figure}
	\centering
    \scalebox{0.9}{
	\tikzset{
		meter/.append style={fill=black!20}
	}\begin{tikzpicture}
    \node[circle,draw] at (5.4,0) (i) {$i$};
    \filldraw[black] (i)++(0.1,0.1)   node[anchor=east]{	\begin{quantikz}[thin lines]
		\gategroup[wires=2,steps=9,style={rounded corners,fill=blue!20,inner sep=5pt},background]{}	&\lstick[2]{$\rho$}& \gate[style={fill=red!30}]{\cU_{t_1}} & \gate[2, style={fill=green!30}]{\cN_1} & \ \ldots\ \qw &  \gate[2, style={fill=green!30}]{\cN_{N-1}}&\gate[style={fill=red!30}]{\cU_{t_N}}  &
		\gate[2, style={fill=green!30}]{\cM} & \meter{}&\cw \rstick[2]{} &\\
		&& \qw& & \ \ldots\ \qw &  \qw& \qw&&\meter{}& \cw& 
		\end{quantikz}};
	\node[] at (7.3,0) (b) {$\includegraphics[width=.08\linewidth]{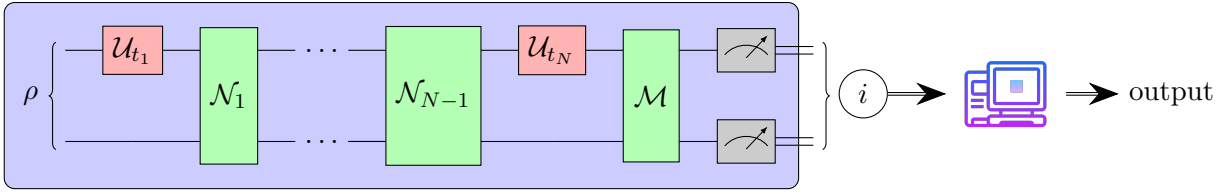}$};
	\node[] at (9.5,0) (c) {output};
	\draw[->] [-{Implies},double][-{Stealth[scale=2]}]  (i) -- (b)  ;
	\draw[->] [-{Implies},double][-{Stealth[scale=2]}]  (b)-- (c) ;
	\end{tikzpicture}}
	\caption{Illustration of a coherent strategy for learning properties of a Hamiltonian $H$ from its time evolution channel $\cU_t(\cdot)= \mathrm{e}^{-\mathrm{i}tH} (\cdot) \mathrm{e}^{\mathrm{i}tH}$. The classical computer processes the observation $i$ to distinguish between two hypotheses $H_0/H_1$ (in testing) or to produce an approximate Hamiltonian $\hat{H}$ (in learning).}
	\label{fig:coherent}
\end{figure}
We start by presenting \emph{incoherent} strategies. That is, the tester can use the quantum system only once per step. Therefore, in round $k$ of the overall procedure, the tester can prepare a quantum state $\rho_k$, let the time evolution $\mathcal U_t(\cdot) = \mathrm{e}^{- \mathrm{i} t H} (\cdot) \mathrm{e}^{\mathrm{i} t H}$ run for a chosen time $t_k$, and finally perform a measurement $\mathcal M_k = \left\{\lambda^{(k)}_i \proj{\phi^{(k)}_i}\right\}_{i \in \mathcal I_k}$, where $\mathcal I_k$ is a set of measurement outcomes.
Note that we could have taken a general positive operator-valued measure (POVM) here, but that without loss of generality, we can assume the POVM to consist of rank-one elements. If the input state, the duration of the time evolution, and the choice of measurement are allowed to depend on the outcomes of previous measurements, the incoherent strategy is \emph{adaptive}, otherwise it is \emph{non-adaptive}. We can also add ancilla qubits to the strategy by preparing the input state $\rho_k$ on a $(d \times d_{\mathrm{aux}})$-dimensional system instead of restricting to dimension $d$, subsequently letting the time evolution act as $\mathcal U_{t_k} \otimes \mathrm{id}$, where the identity acts on the ancilla qubits, and finally measuring both systems with $\mathcal M_k$. If we add the ancilla, we call the strategy \emph{ancilla-assisted}, otherwise we will speak of \emph{ancilla-free} strategies. 
See \Cref{fig: NonAdap-ancill-ass} for an illustration of non-adaptive strategies and \Cref{fig: Adap-ancill-ass} for an illustration of adaptive strategies. 
\paragraph{Coherent strategies.}
The second type of strategies are the \emph{coherent} strategies. Here, the tester is allowed to access the time-evolution multiple times before measuring, possibly interleaving these time evolutions with the application of quantum channels. Thus, the tester will prepare an initial quantum state $\rho$ on a $(d \times d_{\mathrm{aux}})$-dimensional quantum system and then choose times $t_1, \ldots, t_N$ and quantum channels $\mathcal N_1, \ldots, \mathcal N_{N-1}$ acting on quantum systems of dimension $d \times d_{\mathrm{aux}}$ such that the output state before the measurement is
\begin{equation*}
    \rho^{\mathrm{output}} = [\mathcal U_{t_N} \otimes \mathrm{id}]\circ  \mathcal N_{N-1}   \circ [\mathcal U_{t_{N-1}} \otimes \mathrm{id}] \circ \ldots \circ \mathcal N_1 \circ [\mathcal U_{t_1} \otimes \mathrm{id}] (\rho) \, ,
\end{equation*}
where the identities act on the ancilla qubits. Finally, the tester measures $\rho^{\mathrm{output}}$ using a POVM acting on $(d \times d_{\mathrm{aux}})$-dimensional quantum systems. See \Cref{fig:coherent} for an illustration.

\section{Lower bounds for Hamiltonian locality testing}\label{sec:hamiltonian-testing-lower-bounds}

In this section, we will give lower bounds for the Hamiltonian locality testing problem, thereby proving \Cref{inf-thm:hamiltonian-locality-testing-hardness}. We will first consider lower bounds for incoherent testing strategies in \Cref{sec:lower-indep-setting}. Subsequently, we consider the more general coherent strategies in \Cref{sec:lower-seq-setting}, for which we can only give weaker bounds. 

\subsection{Incoherent setting} \label{sec:lower-indep-setting}

In this section, we will prove the following hardness result for Hamiltonian locality testing with respect to unnormalized Schatten norms:
\begin{theorem} \label{thm:lower-Schatten-indep} 
    Let $n\geq \Omega(1)$, $k\le \cO\left(\frac{n}{\log(n)}\right)$ and $p\geq 1$.  
    Suppose that $N\le \exp(\cO(n))/\eps^{\cO(1)}$. 
    The problem $\cT^{\mathrm{loc}}_{\|\cdot\|_p}(\eps)$, even under the additional promise that the unknown Hamiltonian $H$ satisfies $\tr[H]=0$ and $\norm{H}_\infty\leq 1$, requires an expected total evolution time of $\ex{\sum_{k=1}^N t_k}=\Omega\left( \frac{2^n}{\eps(n-\log\eps)}\right)$ and a total number of independent experiments $N=\Omega\left( \frac{2^{n}}{n}\right)$ in the adaptive ancilla-free incoherent setting.
\end{theorem}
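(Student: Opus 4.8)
The plan is to reduce $\cT^{\mathrm{loc}}_{\|\cdot\|_p}(\eps)$ to a Hamiltonian many-versus-one distinguishing problem and then lower bound the resources for that problem by an information-theoretic argument on the learning tree of \cite{chen2022exponential}, following the template of \cite{fawzi2023quantum-channel-certification}. Concretely, I would consider the task of deciding, from time-evolution access to $H$, between (i) $H=0$ and (ii) $H=\eps(V\proj{0}V^\dagger-\dI/d)$ for a Haar-random $n$-qubit unitary $V$; in case (ii) one has $\tr[H]=0$ and $\norm{H}_\infty=\eps(1-1/d)\le 1$, so the promise holds, while $H=0$ is $k$-local. By Lipschitz concentration on the unitary group \cite{meckes2013spectral}, for any fixed traceless $K$ with $\norm{K}_\infty\le\cO(1)$ the quantity $\bra{0}V^\dagger KV\ket{0}$ concentrates around $\tr[K]/d=0$, so that
\begin{equation*}
    \norm{\eps(V\proj{0}V^\dagger-\dI/d)-K}_\infty\ \ge\ \bra{0}V^\dagger\big(\eps(V\proj{0}V^\dagger-\dI/d)-K\big)V\ket{0}\ \ge\ \eps/2
\end{equation*}
with probability $1-\exp(-\Omega(2^n))$. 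Since the traceless $k$-local Hamiltonians of operator norm $\le 2$ admit an $(\eps/4)$-net $\cH_\eps$ in $\norm{\cdot}_\infty$ with $\log\lvert\cH_\eps\rvert\le\min\{(k+1)(3n)^k,4^n\}$, which is $o(2^n)$ precisely when $k\le\cO(n/\log n)$, a union bound shows that, with high probability over $V$, the Hamiltonian in (ii) is simultaneously $\Omega(\eps)$-far in $\norm{\cdot}_\infty$ -- hence also in $\norm{\cdot}_p$, since $\norm{\cdot}_p\ge\norm{\cdot}_\infty$ for every $p\ge 1$ -- from every $k$-local Hamiltonian. Adjusting the constant in front of (ii) appropriately, any tester for $\cT^{\mathrm{loc}}_{\|\cdot\|_p}(\eps)$ then distinguishes (i) from (ii) with constant advantage, so it suffices to lower bound the resources for this distinguishing problem.

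Next, for technical reasons, I would pass to the channels $\cU^{(0)}_t=\id$ versus $\cU^{(1)}_t(\cdot)=\alpha\,\id(\cdot)+(1-\alpha)\mathrm{e}^{-\mathrm{i}tH}(\cdot)\mathrm{e}^{\mathrm{i}tH}$, with $H$ as in (ii) and $\alpha=\tfrac{1}{10N}$. Realising $\cU^{(1)}_t$ by a hidden coin that applies the true evolution with probability $1-\alpha$ and $\id$ otherwise, all $N$ accesses coincide with the true evolution with probability $\ge(1-\alpha)^N\ge 9/10$, so the tester above still distinguishes $\cU^{(0)}$ from the $V$-mixture of $\cU^{(1)}$ with constant advantage; by Le Cam's two-point method the induced leaf distributions $p^{(0)}$ and $p^{(1)}=\exs{V}{p^{(1)}_V}$ on the $N$-query adaptive learning tree then satisfy $\TV(p^{(0)},p^{(1)})\ge\Omega(1)$. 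I would bound this using Pinsker, $\TV(p^{(0)},p^{(1)})^2\le\tfrac12\KL(p^{(0)}\|p^{(1)})$, together with the chain rule
\begin{equation*}
    \KL(p^{(0)}\|p^{(1)})=\sum_{k=1}^N\exs{x_{<k}\sim p^{(0)}}{\KL\big(p^{(0)}_{k\mid x_{<k}}\,\|\,p^{(1)}_{k\mid x_{<k}}\big)}\, .
\end{equation*}
Putting $p^{(0)}$ first is convenient: under $\cU^{(0)}$ the history carries no information about $V$, so each inner expectation is over the clean $H=0$ execution, whose evolution-time guarantee I may invoke at the end (as $H=0$ is a valid input). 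Mixing in $\alpha\,\id$ is what guarantees $p^{(1)}_{k\mid x_{<k}}(\ell)\ge\alpha\,p^{(0)}_{k\mid x_{<k}}(\ell)$, so that every inner $\KL$ is finite, the log-ratios lie in $[\log\alpha,\,-\log\alpha]$, and a second-order Taylor expansion of the logarithm becomes possible.

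The heart of the argument is the per-round analysis. Fix a round with input $\rho_k$ and rank-one POVM element $\lambda_\ell\proj{\phi_\ell}$, and write $b^{(0)}_\ell=\lambda_\ell\bra{\phi_\ell}\rho_k\ket{\phi_\ell}$ and $b^{(1)}_\ell=\exs{V}{\lambda_\ell\bra{\phi_\ell}\cU^{(1)}_{t_k}(\rho_k)\ket{\phi_\ell}}$ (under the current posterior on $V$). Using $\mathrm{e}^{-\mathrm{i}t\eps V\proj{0}V^\dagger}=\dI+(\mathrm{e}^{-\mathrm{i}t\eps}-1)V\proj{0}V^\dagger$, with $\lvert\mathrm{e}^{-\mathrm{i}t\eps}-1\rvert\le\min(2,t\eps)$, together with the Haar moments $\exs{V}{V\proj{0}V^\dagger}=\dI/d$ and $\exs{V}{(V\proj{0}V^\dagger)^{\otimes2}}=\tfrac{\dI+F}{d(d+1)}$, one finds that $b^{(1)}_\ell$ differs from $b^{(0)}_\ell$ only by a factor $1+\cO(t_k\eps/d)$ plus an additive term of order $\lambda_\ell\min(1,(t_k\eps)^2)/d^2$. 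In particular $b^{(1)}_\ell\approx b^{(0)}_\ell$ unless the \emph{overlap} $\bra{\phi_\ell}\rho_k\ket{\phi_\ell}$ is tiny, in which case $\delta_\ell:=b^{(1)}_\ell/b^{(0)}_\ell-1$ can be large and the expansion of $(1+\delta_\ell)\log(1+\delta_\ell)$ fails. I would therefore split, in each round, the outcomes $\ell$ according to whether $\bra{\phi_\ell}\rho_k\ket{\phi_\ell}$ lies below a threshold of order $\mathrm{poly}(\log N)/d$ or not. For the small-overlap outcomes, $\sum_\ell b^{(0)}_\ell=\cO(\mathrm{poly}(\log N)/d)$ and every log-ratio is $\cO(\log N)$ (using $b^{(1)}_\ell\ge\alpha b^{(0)}_\ell$), so their total contribution is $\cO(\tfrac{\log N}{d}\min(1,\eps t_k))$. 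For the remaining outcomes the Taylor expansion is valid: the first-order term vanishes upon summation ($\sum_\ell(b^{(1)}_\ell-b^{(0)}_\ell)=0$), and the second-order term $\tfrac12\sum_\ell(b^{(1)}_\ell-b^{(0)}_\ell)^2/b^{(0)}_\ell$ is controlled via the Haar moments above -- here one needs that the posterior on $V$ along an $H=0$-distributed history stays close to the Haar prior, so that the Weingarten computation applies -- and is again $\cO(\tfrac{\log N}{d}\min(1,\eps t_k))$. Summing over rounds,
\begin{equation*}
    \TV(p^{(0)},p^{(1)})^2\ \le\ \sum_{k=1}^N\cO\!\left(\frac{\log N}{d}\,\exs{x_{<k}\sim p^{(0)}}{\min(1,\eps t_k)}\right)\, .
\end{equation*}

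Finally, comparing with $\TV(p^{(0)},p^{(1)})\ge\Omega(1)$ gives $\sum_{k=1}^N\ex{\min(1,\eps t_k)}\ge\Omega(d/\log N)$ for the $H=0$ execution. Bounding $\min(1,\eps t_k)\le 1$ yields $N\log N\ge\Omega(d)$; if $N\le d$ this forces $\log N\le n\log 2$ and hence $N\ge\Omega(d/n)=\Omega(2^n/n)$, while if $N>d$ then $N>2^n/n$ already, so in all cases $N=\Omega(2^n/n)$. Bounding instead $\min(1,\eps t_k)\le\eps t_k$ and using $\log N\le\cO(n-\log\eps)$ (which follows from $N\le\exp(\cO(n))/\eps^{\cO(1)}$) gives $\ex{\sum_{k=1}^N t_k}\ge\Omega(d/(\eps\log N))=\Omega\!\big(2^n/(\eps(n-\log\eps))\big)$, as claimed. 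I expect the main obstacle to be the per-round analysis: making the logarithm expansion rigorous in the presence of near-orthogonal input/measurement pairs, for which the naive second-order estimate diverges, and correctly carrying out the Haar average -- including the posterior on $V$ -- via Weingarten calculus. The mixture-channel trick, which makes the logarithm well-defined and bounded and enables the Taylor expansion, together with the small-overlap case split, are precisely what allow this estimate to go through.
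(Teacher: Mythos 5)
Your overall strategy coincides with the paper's: the same many-vs-one distinguishing problem ($H=0$ versus $\eps(V\proj{0}V^\dagger-\dI/d)$ with Haar-random $V$), the same Lipschitz-concentration-plus-covering-net reduction, the same mixture channel with $\alpha=\tfrac{1}{10N}$ to make the logarithm bounded, a cut on small input/measurement overlap, and Weingarten calculus for the second-order term, ending with the same bounds. However, two steps as you describe them do not go through. First, you take the mixture over $V$ \emph{inside} the divergence, bounding $\KL(p^{(0)}\,\|\,\exs{V}{p_V^{(1)}})$ via the chain rule; the conditional $p^{(1)}_{k\mid x_{<k}}$ is then a \emph{posterior}-weighted average of Born probabilities, and your appeal to ``the posterior on $V$ stays close to the Haar prior'' is unproven and essentially as hard as the theorem itself. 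The paper avoids this entirely by lower bounding $\exs{V}{\TV(P,Q_V^\alpha)}$ (generalized Le Cam) and then using Jensen and Pinsker to pass to $\exs{V}{\KL(P\,\|\,Q_V^\alpha)}$: since paths are drawn from $P$ (the $H=0$ execution), $V$ is independent of the path and the Haar average of each per-round log-ratio can be computed directly. Your route can be repaired by convexity of $\KL$ in its second argument, i.e.\ $\KL(p^{(0)}\|\exs{V}{p^{(1)}_V})\le\exs{V}{\KL(p^{(0)}\|p^{(1)}_V)}$, but as written the posterior step is a gap.

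Second, your small-overlap threshold of order $\mathrm{poly}(\log N)/d$ on $\bra{\phi_\ell}\rho_k\ket{\phi_\ell}$ does not yield the bound you claim. Since $\sum_\ell\lambda_\ell=d$, the total $P$-mass of the small-overlap outcomes is only bounded by $(\text{threshold})\cdot d=\mathrm{poly}(\log N)$, not $\mathrm{poly}(\log N)/d$; multiplied by the $\cO(\log N)$ log-ratio this gives a per-round contribution of order $\mathrm{polylog}(N)$, which destroys the bound, and moreover your threshold carries no $t_k$-dependence, so the claimed factor $\min(1,\eps t_k)$ cannot appear. The threshold must scale like $\min\bigl(1,(\eps t_k)^2\bigr)/d^2$ (the paper uses $(1-\cos(\eta t_k))/d^2$), which simultaneously makes the small-overlap mass $\cO\bigl(\min(1,(\eps t_k)^2)/d\bigr)$ and produces the time-dependence. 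Relatedly, your claim that the first-order term cancels exactly because $\sum_\ell\bigl(b^{(1)}_\ell-b^{(0)}_\ell\bigr)=0$ is imprecise once the sum is restricted to the large-overlap outcomes; the paper instead bounds the Haar expectation of the first-order term outcome-by-outcome via Weingarten calculus. With these corrections your plan matches the paper's proof.
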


In order to prove \Cref{thm:lower-Schatten-indep}, we will consider the following test: 
\begin{align}\label{Toy problem II}
    \cH_0: \cU_t(\rho)= \id(\rho)= \mathrm{e}^{-\mathrm{i}t\cdot 0}\rho \mathrm{e}^{\mathrm{i}t\cdot 0} ~~~~~~\text{vs} ~~~~~~~~ \cH_1: \cU_t(\rho)=  \mathrm{e}^{-\mathrm{i}tH}\rho \mathrm{e}^{\mathrm{i}tH}
\end{align}
where $H=\eta (\proj{v} -  \dI/d)$ for $\eta \ins{=\eta(\varepsilon)}>0$ \ins{a function of $\varepsilon$ to be chosen later,} and $\ket{v}=V\ket{0}, V\sim \Haar(d)$. We will need a couple of lemmas to determine the relation of this toy problem of distinguishing these two time evolutions to the problem $\cT^{\mathrm{loc}}_{\|\cdot\|_p}(\eps)$ that we care about.

First, we show the that the operator norm between $H$ and any fixed traceless Hamiltonian (of bounded operator norm) is bounded below with high probability. 
\begin{lemma}\label{lem:concentration-operator-norm}
	Let $d \geq 4$, $H=\eta (\proj{v} -  \dI/d)$ for $\eta>0$ and $\ket{v}=V\ket{0}, V\sim \Haar(d)$ and let $K$ be a fixed traceless Hamiltonian such that $\|K\|_\infty \le 1$. We have that
	\begin{align*}
		\pr{\|H-K\|_{\infty} \le \frac{\eta}{2}}\le  \exp\left(-\frac{d}{1728}\right).
	\end{align*} 
\end{lemma}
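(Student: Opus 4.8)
The plan is to prove this concentration bound via the standard recipe for Lipschitz functions of Haar-random unitaries: first show the relevant map is Lipschitz, then invoke the concentration inequality of \cite{meckes2013spectral}, and finally argue that the mean is bounded away from the bad region by a constant factor, so the bad event forces a deviation of order $\eta$ from the mean.

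First I would set up the function $f(V) = \|\eta(V\proj{0}V^\dagger - \dI/d) - K\|_\infty$ on the unitary group $\mathrm{U}(d)$. Since $\|\cdot\|_\infty \le \|\cdot\|_2$ and $\|A\proj{0}A^\dagger - B\proj{0}B^\dagger\|_2 \le \sqrt{2}\,\|A - B\|_2$ for unitaries (a short computation using $\|xx^\dagger - yy^\dagger\|_2^2 = 2 - 2|\langle x|y\rangle|^2 \le 2\|x-y\|_2^2$ for unit vectors, applied to columns), the map $V \mapsto V\proj{0}V^\dagger$ is Lipschitz from the Frobenius metric on $\mathrm{U}(d)$ to the operator norm with constant $\sqrt{2}\,\eta$; composing with the $1$-Lipschitz map $A \mapsto \|A - K\|_\infty$ shows $f$ is $\sqrt{2}\,\eta$-Lipschitz. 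The concentration inequality for the Haar measure on $\mathrm{U}(d)$ then gives $\pr{|f(V) - \ex{f(V)}| \ge \delta} \le \exp(-c\,d\,\delta^2/\eta^2)$ for an absolute constant $c$ (the precise constant from \cite{meckes2013spectral} being what ultimately produces the $1728$).

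Next I would lower bound the mean $\ex{f(V)}$. The cleanest route is to avoid computing $\ex{f}$ directly and instead argue that $f(V) \ge \eta - \|K\|_\infty \cdot(\text{something})$... more carefully: $\|H - K\|_\infty \ge \|H\|_\infty - \|K\|_\infty$? That only gives $\ge \eta(1 - 1/d) - 1$, which is not enough. Instead I would use that $H - K$ is traceless (both $H$ and $K$ are), hence has both a nonnegative and a nonpositive eigenvalue, so $\|H-K\|_\infty \ge \tfrac12\|H-K\|_\infty + \tfrac12|\text{(gap between top and bottom eigenvalue)}|/1$; more usefully, for any traceless Hermitian $M$, $\|M\|_\infty \ge \frac{1}{2}\lambda_{\max}(M) - \frac12\lambda_{\min}(M) \ge \frac12\bra{v}M\ket{v} - \frac12\bra{w}M\ket{w}$ for any unit vectors $v,w$. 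Taking $\ket{v}$ the planted vector and $\ket{w}$ a vector in the bottom-half eigenspace of $K$, we get $\bra{v}H\ket{v} = \eta(1-1/d)$ while $\bra{v}K\ket{v}$ and $\bra{w}K\ket{w}$ are $O(1)$ in absolute value and $\bra{w}H\ket{w} = \eta(\langle w|v\rangle^2 - 1/d)$ is small in expectation over $V$. This shows $\ex{f(V)} \ge \eta(1 - o(1)) - O(1)$, but the additive $O(1)$ from $K$ is still the issue — we need $\ex f \ge \tfrac{3\eta}{4}$ or so, which fails when $\eta$ is small compared to $1$.

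Here the main obstacle becomes clear, and I suspect the intended argument is different: one should bound $\|H - K\|_\infty$ below not by $\eta$ but show that $H$ restricted to an appropriate subspace looks like $-\eta/d \cdot \dI$ while $K$ must have small operator norm there, OR — more likely — the statement is really about $K$ being \emph{close} to $H$ being unlikely because a generic rank-one perturbation direction is unrelated to $K$'s eigenbasis. Concretely: if $\|H - K\|_\infty \le \eta/2$ then $\|K + \eta\dI/d - \eta\proj{v}\|_\infty \le \eta/2$, i.e. $K + \eta\dI/d$ agrees with $\eta\proj{v}$ up to operator-norm error $\eta/2$; but then $K + \eta\dI/d$ has an eigenvalue within $\eta/2$ of $\eta$ with eigenvector within $O(1)$ of $\ket{v}$, and all other eigenvalues within $\eta/2$ of $0$; since $\|K\|_\infty \le 1$, this is a severe constraint pinning $\ket{v}$ near a top eigenvector of the \emph{fixed} operator $K + \eta\dI/d$, an event of exponentially small Haar probability (the overlap of a Haar-random $\ket{v}$ with any fixed direction concentrates at $1/d$). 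So the plan's final and crucial step is: translate $\{f(V) \le \eta/2\}$ into $\{|\langle v | \psi\rangle|^2 \ge 1 - O(1)\}$ for the fixed top-eigenvector $\ket\psi$ of $K + \eta\dI/d$ (handling the case where $K+\eta\dI/d$ has no eigenvalue near $\eta$ separately, where the event is simply empty), and then either bound this directly via the known $\pr{|\langle v|\psi\rangle|^2 \ge s} = (1-s)^{d-1}$ for Haar-random $\ket v$, or feed it back through the Lipschitz concentration with the now-verified mean bound $\ex{|\langle v|\psi\rangle|^2} = 1/d \ll 1$. I expect matching the constant $1/1728$ exactly will require care in tracking the Lipschitz constant and the constant in the Meckes–Meckes inequality, but conceptually this overlap-concentration argument is the heart of the proof.
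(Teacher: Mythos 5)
Your diagnosis of why the naive route fails is correct: applying Lipschitz concentration directly to $V\mapsto\|H-K\|_\infty$ requires control of $\ex{\|H-K\|_\infty}$, which you cannot get uniformly in $\eta$. However, your pivot to the eigenvector-overlap argument has a genuine gap. From the event $\|M-\eta\proj{v}\|_\infty\le\eta/2$ (with $M=K+\eta\dI/d$), applying the operator-norm bound to the top eigenvector $\ket\psi$ of $M$ gives only $|\langle v|\psi\rangle|^2\ge(\lambda_1(M)-\eta/2)/\eta$. Nothing forces $\lambda_1(M)$ to sit well above $\eta/2$: for instance, $K=\operatorname{diag}(\eta/4,\eta/4,-\eta/4,-\eta/4)$ in $d=4$ gives $M=\operatorname{diag}(\eta/2,\eta/2,0,0)$, the event is nonempty (take $\ket v=e_1$), and the bound is vacuous. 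More generally $\lambda_1(K)$ can be arbitrarily close to $\eta/2-\eta/d$, so there is no spectral gap to feed into a Davis--Kahan argument, and the claimed implication $\{\|H-K\|_\infty\le\eta/2\}\Rightarrow\{|\langle v|\psi\rangle|^2\ge\text{const}\}$ is simply false. You would at minimum need to track overlap with a (possibly large-dimensional, possibly ill-defined) near-top eigenspace of $K$, and the constant degrades.

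The paper's argument side-steps all of this by applying the Lipschitz concentration not to $\|H-K\|_\infty$ nor to an overlap, but to the \emph{scalar} quadratic form $g(V)=\bra{v}K\ket{v}=\bra{0}V^\dagger KV\ket{0}$. This has two decisive advantages you did not exploit: first, its Haar mean is known exactly, $\ex{g(V)}=\tr(K)/d=0$; second, its Lipschitz constant is $2\|K\|_\infty$ (not $\eta$), so the concentration rate is $\exp(-\Omega(d\,s^2/\|K\|_\infty^2))$. The crucial elementary inequality is then
\begin{align*}
\|H-K\|_\infty \;\ge\; \bra{v}(H-K)\ket{v} \;=\; \eta(1-1/d) - g(V)\,,
\end{align*}
so $\{\|H-K\|_\infty\le\eta/2\}\subseteq\{g(V)\ge\eta(1/2-1/d)\}$, a deviation of size $\Omega(\eta)$ from the mean $0$. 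Finally the paper notes that if $\|K\|_\infty>3\eta/2$ the event is empty by the triangle inequality, so one may assume $\|K\|_\infty\le 3\eta/2$; plugging $s=\eta(1/2-1/d)\ge\eta/4$ and $\|K\|_\infty\le 3\eta/2$ into the concentration rate, the $\eta$'s cancel and one gets $\exp(-d/1728)$ uniformly in $\eta$. This is the piece you were missing: rather than chasing eigenvectors of $M$, just evaluate the quadratic form $H-K$ along the planted direction $\ket v$ itself, which is trivially dominated by $\|H-K\|_\infty$, and let the cancellation between the Lipschitz constant of $g$ and the deviation scale do the work.
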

\begin{proof}
	Consider the function:
	\begin{align*}
		f(V)=\bra{v}K\ket{v}=\bra{0}V^\dagger K V \ket{0}. 
	\end{align*}
	We have $\ex{f(V)}= \frac{\tr{K}}{d}=0$. Moreover $f$ is $2$-Lipschitz with respect to the Euclidean norm. Indeed, let $U$ and $V$ be two unitaries and $\ket{u}=U\ket{0}, \ket{v}=V\ket{0}$.
    We can first use the triangle inequality  and then the Cauchy-Schwarz inequality to show that:
	\begin{align*}
		|f(U)-f(V)| &= |\bra{u}K\ket{u} - \bra{v}K\ket{v}|
    	\\&\le |\bra{u-v}K\ket{u}|+ |\bra{v}K\ket{u-v}|
    	\\&\le  |\spr{u-v}{u-v} |^{1/2}|\bra{u}K^2\ket{u}|^{1/2}+ |\spr{u-v}{u-v} |^{1/2}|\bra{v}K^2\ket{v}|^{1/2}
    	\\&\le 2|\bra{0}(U-V)^{\dagger}(U-V)\ket{0}|^{1/2}\|K^2\|^{1/2}_{\infty}
    	\\&\le 2\|U-V\|_2 \|K\|_{\infty}.
	\end{align*}
	Hence $f$ concentrates around its mean~\cite[Corollary 17]{meckes2013spectral}, for all $s>0$:
	\begin{align*}
		\pr{ f(V)-\ex{f(V)}\ge s } = \pr{ f(V)\ge s }\le \exp\left(-\frac{ds^2}{48\|K\|_{\infty}^2}\right)
	\end{align*}
	Therefore using $\bra{v}(H-K)\ket{v} \le \|H-K\|_{\infty} $ and $\|H\|_{\infty}\leq \eta$:
	\begin{align*}
		\pr{\|H-K\|_{\infty} \le \frac{\eta}{2}} &\le 	\pr{\bra{v}(H-K)\ket{v} \le \frac{\eta}{2}}\mathbf{1}\left(\left\{\|K\|_{\infty}\le \frac{3\eta}{2}\right\}\right)
		\\&= \pr{\left(1-\frac{1}{d}\right)\eta- f(V) \le \frac{\eta}{2}}\mathbf{1}\left(\left\{\|K\|_{\infty}\le \frac{3\eta}{2}\right\}\right)
		\\&= \pr{f(V)\ge \left(\frac{1}{2} - \frac{1}{d}\right)\eta}\mathbf{1}\left(\left\{\|K\|_{\infty}\le \frac{3\eta}{2}\right\}\right)
		\\&\le \exp\left(-\frac{d\eta^2}{768\|K\|_{\infty}^2}\right)\mathbf{1}\left(\left\{\|K\|_{\infty}\le \frac{3\eta}{2}\right\}\right)
		\\&\le \exp\left(-\frac{d}{1728}\right).
	\end{align*}
 Here, we used $1/d \leq 1/4$ in the second to last inequality.
\end{proof}

\begin{lemma}\label{lem:eta-far-from-k-local}
	Let $n \geq 2$, $k\le  \cO\left(\frac{n}{\log(n)}\right)$ and $H=\eta  (\proj{v} -  \dI/d)$ for $\eta>0$ and $\ket{v}=V\ket{0}, V\sim \Haar(d)$. Then, $H$ is $(\eta/4)$-far (in the operator norm) from all $k$-local traceless Hamiltonians of operator norm at most $1$ with high probability. 
\end{lemma}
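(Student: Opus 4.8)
Let $\mathcal{K}_k$ denote the set of traceless $k$-local Hamiltonians of operator norm at most $1$. The plan is to combine the single-Hamiltonian concentration estimate of \Cref{lem:concentration-operator-norm} — which controls $\pr{\|H-K\|_\infty \le \eta/2}$ for a \emph{fixed} admissible $K$ — with a union bound over a finite $(\eta/4)$-net of $\mathcal{K}_k$ in operator norm, and then to pass from the net back to all of $\mathcal{K}_k$ by the triangle inequality. So the whole argument reduces to building a net that is small enough for the $\exp(-\Omega(2^n))$ tail in \Cref{lem:concentration-operator-norm} to absorb its cardinality.

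To build the net, I would identify each $K\in\mathcal{K}_k$ with its Pauli coefficient vector $(\alpha_P(K))_P$ supported on $S_k^\ast:=\{P\in\mathds{P}_n : 0<|P|\le k\}$, and note that $\|K\|_\infty\le 1$ forces $|\alpha_P(K)|=|\tr[KP]|/d\le\|K\|_\infty\le 1$ for every $P$, so these vectors lie in $[-1,1]^{S_k^\ast}$, where $D:=|S_k^\ast|=\sum_{j=1}^{k}\binom{n}{j}3^j\le (k+1)(3n)^k$ (and trivially $D\le 4^n$). Using $\|K-K'\|_\infty\le\sum_{P\in S_k^\ast}|\alpha_P(K)-\alpha_P(K')|\le D\,\|\alpha(K)-\alpha(K')\|_\infty$, a coordinate-wise grid of spacing $\eta/(4D)$ on $[-1,1]^{S_k^\ast}$, intersected with $\mathcal{K}_k$ and with the identity coefficient kept equal to $0$, yields an internal $(\eta/4)$-net $\mathcal{N}\subseteq\mathcal{K}_k$ with $\log|\mathcal{N}|\le D\log(CD/\eta)$ for an absolute constant $C$; taking the net internal guarantees that every net point is again traceless, $k$-local and of operator norm at most $1$, so \Cref{lem:concentration-operator-norm} applies to each of them verbatim. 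The hypothesis $k\le\cO(n/\log n)$, with a sufficiently small absolute constant, makes $(3n)^k=2^{\cO(n)}$ with an exponent strictly below $n\log 2$, so (for $\eta$ bounded below by, say, $2^{-\cO(n)}$, which is the relevant regime) the lower-order factors $(k+1)$ and $\log(CD/\eta)$ are harmless and one gets $\log|\mathcal{N}|\le d/3456$.

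Then I would union bound: by \Cref{lem:concentration-operator-norm}, for each fixed $K\in\mathcal{N}$ one has $\pr{\|H-K\|_\infty\le\eta/2}\le\exp(-d/1728)$, so with probability at least $1-|\mathcal{N}|\exp(-d/1728)\ge 1-\exp(-d/3456)$ it holds that $\|H-K\|_\infty>\eta/2$ simultaneously for all $K\in\mathcal{N}$. On that event, for any $\tilde H\in\mathcal{K}_k$ pick $K\in\mathcal{N}$ with $\|\tilde H-K\|_\infty\le\eta/4$; then
\[
\|H-\tilde H\|_\infty \ge \|H-K\|_\infty-\|K-\tilde H\|_\infty > \frac{\eta}{2}-\frac{\eta}{4}=\frac{\eta}{4},
\]
which is the claim.

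The one genuinely quantitative point — the \emph{main obstacle} — is the tension between the net size and the concentration rate: \Cref{lem:concentration-operator-norm} gives failure probability $\exp(-\Theta(2^n))$ per net point, while $\log|\mathcal{N}|$ is of order $(3n)^k$ up to the lower-order factors $(k+1)$ and $\log(D/\eta)$, so the union bound survives exactly when $(3n)^k\ll 2^n$; the role of the hypothesis $k\le\cO(n/\log n)$ is precisely to secure this, and pinning down the admissible constant (and verifying the $(k+1)$, $\log(D/\eta)$ factors do not spoil the comparison) is where care is needed. Everything else — the coefficient-vs-operator-norm comparison, the internal-net bookkeeping, and the final triangle inequality — is routine.
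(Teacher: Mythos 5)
Your proposal is correct and follows essentially the same route as the paper's proof: the concentration bound of \Cref{lem:concentration-operator-norm} for a fixed traceless $K$, a covering net built at the level of Pauli coefficients with $\log$-cardinality of order $N_k=\sum_{s\le k}\binom{n}{s}3^s\le (k+1)(3n)^k$, a union bound absorbed by the $\exp(-\Omega(2^n))$ tail under $k\le\cO(n/\log n)$, and a final triangle inequality. The only differences are cosmetic bookkeeping (you control the net error by the $\ell_1$-norm of coefficient differences and insist on an internal net, while the paper uses the Frobenius-norm bound $\sqrt{d\eps^2 N_k}$ and applies the concentration lemma directly to the coefficient-grid points), and both arguments share the same implicit requirement that $\eta$ not be smaller than $2^{-\cO(n)}$ so the $\log(1/\eta)$ factor in the net size stays harmless.
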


\begin{remark}\label{remark:from-infty-to-p-norm}
	Since for any $p\geq 1$ and any Hermitian operator $X$ we have $\|X\|_p \ge \|X\|_\infty$, the (random) Hamiltonian  $H$ is $(\eta/4)$-far in the $p$-norm from all $k$-local traceless Hamiltonians of operator norm at most $1$. Therefore, we can focus on the case $p = \infty$.
\end{remark}

\begin{proof}[Proof of \Cref{lem:eta-far-from-k-local}]
We shall take $\cH_\eps$ an $\eps$-net in the space of $k$-local Hamiltonians for the $\infty$-norm at the level of the coefficients. For each $K=\sum_{P\in \dP: |P|\le k} \alpha_P P \in \Pi_{S_{k\mathrm{-loc}}}$ such that $\|K\|_{\infty}\le 1$ we have $\|\alpha\|_\infty=\max_P |\alpha_P| = \max_P \frac{\tr(PK)}{d} \le \max_P \frac{\tr(|P|)\|K\|_\infty}{d} \le1$. 
For $k$-local Hamiltonians, many entries of $\alpha$ will be zero, namely all the ones that correspond to $P$ such that $|P|>k$. In this case, we can construct an $\eps$-net on the non-zero entries of cardinality at most $|\cH_\eps|\le (1/\eps)^{N_k}$ where $N_k= \sum_{s=0}^k\binom{n}{s} 3^s $. Now if $K=\sum_{P\in \dP: |P|\le k} \alpha_P P\in \Pi_{S_{k\mathrm{-loc}}}$, then we can find a similar element $\widetilde{K}=\sum_{P\in \dP: |P|\le k} \tilde{\alpha}_P P\in \cL_k$ in our $\eps$-net such that for all $|P|\le k$ we have $|\alpha_P -\Tilde{\alpha}_P|\le \eps$. Thus, 
\begin{align*}
\|K-\widetilde{K}\|_\infty	\le \|K-\widetilde{K}\|_2\le \sqrt{d\eps^2 N_k} \le \frac{\eta}{4}
\end{align*}
for $\eps= \eta/(4\sqrt{dN_k})$.  
Now we apply the union bound and \Cref{lem:concentration-operator-norm} to bound
\begin{align*}
	\mathbb{P}_{V\sim \Haar(d)}\left(\exists K\in \Pi_{S_{k\mathrm{-loc}}}:  \|H-K\|_\infty \le \frac{\eta}{4}\right)&\le \mathbb{P}_{V\sim \Haar(d)}\left( \exists K\in \cH_\eps:\|H-K\|_\infty\le \frac{\eta}{2}\right)
	\\&\le \sum_{K\in \cH_\eps}\mathbb{P}_{V\sim \Haar(d)}\left(\|H-K\|_\infty\le \frac{\eta}{2}\right)
	\\&\le |\cH_\eps|\exp\left(-\frac{d}{1728} \right)
	\\&= \left(\frac{1}{\eps}\right)^{N_k} \exp\left(-\frac{d}{1728} \right)
	\\&= \left(\frac{4\sqrt{dN_k}}{\eta}\right)^{N_k} \exp\left(-\frac{d}{1728} \right).
\end{align*}
Moreover, as $k\leq n/2$, we have the following simple upper bound on $N_k$:
\begin{align*}
	N_k&= \sum_{s=0}^k\binom{n}{s} 3^s\le\min \left( (k+1)\binom{n}{k}3^k, 4^n \right)\le\min \left( (k+1)(3n)^k, 4^n \right).
\end{align*}
Hence, 
\begin{align*}
	\left(\frac{4\sqrt{dN_k}}{\eta}\right)^{N_k} \exp\left(-\frac{d}{1728} \right)\le \exp\left( (k+1)(3n)^k\log\left(\frac{2^{3n}}{\eta} \right) -\frac{2^n}{1728} \right)\le \exp\left(-\Omega(d) \right)
\end{align*}
if $n^2(3n)^k \le  c\cdot2^n $ for a small constant $c$. This is valid, for instance, as long as $k\le \cO\left(\frac{n}{\log(n)}\right)$.
 Therefore, with high probability, $H= \eta (\proj{v} -  \dI/d)$ is not $(\eta/4)$-close to any $k$-local Hamiltonian if $k \le \cO\left(\frac{n}{\log(n)}\right)$.
\end{proof}
Now we proceed to prove \Cref{thm:lower-Schatten-indep}. Our proof strategy is inspired by \cite{fawzi2023quantum-channel-certification}.

\begin{proof}[Proof of \Cref{thm:lower-Schatten-indep}]
Let  $H= \eta (\proj{v} -  \dI/d)$,  $U_{v,t}= \mathrm{e}^{-\mathrm{i}\eta t\cdot \proj{v} }$ and let $\cU_{v,t}$ be the unitary channel:
\begin{align*}
      \cU_{v,t}(\rho) 
    = \mathrm{e}^{-\mathrm{i}\eta t\cdot (\proj{v}-\dI/d) }\rho \mathrm{e}^{\mathrm{i}\eta t\cdot (\proj{v}-\dI/d)}= \mathrm{e}^{-\mathrm{i}\eta t\cdot \proj{v} }\rho \mathrm{e}^{\mathrm{i}\eta t\cdot \proj{v}}=U_{v,t}\rho U_{v,t}^\dagger. 
\end{align*} 
A $1/3$-correct algorithm should distinguish between the identity channel and $\cU_{v,t}$ with at least a probability $2/3$ of success. In the incoherent setting, the tester can only choose an input $\rho_k$ at each step $k$, the time evolution $t_k$ and perform a measurement using the POVM $\cM_k=\{\lambda^{(k)}_i\proj{\phi^{(k)}_i}\}_{i\in \cI_k}$ on the output quantum state $\cU_{t_k}(\rho_k)$. These choices can depend on the previous observations, that is, the algorithm can be adaptive. Let $I_{\le N}=(I_1,\dots,I_N)$ be the observations of this algorithm where $N$ is a sufficient number of independent experiments to decide correctly with a  probability at least $2/3$. 
\\
\edit{Let $P$ (resp $Q$) be the distribution of $(I_1,\dots,I_N)$ under  the null hypothesis $\cH_0$ ($H=0$)  and the alternate hypothesis  $\cH_1$ ($H=\eta (\proj{v}- \dI/d)$).}{Let $P$ be the distribution of $(I_1,\dots,I_N)$ under  the null hypothesis $\cH_0$ ($H=0$); let $Q$ be the distribution of $(I_1,\dots,I_N)$ under  the alternate hypothesis  $\cH_1$ ($H=\eta (\proj{v}- \dI/d)$).}
\ins{More concretely: }The distribution of $(I_1,\dots,I_N)$ under $\cH_0$ is: 
 \begin{align*}
     P\edit{:=}{=}
     \Bigg\{\prod_{k=1}^N \lambda^{(k)}_{i_k}\bra{\phi^{(k)}_{i_k}}\rho_k\ket{\phi^{(k)}_{i_k}}   \Bigg\}_{i_1,\dots,i_N}.
 \end{align*}
 Moreover, the distribution of $(I_1,\dots,I_N)$ under $\cH_1$ and conditioned on the choice of unitary $V$ is:
  \begin{align} \label{eq:Le-Cam}
    Q_V
    \edit{:=}{=} \Bigg\{\prod_{k=1}^N \lambda^{(k)}_{i_k}\bra{\phi^{(k)}_{i_k}} U_{v,t_k}\rho_k U_{v,t_k}^\dagger\ket{\phi^{(k)}_{i_k}}   \Bigg\}_{i_1,\dots,i_N}.
 \end{align}
 Let $\cE$ be the event that $\eta(V\proj{0}V^\dagger-\dI/d) $ is not $(\eta/4)$-close to any $k$-local Hamiltonian. By \Cref{lem:eta-far-from-k-local}, we have $\pr{\cE}\ge 1-\exp(-\Omega(d))$. Under this event, every correct algorithm should be able to distinguish between $P$ and $Q_V$ so we can apply Le Cam's method \cite{lecam1973convergence}.
 
For technical reasons, we will use instead of $Q_V$ the following parameterized distribution for $\alpha\in [0,1]$:
  \begin{align*}
    Q_V^\alpha:= \Bigg\{\prod_{k=1}^N \lambda^{(k)}_{i_k}\bra{\phi^{(k)}_{i_k}}(\alpha\rho_k+(1-\alpha) U_{v,t_k}\rho_k U_{v,t_k}^\dagger)\ket{\phi^{(k)}_{i_k}}   \Bigg\}_{i_1,\dots,i_N}.
 \end{align*}
 So we need to generalize Le Cam's inequality for small $\alpha$:
\begin{lemma}[Generalized Le Cam]\label{lemma:gen-le-cam-main-text}
      Let $n\geq \Omega(1)$.
      Let $k\le \cO\left(\frac{n}{\log(n)}\right)$.
      \edit{We have for $\alpha\le \frac{1}{10N}:$}{For any $\alpha\le \frac{1}{10N}$, if there is an incoherent algorithm that correctly distinguishes between $P$ and $Q_V$ with success probability at least $2/3$, then}
     \begin{align*}
        \exs{V\sim \Haar(d)}{  \TV(P, Q_V^\alpha)}\ge \frac{1}{18}.
     \end{align*}
 \end{lemma}
 \begin{proof}
    The proof can be found in \edit{Lemma~\ref{lem:gen-le-cam}}{Appendix \ref{appendix:deferred-proofs}}.
 \end{proof}

 Hence, as long as $\alpha\le \frac{1}{10N}$ we can use $Q_V^\alpha$ instead of $Q_V$ and have a similar Le Cam separation (albeit with a worse constant). From now on we will set $\alpha= \frac{1}{10N}$.
The next step is to use Pinsker's inequality to move from the $\TV$ distance to the $\KL$ divergence, which is more suitable for studying the adaptive incoherent setting. By Jensen's and Pinsker's inequalities we have:
\begin{align*}
  \frac{2}{18^2}  \le  2\exs{V\sim \Haar(d)}{ \TV(P,  Q_V^\alpha)}^2\le 2\exs{V\sim \Haar(d)}{ \TV(P,  Q_V^\alpha)^2}\le\exs{V\sim \Haar(d)}{ \KL(P\|Q_V^\alpha)}.
\end{align*}
 The $\KL$ divergence can be expressed as follows:
 \begin{align*}
     \KL(P\|Q_V^\alpha)&=\mathbb{E}_{i\sim P} (-\log)\left( \frac{Q_V^\alpha(i)}{P(i)}\right)
     \\&=\mathbb{E}_{i\sim P} (-\log)\left(\prod_{k=1}^N \left(\frac{\bra{\phi^{(k)}_{i_k}}(\alpha \rho_k +(1-\alpha) U_{v,t_k}\rho_k U_{v, t_k}^\dagger)\ket{\phi^{(k)}_{i_k}}}{\bra{\phi^{(k)}_{i_k}}\rho_k\ket{\phi^{(k)}_{i_k}}}\right) \right)
     \\&=\mathbb{E}_{i\sim P} \sum_{k=1}^N(-\log) \left(\alpha+(1-\alpha)\frac{\bra{\phi^{(k)}_{i_k}}U_{v,t_k}\rho_k U_{v, t_k}^\dagger\ket{\phi^{(k)}_{i_k}}}{\bra{\phi^{(k)}_{i_k}}\rho_k\ket{\phi^{(k)}_{i_k}}}\right).
 \end{align*}
 
 Now, we will use a \edit{cut}{case distinction based on a probability threshold}
 \del{on $i\sim P$} as in  \cite{fawzi2023quantum-channel-certification}. Note that here, since we do not have the freedom to choose a non-unitary channel, we choose $\alpha=\frac{1}{10N}$ rather than $\alpha=\frac{1}{2}$ as in \cite{fawzi2023quantum-channel-certification}. This comes  with the cost of an additional logarithmic factor in the lower bound. 
 
For $k\in [N]$ and $i_{\le k}=(i_1, \dots, i_k)$, define the event
\begin{equation*}
    \cG(k, i_{\le k})= \left\{ \bra{\phi^{(k)}_{i_k}}\rho_k\ket{\phi^{(k)}_{i_k}} \le \frac{(1-\cos(\eta t_k))}{d^2} \right\} \,.
\end{equation*} 
We can distinguish whether the event $\cG$ is satisfied or not:
\begin{align*}
    &\mathbb{E}_{V\sim \Haar(d)}\KL(P\| Q^\alpha_V)
    = \sum_{k=1}^N \mathbb{E}_{V\sim \Haar(d)} \mathbb{E}_{i_{\le k}} (\mathbf{1}(\cG(k, i_{\le k}))+ \\&\mathbf{1}(\cG^c(k, i_{\le k})))(-\log)\left(\frac{\bra{\phi^{(k)}_{i_k}}(\alpha \rho_k +(1-\alpha) U_{v,t_k}\rho_k U_{v, t_k}^\dagger)\ket{\phi^{(k)}_{i_k}}}{\bra{\phi^{(k)}_{i_k}}\rho_k\ket{\phi^{(k)}_{i_k}}}\right) .
\end{align*}
Let us first analyze the setting when the event $\cG$ holds. Fix $k\in [N]$, observe that we have the inequality:
\begin{align*}
   & (-\log)\left(\frac{\bra{\phi^{(k)}_{i_k}}(\alpha \rho_k +(1-\alpha) U_{v,t_k}\rho_k U_{v, t_k}^\dagger)\ket{\phi^{(k)}_{i_k}}}{\bra{\phi^{(k)}_{i_k}}\rho_k\ket{\phi^{(k)}_{i_k}}}\right) 
    \\ &= (-\log) \left(\alpha+(1-\alpha)\frac{\bra{\phi^{(k)}_{i_k}}U_{v,t_k}\rho_k U_{v, t_k}^\dagger\ket{\phi^{(k)}_{i_k}}}{\bra{\phi^{(k)}_{i_k}}\rho_k\ket{\phi^{(k)}_{i_k}}}\right)\notag
     \le \log\left( \frac{1}{\alpha}\right)
\end{align*}
The last inequality follows from the monotonicity of the logarithm. Then we can control the expectation under the event $\cG$ as follows:
\begin{align}\label{eq: E1}
    &\mathbb{E}_{V\sim \Haar(d)}\mathbb{E}_{i_{\le k}} \mathbf{1}(\cG(k, i_{\le k}))(-\log) \left(\frac{\bra{\phi^{(k)}_{i_k}}(\alpha \rho_k +(1-\alpha) U_{v,t_k}\rho_k U_{v, t_k}^\dagger)\ket{\phi^{(k)}_{i_k}}}{\bra{\phi^{(k)}_{i_k}}\rho_k\ket{\phi^{(k)}_{i_k}}}\right) \notag
    \\&\le  \mathbb{E}_{V\sim \Haar(d)}\mathbb{E}_{i_{\le k}}  \mathbf{1}(\cG(k, i_{\le k}))  \log\left( \frac{1}{\alpha}\right)
    \notag
    \\&= \mathbb{E}_{V\sim \Haar(d)}\mathbb{E}_{i_{\le k-1}}  \sum_{i_k}\lambda^{(k)}_{i_k} \bra{\phi^{(k)}_{i_k}}\rho_k\ket{\phi^{(k)}_{i_k}}  \mathbf{1}(\cG(k, i_{\le k}))  \log\left( \frac{1}{\alpha}\right)  \notag
    \\&\le \mathbb{E}_{V\sim \Haar(d)}\mathbb{E}_{i_{\le k-1}} \sum_{i_k}\lambda^{(k)}_{i_k} \left( \frac{(1-\cos(\eta t_k))}{d^2}\right) \mathbf{1}(\cG(t, i_{\le k}))\log\left( \frac{1}{\alpha}\right)
    \notag
    \\&\le \mathbb{E}_{V\sim \Haar(d)}\mathbb{E}_{i_{\le k-1}}  \sum_{i_k}\lambda^{(k)}_{i_k} \left( \frac{(1-\cos(\eta t_k))}{d^2}\right) \log\left( \frac{1}{\alpha}\right)\notag
    \\&=      \frac{\log(10N)}{d} \; \mathbb{E}_{i_{\le k-1}}(1-\cos(\eta t_k))
\end{align}
where we used in the second inequality the fact that under  $\cG$ we have   $\bra{\phi^{(k)}_{i_k}}\rho_k\ket{\phi^{(k)}_{i_k}} \le \frac{(1-\cos(\eta t_k))}{d^2} $  and in the last equality the fact    $\sum_{i_k} \lambda^{(k)}_{i_k} = d$ which is an implication of the fact that  $\cM_k=\left\{\lambda^{(k)}_{i}\proj{\phi^{(k)}_{i}}\right\}_{i\in \cI_k}$ is a POVM.
Next, under $\cG^c(t, i_{\le k})$, we will use instead the following inequality valid for all $x\in \big[\frac{1}{10N},+\infty\big)$: 
\begin{align}\label{ineq-log-N}
    (-\log)(x)\le -(x-1)+2\log(10N)(x-1)^2.
\end{align}
A simple proof can be found in \Cref{lem:elem-calculations}.

We apply the inequality~\eqref{ineq-log-N} for $$x=\frac{\bra{\phi^{(k)}_{i_k}}(\alpha \rho_k +(1-\alpha) U_{v,t_k}\rho_k U_{v, t_k}^\dagger)\ket{\phi^{(k)}_{i_k}}}{\bra{\phi^{(k)}_{i_k}}\rho_k\ket{\phi^{(k)}_{i_k}}}=\alpha+(1-\alpha) \frac{\bra{\phi^{(k)}_{i_k}}U_v\rho_k U_v^\dagger\ket{\phi^{(k)}_{i_k}}}{\bra{\phi^{(k)}_{i_k}}\rho_k\ket{\phi^{(k)}_{i_k}}}\ge\frac{1}{10N}. $$ 
Let $M_{v, t_k}= \dI- U_{v, t_k}= -(\mathrm{e}^{-\mathrm{i}\eta t_k}-1)\proj{v}$  and $S_{v, t_k}= \dI-\frac{1}{2}M_{v, t_k}$. The first term of the upper bound of Inequality~\eqref{ineq-log-N} is
\begin{align}\label{eq: E2}
    -(x-1)&= 1-\frac{\bra{\phi^{(k)}_{i_k}}(\alpha \rho_k +(1-\alpha) U_{v,t_k}\rho_k U_{v, t_k}^\dagger)\ket{\phi^{(k)}_{i_k}}}{\bra{\phi^{(k)}_{i_k}}\rho_k\ket{\phi^{(k)}_{i_k}}} \notag
    \\&= (1-\alpha)\frac{\bra{\phi^{(k)}_{i_k}}M_{v, t_k}\rho_kS_{v, t_k}^\dagger \ket{\phi^{(k)}_{i_k}}}{\bra{\phi^{(k)}_{i_k}}\rho_k\ket{\phi^{(k)}_{i_k}}}   +(1-\alpha)\frac{\bra{\phi^{(k)}_{i_k}}S_{v, t_k}\rho_kM_{v, t_k}^\dagger \ket{\phi^{(k)}_{i_k}}}{\bra{\phi^{(k)}_{i_k}}\rho_k\ket{\phi^{(k)}_{i_k}}} \notag 
    \\&=2(1-\alpha)\Re\frac{\bra{\phi^{(k)}_{i_k}}M_{v, t_k}\rho_kS_{v, t_k}^\dagger \ket{\phi^{(k)}_{i_k}}}{\bra{\phi^{(k)}_{i_k}}\rho_k\ket{\phi^{(k)}_{i_k}}}\, , 
\end{align}
and by using first the inequality $(x+y)^2\le 2(x^2+y^2)$ and then the Cauchy Schwarz inequality applied for the vectors $\sqrt{\rho_k}\ket{\phi^{(k)}_{i_k} }$ and $\sqrt{\rho_k}M_{v, t_k}^{\dagger}\ket{\phi^{(k)}_{i_k} }$, we can upper bound the second term of Inequality~\eqref{ineq-log-N} as follows:
\begin{align}\label{eq: E3}
    &(x-1)^2\\
    &= \left(\frac{\bra{\phi^{(k)}_{i_k}}(\alpha \rho_k +(1-\alpha) U_{v,t_k}\rho_k U_{v, t_k}^\dagger)\ket{\phi^{(k)}_{i_k}}}{\bra{\phi^{(k)}_{i_k}}\rho_k\ket{\phi^{(k)}_{i_k}}} -1\right)^2\notag
    \\&= (1-\alpha)^2 \left(  \frac{2\Re\bra{\phi^{(k)}_{i_k}}M_{v, t_k} \rho_k\ket{\phi^{(k)}_{i_k}}}{\bra{\phi^{(k)}_{i_k}}\rho_k\ket{\phi^{(k)}_{i_k}}} - \frac{\bra{\phi^{(k)}_{i_k}}M_{v, t_k}\rho_kM_{v, t_k}^\dagger\ket{\phi^{(k)}_{i_k}}}{\bra{\phi^{(k)}_{i_k}}\rho_k\ket{\phi^{(k)}_{i_k}}} \right)^2 \notag
     \\&\le 8(1-\alpha)^2 \left(  \frac{\left\lvert\bra{\phi^{(k)}_{i_k}}M_{v, t_k} \rho_k\ket{\phi^{(k)}_{i_k}}\right\rvert}{\bra{\phi^{(k)}_{i_k}}\rho_k\ket{\phi^{(k)}_{i_k}}}\right)^2 + 2(1-\alpha)^2 \left(\frac{\bra{\phi^{(k)}_{i_k}}M_{v, t_k}\rho_kM_{v, t_k}^\dagger\ket{\phi^{(k)}_{i_k}}}{\bra{\phi^{(k)}_{i_k}}\rho_k\ket{\phi^{(k)}_{i_k}}}     \right)^2 \notag
      \\&\le 8(1-\alpha)^2 \left(   \frac{\bra{\phi^{(k)}_{i_k}}M_{v, t_k} \rho_kM_{v, t_k}^\dagger\ket{\phi^{(k)}_{i_k}}}{\bra{\phi^{(k)}_{i_k}}\rho_k\ket{\phi^{(k)}_{i_k}}}\right) + 2(1-\alpha)^2 \left(\frac{\bra{\phi^{(k)}_{i_k}}M_{v, t_k}\rho_kM_{v, t_k}^\dagger\ket{\phi^{(k)}_{i_k}}}{\bra{\phi^{(k)}_{i_k}}\rho_k\ket{\phi^{(k)}_{i_k}}}     \right)^2 \ . 
\end{align}
 Let us compute the expectation of~\Cref{eq: E2}. Let $M, S$ such that $M_{v, t_k}=VMV^\dagger $ and  $S_{v, t_k}=VSV^\dagger $. 
 Concretely
	$$ 
	M= \begin{pmatrix} 
     1- \mathrm{e}^{-\mathrm{i}\eta t_k}& \rvline &  \bigzero  \\
	\hline
	&\rvline & \\
	\bigzero &\rvline &    \bigzero_{d-1}  
	\\
	& \rvline&
	\end{pmatrix}
	~~\text{ and }~~ 
	S= \begin{pmatrix}
    \frac{1}{2} +\frac{\mathrm{e}^{-\mathrm{i}\eta t_k}}{2} & \rvline & \bigzero \\
	\hline
	&\rvline & \\
	\bigzero &\rvline &    \bigeye_{d-1}  \\
	&\rvline & 
	\end{pmatrix}
	.$$
 Note that $\tr(M)= (1-\mathrm{e}^{-\mathrm{i}\eta t_k})$, $\tr(S)=d-\frac{1}{2}+\frac{\mathrm{e}^{-\mathrm{i}\eta t_k}}{2}$, $\tr(MS^\dagger)=\mathrm{i}\sin(\eta t_k)$,  $\tr(MS)= \frac{1-\mathrm{e}^{-2\mathrm{i}\eta t_k}}{2}$ and $MM^\dagger = M+M^\dagger= M^\dagger M $.
We have 
\begin{align*}
    \Re(\tr(M)\tr(S))&= \Re\left((1-\mathrm{e}^{-\mathrm{i}\eta t_k})\big(d-\frac{1}{2}+\frac{\mathrm{e}^{-\mathrm{i}\eta t_k}}{2}\big)\right)
    \\&=(1-\cos(\eta t_k))\big(d-\frac{1}{2}+\frac{\cos(\eta t_k)}{2}\big) + \frac{1}{2}\sin^2(\eta t_k)
    \\&\le (1-\cos(\eta t_k))d + \frac{1}{2}\sin^2(\eta t_k)\, .
\end{align*}
Moreover,
\begin{equation*}
    \Re(\tr(M)\tr(S)) \geq -1/2 \sin^2(\eta t_k) \, .
\end{equation*}
Hence, 
\begin{equation*}
    |\Re(\tr(M)\tr(S))| \leq (1-\cos(\eta t_k))d + \frac{1}{2}\sin^2(\eta t_k)\, .
\end{equation*}
So we have by  Weingarten calculus \cite{collins2006integration} (see \Cref{sec:weingarten facts} for the most important facts used here):
{\small
\begin{align}
    &\left|\mathbb{E}_{V\sim \Haar(d)} \left( \Re\frac{\bra{\phi^{(k)}_{i_k}}M_{v, t_k}\rho_kS_{v, t_k}^\dagger \ket{\phi^{(k)}_{i_k}}}{\bra{\phi^{(k)}_{i_k}}\rho_k\notag\ket{\phi^{(k)}_{i_k}}}    \right)\right|
    \\&=\left|\frac{\Re \mathbb{E}_{V\sim \Haar(d)} \left( \bra{\phi^{(k)}_{i_k}}VMV^\dagger \rho_kV S^\dagger V^\dagger \ket{\phi^{(k)}_{i_k}}\right)}{\bra{\phi^{(k)}_{i_k}}\rho_k\ket{\phi^{(k)}_{i_k}}}\right|\notag
    \\&=\left|\frac{\Re \sum_{\alpha, \beta \in \fS_2} \W(\alpha\beta, d) \tr_\alpha(M,S^\dagger)\tr_{\beta (12)}(\rho_k, \proj{\phi^{(k)}_{i_k}})}{\bra{\phi^{(k)}_{i_k}}\rho_k\ket{\phi^{(k)}_{i_k}}}\right|\notag
    \\&=\left|\frac{\Re\left(d\tr(MS^\dagger)+ d\tr(M)\tr(S)\bra{\phi^{(k)}_{i_k}}\rho_k\ket{\phi^{(k)}_{i_k}}- \tr(MS^\dagger )\bra{\phi^{(k)}_{i_k}}\rho_k\ket{\phi^{(k)}_{i_k}}- \tr(M)\tr(S^\dagger) \right)}{d(d^2-1)\bra{\phi^{(k)}_{i_k}}\rho_k\ket{\phi^{(k)}_{i_k}}} \right|\label{eq:Haar-S2}
   \\&= \left|\frac{1}{\bra{\phi^{(k)}_{i_k}}\rho_k\ket{\phi^{(k)}_{i_k}}}\left(\frac{  \Re(\tr(M)\tr(S)) (d\bra{\phi^{(k)}_{i_k}}\rho_k\ket{\phi^{(k)}_{i_k}}-1) }{d(d^2-1)}\right) \right|\notag
   \\&\le \frac{1}{d(d^2-1)\bra{\phi^{(k)}_{i_k}}\rho_k\ket{\phi^{(k)}_{i_k}}}\left(  \left[(1-\cos(\eta t_k))d + \frac{1}{2}\sin^2(\eta t_k)\right] (d\bra{\phi^{(k)}_{i_k}}\rho_k\ket{\phi^{(k)}_{i_k}}+1)\right) \notag
    \\&\le\del{\cO\left(\right.} \ins{2}\frac{(1-\cos(\eta t_k)) }{d}+  \ins{2}\frac{(1-\cos(\eta t_k))}{d^2 \bra{\phi^{(k)}_{i_k}}\rho_k\ket{\phi^{(k)}_{i_k}}}+ \frac{\sin^2(\eta t_k)}{d^2\bra{\phi^{(k)}_{i_k}}\rho_k\ket{\phi^{(k)}_{i_k}}}\del{\left.\right)}. \notag 
\end{align}
}

Recall the notation 
\begin{equation*}
    \mathbb{E}_{i\le t}(X(i_1, \dots, i_t)) = \sum_{i_1, \dots, i_t} \prod_{k=1}^t \lambda^{(k)}_{i_k}\bra{\phi^{(k)}_{i_k}}\rho_k\ket{\phi^{(k)}_{i_k}} X(i_1, \dots, i_t) \,.
\end{equation*}
If we take the expectation $\mathbb{E}_{i\le t}$ under the event $\cG^c(t, i_{\le k})$, we obtain
\begin{align}\label{eq: E2 UB}
     & \mathbb{E}_{V\sim \Haar(d)} \mathbb{E}_{i_{\le k}}  \mathbf{1}(\cG^c(k, i_k)) \left( \Re\frac{\bra{\phi^{(k)}_{i_k}}M_{v, t_k}\rho_kS_{v, t_k}^\dagger \ket{\phi^{(k)}_{i_k}}}{\bra{\phi^{(k)}_{i_k}}\rho_k\notag\ket{\phi^{(k)}_{i_k}}}    \right)\notag 
     \\&\le \mathbb{E}_{i_{\le k}}  \mathbf{1}(\cG^c(k, i_k)) \left|\mathbb{E}_{V\sim \Haar(d)} \left( \Re\frac{\bra{\phi^{(k)}_{i_k}}M_{v, t_k}\rho_kS_{v, t_k}^\dagger \ket{\phi^{(k)}_{i_k}}}{\bra{\phi^{(k)}_{i_k}}\rho_k\notag\ket{\phi^{(k)}_{i_k}}}    \right)\right|\notag 
     \\&\le \mathbb{E}_{i_{\le k}}  \mathbf{1}(\cG^c(k, i_k))\del{\cO}\left( \ins{2}\frac{(1-\cos(\eta t_k)) }{d}+  \frac{\ins{2}(1-\cos(\eta t_k))+ \sin^2(\eta t_k)}{d^2 \bra{\phi^{(k)}_{i_k}}\rho_k\ket{\phi^{(k)}_{i_k}} }\right)
     \notag
     \\&\le  \mathbb{E}_{i_{\le k}}  \del{\cO}\left( \ins{2}\frac{(1-\cos(\eta t_k)) }{d}+  \frac{\ins{2}(1-\cos(\eta t_k))+ \sin^2(\eta t_k)}{d^2 \bra{\phi^{(k)}_{i_k}}\rho_k\ket{\phi^{(k)}_{i_k}} }\right)
     \notag
     \\&=  \mathbb{E}_{i_{\le k-1}}  \del{\cO}\left(\ins{2}\frac{(1-\cos(\eta t_k))}{d}\right)\notag\\& \quad +  \mathbb{E}_{i_{\le k-1}} \sum_{i_k}  \lambda^{(k)}_{i_k} \bra{\phi^{(k)}_{i_k}}\rho_k\ket{\phi^{(k)}_{i_k}}\cdot  \del{\cO}\left(\frac{\ins{2}(1-\cos(\eta t_k))+ \sin^2(\eta t_k)}{d^2 \bra{\phi^{(k)}_{i_k}}\rho_k\ket{\phi^{(k)}_{i_k}} }\right)  \notag 
     \\&=  \mathbb{E}_{i_{\le k-1}} \del{\cO}\left(\frac{\ins{4}(1-\cos(\eta t_k))+\sin^2(\eta t_k)}{d}\right)\, ,
\end{align}
where we use  $\sum_{i_k} \lambda^{(k)}_{i_k} = d$. We move to the expectation $\mathbb{E}_{i_{\le k}}  $ of the first term of~\Cref{eq: E3}, it is non negative so we can safely remove the condition $\mathbf{1}(\cG^c(k, i_k)) $:
 \begin{align}\label{eq: E3-1 UB}
 & \mathbb{E}_V\mathbb{E}_{i_{\le k}}  \mathbf{1}(\cG^c(k, i_k)) \frac{\bra{\phi^{(k)}_{i_k}}M_{v, t_k}\rho_kM_{v, t_k}^\dagger\ket{\phi^{(k)}_{i_k}}}{\bra{\phi^{(k)}_{i_k}}\rho_k\ket{\phi^{(k)}_{i_k}}} \le \mathbb{E}_V\mathbb{E}_{i_{\le k}}  \frac{\bra{\phi^{(k)}_{i_k}}M_{v, t_k}\rho_kM_{v, t_k}^\dagger\ket{\phi^{(k)}_{i_k}}}{\bra{\phi^{(k)}_{i_k}}\rho_k\ket{\phi^{(k)}_{i_k}}}\notag
  \\&= \mathbb{E}_V\mathbb{E}_{i_{\le k-1}} \sum_{i_k}\lambda^{(k)}_{i_k}  \bra{\phi^{(k)}_{i_k}}M_{v, t_k}\rho_kM_{v, t_k}^\dagger\ket{\phi^{(k)}_{i_k}}\notag
  \\&= \mathbb{E}_{i_{\le k-1}} \mathbb{E}_V \tr(M_{v, t_k}\rho_kM_{v, t_k}^\dagger)
   = \mathbb{E}_{i_{\le k-1}} \mathbb{E}_V \tr((M_{v, t_k}+M_{v, t_k}^\dagger)\rho_k) \notag
   \\&=\mathbb{E}_{i_{\le k-1}}  \frac{2(1-\cos(\eta t_k))}{d}
 \end{align}
because $\mathbb{E}_{V} M_{v, t_k} = \frac{ (1-\mathrm{e}^{-\mathrm{i}\eta t_k})}{d}\dI$.
\\Concerning the expectation of the second term of~\Cref{eq: E3}, we apply again the Weingarten calculus (\Cref{lem:Wg}) to have:
\begin{align*}
   & \mathbb{E}_{V\sim \Haar(d)}  \left( \frac{\bra{\phi^{(k)}_{i_k}}M_{v, t_k}\rho_kM_{v, t_k}^\dagger \ket{\phi^{(k)}_{i_k}}}{\bra{\phi^{(k)}_{i_k}}\rho_k\ket{\phi^{(k)}_{i_k}}}    \right)^2
   =   \frac{\mathbb{E}_{V\sim \Haar(d)}\bra{\phi^{(k)}_{i_k}}M_{v, t_k}\rho_kM_{v, t_k}^\dagger \ket{\phi^{(k)}_{i_k}}^2}{\bra{\phi^{(k)}_{i_k}}\rho_k\ket{\phi^{(k)}_{i_k}}^2} 
   \\&= \frac{\mathbb{E}_{V\sim \Haar(d)}\tr(\proj{\phi^{(k)}_{i_k}}VM V^\dagger\rho_k VM^\dagger V^\dagger \proj{\phi^{(k)}_{i_k}}VMV^\dagger \rho_k VM^\dagger V^\dagger)  }{\bra{\phi^{(k)}_{i_k}}\rho_k\ket{\phi^{(k)}_{i_k}}^2} 
   \\&= \frac{1}{\bra{\phi^{(k)}_{i_k}}\rho_k\ket{\phi^{(k)}_{i_k}}^2} \sum_{\alpha, \beta \in \fS_4}\W(\beta\alpha^{-1},d) \tr_{\beta^{-1}}(M, M^\dagger, M, M^\dagger)\\&\qquad\cdot \tr_{\alpha \gamma}( \rho_k, \proj{\phi^{(k)}_{i_k}}, \rho_k, \proj{\phi^{(k)}_{i_k}}).
\end{align*}
 Note that  
 \begin{align*}
     &\ptr{\alpha \gamma}{ \rho_k, \proj{\phi^{(k)}_{i_k}}, \rho_k, \proj{\phi^{(k)}_{i_k}}}\\& \quad \in \left\{1, \tr(\rho_k^2), \bra{\phi^{(k)}_{i_k}}\rho_k^2\ket{\phi^{(k)}_{i_k}},\bra{\phi^{(k)}_{i_k}}\rho_k\ket{\phi^{(k)}_{i_k}}, \bra{\phi^{(k)}_{i_k}}\rho_k\ket{\phi^{(k)}_{i_k}}^2 \right\},
 \end{align*}
 $\tr(\rho_k^2)\le 1$ and $\bra{\phi^{(k)}_{i_k}}\rho_k^2\ket{\phi^{(k)}_{i_k}}\le \bra{\phi^{(k)}_{i_k}}\rho_k\ket{\phi^{(k)}_{i_k}}\le 1$. 
 Moreover, it is clear that since \ins{ $M =  (1- \mathrm{e}^{-\mathrm{i}\eta t_k}) \proj{0}$ }   \del{$\tr[M]\tr[M^\dagger] = \tr[M M^\dagger]$}, we have \[|\tr_{\beta}(M, M^\dagger, M, M^\dagger)| = \ins{|1- \mathrm{e}^{-\mathrm{i}\eta t_k}|^4\tr_{\beta}(\proj{0},\proj{0},\proj{0},\proj{0})}= \del{\cO} \ins{4}(1-\cos(\eta t_k))^2.\] 
\del{ In the case $\beta$ is a $4$ cycle we have $\tr(MM^\dagger MM^\dagger )= \tr(MMM^\dagger M^\dagger )= \tr((M+M^\dagger)^2)= 4(1-\cos(\eta t_k))^2$.} Also, we know that for all $(\alpha, \beta)\in \fS_4^2$: $|\W(\beta\alpha^{-1},d)|\le \frac{10}{d^4}$~\cite{collins2006integration} for $d \geq 4$, so
 \begin{align*}
     &\left|\W(\beta \alpha^{-1},d) \tr_{\beta^{-1}}(M, M^\dagger, M, M^\dagger)\tr_{\alpha \gamma}\left(\proj{\phi^{(k)}_{i_k}}, \rho_k, \proj{\phi^{(k)}_{i_k}}, \rho_k\right)\right|\\&\quad\le \del{\cO\left(\right.}\frac{\ins{40}(1-\cos(\eta t_k))^2}{d^4}\del{\left.\right)}.
     \end{align*}
 Therefore we have:
\begin{align*}
   & \mathbb{E}_{V\sim \Haar(d)}  \left( \frac{\bra{\phi^{(k)}_{i_k}}M_{v, t_k}\rho_kM_{v, t_k}^\dagger \ket{\phi^{(k)}_{i_k}}}{\bra{\phi^{(k)}_{i_k}}\rho_k\ket{\phi^{(k)}_{i_k}}}    \right)^2
   \le \del{\cO\left( \right.}\frac{\ins{40\cdot 4!^2}(1-\cos(\eta t_k))^2}{ d^4\bra{\phi^{(k)}_{i_k}}\rho_k\ket{\phi^{(k)}_{i_k}}^2}\del{\left.\right)}.
\end{align*}
 Now if we take the expectation $\mathbb{E}_{i_{\le k}}  $ under the event 
 \begin{equation*}
 \cG^c(k, i_{\le k})= \left\{ \bra{\phi^{(k)}_{i_k}}\rho_k\ket{\phi^{(k)}_{i_k}} > \frac{(1-\cos(\eta t_k))}{d^2} \right\},
 \end{equation*}
 we obtain:
\begin{align}\label{eq: E3-2 UB}
   & \mathbb{E}_{i_{\le k}}\mathbb{E}_{V\sim \Haar(d)}\mathbf{1}(\cG^c(k, i_{\le k})) \left( \frac{\bra{\phi^{(k)}_{i_k}}M_{v, t_k}\rho_kM_{v, t_k}^\dagger \ket{\phi^{(k)}_{i_k}}}{\bra{\phi^{(k)}_{i_k}}\rho_k\ket{\phi^{(k)}_{i_k}}}    \right)^2 \notag
   \\&\le \mathbb{E}_{i_{\le k}}\mathbf{1}(\cG^c(k, i_{\le k}))\del{\cO}\left( \frac{\ins{40\cdot 4!^2}(1-\cos(\eta t_k))^2}{ d^4\bra{\phi^{(k)}_{i_k}}\rho_k\ket{\phi^{(k)}_{i_k}}}\cdot \frac{1}{ \bra{\phi^{(k)}_{i_k}}\rho_k\ket{\phi^{(k)}_{i_k}}}\right)\notag
   \\&\le \mathbb{E}_{i_{\le k}}\mathbf{1}(\cG^c(k, i_{\le k}))\del{\cO}\left( \frac{\ins{40\cdot 4!^2}(1-\cos(\eta t_k))^2}{ d^4\bra{\phi^{(k)}_{i_k}}\rho_k\ket{\phi^{(k)}_{i_k}}}\cdot \frac{d^2}{(1-\cos(\eta t_k))}\right)\notag   
   \\&=\mathbb{E}_{i_{\le k-1}} \sum_{i_k}  \lambda^{(k)}_{i_k} \bra{\phi^{(k)}_{i_k}}\rho_k\ket{\phi^{(k)}_{i_k}}\mathbf{1}(\cG^c(k, i_{\le k}))\del{\cO}\left( \frac{\ins{40\cdot 4!^2}(1-\cos(\eta t_k))}{ d^2\bra{\phi^{(k)}_{i_k}}\rho_k\ket{\phi^{(k)}_{i_k}}}\right) \notag
   \\&\le \mathbb{E}_{i_{\le k-1}} \sum_{i_k}  \lambda^{(k)}_{i_k}\del{\cO}\left( \frac{\ins{40\cdot 4!^2}(1-\cos(\eta t_k))}{ d^2}\right) \notag %= \mathbb{E}_{i_{\le k-1}}  \del{\cO}\left( \frac{\ins{40\cdot 4!^2}d(1-\cos(\eta t_k))}{ d^2}\right)\notag
   \\&=\mathbb{E}_{i_{\le k-1}}\del{\cO}\left( \frac{\ins{40\cdot 4!^2}(1-\cos(\eta t_k))}{ d}\right) \, ,
\end{align}
where we use  $\sum_{i_k} \lambda^{(k)}_{i_k} = d$. By adding up Equations \eqref{eq: E2 UB}, \eqref{eq: E3-1 UB} and \eqref{eq: E3-2 UB}, we obtain:
\begin{align*}
   & \mathbb{E}_{i_{\le k}}\mathbb{E}_{V\sim \Haar(d)}\mathbf{1}(\cG^c(k, i_{\le k}))(-\log)\left(\frac{\bra{\phi^{(k)}_{i_k}}(\alpha \rho_k +(1-\alpha) U_{v,t_k}\rho_k U_{v, t_k}^\dagger)\ket{\phi^{(k)}_{i_k}}}{\bra{\phi^{(k)}_{i_k}}\rho_k\ket{\phi^{(k)}_{i_k}}}\right) 
    \\&\le \mathbb{E}_{i_{\le k}}\mathbb{E}_{V\sim \Haar(d)} \mathbf{1}(\cG^c(k, i_{\le k}))\left(\frac{2(1-\alpha)\Re(\bra{\phi^{(k)}_{i_k}}M_{v, t_k}\rho_kS_{v, t_k}^\dagger \ket{\phi^{(k)}_{i_k}})}{\bra{\phi^{(k)}_{i_k}}\rho_k\ket{\phi^{(k)}_{i_k}}}        \right)
    \\&\quad +  \mathbb{E}_{i_{\le k}}\mathbb{E}_{V\sim \Haar(d)} \mathbf{1}(\cG^c(k, i_{\le k}))\left( \frac{4\log(10N)(1-\alpha)^2\bra{\phi^{(k)}_{i_k}}M_{v, t_k}\rho_kM_{v, t_k}^\dagger\ket{\phi^{(k)}_{i_k}}^2}{\bra{\phi^{(k)}_{i_k}}\rho_k\ket{\phi^{(k)}_{i_k}}^2}     \right)
    \\&\quad +\mathbb{E}_{i_{\le k}}\mathbb{E}_{V\sim \Haar(d)} \mathbf{1}(\cG^c(k, i_{\le k}))  16\log(10N)(1-\alpha)^2\left(  \frac{\bra{\phi^{(k)}_{i_k}}M_{v, t_k} \rho_kM_{v, t_k}^\dagger\ket{\phi^{(k)}_{i_k}}}{\bra{\phi^{(k)}_{i_k}}\rho_k\ket{\phi^{(k)}_{i_k}}}\right)^{\del{2}}\notag 
    \\&\edit{=}{\le}  \mathbb{E}_{i_{\le k-1}}\del{\cO}\left(\frac{(\ins{8+ (160\cdot 4!^2 + 32)\log(10N) })(1-\cos(\eta t_k))+\ins{2}\sin^2(\eta t_k)}{ d}\right).
\end{align*}
Therefore using this upper bound and the upper bound  in \Cref{eq: E1} we get an upper bound on the expected $\KL$ divergence:
\begin{align*}
&\mathbb{E}_{V\sim \Haar(d)}\KL(P\| Q^\alpha_V)
\\&=  \sum_{k=1}^N\mathbb{E}_{i_{\le k}}\mathbb{E}_{V\sim \Haar(d)}(\mathbf{1}(\cG(t, i_{\le k})) + \mathbf{1}(\cG^c(k, i_{\le k})))(-\log) \left( \frac{\bra{\phi^{(k)}_{i_k}}\edit{\cU_{t_k}(}{U_{v, t_k}}\rho_k\edit{)}{U_{v, t_k}^\dagger}\ket{\phi^{(k)}_{i_k}}}{\bra{\phi^{(k)}_{i_k}}\rho_k\ket{\phi^{(k)}_{i_k}}}\right)
     \\&\le\sum_{k=1}^N  \mathbb{E}_{i_{\le k-1}}\del{\cO}\left(\frac{\ins{\log(10N)}(1-\cos(\eta t_k))}{d}\right)\\&\hphantom{\le\sum_{k=1}^N}+ \mathbb{E}_{i_{\le k-1}}\del{\cO}\left(\frac{\ins{(8+ (160\cdot 4!^2 + 32)\log(10N))}(1-\cos(\eta t_k))+\ins{2}\sin^2(\eta t_k)}{d}\right)
     \\&= \sum_{k=1}^N \mathbb{E}_{i_{\le k-1}}\del{\cO}\left(\frac{\ins{(8+ (160\cdot 4!^2 + 33)\log(10N))}(1-\cos(\eta t_k)+\ins{2}\sin^2(\eta t_k))}{d}\right)
     \\&\le \sum_{k=1}^N \mathbb{E}_{i_{\le k-1}}\del{\cO}\left(\frac{\ins{(10+ (160\cdot 4!^2 + 33)\log(10N))}\min(1,\eta t_k)}{d}\right)
\end{align*}
where we used $1-\cos(x)\le \min(1,x)$ and $\sin^2(x)\le\min(1,x)$ for $x\ge 0$. 
Finally since $\mathbb{E}_{V\sim \Haar(d)}\KL(P\| Q_V^\alpha)\ge \frac{2}{18^2}$  we deduce that:
\begin{align*}
  \sum_{k=1}^N \mathbb{E}_{i_{\le k-1}}\min(1, \eta t_k) \ins{\ge \frac{2d}{18^2(10+ (160\cdot 4!^2 + 33)\log(10N))}}= \Omega\left( \frac{d}{\log(N)}\right).
\end{align*}
In particular, we have $N\log(N)\ge\Omega\left( d\right) $ which implies that:
\begin{align*}
  N=\Omega\left( \frac{d}{\log(d)}\right).
\end{align*}
   Finally, the expected total evolution time  is lower bounded as follows:
   \begin{align*}
       \ex{ \sum_{k=1}^N t_k} = \Omega\left( \frac{d}{\eta\log(N)}\right).
   \end{align*}
   We can set $\eps = \eta/4$ and use our assumption that $N\le \exp(\cO(n))/\eps^{\cO(1)}$ to get the claimed expected total evolution time lower bound.
\end{proof}

\subsection{Coherent setting} \label{sec:lower-seq-setting}
In section, we will prove another hardness result in the more general coherent setting. However, we will pay for the greater generality with slightly weaker bounds.

\begin{theorem}\label{sequential-LB-op-norm}
    Let $n\geq 2$, $k\le \cO\left(\frac{n}{\log(n)}\right)$, and $p\geq 1$.   
    The problem $\cT^{\mathrm{loc}}_{\|\cdot\|_p}(\eta)$, even under the additional promise that the unknown Hamiltonian $H$ satisfies $\tr[H]=0$ and $\norm{H}_\infty\leq 1$, requires a total evolution time of $\sum_{k=1}^N t_k=\Omega\left( \frac{2^{n/2}}{\eps}\right)$ and a total number of independent experiments $N =\Omega\left( 2^{n/2}\right)$ in the ancilla-assisted coherent setting.
\end{theorem}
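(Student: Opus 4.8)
The plan is to reduce $\cT^{\mathrm{loc}}_{\|\cdot\|_p}(\eps)$ to the very same channel‑distinguishing task that drives the incoherent bound, and then prove a coherent lower bound for it by a hybrid (telescoping) argument. By \Cref{remark:from-infty-to-p-norm} it suffices to treat $p=\infty$. As before, consider $H=\eta(\proj v-\dI/d)$ with $\ket v=V\ket 0$, $V\sim\Haar(d)$; it satisfies $\tr[H]=0$ and $\|H\|_\infty\le\eta\le1$, and by \Cref{lem:eta-far-from-k-local} it is $(\eta/4)$‑far in operator norm from all $k$‑local traceless Hamiltonians of operator norm at most $1$ on an event $\cE$ with $\pr{\cE}\ge1-\exp(-\Omega(d))$. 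Setting $\eps=\eta/4$, any correct tester must, conditioned on $\cE$, distinguish $\cU_t=\id$ from $\cU_t(\cdot)=\mathrm{e}^{-\mathrm{i}tH}(\cdot)\mathrm{e}^{\mathrm{i}tH}$; averaging over $V$ and applying Le Cam's two‑point method \cite{lecam1973convergence}, the output states of the coherent protocol under $H=0$ and (on average) under the alternative must satisfy $\big\|\rho^{\mathrm{out}}_0-\exs{V}{\rho^{\mathrm{out}}_V}\big\|_1\ge\Omega(1)$, the exponentially small mass of $\cE^c$ costing only an additive $\exp(-\Omega(d))$.

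To upper bound this trace distance I would telescope over the $N$ uses of the time evolution, replacing them by the identity channel one at a time, from the first to the last. The essential point is that once uses $1,\dots,j-1$ have been set to $\id$, the state $\tau_{j-1}:=\cN_{j-1}\circ\cdots\circ\cN_1(\rho)$ (with $\tau_0:=\rho$) entering the $j$‑th use is \emph{independent of $V$}. With $U_{v,t}=\dI+c\proj v$ and $c=\mathrm{e}^{-\mathrm{i}\eta t}-1$, so $|c|\le\min(2,\eta t)$, and using that channels contract the trace norm, this gives
\begin{equation*}
    \big\|\rho^{\mathrm{out}}_V-\rho^{\mathrm{out}}_0\big\|_1\;\le\;\sum_{j=1}^N\big\|\big[(\cU_{t_j}\otimes\id)-\id\big](\tau_{j-1})\big\|_1\;\le\;\sum_{j=1}^N\Big(2|c_j|\sqrt{q^V_{j-1}}+|c_j|^2\,q^V_{j-1}\Big)\,,
\end{equation*}
where $q^V_{j-1}:=\tr[(\proj v\otimes\dI)\tau_{j-1}]\in[0,1]$ and the last inequality uses $\|(\proj v\otimes\dI)\tau_{j-1}\|_1\le\sqrt{\tr[(\proj v\otimes\dI)\tau_{j-1}]}$ (Cauchy--Schwarz). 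Since $\tau_{j-1}$ does not depend on $V$, $\exs{V}{q^V_{j-1}}=\tr[(\tfrac1d\dI\otimes\dI)\tau_{j-1}]=1/d$, hence $\exs{V}{\sqrt{q^V_{j-1}}}\le1/\sqrt d$ by Jensen; using also $|c_j|^2\le2|c_j|$ and $1/d\le1/\sqrt d$, this yields
\begin{equation*}
    \big\|\rho^{\mathrm{out}}_0-\exs{V}{\rho^{\mathrm{out}}_V}\big\|_1\;\le\;\exs{V}{\big\|\rho^{\mathrm{out}}_V-\rho^{\mathrm{out}}_0\big\|_1}\;\le\;\frac{4}{\sqrt d}\sum_{j=1}^N|c_j|\;\le\;\frac{4}{\sqrt d}\,\min\Big(2N,\;\eta\sum_{j=1}^N t_j\Big)\,.
\end{equation*}

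Comparing the two bounds forces $\min\big(2N,\eta\sum_j t_j\big)\ge\Omega(\sqrt d)=\Omega(2^{n/2})$, which simultaneously delivers $N=\Omega(2^{n/2})$ and $\sum_j t_j=\Omega(2^{n/2}/\eta)=\Omega(2^{n/2}/\eps)$, proving the theorem. I expect the one genuinely delicate step to be the bookkeeping that makes the branch‑point state $V$‑independent: one must think of the coherent protocol of \Cref{subsection:types-of-strategies} (a fixed sequence $\cN_1,\dots,\cN_{N-1}$ interleaved with the evolutions, followed by a POVM which may be purified by Naimark) so that switching the $j$‑th evolution use from $\id$ to $\cU_{t_j}$ is implemented by applying the single unitary $U_{v,t_j}\otimes\dI$ to the \emph{fixed} state $\tau_{j-1}$, with all later control operations and the measurement being the same ($V$‑dependent but still contractive) map in both hybrids. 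This is precisely what upgrades the trivial per‑use estimate to one of size $\cO(|c_j|/\sqrt d)$, i.e.\ what produces the $2^{n/2}$ dimensional factor here rather than the $2^n$‑type factor available in the incoherent setting; one also uses that the coherent model is non‑adaptive, so $\tau_{j-1}$ is deterministic (an adaptive version would require conditioning on the classical transcript).
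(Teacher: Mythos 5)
Your proposal is correct and follows essentially the same route as the paper: the same hard instance $H=\eta(\proj{v}-\dI/d)$ with the $(\eta/4)$-farness event from \Cref{lem:eta-far-from-k-local}, the same Le Cam reduction, and the same telescoping over the $N$ uses of the evolution combined with $\exs{V}{\proj{v}}=\dI/d$ to gain a factor $1/\sqrt{d}$ per use. The only difference is in execution — you run the hybrid at the level of density operators using trace-norm contractivity of the ($V$-dependent) downstream channels, the bound $\|(\proj{v}\otimes\dI)\tau\|_1\le\sqrt{\tr[(\proj{v}\otimes\dI)\tau]}$ and Jensen, whereas the paper reduces to a pure input state and telescopes at the Kraus-operator level with Cauchy--Schwarz over the Kraus sums — and both yield the same $\cO\big(d^{-1/2}\sum_s\min(1,\eta t_s)\big)$ bound and hence the stated $\Omega(2^{n/2})$ and $\Omega(2^{n/2}/\eps)$ lower bounds.
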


In order to prove \Cref{sequential-LB-op-norm}, we will again consider the following test: 
\begin{align}
    \cH_0: \cU_t(\rho)= \id(\rho)= \mathrm{e}^{-\mathrm{i} t\cdot 0}\rho \mathrm{e}^{\mathrm{i} t\cdot 0} ~~~~~~\text{vs} ~~~~~~~~ \cH_1: \cU_t(\rho)=  \mathrm{e}^{-\mathrm{i}tH}\rho \mathrm{e}^{\mathrm{i}tH}
\end{align}
where $H=\eta (\proj{v} -  \dI/d)$  for $\eta>0$ and $\ket{v}=V\ket{0}, V\sim \Haar(d)$. The following proof is similar in spirit to \cite{Aaronson} (see also \cite{Bennett2006Jul}).

\begin{proof}[Proof of \Cref{sequential-LB-op-norm}]
    We use here the construction from \Cref{lem:eta-far-from-k-local}. Let  $H=\eta (\proj{v}-\dI/d)$ for $\eta>0$ and $\ket{v}=V\ket{0}, V\sim \Haar(d)$. $H$ is $(\eta/4)$-far (in the operator norm) from any $k$-local trace-less Hamiltonian of unit operator norm with high probability. Let $\cE$ be the event that $H$ is $(\eta/4)$-far (in the operator norm, which we can again focus on as discussed in \Cref{remark:from-infty-to-p-norm}) from any $k$-local trace-less Hamiltonian of unit operator norm. We have by \Cref{lem:eta-far-from-k-local}:
    \begin{align*}
        \prs{V\sim \Haar(d)}{\cE^c}\le \exp\left(-\Omega(d)\right).
    \end{align*}
    In the sequential setting the tester can choose times $t_1, \dots, t_N$ and any $(d\times d_{\rm{aux}})$-dimensional operations $\cN_1,\dots, \cN_{N-1}$, a $(d\times d_{\rm{aux}})$-dimensional input state $\rho$ and a $(d\times d_{\rm{aux}})$-dimensional measurement device $\cM$. Under the null hypothesis $\cH_0$, the map $\cU_t$ is always the identity so the output state (before the measurement) is:
\begin{align*}
    \rho_0^{\text{output}}(\rho) = \cN_{N-1}\circ\dots\circ\cN_2\circ\cN_1(\rho).
\end{align*}
In contrast, under the alternate hypothesis $\cH_1$, the map $\cU_t$ is close to the identity and the output state (before the measurement) is:
\begin{align*}
    \rho_1^{\text{output}}(\rho)=  [\cU_{t_N}\otimes \id]\circ\cN_{N-1}\circ[\cU_{t_{N-1}}\otimes \id]\circ\dots\circ\cN_2\circ[\cU_{t_2}\otimes \id]\circ\cN_1\circ [\cU_{t_1}\otimes \id](\rho).
\end{align*}
We will often drop the argument $\rho$ for readability if it is not explicitly needed. On the one hand, using the correctness of the algorithm and the data processing inequality applied on the $1$-norm we have:
\begin{align*}
    \left\| \rho_0^{\text{output}}- \mathbb{E}_{V\sim \Haar(d)|\cE}\left[\rho_1^{\text{output}}\right] \right\|_1 \ge 2\TV(\Ber(1/3)\| \Ber(2/3) )= \frac{2}{3}.
\end{align*}
On the other hand, we have by the triangle inequality 
\begin{align*}
    &\left\|\mathbb{E}_{V\sim \Haar(d)}\left[\rho_1^{\text{output}}\right]-  \mathbb{E}_{V\sim \Haar(d)|\cE}\left[\rho_1^{\text{output}}\right]\right\|_1
    \\&= \frac{1}{\pr{\cE}} \left\|\pr{\cE}\mathbb{E}_{V\sim \Haar(d)}\left[\rho_1^{\text{output}}\right]-  \mathbb{E}_{V\sim \Haar(d)}\left[\rho_1^{\text{output}}\mathbf{1}(\{\cE\})\right] \right\|_1
     \\&= \frac{1}{\pr{\cE}} \left\|\pr{\cE}\mathbb{E}_{V\sim \Haar(d)}\left[\rho_1^{\text{output}}\mathbf{1}(\{\cE^c\})\right]- \pr{\cE^c} \mathbb{E}_{V\sim \Haar(d)}\left[\rho_1^{\text{output}}\mathbf{1}(\{\cE\})\right] \right\|_1
     \\&\le \left\|\mathbb{E}_{V\sim \Haar(d)}\left[\rho_1^{\text{output}}\mathbf{1}(\{\cE^c\})\right]\right\|_1
     + \frac{\pr{\cE^c}}{\pr{\cE}} \left\|  \mathbb{E}_{V\sim \Haar(d)}\left[\rho_1^{\text{output}}\mathbf{1}(\{\cE\})\right] \right\|_1
     \\&\le\pr{\cE^c}
     + \frac{\pr{\cE^c}}{\pr{\cE}} \pr{\cE}
     \\&\le 2\exp\left(-\Omega(d)\right)
\end{align*}
hence for $d=\Omega(1)$, by the triangle inequality:
\begin{align*}
   & \left\| \rho_0^{\text{output}}- \mathbb{E}_{V\sim \Haar(d)}\left[\rho_1^{\text{output}}\right] \right\|_1
   \\&\ge  \left\| \rho_0^{\text{output}}- \mathbb{E}_{V\sim \Haar(d)|\cE}\left[\rho_1^{\text{output}}\right] \right\|_1-  \left\| \mathbb{E}_{V\sim \Haar(d)}\left[\rho_1^{\text{output}}\right]- \mathbb{E}_{V\sim \Haar(d)|\cE}\left[\rho_1^{\text{output}}\right] \right\|_1
    \\&\ge\frac{2}{3}-2\exp\left(-\Omega(d)\right) \ge \frac{1}{3}.
\end{align*}
Writing the input state as $\rho= \sum_i \lambda_i \proj{\phi_i}$, e.g., using its spectral decomposition, the triangle equality implies:
\begin{align*}
   \sum_{i} \lambda_i  \left\| \rho_0^{\text{output}}(\proj{\phi_i})- \mathbb{E}_{V\sim \Haar(d)}\left[\rho_1^{\text{output}}(\proj{\phi_i}) \right]\right\|_1 \ge \frac{1}{3}.
\end{align*}
So there is a  unit vector $\ket{\phi}= \ket{\phi_i}$ such that:
\begin{align*}
 \left\| \rho_0^{\text{output}}(\proj{\phi})- \mathbb{E}_{V\sim \Haar(d)}\left[\rho_1^{\text{output}}(\proj{\phi})\right] \right\|_1 \ge \frac{1}{3}.
\end{align*}
We use the notation $U_t= \mathrm{e}^{-\mathrm{i}tH}$ and the following Kraus representation of each channel $\cN_t(\rho)= \sum_{k_i}A_{k_i} \rho A_{k_i}^\dagger $. Moreover, we will use the following shorthand notations for $\ell$, $m \in \mathbb N$, $\ell \leq m$:
\begin{align*}
\prod^\rightarrow_{i=\ell::m} X_i&:= X_{\ell} X_{\ell+1} \ldots X_{m}, \\
\prod^\leftarrow_{i=\ell::m} X_i&:= X_{m} X_{m-1} \ldots X_{\ell}.
\end{align*}
So we can write:
{\small 
\begin{align*}
 \rho_0^{\text{output}}(\proj{\phi})&=\sum_{k_1,\dots, k_{N-1}}  \left(\prod^\leftarrow_{i=1::N-1}A_{k_i}\right)\proj{\phi} \left(\prod^\rightarrow_{i=1::N-1} A_{k_i}^{\dagger}\right),
   \\ \rho_1^{\text{output}}(\proj{\phi})&=\hspace{-0.3em}  \sum_{k_1,\dots, k_{N-1}} \hspace{-0.3em}(U_{t_N}\otimes \dI)\Big(\prod^\leftarrow_{i=1::N-1}\hspace{-0.5em}A_{k_i}(U_{t_i}\otimes \dI)\Big) \proj{\phi} \Big(\prod^\rightarrow_{i=1::N-1} \hspace{-0.5em}(U_{t_i}^\dagger \otimes \dI)A_{k_i}^{\dagger} \Big)  (U_{t_N}\otimes \dI)^\dagger.
\end{align*}}
Hence, the triangle inequality implies 
{\small 
\begin{align*}
     &\left\| \mathbb{E}_{V\sim \Haar(d)}\left[\rho_1^{\text{output}}(\proj{\phi})\right]- \rho_0^{\text{output}}(\proj{\phi}) \right\|_1 
 \\&\le \mathbb{E}_{V} \sum_{k_1,\dots, k_{N-1}}\Bigg\| (U_{t_N}\otimes \dI)\Big(\prod^\leftarrow_{i=1::N-1}A_{k_i}(U_{t_i}\otimes \dI)\Big)\proj{\phi} \Big(\prod^\rightarrow_{i=1::N-1} (U_{t_i}^\dagger \otimes \dI)A_{k_i}^{\dagger} \Big)  (U_{t_N}\otimes \dI)^\dagger\\  & \qquad\qquad\qquad \qquad 
 -    \Big(\prod^\leftarrow_{i=1::N-1}A_{k_i}\Big)\proj{\phi} \Big(\prod^\rightarrow_{i=1::N-1} A_{k_i}^{\dagger} \Big)   \Bigg\|_1.
\end{align*}}
% \begin{align*}
%      &\left\| \mathbb{E}_{V\sim \Haar(d)}\left[\rho_1^{\text{output}}(\proj{\phi})\right]- \rho_0^{\text{output}}(\proj{\phi}) \right\|_1 
%  \\&\le \mathbb{E}_{V} \sum_{k_1,\dots, k_{N-1}}\Bigg\| U_{t_N}\left(\prod^\leftarrow_{i=1::N-1}A_{k_i}U_{t_i}\right)\proj{\phi} \left(\prod^\rightarrow_{i=1::N-1} U_{t_i}^\dagger A_{k_i}^{\dagger} \right)  U_{t_N}^\dagger\\  & \qquad\qquad\qquad \qquad   -    \left(\prod^\leftarrow_{i=1::N-1}A_{k_i}\right)\proj{\phi} \left(\prod^\rightarrow_{i=1::N-1} A_{k_i}^{\dagger} \right)   \Bigg\|_1.
% \end{align*}
We can write the latter difference of states as a telescopic sum. A subsequent application of the triangle inequality yields:
{\small 
\begin{align}
    &\mathbb{E}_{V} \sum_{k_1,\dots, k_{N-1}}\Bigg\| (U_{t_N}\otimes \dI)\Big(\prod^\leftarrow_{i=1::N-1}A_{k_i}(U_{t_i}\otimes \dI)\Big)\proj{\phi} \Big(\prod^\rightarrow_{i=1::N-1} (U_{t_i}^\dagger \otimes \dI)A_{k_i}^{\dagger} \Big)  (U_{t_N}\otimes \dI)^\dagger \notag \\  & \qquad\qquad\qquad\qquad\qquad -    \Big(\prod^\leftarrow_{i=1::N-1}A_{k_i}\Big)\proj{\phi} \Big(\prod^\rightarrow_{i=1::N-1} A_{k_i}^{\dagger}\Big)   \Bigg\|_1 \notag
    \\&\le \mathbb{E}_{V} \sum_{s=1}^N \! \sum_{k_1,\dots, k_{N-1}}\! \Bigg\| \Big(\!\prod^\leftarrow_{s::N-1}\hspace{-0.7em}(U_{t_{i+1}}\otimes \dI) A_{k_{i}}\!\Big) [(U_{t_s}-\dI)\otimes \dI] \Big(\!\prod^\leftarrow_{i=1::s-1} \hspace{-0.7em}A_{k_i}\!\Big) \proj{\phi}  \Big(\!\prod^\rightarrow_{i=1::N-1} \hspace{-0.7em}(U_{t_i}^\dagger \otimes \dI)A_{k_i}^{\dagger}\! \Big) \Bigg\|_1 \tag{A}\label{proof-sequ}
    \\&\; + \mathbb{E}_{V} \sum_{s=1}^N \sum_{k_1,\dots, k_{N-1}}\Bigg\|  \Big(\prod^\leftarrow_{i=1::N-1} \hspace{-0.7em}A_{k_{i}}\Big)\proj{\phi} \Big(\prod^\rightarrow_{i=1::s-1} \hspace{-0.7em}A_{k_i}^{\dagger}\Big) [(U_{t_s}-\dI)\otimes \dI]^\dagger  \Big(\prod^\rightarrow_{i=s::N-1} \hspace{-0.7em}A_{k_i}^\dagger (U_{t_{i+1} }\otimes \dI)^\dagger \Big) \Bigg\|_1 \tag{B}\label{proof-sequ-B}
\end{align}}
On the one hand, using $U_t-\dI= (\mathrm{e}^{-\mathrm{i}\eta t}-1)\proj{v}$ we have 
{\small 
\begin{align*}
    &\left\|\Big(\!\prod^\leftarrow_{s::N-1}(U_{t_{i+1}}\otimes \dI) A_{k_{i}}\!\Big) [(U_{t_s}-\dI)\otimes \dI] \Big(\!\prod^\leftarrow_{i=1::s-1} A_{k_i}\!\Big)\proj{\phi}\notag  \Big(\!\prod^\rightarrow_{i=1::N-1} (U_{t_i}^\dagger \otimes \dI)A_{k_i}^{\dagger} \!\Big) \right\|_1
    \\&=|\mathrm{e}^{-\mathrm{i}\eta t_s}-1|\Bigg\|\Big(\!\prod^\leftarrow_{i=s::N-1}(U_{t_{i+1}}\otimes \dI) A_{k_{i}}\!\Big) (\proj{v}\!\otimes\!\dI) \Big(\!\prod^\leftarrow_{i=1::s-1} A_{k_i}\!\Big)\proj{\phi} \Bigg(\prod^\rightarrow_{i=1::N-1} (U_{t_i}^\dagger \otimes \dI)A_{k_i}^{\dagger} \Bigg)  \Bigg\|_1
    \\&\le|\mathrm{e}^{-\mathrm{i}\eta t_s}-1|\sqrt{\bra{\phi}\Big(\!\prod^\rightarrow_{i=1::N-1} (U_{t_i}^\dagger \otimes \dI)A_{k_i}^{\dagger}\!\Big) \Big(\!\prod^\leftarrow_{i=1::N-1} A_{k_{i}}(U_{t_i}\otimes \dI)\!\Big)  \ket{\phi}}
    \\&  \sqrt{\!\bra{\phi} \!\left[\! \prod^\rightarrow_{i=1::s-1}\hspace{-0.7em}A_{k_i}^\dagger\!\right] \![\proj{v}\!\otimes\!\dI] \!\left[\!\prod^\rightarrow_{i=s::N-1}\hspace{-0.7em}A_{k_i}^\dagger (U_{t_{i+1}}^\dagger\otimes \dI)\! \right] \!\left[\!\prod^\leftarrow_{i=s::N-1}\hspace{-0.7em} (U_{t_{i+1}}\otimes \dI) A_{k_i} \!\right]\![\proj{v}\!\otimes\!\dI] \!\left[ \!\prod^\leftarrow_{i=1::s-1}\hspace{-0.7em}A_{k_i}^\dagger\!\right]\!\ket{\phi}} 
    %\\&\overline{[\proj{v}\!\otimes\!\dI] \left[ \prod^\leftarrow_{i=1::s-1}A_{k_i}^\dagger\right]\ket{\phi}}
\end{align*}
}
where we used the Cauchy-Schwarz inequality. Again, by using the  Cauchy-Schwarz inequality for the sums over $k_1, \dots, k_{N-1}$ and $\mathbb{E}_V$ as well as using the Kraus identities $\sum_{k_i} {A_{k_i}}^\dagger {A_{k_i}}= \dI $ in the last line we obtain:
{\small
\begin{align*}
   &\eqref{proof-sequ} \\ &=\sum_{s=1}^N \mathbb{E}_{V}\! \sum_{k_1,\dots, k_{N-1}}\!\left\|\left(\prod^\leftarrow_{s::N-1}\hspace{-0.7em}(U_{t_{i+1}}\otimes \dI) A_{k_{i}}\!\right) \![(U_{t_s}-\dI)\otimes \dI] \!\left(\prod^\leftarrow_{i=1::s-1} \hspace{-0.7em}A_{k_i}\!\right)\!\proj{\phi}  \!\left(\prod^\rightarrow_{i=1::N-1}\hspace{-0.7em} (U_{t_i}^\dagger \otimes \dI)A_{k_i}^{\dagger} \!\right) \right\|_1
    \\& \le\sum_{s=1}^N  \mathbb{E}_{V}  \sum_{k_1,\dots, k_{N-1}} |\mathrm{e}^{-\mathrm{i}\eta t_s}-1|\sqrt{\bra{\phi}\left(\prod^\rightarrow_{i=1::N-1} \hspace{-0.7em} (U_{t_i}^\dagger \otimes \dI)A_{k_i}^{\dagger}\right) \left(\prod^\leftarrow_{i=1::N-1} \hspace{-0.7em}A_{k_{i}}(U_{t_i}\otimes \dI)\right)  \ket{\phi}}
    \\& \sqrt{\bra{\phi}\! \left[ \prod^\rightarrow_{i=1::s-1}\hspace{-0.7em}A_{k_i}^\dagger\right] \![\proj{v}\!\otimes\!\dI] \!\left[\prod^\rightarrow_{i=s::N-1} \hspace{-0.7em}A_{k_i}^\dagger (U_{t_{i+1}}^\dagger\otimes \dI) \right]\! \left[\prod^\leftarrow_{i=s::N-1} \hspace{-0.7em}(U_{t_{i+1}}\otimes \dI) A_{k_i} \right] \! [\proj{v}\!\otimes\!\dI] \! \left[ \prod^\leftarrow_{i=1::s-1}\hspace{-0.7em} A_{k_i}^\dagger\right]\!\ket{\phi}}
    \\&\le \sum_{s=1}^N |\mathrm{e}^{-\mathrm{i}\eta t_s}-1|\sqrt{ \mathbb{E}_{V}  \sum_{k_1,\dots, k_{N-1}} \bra{\phi}\left(\prod^\rightarrow_{i=1::N-1} \hspace{-0.7em} (U_{t_i}^\dagger \otimes \dI)A_{k_i}^{\dagger}\right) \left(\prod^\leftarrow_{i=1::N-1} \hspace{-0.7em} A_{k_{i}}(U_{t_i}\otimes \dI)\right)  \ket{\phi}} \Bigg( \mathbb{E}_{V}  \sum_{k_1,\dots, k_{N-1}}
    \\& \bra{\phi} \!\left[ \prod^\rightarrow_{i=1::s-1}\hspace{-0.7em} A_{k_i}^\dagger\right] \![\proj{v}\!\otimes\!\dI] \!\left[\prod^\rightarrow_{i=s::N-1}\hspace{-0.7em} A_{k_i}^\dagger (U_{t_{i+1}}^\dagger\otimes \dI) \right] \!\left[\prod^\leftarrow_{i=s::N-1} \hspace{-0.7em} (U_{t_{i+1}}\otimes \dI) A_{k_i} \right] \! [\proj{v}\!\otimes\!\dI]  \!\left[ \prod^\leftarrow_{i=1::s-1} \hspace{-0.7em}A_{k_i}^\dagger\right]\!\ket{\phi}\!\Bigg)^{\frac{1}{2}}
    \\&= \sum_{s=1}^N |\mathrm{e}^{-\mathrm{i}\eta t_s}-1| \sqrt{\frac{1}{d}}
\end{align*}
}
as $\mathbb{E}_{V\sim \Haar(d)}\left[\proj{v}\right]= \mathbb{E}_{V\sim \Haar(d)}\left[V\proj{0}V^\dagger\right]=\frac{\tr(\proj{0})\dI}{d}=\frac{\dI}{d}$. 
On the other hand, using  $U_t-\dI= (\mathrm{e}^{-\mathrm{i}\eta t}-1)\proj{v}$ and the Cauchy-Schwarz inequality we have 
{\small
\begin{align*}
    &\left\| \left(\prod^\leftarrow_{i=1::N-1} A_{k_{i}}\right)\proj{\phi} \left(\prod^\rightarrow_{i=1::s-1} A_{k_i}^{\dagger}\right) [(U_{t_s}-\dI)\otimes \dI]^\dagger \left(\prod^\rightarrow_{i=s::N-1} A_{k_i}^\dagger (U_{t_{i+1} }\otimes \dI)^\dagger \right) \right\|_1
    \\&= |\mathrm{e}^{\mathrm{i}\eta t_s}-1|\left\|  \left(\prod^\leftarrow_{i=1::N-1} A_{k_{i}}\right)\proj{\phi} \left(\prod^\rightarrow_{i=1::s-1} A_{k_i}^{\dagger}\right) [\proj{v}\!\otimes\!\dI] \left(\prod^\rightarrow_{i=s::N-1} A_{k_i}^\dagger (U_{t_{i+1} }\otimes \dI)^\dagger \right) \right\|_1
    \\&\le |\mathrm{e}^{\mathrm{i}\eta t_s}-1| \sqrt{\bra{\phi}\left(\prod^\rightarrow_{i=1::s-1} A_{k_i}^{\dagger}\right)  \left(\prod^\leftarrow_{i=1::s-1} A_{k_i}^{\dagger}\right)\ket{\phi}}
    \\&\sqrt{\hspace{-0.3em}\bra{\phi} \hspace{-0.3em}\left[\prod^\rightarrow_{i=1::s-1} \hspace{-0.7em}A_{k_i}^{\dagger}\right]\hspace{-0.3em} [\proj{v}\!\otimes\!\dI] \hspace{-0.3em}\left[\prod^\rightarrow_{i=s::N-1} \hspace{-0.7em}A_{k_i}^\dagger (U_{t_{i+1} }\otimes \dI)^\dagger \right]\hspace{-0.5em}\left[\prod^\leftarrow_{i=s::N-1}  \hspace{-0.7em}(U_{t_{i+1} }\otimes \dI)A_{k_i} \right] \hspace{-0.3em}[\proj{v}\!\otimes\!\dI]\hspace{-0.3em}\left[\prod^\leftarrow_{i=1::s-1} \hspace{-0.7em} A_{k_i}^{\dagger}\right]\hspace{-0.3em}\ket{\phi}}.
\end{align*}
}
Hence by using the  Cauchy-Schwarz inequality for the sums over $k_1, \dots, k_{N-1}$ and the expectation $\mathbb{E}_V$ and by using the Kraus identities $\sum_{k_i} {A_{k_i}^\dagger} {A_{k_i}}= \dI $ in the last line we obtain:
{\small 
\begin{align*}
        &\eqref{proof-sequ-B}\\&=\sum_{s=1}^N\mathbb{E}_{V} \!  \sum_{k_1,\dots, k_{N-1}}\left\|  \left(\prod^\leftarrow_{i=1::N-1} \hspace{-0.7em}A_{k_{i}}\right)\proj{\phi} \left(\prod^\rightarrow_{i=1::s-1} \hspace{-0.7em}A_{k_i}^{\dagger}\right) [(U_{t_s}-\dI)\otimes \dI]^\dagger \left(\prod^\rightarrow_{i=s::N-1} \hspace{-0.7em}A_{k_i}^\dagger (U_{t_{i+1} }\otimes \dI)^\dagger \right)\right\|_1
        \\&\le \sum_{s=1}^N\mathbb{E}_{V}  \sum_{k_1,\dots, k_{N-1}}  |\mathrm{e}^{\mathrm{i}\eta t_s}-1| \sqrt{\bra{\phi}\left(\prod^\rightarrow_{i=1::s-1} A_{k_i}^{\dagger}\right)  \left(\prod^\leftarrow_{i=1::s-1} A_{k_i}^{\dagger}\right)\ket{\phi}} 
    \\&\sqrt{\bra{\phi}\! \left[\prod^\rightarrow_{i=1::s-1} \hspace{-0.7em}A_{k_i}^{\dagger}\right]\! [\proj{v}\!\otimes\!\dI] \!\left[\prod^\rightarrow_{i=s::N-1} \hspace{-0.7em}A_{k_i}^\dagger (U_{t_{i+1} }\otimes \dI)^\dagger \right]\!\left[\prod^\leftarrow_{i=s::N-1} \hspace{-0.7em} (U_{t_{i+1} }\otimes \dI)A_{k_i} \right] \![\proj{v}\!\otimes\!\dI] \!\left[\prod^\leftarrow_{i=1::s-1} \hspace{-0.7em}A_{k_i}^{\dagger}\right]\!\ket{\phi}}
    \\&\le \sum_{s=1}^N|\mathrm{e}^{\mathrm{i}\eta t_s}-1| \sqrt{\mathbb{E}_{V}  \sum_{k_1,\dots, k_{N-1}} \bra{\phi}\left(\prod^\rightarrow_{i=1::s-1} \hspace{-0.7em}A_{k_i}^{\dagger}\right)  \left(\prod^\leftarrow_{i=1::s-1} \hspace{-0.7em}A_{k_i}^{\dagger}\right)\ket{\phi}} \Bigg(\mathbb{E}_{V}  \sum_{k_1,\dots, k_{N-1}} \bra{\phi} \left[\prod^\rightarrow_{i=1::s-1} \hspace{-0.7em}A_{k_i}^{\dagger}\right] 
    \\&\qquad  [\proj{v}\!\otimes\!\dI] \left[\prod^\rightarrow_{i=s::N-1} \hspace{-0.7em}A_{k_i}^\dagger (U_{t_{i+1} }\otimes \dI)^\dagger \right]\left[\prod^\leftarrow_{i=s::N-1}  (U_{t_{i+1} }\otimes \dI)A_{k_i} \right] [\proj{v}\!\otimes\!\dI]\left[\prod^\leftarrow_{i=1::s-1} \hspace{-0.7em}A_{k_i}^{\dagger}\right]\ket{\phi} \Bigg)^{\frac{1}{2}}
    \\&= \sum_{s=1}^N|\mathrm{e}^{\mathrm{i}\eta t_s}-1| \sqrt{\frac{1}{d}}.
\end{align*}
}
Therefore, using 
\begin{align*}
    |\mathrm{e}^{\mathrm{i}\eta t}-1|&= \sqrt{(\cos(\eta t)-1)^2+\sin^2(\eta t)} =\sqrt{2(1-\cos(\eta t))}\\& \le \min\{\sqrt{2}, \sqrt{2(\eta t)^2}\}=\min\{\sqrt{2}, \sqrt{2}\eta t\},
\end{align*}
we get the following upper bound:
\begin{align*}
 \left\| \mathbb{E}_{V\sim \Haar(d)}\left[\rho_1^{\text{output}}(\proj{\phi})\right]- \rho_0^{\text{output}}(\proj{\phi}) \right\|_1 &\le \eqref{proof-sequ}+ \eqref{proof-sequ-B}
    \\&\le  2\sum_{s=1}^N |\mathrm{e}^{\mathrm{i}\eta t_s}-1| \sqrt{\frac{1}{d}} 
    \\&\le \frac{2}{\sqrt{d}}\sum_{s=1}^N  \min\{2, \sqrt{2} \eta t_s\}
\end{align*}
Finally, as $\left\| \mathbb{E}_{V\sim \Haar(d)}\left[\rho_1^{\text{output}}(\proj{\phi})\right] -\rho_0^{\text{output}}(\proj{\phi})\right\|_1 \ge \frac{1}{3}$, we deduce that
\begin{align*}
    \sum_{k=1}^N  t_k \ge \frac{\sqrt{d}}{6\sqrt{2}\eta}
\end{align*}
and 
\begin{align*}
    N  \ge \frac{\sqrt{d}}{6\sqrt{2}}. 
\end{align*}
We can set $\eps = \eta/4$ to finish the proof.
\end{proof}

\ins{\begin{remark}
    In Equation \eqref{Toy problem II}, we consider the problem of distinguishing between the trivial time evolution and what can be seen as a noisy version thereof. However, the noise model we use is somewhat artificial. It would be interesting to find examples of physically relevant Hamiltonians for which such hardness statements can be proved. We leave this question to future work.
\end{remark}}

\section{Upper bounds for Hamiltonian property testing}

\subsection{Upper bounds inherited from Hamiltonian learning}\label{sec:upper-bounds-from-hamiltonian-learning}

Before presenting our Hamiltonian property testing results, we discuss what known Hamiltonian learning results imply for testing. To the best of our knowledge, the procedures from \cite{caro2023learning, castaneda2023hamiltonian} are currently the only Hamiltonian learning algorithms from dynamics that work for arbitrary Hamiltonians without locality assumptions, and thus the only ones immediately applicable to our locality testing scenario via a naive ``learn general Hamiltonian, then check property'' approach. Analyzing their performance as testers will further highlight the importance of the chosen norms and in particular demonstrate that a different approach is needed when taking the normalized Frobenius norm as distance measure.

\cite{caro2023learning, castaneda2023hamiltonian} gave two different approaches -- the former based on Pauli shadow tomography methods applied to the Choi state of the forward short-time evolution in combination with Chebyshev interpolation for polynomial derivative estimation, the latter using forward and backward short-time evolution and block-encodings to create pseudo-Choi states as well as (classical) shadow tomography tools -- for learning an unknown Hamiltonian from query access.
For our purposes, it is important to carefully consider the performance measure in their learning task. Namely, the procedures of both papers produce $\norm{\cdot}_{\mathrm{Pauli},p}$ approximations to the coefficient vector of an arbitrary unknown Hamiltonian.
Concretely, \cite[Theorem 1.3]{caro2023learning} achieves this for $p=\infty$. That is, their Hamiltonian learning algorithm produces estimates $\hat{\alpha}_P$, $P\in \{\dI,X,Y,Z\}^{\otimes n}\setminus \{\dI^{\otimes n}\}$, that satisfy $\lvert\hat{\alpha}_P - \alpha_P\rvert\leq\varepsilon$ simultaneously for all $P\in \{\dI,X,Y,Z\}^{\otimes n}\setminus \{\dI^{\otimes n}\}$, using $\tilde{\mathcal{O}}\left(\frac{n \norm{H}_\infty^4}{\varepsilon^4}\right)$ queries to the Hamiltonian evolution, each for short time $t=\tilde{\mathcal{O}}\left(\frac{1}{\norm{H}_\infty}\right)$, thus leading to a total evolution time of $\tilde{\mathcal{O}}\left(\frac{n \norm{H}_\infty^3}{\varepsilon^4}\right)$. 
The guarantees in \cite{castaneda2023hamiltonian} are phrased for $p=2$ and use a number of queries to the Hamiltonian evolution that scales linearly with the number of Pauli terms in the Hamiltonian. While not discussed explicitly in \cite{castaneda2023hamiltonian}, this direct dependence on the number of terms can be removed when focusing on $p=\infty$.

As learning is a more demanding task than testing, the results of \cite{caro2023learning, castaneda2023hamiltonian} immediately imply (even tolerant) Hamiltonian locality testers with the same query complexities and total evolution times as their learning procedures. However, there is an important caveat: This works only for the norm $\norm{\cdot}_{\mathrm{Pauli},\infty}$. When trying to solve a Hamiltonian property testing problem \edit{w.r.t.~}{with respect to }$\norm{\cdot}_{\mathrm{Pauli},p}$ for any $1\leq p<\infty$, the only bounds that can be obtained immediately from \cite{caro2023learning, castaneda2023hamiltonian} (via Hölder's inequality) scale exponentially in $n$. This complication arises because in our testing task we do not want to make \emph{any} assumptions on the unknown Hamiltonian, in particular it can have exponentially many Pauli terms. When considering unnormalized Schatten $p$-norms $\norm{\cdot}_p$ on the level of the Hamiltonians, the situation is similarly bad if not worse, since naive attempts at controlling a $\norm{\cdot}_p$-difference even via the $\norm{\cdot}_{\mathrm{Pauli},1}$-difference incur an additional exponential overhead due to $\norm{P}_p = 2^{n/p}$ for all $n$-qubit Pauli strings $P$.  
Even normalizing the Schatten $p$-norms does not resolve this issue. For instance, by Parseval's identity, $\frac{1}{\sqrt{2^n}}\norm{\cdot}_2 = \norm{\cdot}_{\mathrm{Pauli},2}$, but we have observed above that the complexities of \cite{caro2023learning, castaneda2023hamiltonian} scale exponentially for the case of $\norm{\cdot}_{\mathrm{Pauli},2}$ and arbitrary unknown Hamiltonians with potentially exponentially many terms.

In summary, while the results of \cite{caro2023learning, castaneda2023hamiltonian} can in principle be used for Hamiltonian property testing, and even for the tolerant version thereof, they suffer from exponential query complexities and total evolution times for any of our norms of interest except for the weakest, $\norm{\cdot}_{\mathrm{Pauli},\infty}$. In particular, they do not give rise to query-efficient solutions for the operationally relevant norms $\norm{\cdot}_\infty$ and $\frac{1}{\sqrt{2^n}}\norm{\cdot}_{2}$. Additionally, their methods require potentially challenging-to-implement quantum capabilities (such as maximally entangled input states, access to both forward and backward time evolution, and/or entangled multi-copy measurements for shadows). Finally, neither of the two approaches achieves computational efficiency, even for $\norm{\cdot}_{\mathrm{Pauli},\infty}$.
Thus, while relevant for Hamiltonian learning, we consider \cite{caro2023learning, castaneda2023hamiltonian} insufficient for our Hamiltonian testing purposes and develop a new procedure that is tailored to the testing task at hand.

\paragraph{Note added.} In the property testing literature, there is another well known way of obtaining testing from learning, see for instance \cite[Proposition 2.1]{ron2008property}. Here, one first runs a proper learning algorithm for the class that one wishes to test and then checks whether the resulting hypothesis is close to the unknown object that generates the data. Instantiating this approach in our scenario to test whether an unknown Hamiltonian has property $S$ thus requires (a) a proper learning algorithm for Hamiltonians with property $S$ and (b) a procedure for tolerant Hamiltonian identity testing or for estimating the distance between a hypothesis Hamiltonian (a classical description of which is known, and which has property $S$) and a general unknown Hamiltonian, to which one has time evolution access. For the normalized Schatten $2$-norm, assuming the (non-trivial) ability to simulate the time evolution according to the hypothesis Hamiltonian, one can use the connection to the average-case fidelity between short-time-evolved input states drawn from the Haar measure (or from a $2$-design) established in \Cref{appendix:normalized-frobenius-norm} to achieve distance estimation as in (b). However, while the prior work reviewed in \Cref{subsection:related-work} achieved (a) for properties of Hamiltonians with a bounded-degree interaction graph, no efficient Hamiltonian learning algorithms for more general $S$, in particular for $k$-local Hamiltonians, were known. More recently, \cite[Result 1.6]{arunachalam2025testing} gave the first efficient algorithm for learning $k$-local Hamiltonians in normalized Schatten $2$-norm when auxiliary systems are available. Thus, the ``testing via (proper) learning'' template can now be used for Hamiltonian $k$-locality testing. However, as outlined above, this does not apply to more general $S$, requires auxiliary systems and Hamiltonian simulation capabilities, and inherits the exponential dependence on $k^2$ in both the total evolution time and the number of experiments from \cite[Result 1.6]{arunachalam2025testing}. Here, we aim for more broadly applicable and more resource-efficient Hamiltonian property testing procedures.

\subsection{Upper bounds in the randomized measurement framework} \label{sec:randomized-upper-bounds}

In this section, we will prove a general theorem that shows that efficient property testing is possible with respect to the normalized Schatten-$2$ norm, from which \Cref{inf-thm:hamiltonian-locality-testing-normalized-frobenius} follows as a special case.

\begin{definition}[Relation between states according to a property]
Let $\ket{\phi}$ and $\ket{\psi}$ be two unit vectors and $S\subset \mathds{P}_n$. We say that  $\ket{\phi}$ and $\ket{\psi}$ are equivalent under the property $S$ if they are equal or $\ket{\psi}$ can be obtained from $\ket{\phi}$ by applying a Pauli operator in $S$. We denote this  relation by $\sim_S$ and its negation as $\nsim_S$. Formally, 
 \begin{align*}
     \ket{\phi} \sim_S \ket{\psi} &\Leftrightarrow \exists \theta \in [0, 2\pi),\; \exists P\in S\cup \{\dI\} : P\ket{\phi}= \mathrm{e}^{\mathrm{i}\theta } \ket{\psi}
     \\&\Leftrightarrow \exists P\in S\cup \{\dI\} :  |\bra{\phi}P\ket{\psi}|= 1. 
 \end{align*}
 If $\ket{\phi} \nsim_S \ket{\psi}$, we say that a violation of the property $S$ is detected by the pair of unit vectors $(\ket{\phi}, \ket{\psi})$.
\end{definition}

With this definition in place, we can give the algorithm for testing the property $\Pi_S$ as \Cref{alg:property testing Hamiltonian}. Recall that the $\proj{\phi_{i,j}}$ are the MUBs constructed from stabilizer states defined in \Cref{eq:stabilizer_MUB}.
In \Cref{alg:property testing Hamiltonian}, the Hamiltonian is given as a black box that, given a time, runs the time evolution under the Hamiltonian for each input state provided and allows for any measurement at the end. The property $\Pi_S$ is given as a list of strings specifying the Pauli operators in $S$.

\SetKwComment{Comment}{/* }{ */}
\SetKwInOut{Input}{Input}
        \SetKwInOut{Output}{Output}
\begin{algorithm}[t!]
\caption{Testing Properties for Hamiltonian Evolutions}\label{alg:property testing Hamiltonian}
\LinesNumbered
\Input{A Hamiltonian $H$, a property $\Pi_S$, and an accuracy parameter $\varepsilon\in (0,1)$}
\Output{The null hypothesis $H_0$ or the alternate hypothesis $H_1$}
$t \gets \frac{\eps}{6}$\;
$N \gets   \left\lceil\frac{2\log(3)}{t^2\eps^2}\right\rceil  $\;
\For{$s\gets 1$ \KwTo $N$}{
Sample $i_s\sim\unif[d]$, $j_s\sim \unif[d+1]$\;
 Input state : $ \rho_s= \proj{\phi_{i_s,j_s}}$\;
 Evolve under $H$ for time $t$\;
 Measurement : $\cM_s = \{\proj{\phi_{i_s,\ell}}\}_{\ell}$ and observe $\ell_s\gets \cM_s(\cU_t(\rho_s))$\;
  \If{$\ket{\phi_{i_s,j_s}} \nsim_S \ket{\phi_{i_s,\ell_s}}$}
    {
        \Return $H_1$ and \textbf{stop}
    }
}
\Return $H_0$
\end{algorithm}

\begin{theorem}\label{thm:upper bound on testing} 
    Let $S\subset \mathds{P}_n$ such that  $|S\cup\{\dI\}|\le\frac{(2^n+1)^{}\eps^{4}}{144}$, and let $\varepsilon\in (0,1)$. 
    Suppose that the Hamiltonian $H$ satisfies $\tr(H)=0 $ and $\|H\|_\infty \le 1$. 
    \Cref{alg:property testing Hamiltonian} tests whether $H\in \Pi_S$ or $\frac{1}{\sqrt{2^n}}\|H-K\|_2>\eps$ for all Hamiltonians $K\in \Pi_S$ with probability at least $2/3$ using a total evolution time $\mathcal{O}\left(\frac{1}{\eps^3}\right)$, a total number of independent experiments $N=  \mathcal{O}\left(\frac{1}{\eps^4}\right)$, and a total classical processing time $\cO\left(\frac{n^2 |S\cup\{\dI\}|}{\eps^4}\right)$.
    Each experiment uses efficiently implementable states and measurements.
\end{theorem}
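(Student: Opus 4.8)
\emph{Setup and reduction to a single round.} The plan is to analyze \Cref{alg:property testing Hamiltonian} experiment by experiment. The $N$ rounds are i.i.d., and the algorithm answers $H_1$ exactly when at least one round detects a violation, so writing $p_{\mathrm{det}}$ for the single-round detection probability, the probability of wrongly answering $H_1$ is at most $Np_{\mathrm{det}}$ and that of wrongly answering $H_0$ is at most $(1-p_{\mathrm{det}})^N$; when $H$ falls under neither case, any output is admissible. With $U=\mathrm{e}^{-\mathrm{i}tH}$, averaging over the uniform pair $(i,j)$ and the outcome $\ell$ and using $\sum_\ell|\bra{\phi_{i,\ell}}U\ket{\phi_{i,j}}|^2=1$,
\[
p_{\mathrm{det}}=1-\frac{1}{d(d+1)}\sum_{i,j,\ell}\bigl|\bra{\phi_{i,\ell}}U\ket{\phi_{i,j}}\bigr|^2\,\mathbf{1}\bigl(\ket{\phi_{i,j}}\sim_S\ket{\phi_{i,\ell}}\bigr)=:1-p_{\mathrm{no\text{-}det}}.
\]
Two structural facts drive everything. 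First, for any Pauli $P\in\dP$ the overlap $|\bra{\phi_{i,\ell}}P\ket{\phi_{i,j}}|$ is $0$ or $1$, and equals $1$ for exactly one $\ell$ given $(i,j)$, since conjugation by $P$ maps the stabilizer group of $\cB_i$ to itself up to signs and hence $P$ permutes the states of $\cB_i$ up to phase; consequently $\mathbf{1}(\ket{\phi_{i,j}}\sim_S\ket{\phi_{i,\ell}})\le\sum_{P\in S\cup\{\dI\}}|\bra{\phi_{i,\ell}}P\ket{\phi_{i,j}}|^2$. Second, the $2$-design identity for the stabilizer MUBs gives $\frac{1}{d(d+1)}\sum_{i,j}|\bra{\phi_{i,j}}M\ket{\phi_{i,j}}|^2=\frac{|\tr M|^2+\norm{M}_2^2}{d(d+1)}$ for any $M$, and Parseval in the Pauli basis gives $\sum_{P\in\dP}|\tr(PU)|^2=d\norm{U}_2^2=d^2$.

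\emph{Completeness.} Assume $H\in\Pi_S$, so $H=\sum_{P\in S}\alpha_PP$. If $\ket{\phi_{i,j}}\nsim_S\ket{\phi_{i,\ell}}$ then $\bra{\phi_{i,\ell}}P\ket{\phi_{i,j}}=0$ for every $P\in S\cup\{\dI\}$; hence in $U=\dI-\mathrm{i}tH+R$ the zeroth- and first-order terms vanish between $\ket{\phi_{i,j}}$ and $\ket{\phi_{i,\ell}}$, so $\bra{\phi_{i,\ell}}U\ket{\phi_{i,j}}=\bra{\phi_{i,\ell}}R\ket{\phi_{i,j}}$ with $\norm{R}_\infty\le\frac{t^2}{2}\norm{H}_\infty^2\,\mathrm{e}^{t\norm{H}_\infty}=\cO(t^2)$. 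Summing the squared overlaps over $\ell$ and using unitarity yields $p_{\mathrm{det}}\le\norm{R}_\infty^2=\cO(t^4)$, so with $t=\eps/6$ and $N=\cO(1/(t^2\eps^2))$ the probability of a wrong $H_1$ is $Np_{\mathrm{det}}=\cO(t^2/\eps^2+t^4)<1/3$.

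\emph{Soundness — the crux.} Assume $\tfrac{1}{\sqrt d}\norm{H-K}_2>\eps$ for all $K\in\Pi_S$; taking $K$ to be the truncation of $H$ to its Pauli terms in $S$ and using Parseval, this is precisely $\sum_{P\notin S}\alpha_P^2>\eps^2$. Combining the indicator bound with the two structural facts (collapsing the sum over $\ell$ using the $0/1$-overlap fact), then using $\sum_{P\in\dP}|\tr(PU)|^2=d^2$,
\[
p_{\mathrm{no\text{-}det}}\le\sum_{P\in S\cup\{\dI\}}\frac{|\tr(PU)|^2+d}{d(d+1)}\le 1+\frac{|S\cup\{\dI\}|}{d+1}-\frac{1}{d(d+1)}\sum_{P\notin S\cup\{\dI\}}|\tr(PU)|^2 .
\]
The delicate point is that one must keep this ``complement'' sum rather than isolating $P=\dI$ and bounding each $P\in S$ term separately: the second-order loss in $|\tr(U)|^2$ and the Cauchy--Schwarz loss in the $P\in S$ terms would otherwise leave a positive contribution of order $t^2\sum_{P\in S}\alpha_P^2$, which may be as large as $t^2$ and swamps the gain. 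For $P\ne\dI$, $\tr(PU)=-\mathrm{i}td\,\alpha_P+\tr(PR)$, and a reverse triangle inequality with a suitably small split parameter, together with $\sum_{P\in\dP}|\tr(PR)|^2=d\norm{R}_2^2\le d^2\norm{R}_\infty^2=\cO(d^2t^4)$, gives $\sum_{P\notin S\cup\{\dI\}}|\tr(PU)|^2\ge(1-o(1))\,t^2d^2\sum_{P\notin S}\alpha_P^2-\cO(d^2t^4)\ge\Omega(d^2t^2\eps^2)$, since for $t=\eps/6$ the error $\cO(d^2t^4)$ is a small fraction of $t^2d^2\eps^2$. Finally the hypothesis $|S\cup\{\dI\}|\le(d+1)\eps^4/144$ forces both $\frac{|S\cup\{\dI\}|}{d+1}\le\eps^4/144$ and $d+1\ge144/\eps^4$, so $\frac{d}{d+1}\ge1-\eps^4/144$. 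Assembling these estimates and tracking constants, $p_{\mathrm{no\text{-}det}}\le1-c\,t^2\eps^2$ for an absolute constant $c>0$; hence $(1-p_{\mathrm{det}})^N\le\mathrm{e}^{-cNt^2\eps^2}<1/3$ by the choice $N=\lceil 2\log(3)/(t^2\eps^2)\rceil$. The constants $t=\eps/6$ and this $N$ are exactly calibrated so that both error probabilities drop below $1/3$. \emph{Resources:} each of the $N=\cO(\eps^{-4})$ experiments evolves for time $t=\cO(\eps)$, giving total evolution time $\cO(\eps^{-3})$; the input states and measurements are stabilizer states and stabilizer-basis measurements, hence Clifford and efficiently implementable, each with an $\cO(n^2)$-bit tableau, so the image $P\ket{\phi_{i_s,j_s}}$ and its comparison with $\ket{\phi_{i_s,\ell_s}}$ cost $\cO(n^2)$ per Pauli, making the test $\ket{\phi_{i_s,j_s}}\nsim_S\ket{\phi_{i_s,\ell_s}}$ cost $\cO(n^2|S\cup\{\dI\}|)$ per round and $\cO(n^2|S\cup\{\dI\}|/\eps^4)$ overall; specializing $S=S_{k\mathrm{-loc}}$, for which $|S_{k\mathrm{-loc}}|=\cO(n^k)$, recovers \Cref{inf-thm:hamiltonian-locality-testing-normalized-frobenius}.
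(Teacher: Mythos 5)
Your proposal is correct and follows essentially the same route as the paper's proof: the same single-round reduction, the $0/1$-overlap property of Pauli actions on the stabilizer MUBs, the $2$-design identity, the first-order Taylor expansion with remainder of size $\cO(t^2)$, and the identical parameter choices $t=\eps/6$, $N=\lceil 2\log(3)/(t^2\eps^2)\rceil$. The only divergence is organizational, in the soundness step: you use Parseval, $\sum_{P\in\dP}|\tr(PU)|^2=d^2$, to rewrite the sum over $S\cup\{\dI\}$ as $d^2$ minus the complement sum and lower-bound $\sum_{P\notin S\cup\{\dI\}}|\tr(PU)|^2$ directly via the first-order terms, whereas the paper expands $|\tr(\mathrm{e}^{\mathrm{i}tH})|^2$ to second order for $P=\dI$ and upper-bounds the $P\in S\setminus\{\dI\}$ contributions separately, realizing the same cancellation down to $\sum_{P\notin S}\alpha_P^2\ge\eps^2$; both give $p_{\mathrm{no\text{-}det}}\le 1-c\,t^2\eps^2$ with $c\ge 1/2$ (your constant does come out above $1/2$ when the split parameter is chosen, e.g., $\delta\approx 0.1$, which is what the stated $N$ requires), so the error bounds and resource counts coincide with the paper's.
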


In fact, as we argue in the proof of \Cref{thm:upper bound on testing}, the procedure uses only stabilizer state inputs and stabilizer basis measurements. Each of these can be realized with Clifford circuits and thus with at most $\mathcal{O}(\frac{n^2}{\log n})$ many Hadamard, phase, and controlled-NOT gates \cite{aaronson2004improved}. Thus, \Cref{alg:property testing Hamiltonian} is efficient in terms of the number of experiments, the total evolution time, and the classical and quantum processing time.

Testing locality corresponds to the property $\Pi_{S_{k-\text{local}}}$, where  $S_{k-\text{local}}= \{P\in \mathds{P}_n : |P|\le k \}$ satisfies $|S|= \sum_{s=0}^k \binom{n}{s}3^s\le (3n)^{k+1}$. For this special case, we obtain:

\begin{corollary}[Testing locality \ins{-- Restatement of Theorem~\ref{inf-thm:hamiltonian-locality-testing-normalized-frobenius}}]\label{corollary:locality-testing-upper-bound}
    Let $n\ge  2$ and $\eps>0$ be such that  $ (3n)^{k+1}\le \frac{(2^{n}+1)^{}\eps^{4}}{144}$. 
    Suppose that the Hamiltonian $H$  satisfies $\tr(H)=0 $ and $\|H\|_\infty \le 1$. 
    \Cref{alg:property testing Hamiltonian}, when given the property $S=S_{k\mathrm{-loc}}$, tests whether $H$ is $k$-local or $\frac{1}{\sqrt{d}}\|H-H_{\text{local}}\|_2>\eps$ for all $k$-local Hamiltonians $H_{\text{local}}$ with probability at least $2/3$ using a total evolution time $\mathcal{O}\left(\frac{ 1}{\eps^3}\right)$, a total number of independent experiments $N=  \mathcal{O}\left(\frac{1}{\eps^4}\right)$, and a total classical processing time $\cO\left(\frac{(3n)^{k+3}}{\eps^4}\right)$.
\end{corollary}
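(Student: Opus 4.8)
The plan is to analyze \Cref{alg:property testing Hamiltonian} one round at a time and then amplify over the $N$ rounds. Fix $t=\eps/6$ and write $U_t=\mathrm{e}^{-\mathrm{i}tH}$. Conditioned on drawing the MUB index $i$ and the input label $j$, the measurement in $\cB_i$ returns outcome $\ell$ with probability $|\bra{\phi_{i,\ell}}U_t\ket{\phi_{i,j}}|^2$, so the probability that a single round reports a violation is
\begin{equation*}
p\;=\;\frac{1}{d(d+1)}\sum_{i=1}^{d+1}\sum_{j,\ell=1}^{d}\big|\bra{\phi_{i,\ell}}U_t\ket{\phi_{i,j}}\big|^2\,\mathbf{1}\big(\ket{\phi_{i,\ell}}\nsim_S\ket{\phi_{i,j}}\big)\, .
\end{equation*}
I want two bounds: \emph{(completeness)} if $H\in\Pi_S$ then $p\le t^4$, so that a union bound over the rounds gives false-positive probability $\le Nt^4<1/3$; and \emph{(soundness)} if $\frac{1}{\sqrt d}\norm{H-K}_2>\eps$ for every $K\in\Pi_S$ then $p\ge\Omega(t^2\eps^2)$, so that $(1-p)^N\le\mathrm{e}^{-pN}\le1/3$ for $N=\lceil 2\log(3)/(t^2\eps^2)\rceil$. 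The workhorse is the Taylor expansion $U_t=\dI-\mathrm{i}tH+R$ with $R=\sum_{m\ge2}\frac{(-\mathrm{i}t)^m}{m!}H^m$; using $\norm{H}_\infty\le1$ and $t\le\frac16$ one gets $\norm{R}_\infty\le\mathrm{e}^t-1-t\le t^2$, and, similarly, the Pauli-coefficient vector of $R$ has $\ell_2$-norm $\cO(t^2\norm{H}_{\mathrm{Pauli},2})\le\cO(t^2)$.

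\emph{Completeness.} By the rigidity of the stabilizer MUBs, $|\bra{\phi_{i,\ell}}P\ket{\phi_{i,j}}|\in\{0,1\}$ for every Pauli $P$, so whenever $\ket{\phi_{i,\ell}}\nsim_S\ket{\phi_{i,j}}$ we get $\bra{\phi_{i,\ell}}P\ket{\phi_{i,j}}=0$ for all $P\in S\cup\{\dI\}$ (the case $P=\dI$ contributes only when $j=\ell$, which is excluded by $\nsim_S$). Since $H\in\Pi_S$ together with $\tr(H)=0$ means $H=\sum_{P\in S}\alpha_P P$, this forces $\bra{\phi_{i,\ell}}(\dI-\mathrm{i}tH)\ket{\phi_{i,j}}=0$ on the detecting triples, hence there $\bra{\phi_{i,\ell}}U_t\ket{\phi_{i,j}}=\bra{\phi_{i,\ell}}R\ket{\phi_{i,j}}$. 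Summing over $\ell$ at fixed $(i,j)$ and using orthonormality of $\cB_i$, $\sum_\ell|\bra{\phi_{i,\ell}}R\ket{\phi_{i,j}}|^2\le\norm{R}_\infty^2\le t^4$; averaging over $i,j$ yields $p\le t^4$, and for $t=\eps/6$ and $N=\lceil 2\log(3)/(t^2\eps^2)\rceil$ a short computation gives $Nt^4<1/3$.

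\emph{Soundness.} Here I would use that the $d+1$ stabilizer MUBs form a state $2$-design. Since each $P\in S\cup\{\dI\}$ maps every $\ket{\phi_{i,j}}$ to a phase times a \emph{unique} vector of $\cB_i$, we have $\mathbf{1}(\ket{\phi_{i,\ell}}\sim_S\ket{\phi_{i,j}})\le\sum_{P\in S\cup\{\dI\}}|\bra{\phi_{i,\ell}}P\ket{\phi_{i,j}}|^2$, and pushing the $j$- and $\ell$-sums through the $2$-design identity $\sum_{i,j}\proj{\phi_{i,j}}\otimes\proj{\phi_{i,j}}=\dI+F$ collapses $1-p$ to
\begin{equation*}
1-p\;\le\;\sum_{P\in S\cup\{\dI\}}\frac{d+|\tr(PU_t)|^2}{d(d+1)}\;=\;\frac{|S\cup\{\dI\}|}{d+1}+\frac{1}{d(d+1)}\sum_{P\in S\cup\{\dI\}}|\tr(PU_t)|^2\, .
\end{equation*}
The first term is $\le\eps^4/144$ by the hypothesis on $|S\cup\{\dI\}|$, which (as $|S\cup\{\dI\}|\ge1$) also forces $d\gtrsim\eps^{-4}$, so that $d/(d+1)$ is essentially $1$. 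For the second term, operator Parseval gives $\sum_{P\in\dP}|\tr(PU_t)|^2=d\norm{U_t}_2^2=d^2$, so it suffices to lower bound the complementary mass $\sum_{P\notin S\cup\{\dI\}}|\tr(PU_t)|^2=d^2\sum_{P\notin S}|\tr(PU_t)/d|^2$. Writing $\tr(PU_t)/d=-\mathrm{i}t\alpha_P+\tr(PR)/d$ for $P\ne\dI$, and recalling that $\eps$-farness in $\frac1{\sqrt d}\norm{\cdot}_2=\norm{\cdot}_{\mathrm{Pauli},2}$ is exactly $\sum_{P\notin S}\alpha_P^2>\eps^2$, the reverse triangle inequality gives $\big(\sum_{P\notin S}|\tr(PU_t)/d|^2\big)^{1/2}\ge t\big(\sum_{P\notin S}\alpha_P^2\big)^{1/2}-\cO(t^2)>t\eps-\cO(t^2)$, which for $t=\eps/6$ is $\Theta(t\eps)$ and positive. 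Hence $\sum_{P\notin S\cup\{\dI\}}|\tr(PU_t)|^2\ge\Omega(d^2t^2\eps^2)$, so $1-p\le1-\Omega(t^2\eps^2)$, and with the choice $t=\eps/6$ the constants can be arranged to give $p\ge\frac12 t^2\eps^2$, whence $pN\ge\log(3)$ and $(1-p)^N\le1/3$. The hard part will be exactly this constant bookkeeping: the $2$-design reduction of $1-p$ relies on the precise structure of stabilizer MUBs (every Pauli permutes each basis; all pairwise overlaps are $0$ or $1$; distinct MUBs overlap with amplitude $1/\sqrt d$), and one must be careful that the first-order term $t\eps$ simultaneously dominates the Taylor remainder $\cO(t^2)$ and the collision term $|S\cup\{\dI\}|/(d+1)$ — it is this balance that dictates the choices $t=\eps/6$ and $|S\cup\{\dI\}|\le(d+1)\eps^4/144$.

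\emph{Resources and the corollary.} Finally I would tally resources: there are $N=\cO(\eps^{-4})$ experiments, each of duration $t=\eps/6$, for a total evolution time $\cO(\eps^{-3})$; all inputs $\ket{\phi_{i,j}}$ and measurement bases $\cB_i$ are stabilizer states and stabilizer bases, hence realized by Clifford circuits of size $\cO(n^2/\log n)$; and deciding $\ket{\phi_{i_s,j_s}}\nsim_S\ket{\phi_{i_s,\ell_s}}$ amounts, for each $P\in S\cup\{\dI\}$, to an $\cO(n^2)$ check in the symplectic stabilizer formalism of whether $P\ket{\phi_{i_s,j_s}}\propto\ket{\phi_{i_s,\ell_s}}$, giving $\cO(n^2|S\cup\{\dI\}|/\eps^4)$ classical processing time overall (plus $\cO(\mathrm{poly}(n))$ to set up the MUBs once). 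The corollary on locality testing then follows by taking $S=S_{k\mathrm{-loc}}$ and using $|S_{k\mathrm{-loc}}|=\sum_{s=0}^k\binom ns 3^s\le(3n)^{k+1}$.
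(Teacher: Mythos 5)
Your proposal is correct and follows essentially the same route as the paper: it reproves the general theorem (per-round analysis with $t=\eps/6$, the stabilizer-MUB rigidity for completeness with remainder $\norm{R}_\infty\le t^2$, the $2$-design reduction of $1-p$ to $\sum_{P\in S\cup\{\dI\}}\frac{d+|\tr(PU_t)|^2}{d(d+1)}$ for soundness, $N=\lceil 2\log 3/(t^2\eps^2)\rceil$, and the same resource/classical-processing accounting) and then specializes to $S_{k\mathrm{-loc}}$ with $|S_{k\mathrm{-loc}}|\le(3n)^{k+1}$, exactly as the corollary is obtained in the paper. The only (harmless) deviation is in the soundness bookkeeping, where you isolate the signal via Parseval on the full Pauli expansion of $U_t$ and a reverse triangle inequality over $P\notin S$, rather than the paper's explicit expansion of the $P=\dI$ trace cancelling against the $P\in S$ terms; the two are equivalent, and your constants close because the size assumption forces $d\ge 144/\eps^4$, as you note.
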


\begin{remark}[Testing many properties] 
    In situations where we are interested in testing many properties at once or we are not confident about the exact property we want to test during the data acquisition phase, we should find a way to perform the Hamiltonian property testing with an arbitrarily small error probability. It turns out that changing the data processing (statistic/estimator) part of \Cref{alg:property testing Hamiltonian} slightly solves this issue. Concretely, using the concentration of an estimator that compares the empirical number of violations, i.e., counting how many outcomes are measured such that $\ket{\phi_{i_s,j_s}} \nsim_S \ket{\phi_{i_s,\ell_s}}$, with a threshold, we are able to achieve an error probability $\delta$ with a complexity that scales as $\log(1/\delta)$. Via a union bound and setting $\delta \mapsto \delta/M$, this allows us to test many properties at once with only an overhead that is logarithmic in $M$, number of properties. See \Cref{testing many ppts} for details. 
\end{remark}

\begin{remark}[Assumption on the set $S$] 
    The assumption on the set $S$\,---\,$|S\cup\{\dI\}|\le\frac{(2^n+1)^{}\eps^{4}}{144}$\,---\,for which we can prove a rigorous guarantee on the complexity of \Cref{alg:property testing Hamiltonian} limits the range of properties we can test. If we are only interested in having efficient tests and thus $|S|\leq\mathcal{O}(\mathrm{poly}(n))$, this assumption should always be satisfied. Nonetheless, we are able to extend the result of \Cref{thm:upper bound on testing} to any set $S$ using an additional number $n_{\mathrm{aux}}$ of ancilla qubits, where $n_{\mathrm{aux}}= \left\lceil\log_2\left(\frac{144\cdot  |S \cup \{\dI\}|}{2^n\eps^4}\right)\right\rceil$. See \Cref{assumption on S} for details. 
\end{remark}

\begin{remark}[Tolerant testing]
    In \Cref{thm:upper bound on testing}, the null hypothesis is that the unknown Hamiltonian $H$ is itself an element of $\Pi_S$. In the spirit of tolerant property testing \cite{parnas2006tolerant}, one may aim to weaken this to $H$ merely being close to the set $\Pi_S$. In \Cref{sec:tolerant-testing}, we extend our result to this tolerant Hamiltonian property testing scenario.
\end{remark}

\begin{remark}[Assumption of  independent and identically distributed (i.i.d.) input]
    We assumed that the tester runs $N$ i.i.d.\ unitary evolutions in order to decide the correct hypothesis. 
    Note that the proof works even if the unitary evolutions $\cU_1, \dots, \cU_N$ are not identical, i.e., the Hamiltonians may be different as long as they remain in the same hypothesis class. However, the assumption of independence is crucial for the proof. 
    Moreover, the naive application of the de Finetti theorem (on the Choi state, as in \cite{fawzi_learning_2024})  would require an overhead in the copy complexity that is exponential in the number of qubits.
\end{remark}

\begin{remark}[Implications for testing \edit{w.r.t.~}{with respect to }other norms]
    \Cref{thm:upper bound on testing} is phrased in terms of $\frac{1}{\sqrt{2^n}}\norm{\cdot}_2$. Recalling that $\frac{1}{\sqrt{2^n}}\norm{\cdot}_2 = \norm{\cdot}_{\mathrm{Pauli},2}$ as well as the monotonicity $\norm{\cdot}_{\mathrm{Pauli},p}\leq \norm{\cdot}_{\mathrm{Pauli},q}$ for $1\leq q\leq p\leq \infty$, we immediately see that the results of \Cref{thm:upper bound on testing} also apply to Hamiltonian property testing with $\norm{\cdot}_{\mathrm{Pauli},p}$ for any $p\geq 2$ as distance measure.
    Similarly, as the normalized Schatten $p$-norms satisfy the monotonicity property $\frac{1}{2^{n/p}}\norm{\cdot}_p \leq \frac{1}{2^{n/q}}\norm{\cdot}_q$ for $1\leq p\leq q \leq\infty$, the results of \Cref{thm:upper bound on testing} immediately carry over to testing \edit{w.r.t.~}{with respect to }$\frac{1}{2^{n/p}}\norm{\cdot}_p$ for any $1\leq p\leq 2$.
\end{remark}

\begin{proof}[Proof of \Cref{thm:upper bound on testing}]
We need to show that the error probability under the null and alternate hypotheses is at most $1/3$. 
\paragraph{Error probability under the alternate hypothesis.} Here we suppose that $H$ is $\eps$-far from $\Pi_S$. Let $\bm{i}=(i_1, \dots, i_N)$, $\bm{j}=(j_1, \dots, j_N)$ and $\bm{\ell}=(\ell_1, \dots, \ell_N)$.
The error probability is 
\begin{align*}
	\exs{\bm{i,j,\ell}}{\prs{H_1}{\ket{\phi_{i_1,\ell_1}} \sim_S \ket{\phi_{i_1,j_1}},  \; \cdots \;, \ket{\phi_{i_N,\ell_N}} \sim_S \ket{\phi_{i_N,j_N}}} }= \exs{i_1,j_1,\ell_1}{\pr{\ket{\phi_{i_1,\ell_1}} \sim_S \ket{\phi_{i_1,j_1}} }}^N.
\end{align*}
\edit{We have}{We now evaluate $\exs{i_1,j_1,\ell_1}{\pr{\ket{\phi_{i_1,\ell_1}} \sim_S \ket{\phi_{i_1,j_1}} }} = \exs{i,j,\ell}{\pr{\ket{\phi_{i,\ell}} \sim_S \ket{\phi_{i,j}} }}$ as follows:}
{\small
\begin{align*}
	&\exs{i,j,\ell}{\pr{\ket{\phi_{i,\ell}} \sim_S \ket{\phi_{i,j}}} }\\&=\frac{1}{d(d+1)}\sum_{i,j,\ell} |\bra{\phi_{i,\ell}} \mathrm{e}^{\mathrm{i}tH} \ket{\phi_{i,j}}|^2 \mathbf{1}\left(\left\{ \ket{\phi_{i,\ell}} \sim_S \ket{\phi_{i,j}}  \right\}\right)
	\\&=\frac{1}{d(d+1)}\sum_{i,j,\ell} |\bra{\phi_{i,\ell}} \mathrm{e}^{\mathrm{i}tH} \ket{\phi_{i,j}}|^2 \mathbf{1}\left(\left\{ \exists \theta,\; \exists P\in S\cup\{\dI\}: \mathrm{e}^{\mathrm{i}\theta} \ket{\phi_{i,\ell}} =P  \ket{\phi_{i,j}}  \right\}\right)
	\\&\le  \frac{1}{d(d+1)} \sum_{P\in S\cup\{\dI\}}\sum_{i,j} |\bra{\phi_{i,j}} P \mathrm{e}^{\mathrm{i}tH} \ket{\phi_{i,j}}|^2 \sum_{\ell} \mathbf{1}\left(\left\{ \exists \theta: \mathrm{e}^{\mathrm{i}\theta}  \ket{\phi_{i,\ell}} =P  \ket{\phi_{i,j}}  \right\}\right)
	\\&\overset{(a)}{\le} \frac{1}{d(d+1)} \sum_{P\in  S\cup\{\dI\}}\sum_{i,j} |\bra{\phi_{i,j}} P \mathrm{e}^{\mathrm{i}tH} \ket{\phi_{i,j}}|^2 
	\\&\overset{(b)}{=}  \sum_{P\in  S\cup\{\dI\}} \frac{\tr\left(P \mathrm{e}^{\mathrm{i}tH} \mathrm{e}^{-\mathrm{i}tH} P^\dagger  \right)+ \left|\tr\left(P \mathrm{e}^{\mathrm{i}tH}\right)\right|^2}{d(d+1)} 
	\\&\overset{(c)}{=}  \sum_{P\in  S\cup\{\dI\}} \frac{d}{d(d+1)}+  \sum_{P\in  S\cup\{\dI\}} \frac{1}{d(d+1)} \left|\sum_{m\ge 0} \frac{(\mathrm{i}t)^m}{m!} \tr(PH^m) \right|^2
\end{align*}
}
where we used in $(a)$ that $\sum_{\ell} \mathbf{1}\left(\left\{ \exists \theta: \mathrm{e}^{\mathrm{i}\theta}  \ket{\phi_{i,\ell}} =P  \ket{\phi_{i,j}}  \right\}\right)\le 1$ because, if we have $\theta_1,\ell_1,  \theta_2, \ell_2$ such that $\mathrm{e}^{\mathrm{i}\theta_1}  \ket{\phi_{i,\ell_1}} =P  \ket{\phi_{i,j}} = \mathrm{e}^{\mathrm{i}\theta_2}  \ket{\phi_{i,\ell_2}}  $, then $|\spr{\phi_{i,\ell_1}}{\phi_{i,\ell_2}}|=1$, but $\{\ket{\phi_{i,\ell}}\}_i$ is an orthonormal basis, so $\ell_1=\ell_2$. In $(b)$, we used the fact that $\{\ket{\phi_{i,j}}\}_{i,j}$ forms a $2$-design.

The first term of $(c)$ can be computed exactly: 
\begin{align*}
	\sum_{P\in  S\cup\{\dI\}} \frac{d}{d(d+1)} =\frac{|S\cup\{\dI\}|}{d+1}. 
\end{align*}
For the second term of $(c)$, we deal first with the case $P=\dI$:
\begin{align*}
	\frac{1}{d(d+1)} \left|\sum_{m\ge 0} \frac{(\mathrm{i}t)^m}{m!} \tr(PH^m)\right|^2&= \frac{1}{d(d+1)} \left|\sum_{m\ge 0} \frac{(\mathrm{i}t)^m}{m!} \tr(H^m)\right|^2
	\\&\le \frac{1}{d(d+1)} \left|  d- \frac{t^2}{2}\tr(H^2)+ \sum_{m\ge 3} \frac{(\mathrm{i}t)^m}{m!} \tr(H^m)\right|^2
		\\&\le \frac{1}{d(d+1)} \left(  d^2-dt^2\tr(H^2) + 4d^2t^4 \right)
		\\&= \frac{1}{(d+1)} \left(  d-t^2d\sum_P |\alpha_P|^2 + 4dt^4 \right) \, .
\end{align*}
In the third step, we evaluated the squared absolute value $|z|^2=\Bar{z}z$ and bound the higher order terms, using that $|\tr(H^m)| \le d\|H\|^m_\infty\le d$ and $\sum_{m\ge 4} \frac{t^m}{m!}\tr(H^m)\le 0.06\cdot dt^4  $ since $t\leq 1$. Note that third order terms vanish.

For the other cases in  $(c)$: 
{\small
\begin{align*}
    &\sum_{P\in S\setminus\{\dI\}}	\frac{1}{d(d+1)} \left|\sum_{m\ge 0} \frac{(\mathrm{i}t)^m}{m!} \tr(PH^m)\right|^2
    \\&= \frac{1}{d(d+1)} \sum_{P\in S\setminus\{\dI\}} \left|  (\mathrm{i}t)d\alpha_P + \sum_{m\ge 2} \frac{(\mathrm{i}t)^m}{m!} \tr(PH^m)\right|^2
    \\&= \!\sum_{P\in S\setminus\{\dI\}} \hspace{-0.3em} \frac{|(\mathrm{i}t)d\alpha_P|^2}{d(d\!+\!1)} \! +\!\frac{1}{d(d\!+\!1)}\hspace{-0.4em}\sum_{P\in S\setminus\{\dI\}}\hspace{-0.2em}\Big| \!  \sum_{m\ge 2}\!\frac{(\mathrm{i}t)^m}{m!} \tr(PH^m)\!\Big|^2\hspace{-0.5em}+\! 2\Re\! \sum_{P\in S\setminus\{\dI\}} \!\frac{\mathrm{i}td\alpha_P}{d(d\!+\!1)} \!\sum_{m\ge 2} \!\frac{(-\mathrm{i}t)^m}{m!} \tr(PH^m)
     \\&\le \! \sum_{P\in S\setminus\{\dI\}}\! \frac{|(\mathrm{i}t)d\alpha_P|^2}{d(d\!+\!1)} \! +\!\frac{1}{d(d\!+\!1)}\!\sum_{P}\!\Big|  \sum_{m\ge 2} \!\frac{(\mathrm{i}t)^m}{m!} \tr(PH^m)\Big|^2 \!+\!  \frac{2td }{d(d\!+\!1)}\!\sum_{m\ge 3} \!\frac{t^m}{m!}  \sum_{P\in S\setminus\{\dI\}}|\alpha_P|\cdot |\tr(PH^m)|
    \\&\le\!\sum_{P\in S\setminus\{\dI\}} \frac{dt^2\alpha_P^2}{d\!+\!1}\!  +\hspace{-0.5em}\sum_{P, m,m'\ge 2}\! \frac{(\mathrm{i}t)^m (-\mathrm{i}t)^{m'} }{d(d\!+\!1)m!m'!} \tr(PH^m) \tr(PH^{m'}) \!+\!   \frac{2t}{d\!+\!1} \! \sum_{m\ge 3} \!\frac{t^m}{m!} \sqrt{\sum_P \!\alpha_P^2\!\sum_P\!  |\tr(PH^m)|^2}
    \\&\le\frac{d}{d+1}t^2 \sum_{P\in S\setminus\{\dI\}} \alpha_P^2+ \frac{1}{d+1}  \sum_{m,m'\ge 2} \frac{(\mathrm{i}t)^m (-\mathrm{i}t)^{m'} }{m!m'!} \tr(H^mH^{m'}) +\frac{2t}{d+1}  \sum_{m\ge 3} \frac{t^m}{m!} \sqrt{d\tr(H^mH^m)}
    \\&\le \frac{d}{d+1}t^2\sum_{P\in S\setminus\{\dI\}} \alpha_P^2+ \frac{1}{d+1}  \sum_{m,m'\ge 2} \frac{t^m }{m!}\cdot \frac{t^{m'} }{m'!} \cdot d +\frac{2t}{d+1}\cdot t^3\cdot \sqrt{d\cdot d}
    \\&\le \frac{d}{d+1}t^2 \left( \sum_{P\in S\setminus\{\dI\}} \alpha_P^2\right)+ 3t^4 \,,
\end{align*}
}
where we use $\sum_P \alpha_P^2\le 1$, $\frac{1}{d}\sum_P \tr(PA)\tr(PB)=\tr(AB)$, $|\tr(H^m)| \le d\|H\|^m_\infty\le d$, $\sum_{m\ge 2}\frac{t^m}{m!} =e^t-1-t\le t^2$ and $\sum_{m\ge 3}\frac{t^m}{m!} =e^t-1-t-\frac{t^2}{2}\le t^3$.

Therefore 
\begin{align}\label{eq:err_under_H_1}
	\exs{i,j,\ell}{\pr{\ket{\phi_{i,\ell}} \sim_S \ket{\phi_{i,j}}} }&= \frac{|S\cup\{\dI\}|}{d+1} +  \sum_{P\in S \cup \{\dI\}} \frac{1}{d(d+1)} \left|\sum_{m\ge 0} \frac{(\mathrm{i}t)^m}{m!} \tr(PH^m) \right|^2 \notag
	\\&\le \frac{|S\cup\{\dI\}|}{d+1}+\frac{1}{d+1}\left(d-t^2d\sum_{P\notin S} |\alpha_P|^2 + 7dt^4  \right) \notag 
	\\&\le 1-t^2\eps^2 + \frac{|S\cup\{\dI\}|}{d+1}+ 7t^4 
	\\&\le 1-\frac{t^2\eps^2}{2} \notag
\end{align}
where we used 
\begin{itemize}
	\item For all $K\in \Pi_S$,  $\frac{1}{\sqrt{d}}\|H-K\|_2\ge \eps $ so $ \sum_{P\notin S} \alpha_P^2 \ge \eps^2$ (choose $K= \sum_{P\in S} \frac{1}{d}\tr(PH)P$),
	\item $\frac{|S\cup\{\dI\}|}{d+1}\le \frac{t^2\eps^2}{4}$ and $7t^4\le \frac{t^2\eps^2}{4}$. 
\end{itemize}
For $ t=\frac{\eps}{6}$, it is sufficient to add the condition that the size of the set $S$ satisfies 
\begin{align*}
|S\cup\{\dI\}|\le \frac{(d+1)\cdot \eps^4}{144}.
\end{align*}
We can choose the number of repetitions to be $N=  \frac{2\log(3)}{t^2\eps^2}  $ so that:
\begin{align*}
	&\exs{\bm{i,j,\ell}}{\prs{H_1}{\ket{\phi_{i_1,\ell_1}} \sim_S \ket{\phi_{i_1,j_1}},  \; \cdots \;, \ket{\phi_{i_N,\ell_N}} \sim_S \ket{\phi_{i_N,j_N}}}}
 \\&= \exs{i_1,j_1,l_1}{\pr{\ket{\phi_{i_1,\ell_1}} \sim_S \ket{\phi_{i_1,j_1}} }}^N
	\le \left(1-\frac{t^2\eps^2}{2}\right)^N\le \frac{1}{3}.
\end{align*}

\paragraph{Error probability under the null hypothesis.} Here we suppose that $H\in \Pi_S$. The error probability is 
\begin{align*}
	&\exs{\bm{i,j,\ell}}{\prs{H_1}{\ket{\phi_{i_1,\ell_1}} \nsim_S \ket{\phi_{i_1,j_1}} \; \text{or}  \; \cdots  \; \text{or}\;  \ket{\phi_{i_N,\ell_N}} \nsim_S \ket{\phi_{i_N,j_N}}} }\\&\le N \cdot \exs{i_1,j_1,l_1}{\pr{\ket{\phi_{i_1,\ell_1}} \nsim_S \ket{\phi_{i_1,j_1}} }}.
\end{align*}
Observe that under the event $\left\{ \ket{\phi_{i,\ell}} \nsim_S \ket{\phi_{i,j}}  \right\} $ we have $\ell\neq j$ since $\dI\in S\cup \{\dI\}$.
Also $\ket{\phi_{i,\ell}} \nsim_S \ket{\phi_{i,j}} $ implies 
$\bra{\phi_{i,\ell}} H \ket{\phi_{i,j}}=0$ since $H\in \Pi_S$ . To see this we can write $\proj{\phi_{i,j}}= \frac{1}{d}\sum_{p\in G_i}(-1)^{ r_j^i\circ p}S(p)$ where $r_j^i\in A_{G_i}$, then for $Q\in \mathbb P_n$ we have 
\begin{align}
	|\bra{\phi_{i,\ell}} Q \ket{\phi_{i,j}}|^2 
      &= \frac{1}{d^2} \sum_{p_1, p_2\in G_i} (-1)^{r^i_\ell\circ p_1+ r^i_j\circ p_2 }\tr(S(p_1) Q S(p_2) Q )\nonumber
	\\&=\frac{1}{d^2} \sum_{p_1, p_2\in G_i} (-1)^{r^i_\ell\circ p_1+ r^i_j\circ p_2 +q\circ p_2}\tr(S(p_1)  S(p_2) Q Q )\nonumber
	\\&=\frac{1}{d} \sum_{p\in G_i} (-1)^{r^i_\ell\circ p+ r^i_j\circ p +q\circ p}\nonumber
	\\&= \mathbf{1}\left(\left\{r^i_\ell r^i_j q \in G_i\right\}\right) \label{eq:overlap-check}\, ,
\end{align}
where we wrote $q$ for the element in $\mathbf{P}_n$ corresponding to $Q \in \mathbb P_n$. In the last line, we have used \Cref{lem:summing-signs-over-groups}. Moreover, we used the fact $r\circ p + s\circ p = (rs)\circ p$ which can be seen from $(-1)^{(rs)\circ p}P(r)P(s)P(p)=P(p)P(r)P(s)= (-1)^{r\circ p}P(r)P(p)P(s)= (-1)^{r\circ p} (-1)^{s\circ p}P(r)P(s)P(p)$. 
Thus either $\bra{\phi_{i,\ell}} Q \ket{\phi_{i,j}}=0 $ or $|\bra{\phi_{i,\ell}} Q \ket{\phi_{i,j}}|= 1 \Leftrightarrow  Q \ket{\phi_{i,j}} = \mathrm{e}^{\mathrm{i}\theta }  \ket{\phi_{i,\ell}}$.
Hence, under the event $\left\{ \ket{\phi_{i,\ell}} \nsim_S \ket{\phi_{i,j}}  \right\} $ we have $\bra{\phi_{i,\ell}} H^m \ket{\phi_{i,j}} =0$  for  $m=0,1 $. Therefore:
 {\small
\begin{align*}
    \exs{i,j,\ell}{\pr{\ket{\phi_{i,\ell}} \nsim_S \ket{\phi_{i,j}} }}
    &=\frac{1}{d(d+1)} \sum_{i=1}^{d+1}\sum_{j\neq \ell}|\bra{\phi_{i,\ell}} \mathrm{e}^{\mathrm{i}tH} \ket{\phi_{i,j}}|^2 \mathbf{1}\left(\left\{ \ket{\phi_{i,\ell}} \nsim_S \ket{\phi_{i,j}}  \right\}\right)
    \\&=  \frac{1}{d(d+1)} \sum_{i=1}^{d+1}\sum_{j\neq \ell}\left|\sum_{m\ge 0} \frac{(\mathrm{i}t)^m}{m!} \bra{\phi_{i,\ell}} H^m \ket{\phi_{i,j}}\right|^2 \mathbf{1}\left(\left\{ \ket{\phi_{i,\ell}} \nsim_S \ket{\phi_{i,j}}  \right\}\right)
    \\&\le   \frac{1}{d(d+1)} \sum_{i=1}^{d+1}\sum_{j, \ell}\left|\sum_{m\ge 2} \frac{(\mathrm{i}t)^m}{m!} \bra{\phi_{i,\ell}} H^m \ket{\phi_{i,j}}\right|^2 
    \\&=  \frac{1}{d(d+1)} \sum_{i=1}^{d+1}\sum_{j, \ell}\sum_{m,m'\ge 2} \frac{(\mathrm{i}t)^m}{m!}\cdot \frac{(-\mathrm{i}t)^{m'}}{m'!} \bra{\phi_{i,\ell}} H^m \ket{\phi_{i,j}}\bra{\phi_{i,j}} H^{m'} \ket{\phi_{i,\ell}}
    \\&=   \frac{1}{d(d+1)} \sum_{i=1}^{d+1}\sum_{m,m'\ge 2} \frac{(\mathrm{i}t)^m}{m!}\cdot \frac{(-\mathrm{i}t)^{m'}}{m'!} \tr(H^m  H^{m'})
    \\&\le  \frac{1}{d(d+1)} \sum_{i=1}^{d+1}\sum_{m,m'\ge 2} \frac{t^m}{m!}\cdot \frac{t^{m'}}{m'!} |\tr(H^m  H^{m'})|
    \\&\le  \frac{1}{d(d+1)} \sum_{i=1}^{d+1}t^4 d
    =t^4\, ,
\end{align*}}
where we used that $\{\ket{\phi_{i,j}}\}_j$ is an orthonormal basis, $|\tr(H^m)|\le d$ and $\sum_{m\ge 2} \frac{(t)^m}{m!}\le t^2$ .
   
Hence, for $t \le \frac{\eps}{6}$ and $N=  \left\lceil\frac{2\log(3)}{t^2\eps^2}\right\rceil$ we have that
\begin{align*}
	&\exs{\bm{i,j,\ell}}{\prs{H_1}{\ket{\phi_{i_1,\ell_1} }\nsim_S \ket{\phi_{i_1,j_1}} \; \text{or}  \; \cdots  \; \text{or}\;  \ket{\phi_{i_N,\ell_N}} \nsim_S \ket{\phi_{i_N,j_N}} }  }
   \\ &\le N \cdot \exs{i_1,j_1,\ell_1}{\pr{\ket{\phi_{i_1,\ell_1}} \nsim_S \ket{\phi_{i_1,j_1}} }}
	\le  \left\lceil\frac{2\log(3)}{t^2\eps^2}\right\rceil t^4 
	 \\&  \le \frac{2\log(3)}{\eps^2}t^2 + t^4 
	\le \frac{2\log(3)}{36} + \frac{1}{6^4}
    <\frac{1}{3}.
\end{align*}

\paragraph{Complexity}

\begin{itemize}
	\item Evolution time at each step $t=\frac{\eps}{6}$,
	\item Number of independent experiments $N=  \left\lceil\frac{2\log(3)}{t^2\eps^2}\right\rceil  =  \left\lceil\frac{72\log(3)} {\eps^4}\right\rceil $,
	\item Total evolution time $Nt= \left\lceil\frac{72\log(3)} {\eps^4}\right\rceil\cdot\frac{\eps}{6}\leq \frac{12\log(3)} {\eps^3} + \underbrace{\frac{\eps}{6}}_{\leq 1/6}$,
 \item Total classical processing time: In each round $1\leq s\leq N$, we check whether $\lvert \bra{\phi_{i_s,j_s}}Q\ket{\phi_{i_s,\ell_s}}\rvert=1$ for any $Q\in S\cup\{\dI\}$. Via \Cref{eq:overlap-check}, this reduces to checking whether $r^{i_s}_{\ell_s}r^{i_s}_{j_s}q\in G_{i_s}$. As $G_{i_s}$ is a maximal Abelian subgroup, it equals its own commutator, so we can equivalently check whether $r^{i_s}_{\ell_s}r^{i_s}_{j_s}q$ commutes with all $n$ generators of $G_{i_s}$. Each such commutation check can be performed in time $\mathcal{O}(n)$, see e.g. \cite[Section 3]{steeb2014spin}. This leads to a total classical processing time of $\cO\left(Nn^2 \lvert S\cup\{\dI\}\rvert\right)=\cO\left(\frac{n^2|S\cup\{\dI\}|}{\eps^4}\right)$.
    \item Complexity of state preparation and measurements: The input states used in our protocol are (randomly chosen) stabilizer states $\ket{\phi_{i,j}}$, each of which can be prepared using a Clifford circuit with $\mathcal{O}(\frac{n^2}{\log n})$ many $2$-qubit gates \cite{aaronson2004improved}. The measurements used in our protocol are \edit{w.r.t.~}{with respect to }some (randomly chosen) orthonormal basis of stabilizer states. Given any such a target basis, there always exists a Clifford unitary that maps the computational basis to the target basis.
    This can be seen via the surjective group homomorphism between $n$-qubit Cliffords and the symplectic automorphism group on $(2n)$-bit strings \cite[Proposition II.14.]{haah2016algebraic}, under which stabilizer groups correspond to (symplectically) isotropic subspaces, and then using the fact that every symplectic subspace can be obtained by applying a symplectic transformation to a span of a suitable number of canonical basis elements in $\mathbb{F}_2^{2n}$ \cite[Proposition II.8]{haah2016algebraic}.
    Thus, each of our measurements can be implemented by a Clifford circuit with $\mathcal{O}(\frac{n^2}{\log n})$ many $2$-qubit gates \cite{aaronson2004improved} followed by a computational basis measurement.
\end{itemize}

\end{proof}

\section{Lower bounds for learning a general Hamiltonian}

In this section, we prove \Cref{inf-thm:hardness-hamiltonian-learning} on lower bounds for learning a general Hamiltonian in normalized Schatten $2$-norm. 

\begin{theorem}[\ins{Formal version of \Cref{inf-thm:hardness-hamiltonian-learning}}]\label{thm:hardness-hamiltonian-learning}
    Let $n\ge  11$. 
	Suppose that the Hamiltonian $H$  satisfies $\tr(H)=0 $ and $\|H\|_\infty \le 1 $. 
 \begin{itemize}
     \item Any  coherent algorithm that constructs $\hat{H}$ such that  $\frac{1}{\sqrt{d}}\|H-\hat{H}\|_2\le \eps$  with probability at least $2/3$ has to use  a number of independent experiments  $N=\Omega\left(\frac{2^{2n}}{n}\right)$.
     \item Any non-adaptive ancilla-free incoherent algorithm that constructs $\hat{H}$ such that  $\frac{1}{\sqrt{d}}\|H-\hat{H}\|_2\le \eps$  with probability at least $2/3$ has to use a total evolution time  $\Omega\left(\frac{2^{2n}}{\eps}\right)$.
 \end{itemize}
\end{theorem}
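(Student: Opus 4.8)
\medskip

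\noindent\textbf{Proof plan.} The plan is to reduce Hamiltonian learning to a many-hypothesis distinguishing problem, lower bound the mutual information between the hidden hypothesis and the learner's transcript via Fano's inequality, and then upper bound that same mutual information separately in the coherent and in the non-adaptive ancilla-free incoherent models, as sketched in the proof overview and in analogy with the argument for \Cref{thm:lower-Schatten-indep}.

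\emph{Constructing the hard instances.} Let $O=\mathrm{diag}(+1,\dots,+1,-1,\dots,-1)$ be the traceless $\pm1$ diagonal matrix and set $H_U=\eps\,UOU^\dagger$ for $U\sim\Haar(d)$; then $\tr(H_U)=0$, $\norm{H_U}_\infty=\eps\le 1$, and crucially $H_U^2=\eps^2\dI$. Writing $W=U^\dagger V$, one has $\tfrac1d\norm{H_U-H_V}_2^2=2\eps^2-\tfrac{2\eps^2}{d}\tr(OWOW^\dagger)$, and since $\exs{W\sim\Haar(d)}{WOW^\dagger}=\tfrac{\tr(O)}{d}\dI=0$, the first Haar moment gives $\exs{U,V}{\tfrac1d\norm{H_U-H_V}_2^2}=2\eps^2$; a second-moment (Weingarten) computation along the lines of \Cref{lem:Wg} yields $\exs{U,V}{\tfrac1{d^2}\norm{H_U-H_V}_2^4}\le 6\eps^4$. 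Hölder's inequality with exponents $3/2$ and $3$ then gives
\begin{align*}
    \exs{U,V}{\tfrac{1}{\sqrt d}\norm{H_U-H_V}_2}\ \ge\ \frac{\big(\exs{U,V}{\tfrac1d\norm{H_U-H_V}_2^2}\big)^{3/2}}{\big(\exs{U,V}{\tfrac1{d^2}\norm{H_U-H_V}_2^4}\big)^{1/2}}\ \ge\ \frac{(2\eps^2)^{3/2}}{(6\eps^4)^{1/2}}\ =\ \frac{2}{\sqrt3}\,\eps\ >\ 1.1\,\eps\,.
\end{align*}
The map $(U,V)\mapsto\tfrac{1}{\sqrt d}\norm{H_U-H_V}_2$ is $\cO(\eps/\sqrt d)$-Lipschitz for the Euclidean metric, so Lipschitz concentration on the unitary group \cite{meckes2013spectral} (as in \Cref{lem:concentration-operator-norm}) gives $\prs{U,V}{\tfrac{1}{\sqrt d}\norm{H_U-H_V}_2\le\eps}\le\exp(-\Omega(d^2))=\exp(-\Omega(2^{2n}))$. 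A union bound over all pairs then produces $M=\exp(\Omega(4^n))$ unitaries $U_1,\dots,U_M$ such that the Hamiltonians $H_x:=\eps\,U_xO U_x^\dagger$ are pairwise $\Omega(\eps)$-far in $\tfrac{1}{\sqrt d}\norm{\cdot}_2$ (rescaling $\eps$ by a fixed constant if needed so that the $\eps$-balls around the $H_x$ are disjoint).

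\emph{Mutual information lower bound.} A learner that outputs $\hat H$ with $\tfrac{1}{\sqrt d}\norm{H_x-\hat H}_2\le\eps$ with probability $\ge 2/3$ recovers $x$ by rounding $\hat H$ to the nearest $H_y$. With $X\sim\unif([M])$ and $Y$ the learner's classical observations, Fano's inequality gives $\cI(X:Y)\ge\Omega(\log M)\ge\Omega(4^n)$.

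\emph{Mutual information upper bounds.} For coherent algorithms, upper bound $\cI(X:Y)$ by the Holevo information $\chi$ \cite{holevo1973bounds} of the output-state ensemble and decompose $\chi=\sum_{k=1}^N\chi_k$, where $\chi_k$ captures the information gained from the $k$-th call to $\cU_{t_k}$; since the control channels $\cN_k$ do not depend on $x$ (data processing) and each $\cU_{t_k}$ touches only the $n$ system qubits, $\chi_k\le 2n$ by standard von Neumann entropy bounds. Hence $2nN\ge\cI(X:Y)\ge\Omega(4^n)$, i.e.\ $N=\Omega(2^{2n}/n)$. For non-adaptive ancilla-free incoherent algorithms, the $N$ experiments are fixed in advance, so $\cI(X:Y)\le\sum_{\ell=1}^N\cI(X:Y_\ell)$ with $\prs{}{Y_\ell=y\mid X=x}=\lambda_y\bra{\phi^\ell_y}\mathrm{e}^{-\mathrm{i}t_\ell H_x}\rho_\ell\,\mathrm{e}^{\mathrm{i}t_\ell H_x}\ket{\phi^\ell_y}$. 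Using $H_x^2=\eps^2\dI$, so $\mathrm{e}^{-\mathrm{i}t_\ell H_x}=\cos(t_\ell\eps)\dI-\mathrm{i}\sin(t_\ell\eps)\,U_xOU_x^\dagger$, one writes $\cI(X:Y_\ell)=\exs{x}{\KL(P_{Y_\ell\mid x}\|P_{Y_\ell})}$, Taylor expands to second order in $t_\ell$, and controls the Haar-averaged terms by Weingarten calculus to obtain $\cI(X:Y_\ell)\le\widetilde{\cO}(t_\ell\eps)$ up to an additive term that is negligible because $M$ is doubly exponential. Summing, $\eps\sum_{\ell=1}^N t_\ell\ge\Omega(\cI(X:Y))\ge\Omega(4^n)$, i.e.\ the total evolution time is $\Omega(2^{2n}/\eps)$.

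\emph{Main obstacle.} The crux is the per-experiment estimate $\cI(X:Y_\ell)\le\widetilde{\cO}(t_\ell\eps)$: a naive Taylor expansion of the $\KL$ divergence can diverge when the Born probability $\bra{\phi^\ell_y}\rho_\ell\ket{\phi^\ell_y}$ is tiny, so — exactly as in the proof of \Cref{thm:lower-Schatten-indep} — one must split the outcomes according to whether this probability lies below a $t_\ell$-dependent threshold, use a crude bound in the small regime and a careful second-order expansion plus Weingarten estimates otherwise, all while tracking the $\cos/\sin$ structure of $\mathrm{e}^{-\mathrm{i}t_\ell H_x}$ and the identity $\exs{U}{UOU^\dagger}=0$ so as to extract a linear rather than merely quadratic dependence on $t_\ell\eps$.
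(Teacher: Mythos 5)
Your proposal follows the paper's proof essentially step for step: the same hard family $H_U=\eps\,UOU^\dagger$, the same second-moment/H\"older/Lipschitz-concentration/union-bound construction of an $\exp(\Omega(4^n))$-size separated packing, Fano's inequality for $\mathcal{I}(X:Y)\ge\Omega(4^n)$, the Holevo/telescoping entropy bound $\mathcal{I}(X:Y)\le 2N\log d$ for coherent learners, and the per-round decomposition $\mathcal{I}(X:Y)\le\sum_\ell\mathcal{I}(X:Y_\ell)$ with the $\cos/\sin$ structure of $\mathrm{e}^{-\mathrm{i}t_\ell H_x}$ and Weingarten estimates for the incoherent case. The one small divergence is in your "main obstacle'' remark: the paper does not reuse the small-overlap event cut from the testing lower bound here, but instead exploits that the Haar-averaged denominator carries an intrinsic floor $\ex{\bra{\phi}\,\mathcal{U}_U^{t_\ell}(\rho_\ell)\,\ket{\phi}}\ge\cos^2(t_\ell\eps)\bra{\phi}\rho_\ell\ket{\phi}+\sin^2(t_\ell\eps)/(d+1)$ (so the relevant ratio is bounded by $2d$), and then combines $\log x\le x-1$ with a Hoeffding concentration step over the packing family to replace the average over the $M$ hypotheses by a Haar average, which is where the additive $\sqrt{\mathrm{poly}(d)/M}$ term you mention arises.
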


\ins{We note that the hardness results of Theorem~\ref{inf-thm:hardness-hamiltonian-learning} immediately carries over to normalized $p$-norm with $p\geq 2$ via standard norm inequalities.}

We follow a standard strategy for proving lower bounds for learning problems~\cite{flammia2012quantum,haah2017sample,lowe2022lower,fawzi2023lower,oufkir2023sample}. 
\begin{proof}
	We use the following construction inspired by \cite{bubeck2020entanglement}:
	\begin{align*}
		H_{U}= \eps U O U^\dagger 	\quad \text{where} \quad  U\sim \Haar(d), \text{ and } O=  \operatorname{diag}(\underbrace{+1,\ldots,+1}_{\frac{d}{2}\textrm{ times}},\underbrace{-1,\ldots,-1}_{\frac{d}{2}\textrm{ times}}).
	\end{align*}
    Note that in particular $\tr[O]=0$ by construction.    
    For such a Hamiltonian we have $\|H\|_\infty \le \eps \|U\|_\infty\|O\|_\infty \|U^\dagger \|_\infty=\eps\le 1 $.
	The expected distance  between two independent Hamiltonians $H_U$ and $H_{V}$ satisfies:
		\begin{align*}
		\ex{\frac{1}{d}\|H_{U}-H_{V}\|_2^2}&= \frac{\eps^2}{d}\ex{\tr\left[(UOU^\dagger)^2\right] + 2 \tr\left[UOU^\dagger VOV^\dagger \right] +\tr\left[(VOV^\dagger)^2\right]}
		\\&= 2\eps^2 +\frac{2\eps^2}{d}\ex{\tr\left[UOU^\dagger O\right]}
		\\&= 2\eps^2 +\frac{2\eps^2}{d^2}(\tr[O])^2
		\\&=2\eps^2\, .
	\end{align*}
    Here, the second equality used invariance of the Haar measure, the third equality used $\ex{U O U^\dagger} = \frac{\tr[O]}{d}\dI $, and the last equality used $\tr[O]=0$.
	Moreover for $d\ge 2$,
		\begin{align*}
		\ex{\frac{1}{d^2}\|H_{U}-H_{V}\|_2^4}&= \frac{\eps^4}{d^2}\ex{\left(\tr\left[(UOU^\dagger)^2\right] + 2 \tr\left[UOU^\dagger VOV^\dagger \right] +\tr\left[(VOV^\dagger)^2\right]\right)^2}
		\\&= \frac{\eps^4}{d^2}\ex{\left(2d + 2 \tr{(UOU^\dagger VOV^\dagger) } \right)^2}
		\\&= 4\eps^4+ \frac{4\eps^4}{d}\ex{\tr\left[UOU^\dagger O\right]}+  \frac{4\eps^4}{d^2}\ex{\left(  \tr\left[OU^\dagger O U \right] \right)^2}
		\\&\le 4\eps^4+ \frac{8\eps^2}{d^2}\le 6\eps^2\, .
	\end{align*}
    Here, the third equality again used Haar invariance, while the fourth equality used $\ex{U O U^\dagger} = \frac{\tr[O]}{d}\dI=0$ as well as (relying on \cite[Corollary 13]{mele2023introduction})
    \begin{align*}
        &\ex{\left(  \tr\left[(OU^\dagger O U \right] \right)^2}\\ &= \tr\left[O^{\otimes 2}\left(\ex{(U^\dagger O U)^{\otimes 2}}\right)\right]\\
        &= \tr\left[O^{\otimes 2}\frac{1}{d^2-1}\left( \tr[O^{\otimes 2}]\mathds{1} + F\tr[O^{\otimes 2}F] - \frac{1}{d}F\tr[O^{\otimes 2}] - \frac{1}{d}\tr[F O^{\otimes 2}]\mathds{1}\right)\right]\\
        &= \frac{1}{d^2-1}\left(\left(\tr[O]\right)^4+ \left(\tr[O^2]\right)^2 - \frac{2}{d}\left(\tr[O^2]\right)\left(\tr[O]\right)^2\right)
        \\&=\frac{d^2}{d^2-1}
        \\&\leq 2\, ,
    \end{align*}
    with $F$ the flip operator.
	So by Hölder's inequality:
	\begin{align*}
		\ex{\frac{1}{\sqrt{d}}\|H_{U}-H_{V}\|_2}&\ge\sqrt{\frac{	\ex{\frac{1}{d}\|H_{U}-H_{V}\|_2^2}^3}{	\ex{\frac{1}{d^2}\|H_{U}-H_{V}\|_2^4}}}\ge  \sqrt{\frac{(2\eps^2)^3}{6\eps^4}}> 1.1\eps\, .
	\end{align*}
	Observe that the function $f:(U,V)\mapsto \frac{1}{\sqrt{d}}\|H_{U}-H_{V}\|_2 $ is Lipschitz:
     {\small
	\begin{align*}
	&|f(U,V)-f(U',V')|\\&= \frac{\eps}{\sqrt{d}}\left|\|UOU^\dagger - VOV^\dagger\|_2-\|U'OU'^\dagger - V'OV'^\dagger\|_2 \right|
	\\&\le \frac{\eps}{\sqrt{d}}\left|\|UOU^\dagger - VOV^\dagger-U'OU'^\dagger + V'OV'^\dagger\|_2 \right|
	\\&\le \frac{\eps}{\sqrt{d}}\left|\|UOU^\dagger -UOU'^\dagger\|_2+\|UOU'^\dagger -U'OU'^\dagger\|_2+\|VOV^\dagger-VOV'^\dagger\|_2 +\|VOV'^\dagger - V'OV'^\dagger\|_2 \right|
	\\&\le  \frac{\eps}{\sqrt{d}}\left|\|U^\dagger -U'^\dagger\|_2+\|U -U'\|_2+\|V^\dagger-V'^\dagger\|_2 +\|V- V'\|_2 \right|
	\\&=\frac{2\eps}{\sqrt{d}}\left|\|U -U'\|_2 +\|V- V'\|_2 \right|
	\\&\le \frac{2\sqrt{2}\eps}{\sqrt{d}}\|(U,V) -(U',V')\|_2 \, ,
	\end{align*}}
    where $\|(U,V) -(U',V')\|_2\coloneqq \sqrt{\|U -U'\|_2^2 +\|V- V'\|_2^2}$.
	Hence, by the concentration inequality of Lipschitz functions \edit{w.r.t.~}{with respect to }the Haar measure \cite[Corollary 17]{meckes2013spectral}: 
	\begin{align*}
		\pr{f(U,V)-\ex{f(U,V)}\le  -s}&\le \exp\Big(-\tfrac{ds^2}{12\big(\tfrac{2\sqrt{2}\eps}{\sqrt{d}}\big)^2}\Big) = \exp\left(-\frac{s^2d^2}{96\eps^2}\right)\, .	\end{align*}
	  So, since $\ex{f}\ge 1.1\eps$:
	\begin{align*}
			\pr{f(U,V)\le \eps}	\le \pr{f(U,V)-\ex{f(U,V)}\le -0.1\eps}&\le \exp\left(-\frac{\eps^2d^2}{9600\eps^2}\right)=\exp\left(-\frac{d^2}{9600}\right)\, .
	\end{align*}
		Therefore, by iteratively picking independent Haar-random unitaries, we can construct a family $F= \{H_{x}=\eps U_xOU_x^\dagger \}_{x\in [M]}$ where $M=\exp\left(\frac{d^2}{38400}\right)$ that is $\eps$-separated with high probability. In fact, by the union bound, we have that:
	\begin{align*}
		\pr{F\; \text{is not } \eps\text{-separated}}&= \pr{\exists x\neq x' \in [M] :\frac{1}{\sqrt{d}}\|H_{x}-H_{{x'}}\|_2 < \eps }
		\\&\le M^2\pr{\frac{1}{\sqrt{d}}\|H_{1}-H_{2}\|_2 < \eps }
		\\&\le \exp\left(\frac{d^2}{19200}\right)\cdot \exp\left(-\frac{d^2}{9600}\right)
		\\&\le \exp\left(-\frac{d^2}{19200}\right)\, .
	\end{align*}
	Let $X\sim \unif[M]$ and $Y$ be  the observations of a correct algorithm. By Fano's inequality we have:
	\begin{align*}
		\cI(X: Y)\ge (2/3)\log(M)-\log(2)\ge \frac{d^2}{38400}-\log(2)
	\end{align*}

\paragraph{Coherent algorithms.}
Let $x\in [M]$. A coherent algorithm using the time evolution $t\mapsto \cU_x^t(\cdot)= \mathrm{e}^{-\mathrm{i}t H_x} (\cdot) \mathrm{e}^{\mathrm{i}t H_x}$ chooses the input state $\rho$, the evolution times $t_1, \dots, t_N$ and the channels $\cN_{1}, \dots, \cN_{N-1}$ and measures the output state:
\begin{align*}
\sigma_{x}^{N} = [\cU_{x}^{t_{N}}\otimes \id] \circ \cN_{N-1} \circ \cdots \circ \cN_{1}\circ [\cU_{x}^{t_{1}}\otimes \id](\rho). \end{align*}
Without loss of generality, we can assume that the channels $\cN_{1}, \dots, \cN_{N-1}$ are unitary (Stinespring dilation theorem) and the time evolution unitaries $\{\cU^{t_k}_x\}_k$ act on different systems $\{A_k\}_k$ of dimension $d$ (we can include swap channels in $\cN_1, \dots , \cN_{N-1}$ if necessary). The global system is thus $A_1 \cdots A_N E$ where $E$ is an ancilla system of arbitrary dimension. For $k\in [N]$, we denote $\sigma_{x}^{k} = \cU_{x}^{t_{k}} \circ \cN_{k-1} \circ \cdots \circ \cN_{1}\circ \cU_{x}^{t_{1}}(\rho)$, $\sigma_{x}^0 = \rho$ and $\cN_{0}=\id$ so that we have 
\begin{align*}
\sigma_{x}^{k} = \cU_{x}^{t_{k}} \circ \cN_{k-1}(\sigma_{x}^{k-1}). 
\end{align*}
Denote by $\pi_{k} = \frac{1}{M}\sum_{x=1}^M \cU_{x}^{t_{k}} \circ \cN_{k-1}(\sigma_{x}^{k-1})$ and $\xi_{k} = \frac{1}{M}\sum_{x=1}^M \cN_{k-1}(\sigma_{x}^{k-1})$.

For any bipartite state $\zeta_{AB}$, the von Neumann entropy satisfies the subadditivity property   $ S(AB)_{\zeta}\le S(A)_{\zeta}+S(B)_{\zeta}$ and the triangle inequality $|S(A)_{\zeta} - S(B)_{\zeta}|\le S(AB)_{\zeta}$ \cite{Nielsen2010Dec}, so for all $k\in [n]$, we obtain 

\begin{align}
S\left(A_{1} \cdots A_{k} A_{k+1}\cdots A_{N} E\right)_{\pi_{k}} &\le S\left(A_{1} \cdots A_{k-1} A_{k+1}\cdots A_{N} E\right)_{\pi_{k}} + S(A_{k})_{\pi_{k}}, \text{ and } \label{eq:entropy1}
 \\ - S\left(A_{1} \cdots A_{k-1}A_{k} A_{k+1}\cdots A_{N} E\right)_{\xi_{k}}&\le -S\left(A_{1} \cdots A_{k-1} A_{k+1}\cdots A_{N} E\right)_{\xi_{k}}  +S\left(A_{k} \right)_{\xi_{k}}.\label{eq:entropy2}
\end{align}
Hence the mutual information between $X$ and the observation of the coherent algorithm $Y$ can be bounded as follows
\begin{align*}
    \cI(X:Y)&\overset{(a)}{\le}  S\left(\frac{1}{M}\sum_{x=1}^M \sigma_{x}^N\right)-\frac{1}{M}\sum_{x=1}^MS\left( \sigma_{x}^N\right)
    \\&\overset{(b)}{=} \sum_{k=1}^{N} S\left( \frac{1}{M}\sum_{x=1}^M\sigma_{x}^{k}\right)-\sum_{k=1}^{N}S\left( \frac{1}{M}\sum_{x=1}^M\sigma_{x}^{k-1}\right)
    \\&\overset{(c)}{=}\sum_{k=1}^{N} S\left(\frac{1}{M}\sum_{x=1}^M \cU_{x}^{t_{k}} \circ \cN_{k-1}(\sigma_{x}^{k-1})\right)-\sum_{k=1}^{N}S\left( \frac{1}{M}\sum_{x=1}^M\cN_{k-1}(\sigma_{x}^{k-1})\right)
    \\&=\sum_{k=1}^{N} S\left(A_{1} \cdots A_{k-1}A_{k} A_{k+1}\cdots A_{N} E\right)_{\pi_{k}} -\sum_{k=1}^{N}S\left(A_{1} \cdots A_{k-1}A_{k} A_{k+1}\cdots A_{N} E\right)_{\xi_{k}} 
    \\&\overset{(d)}{\le} \sum_{k=1}^{N} S\left(A_{1} \cdots A_{k-1} A_{k+1}\cdots A_{N} E\right)_{\pi_{k}} +\sum_{k=1}^{N} S(A_{k})_{\pi_{k}}
    \\&\quad  -\sum_{k=1}^{N}S\left(A_{1} \cdots A_{k-1} A_{k+1}\cdots A_{N} E\right)_{\xi_{k}}  + \sum_{k=1}^{N}S\left(A_{k} \right)_{\xi_{k}} 
    \\&\overset{(e)}{=}  \sum_{k=1}^{N} S(A_{k})_{\pi_{k}}+ \sum_{k=1}^{N} S\left(A_{k} \right)_{\xi_{k}}
    \\&\le 2N\log(d)\ , 
\end{align*}
where $(a)$ uses Holevo's theorem \cite{holevo1973bounds}; $(b)$ is a telescopic sum and uses the fact that $S\left( \sigma_{x}^N\right) = S\left( \rho\right)$ for all $x$ as all operations we apply are unitary; $(c)$ uses the assumption that $\cN_{k-1}$ is a unitary channel; $(d)$ uses the inequalities \eqref{eq:entropy1}\&\eqref{eq:entropy2}; and $(e)$ uses the fact  $\ptr{A_{k}}{\pi_{k}} = \ptr{A_{k}}{\xi_{k}}$.

Since $\cI(X:Y)\ge \frac{d^2}{38400}-\log(2)$ we deduce that 
\begin{equation*}
    N \ge \frac{d^2}{76800\log(d)}-\frac{\log(2)}{2\log(d)}. 
\end{equation*}

	\paragraph{Ancilla-free incoherent non-adaptive algorithms.}
	In this setting, the algorithm selects a set of input states $\{\rho_{\ell}\}_{\ell\in [N]}$, time evolutions $\{t_{\ell}\}_{\ell\in [N]}$, and measurement devices $\{\cM_{\ell}\}_{\ell\in [N]}$, which we assume without loss of generality to be of the form $\cM_{\ell}=\{\lambda_{i_{\ell}}\proj{\phi^{\ell}_{i_{\ell}}}\}_{i_{\ell}}$ for projectors $\proj{\phi^{\ell}_{i_{\ell}}}$ and non-negative coefficients $\lambda_{i_{\ell}}$ satisfying $\sum_{i_{\ell}} \lambda_{i_{\ell}}=d$. At step $\ell\in [N]$, the input state is transmitted through the channel $\cU_x^{t_{\ell}}(\rho)= \mathrm{e}^{-\mathrm{i}t_{\ell} H_x} \rho \mathrm{e}^{\mathrm{i}t_{\ell} H_x}$, and the output state is measured using the device $\cM_{\ell}$, resulting in the outcome $I_{\ell}$.

	For a non-adaptive algorithm, the observations $Y=(I_1, \dots, I_N)$ are independent conditioned on $X$, so 
    \begin{align*}
		\cI(X: Y)&=\cI(X: I_1, \dots, I_N) 
       = S(I_1,\dots, I_N ) - S(I_1, \dots, I_N| X)
        \\&\le \sum_{\ell=1}^N S(I_{\ell}) - S( I_{\ell}| X)
       = \sum_{\ell=1}^N \cI(X: I_{\ell}).
	\end{align*}
	Fix $l\in [N]$ and recall the notation $\cU_x^{t_{\ell}}(\rho)= \mathrm{e}^{-\mathrm{i}t_{\ell} H_x} \rho \mathrm{e}^{\mathrm{i}t_{\ell} H_x} $ where $H_x= \eps U_x O U_x^{\dagger}$, we have the joint distribution of $(X, I_{\ell})$:
	\begin{align*}
		q(x, i_{\ell})=\frac{1}{M}\lambda_{i_{\ell}}\bra{\phi_{i_{\ell}}^{\ell}} \cU_x^{t_{\ell}}(\rho_{\ell}) \ket{\phi_{i_{\ell}}^{\ell}}.
	\end{align*}
	So the mutual information is:
	\begin{align*}
		\cI(X:I_{\ell})&=\sum_{x, i_{\ell}}q(x, i_{\ell}) \log\left(\frac{q(x, i_{\ell})}{q(x)q(i_{\ell})}\right)
		\\&=\sum_{x, i_{\ell}}\frac{1}{M}\lambda_{i_{\ell}}\bra{\phi_{i_{\ell}}^{\ell}} \cU_x^{t_{\ell}}(\rho_{\ell}) \ket{\phi_{i_{\ell}}^{\ell}} \log\left(\frac{\frac{1}{M}\lambda_{i_{\ell}}\bra{\phi_{i_{\ell}}^{\ell}} \cU_x^{t_{\ell}}(\rho_{\ell}) \ket{\phi_{i_{\ell}}^{\ell}}}{\frac{1}{M}\cdot\sum_y \frac{1}{M}\lambda_{i_{\ell}}\bra{\phi_{i_{\ell}}^{\ell}} \cU_y^{t_{\ell}}(\rho_{\ell}) \ket{\phi_{i_{\ell}}^{\ell}}}\right)
		\\&=\Sigma_1^{\ell}+ \Sigma_2^{\ell}\, ,
	\end{align*}
	where
	\begin{align*}
		\Sigma_1^{\ell}&=  \sum_{x, i_{\ell}}\frac{1}{M}\lambda_{i_{\ell}}\bra{\phi_{i_{\ell}}^{\ell}} \cU_x^{t_{\ell}}(\rho_{\ell}) \ket{\phi_{i_{\ell}}^{\ell}} \log\left(\frac{ \bra{\phi_{i_{\ell}}^{\ell}} \cU_x^{t_{\ell}}(\rho_{\ell}) \ket{\phi_{i_{\ell}}^{\ell}}}{ \exs{U\sim \Haar(d) }{\bra{\phi_{i_{\ell}}^{\ell}} \cU_U^{t_{\ell}}(\rho_{\ell}) \ket{\phi_{i_{\ell}}^{\ell}}}}\right),
 \\ \Sigma_2^{\ell}&= \sum_{x, i_{\ell}}\frac{1}{M}\lambda_{i_{\ell}}\bra{\phi_{i_{\ell}}^{\ell}} \cU_x^{t_{\ell}}(\rho_{\ell}) \ket{\phi_{i_{\ell}}^{\ell}} \log\left(\frac{\exs{U\sim \Haar(d) }{ \bra{\phi_{i_{\ell}}^{\ell}} \cU_U^{t_{\ell}}(\rho_{\ell})\ket{\phi_{i_{\ell}}^{\ell}}}}{\frac{1}{M}\sum_y  \bra{\phi_{i_{\ell}}^{\ell}} \cU_y^{t_{\ell}}(\rho_{\ell}) \ket{\phi_{i_{\ell}}^{\ell}}}\right), 
 	\end{align*}
 and $\cU_U^{t_{\ell}}(\rho)=\mathrm{e}^{-\mathrm{i}t_{\ell}\eps UOU^\dagger }\rho \mathrm{e}^{\mathrm{i}t_{\ell} \eps UOU^\dagger} $.
 
We have  for $H=\eps UOU^\dagger$, $\mathrm{e}^{itH}=\cos(t\eps)\dI+ \mathrm{i}\sin(t\eps)UOU^\dagger $ hence
{\small
\begin{align}
		\bra{\phi_{i_{\ell}}^{\ell}} \cU_x^{t_{\ell}}(\rho_{\ell}) \ket{\phi_{i_{\ell}}^{\ell}}\nonumber&=	\bra{\phi_{i_{\ell}}^{\ell}} \mathrm{e}^{-\mathrm{i}t_{\ell}H_x}\rho_{\ell}\mathrm{e}^{\mathrm{i}t_{\ell}H_x} \ket{\phi_{i_{\ell}}^{\ell}}\nonumber
  \\&=\cos^2(t_{\ell}\eps)\bra{\phi_{i_{\ell}}^{\ell}} \rho_{\ell} \ket{\phi_{i_{\ell}}^{\ell}} +\sin^2(t_{\ell}\eps)\bra{\phi_{i_{\ell}}^{\ell}}UOU^\dagger \rho_{\ell} UOU^\dagger \ket{\phi_{i_{\ell}}^{\ell}}\nonumber
		\\&\quad +2\Im \cos(t_{\ell}\eps)\sin(t_{\ell}\eps) \bra{\phi_{i_{\ell}}^{\ell}} \rho_{\ell}  UOU^\dagger\ket{\phi_{i_{\ell}}^{\ell}},\nonumber
			\\\ex{\bra{\phi_{i_{\ell}}^{\ell}} \cU_U^{t_{\ell}}(\rho_{\ell}) \ket{\phi_{i_{\ell}}^{\ell}}}\nonumber&=\cos^2(t_{\ell}\eps)\bra{\phi_{i_{\ell}}^{\ell}} \rho_{\ell} \ket{\phi_{i_{\ell}}^{\ell}} +\sin^2(t_{\ell}\eps)\ex{\bra{\phi_{i_{\ell}}^{\ell}}UOU^\dagger \rho_{\ell} UOU^\dagger \ket{\phi_{i_{\ell}}^{\ell}}}\nonumber
			\\&=\cos^2(t_{\ell}\eps)\bra{\phi_{i_{\ell}}^{\ell}} \rho_{\ell} \ket{\phi_{i_{\ell}}^{\ell}} +\sin^2(t_{\ell}\eps)\left(\frac{d}{d^2-1}-\frac{d\bra{\phi_{i_{\ell}}^{\ell}} \rho_{\ell} \ket{\phi_{i_{\ell}}^{\ell}}}{d(d^2-1)}\right)\nonumber
			\\&\ge \cos^2(t_{\ell}\eps)\bra{\phi_{i_{\ell}}^{\ell}} \rho_{\ell} \ket{\phi_{i_{\ell}}^{\ell}} +\frac{\sin^2(t_{\ell}\eps)}{d+1},\nonumber
				\\\ex{\bra{\phi_{i_{\ell}}^{\ell}} \cU_U^{t_{\ell}}(\rho_{\ell}) \ket{\phi_{i_{\ell}}^{\ell}}^2}\nonumber&\le \cos^4(t_{\ell}\eps)\bra{\phi_{i_{\ell}}^{\ell}} \rho_{\ell} \ket{\phi_{i_{\ell}}^{\ell}}^2 +\sin^4(t_{\ell}\eps)\ex{\bra{\phi_{i_{\ell}}^{\ell}}UOU^\dagger \rho_{\ell} UOU^\dagger \ket{\phi_{i_{\ell}}^{\ell}}^2} \nonumber
				\\&\quad +4 \cos^2(t_{\ell}\eps)\sin^2(t_{\ell}\eps) \ex{\bra{\phi_{i_{\ell}}^{\ell}} UOU^\dagger\rho_{\ell}  \ket{\phi_{i_{\ell}}^{\ell}} \bra{\phi_{i_{\ell}}^{\ell}} \rho_{\ell}  UOU^\dagger\ket{\phi_{i_{\ell}}^{\ell}} }\nonumber
				\\&\quad +2 \cos^2(t_{\ell}\eps)\sin^2(t_{\ell}\eps) \ex{\bra{\phi_{i_{\ell}}^{\ell}} \rho_{\ell}  \ket{\phi_{i_{\ell}}^{\ell}} \bra{\phi_{i_{\ell}}^{\ell}} UOU^\dagger\rho_{\ell}  UOU^\dagger\ket{\phi_{i_{\ell}}^{\ell}} }\nonumber
				\\&\quad +4\Im  \cos(t_{\ell}\eps)\sin^3(t_{\ell}\eps) \ex{\bra{\phi_{i_{\ell}}^{\ell}}UOU^\dagger \rho_{\ell} UOU^\dagger \ket{\phi_{i_{\ell}}^{\ell}} \bra{\phi_{i_{\ell}}^{\ell}} \rho_{\ell}  UOU^\dagger\ket{\phi_{i_{\ell}}^{\ell}} }\nonumber
				\\&\le \cos^4(t_{\ell}\eps)\bra{\phi_{i_{\ell}}^{\ell}} \rho_{\ell} \ket{\phi_{i_{\ell}}^{\ell}}^2 +\sin^4(t_{\ell}\eps)\cdot\cO\left(\frac{1}{d^2}\right)\label{eq:Haar-S3} 
				\\&\quad + \cos^2(t_{\ell}\eps)\sin^2(t_{\ell}\eps) \cdot\cO\left(\frac{\bra{\phi_{i_{\ell}}^{\ell}} \rho_{\ell} \ket{\phi_{i_{\ell}}^{\ell}}}{d}\right). \nonumber
\end{align}
}
For $\ex{\bra{\phi_{i_{\ell}}^{\ell}} \cU_U^{t_{\ell}}(\rho_{\ell}) \ket{\phi_{i_{\ell}}^{\ell}}}$, the calculations are similar as in \Cref{eq:Haar-S2}.
To show Inequality \eqref{eq:Haar-S3}, we used Weingarten calculus \Cref{sec:weingarten facts} and the  remark that $\|O\|_{\infty}=1, \tr{(O)}=0, O^2=\dI$ and $|\W(\pi, d)|\le \cO\left( \frac{1}{d^n}\right)$ for $\pi\in \fS_n$ and $n=2,3,4$. 

In particular, we have 
{\small
\begin{align*}
	0&\le  \frac{\bra{\phi_{i_{\ell}}^{\ell}} \cU_x^{t_{\ell}}(\rho_{\ell}) \ket{\phi_{i_{\ell}}^{\ell}}}{\ex{\bra{\phi_{i_{\ell}}^{\ell}} \cU_U^{t_{\ell}}(\rho_{\ell}) \ket{\phi_{i_{\ell}}^{\ell}}}}
	\\&\le \frac{\cos^2(t_{\ell}\eps)\bra{\phi_{i_{\ell}}^{\ell}} \rho_{\ell} \ket{\phi_{i_{\ell}}^{\ell}} +\sin^2(t_{\ell}\eps)\bra{\phi_{i_{\ell}}^{\ell}}UOU^\dagger \rho_{\ell} UOU^\dagger \ket{\phi_{i_{\ell}}^{\ell}}
		+2\Im \cos(t_{\ell}\eps)\sin(t_{\ell}\eps) \bra{\phi_{i_{\ell}}^{\ell}} \rho_{\ell}  UOU^\dagger\ket{\phi_{i_{\ell}}^{\ell}} }{\cos^2(t_{\ell}\eps)\bra{\phi_{i_{\ell}}^{\ell}} \rho_{\ell} \ket{\phi_{i_{\ell}}^{\ell}} +\frac{\sin^2(t_{\ell}\eps)}{d+1}}
	\\&\le 1+ (d+1)+ \sqrt{d+1}\le 2d \, 
\end{align*}
}
for $d\ge 5$. 
Hence 
{\small
\begin{align*}
	&  \bra{\phi_{i_{\ell}}^{\ell}} \cU_x^{t_{\ell}}(\rho_{\ell}) \ket{\phi_{i_{\ell}}^{\ell}}\cdot \left|\frac{\bra{\phi_{i_{\ell}}^{\ell}} \cU_x^{t_{\ell}}(\rho_{\ell}) \ket{\phi_{i_{\ell}}^{\ell}}}{\ex{\bra{\phi_{i_{\ell}}^{\ell}} \cU_U^{t_{\ell}}(\rho_{\ell}) \ket{\phi_{i_{\ell}}^{\ell}}}}-1\right|
	\\&= \frac{\bra{\phi_{i_{\ell}}^{\ell}} \cU_x^{t_{\ell}}(\rho_{\ell}) \ket{\phi_{i_{\ell}}^{\ell}}}{\ex{\bra{\phi_{i_{\ell}}^{\ell}} \cU_U^{t_{\ell}}(\rho_{\ell}) \ket{\phi_{i_{\ell}}^{\ell}}}}\cdot\Bigg|\sin^2(t_{\ell}\eps)\bra{\phi_{i_{\ell}}^{\ell}}UOU^\dagger \rho_{\ell} UOU^\dagger \ket{\phi_{i_{\ell}}^{\ell}}
		\\&\qquad\qquad \qquad \qquad \qquad \qquad  +2\Im \cos(t_{\ell}\eps)\sin(t_{\ell}\eps) \bra{\phi_{i_{\ell}}^{\ell}} \rho_{\ell}  UOU^\dagger\ket{\phi_{i_{\ell}}^{\ell}} -\tfrac{\sin^2(t_{\ell}\eps)}{d+1}\Bigg|
	\\&\le 6dt_{\ell}\eps
\end{align*}
}
Here, we used that $\sin^2(x) \leq x$ and that $\cos(x)\sin(x)\leq x$ for $x \geq 0$. So by taking the average over the outcome $i_{\ell}$ and assuming that $\sum_{\ell=1}^N t_{\ell}\le \frac{d^2}{\eps}$ (since otherwise we already have the statement that we set out to prove) we have
\begin{align*}
   Y_x   &=\sum_{\ell=1}^N\sum_{ i_{\ell}}\lambda_{i_{\ell}}\bra{\phi_{i_{\ell}}^{\ell}} \cU_x^{t_{\ell}}(\rho_{\ell}) \ket{\phi_{i_{\ell}}^{\ell}}\cdot \left(\frac{\bra{\phi_{i_{\ell}}^{\ell}} \cU_x^{t_{\ell}}(\rho_{\ell}) \ket{\phi_{i_{\ell}}^{\ell}}}{\ex{\bra{\phi_{i_{\ell}}^{\ell}} \cU_U^{t_{\ell}}(\rho_{\ell}) \ket{\phi_{i_{\ell}}^{\ell}}}}-1\right)  \in   [-6d^4,6d^4].
\end{align*}
Hence, by Hoeffding's inequality applied to the i.i.d.  random variables $\{Y_x\}_{x\in [M]}$:
\begin{align*}
   & \pr{ \left| \frac{1}{M}\sum_{x}Y_x - \ex{Y_U}\right|>\sqrt{\frac{12^2d^8\log(20)}{M}}}
    \le 2\exp\left(-\tfrac{M\cdot \frac{12^2d^8}{M}\log(20)}{12^2d^8}\right)=\frac{1}{10}.
\end{align*}
Therefore with probability at least $9/10$, for all $l\in [N]$, 
 using the inequality $\log(x)\le x-1$, we have:
 {\small
\begin{align}\label{UB-Sigma1}
&\sum_{\ell=1}^N\Sigma_1^{\ell} -\sqrt{\frac{12^2d^8\log(20)}{M}} \notag
\\&=  \sum_{\ell=1}^N\sum_{x, i_{\ell}}\frac{1}{M}\lambda_{i_{\ell}}\bra{\phi_{i_{\ell}}^{\ell}} \cU_x^{t_{\ell}}(\rho_{\ell}) \ket{\phi_{i_{\ell}}^{\ell}} \log\left(\frac{ \bra{\phi_{i_{\ell}}^{\ell}} \cU_x^{t_{\ell}}(\rho_{\ell}) \ket{\phi_{i_{\ell}}^{\ell}}}{ \ex{\bra{\phi_{i_{\ell}}^{\ell}} \cU_U^{t_{\ell}}(\rho_{\ell}) \ket{\phi_{i_{\ell}}^{\ell}}}}\right)- \sqrt{\frac{12^2d^8\log(20)}{M}}\notag 
		\\&\le \frac{1}{M}\sum_x \sum_{\ell=1}^N\sum_{ i_{\ell}}\frac{1}{M}\lambda_{i_{\ell}}\bra{\phi_{i_{\ell}}^{\ell}} \cU_x^{t_{\ell}}(\rho_{\ell}) \ket{\phi_{i_{\ell}}^{\ell}} \left(\frac{ \bra{\phi_{i_{\ell}}^{\ell}} \cU_x^{t_{\ell}}(\rho_{\ell}) \ket{\phi_{i_{\ell}}^{\ell}}}{ \ex{\bra{\phi_{i_{\ell}}^{\ell}} \cU_U^{t_{\ell}}(\rho_{\ell}) \ket{\phi_{i_{\ell}}^{\ell}}}}-1\right)- \sqrt{\frac{12^2d^8\log(20)}{M}}\notag 
  \\&\le  \exs{V\sim \Haar(d)}{\sum_{\ell=1}^N\sum_{ i_{\ell}}\lambda_{i_{\ell}}\bra{\phi_{i_{\ell}}^{\ell}} \cU_V^{t_{\ell}}(\rho_{\ell}) \ket{\phi_{i_{\ell}}^{\ell}} \left(\frac{ \bra{\phi_{i_{\ell}}^{\ell}} \cU_V^{t_{\ell}}(\rho_{\ell}) \ket{\phi_{i_{\ell}}^{\ell}}}{ \exs{U\sim \Haar(d)}{\bra{\phi_{i_{\ell}}^{\ell}} \cU_U^{t_{\ell}}(\rho_{\ell}) \ket{\phi_{i_{\ell}}^{\ell}}}}-1\right)}
  \notag 
	\\&\le  \sum_{\ell=1}^N\sum_{i_{\ell}} \lambda_{i_{\ell}}\frac{\ex{\bra{\phi_{i_{\ell}}^{\ell}} \cU_U^{t_{\ell}}(\rho_{\ell}) \ket{\phi_{i_{\ell}}^{\ell}}^2}}{\ex{\bra{\phi_{i_{\ell}}^{\ell}} \cU_U^{t_{\ell}}(\rho_{\ell}) \ket{\phi_{i_{\ell}}^{\ell}}}} -1\notag
	\\&\le \sum_{\ell=1}^N\sum_{i_{\ell}} \lambda_{i_{\ell}}\frac{\cos^4(t\eps)\bra{\phi_{i_{\ell}}^{\ell}} \rho_{\ell} \ket{\phi_{i_{\ell}}^{\ell}}^2 +\sin^4(t\eps)\cO\left(\frac{1}{d^2}\right)
		+ \cos^2(t\eps)\sin^2(t\eps) \cO\left(\frac{1}{d}\bra{\phi_{i_{\ell}}^{\ell}} \rho_{\ell} \ket{\phi_{i_{\ell}}^{\ell}}\right)}{\cos^2(t_{\ell}\eps)\bra{\phi_{i_{\ell}}^{\ell}} \rho_{\ell} \ket{\phi_{i_{\ell}}^{\ell}} +\frac{\sin^2(t_{\ell}\eps)}{d+1}} -1\notag
	\\&\le \sum_{\ell=1}^N\sum_{i_{\ell}}\lambda_{i_{\ell}} \cos^4(t_{\ell}\eps) \bra{\phi_{i_{\ell}}^{\ell}} \rho_{\ell} \ket{\phi_{i_{\ell}}^{\ell}} -1
	+   \sum_{ i_{\ell}} \lambda_{i_{\ell}} \sin^2(t_{\ell}\eps)\cO\left(\frac{1}{d}\right)
		+   \sum_{ i_{\ell}}\lambda_{i_{\ell}} \sin^2(t_{\ell}\eps)\cO\left(\frac{1}{d}\right)\notag
		\\&\le \sum_{\ell=1}^N   [\cos^4(t_{\ell}\eps)-1]+\cO\left(\sin^2(t_{\ell}\eps)\right)+ \sqrt{\frac{12^2d^8\log(20)}{M}}
		\le  \sum_{\ell=1}^N \cO( \min\{t_{\ell}^2\eps^2,1\})\notag 
  \\&\le \cO\left( \sum_{\ell=1}^Nt_{\ell}\eps \right) 
\end{align}
}
where we use $\cos(x)\le 1$, $\sin(x)\le \min\{x,1\}$.
On the other hand, for the second sum $\Sigma_2$, we can use again the inequality $\log(x)\le x-1$:

\begin{align*}
	\Sigma_2^{\ell}&=	\sum_{x, i_{\ell}}\frac{1}{M}\lambda_{i_{\ell}}\bra{\phi_{i_{\ell}}^{\ell}} \cU_x^{t_{\ell}}(\rho_{\ell}) \ket{\phi_{i_{\ell}}^{\ell}} \log\left(\frac{\ex{ \bra{\phi_{i_{\ell}}^{\ell}} \cU_U^{t_{\ell}}(\rho_{\ell})\ket{\phi_{i_{\ell}}^{\ell}}}}{\frac{1}{M}\sum_y  \bra{\phi_{i_{\ell}}^{\ell}} \cU_y^{t_{\ell}}(\rho_{\ell}) \ket{\phi_{i_{\ell}}^{\ell}}}\right)
	\\&\le 	\sum_{x, i_{\ell}}\frac{1}{M}\lambda_{i_{\ell}}\bra{\phi_{i_{\ell}}^{\ell}} \cU_x^{t_{\ell}}(\rho_{\ell}) \ket{\phi_{i_{\ell}}^{\ell}} \left(\frac{\ex{ \bra{\phi_{i_{\ell}}^{\ell}} \cU_U^{t_{\ell}}(\rho_{\ell})\ket{\phi_{i_{\ell}}^{\ell}}}}{\frac{1}{M}\sum_y  \bra{\phi_{i_{\ell}}^{\ell}} \cU_y^{t_{\ell}}(\rho_{\ell}) \ket{\phi_{i_{\ell}}^{\ell}}}-1\right)
	\\&= 	\sum_{ i_{\ell}}\lambda_{i_{\ell}}\ex{ \bra{\phi_{i_{\ell}}^{\ell}} \cU_U^{t_{\ell}}(\rho_{\ell})\ket{\phi_{i_{\ell}}^{\ell}}}- 	\sum_{x, i_{\ell}}\frac{1}{M}\lambda_{i_{\ell}}\bra{\phi_{i_{\ell}}^{\ell}} \cU_x^{t_{\ell}}(\rho_{\ell}) \ket{\phi_{i_{\ell}}^{\ell}}
	\\&= \ex{\tr\left(\cU_U^{t_{\ell}}(\rho_{\ell})\right)}- \sum_x \frac{1}{M} \tr\left(\cU_x^{t_{\ell}}(\rho_{\ell})\right)
	\\&=0.
\end{align*}
Therefore,
\begin{align*}
		\frac{d^2}{38400}-\log(2)\le \cI(X: I_1, \dots, I_N)&= \sum_{\ell=1}^N \cI(X: I_{\ell})=\sum_{\ell=1}^N (\Sigma_1^{\ell} + \Sigma_2^{\ell})
  \\&\le \cO\left(\sum_{\ell=1}^N t_{\ell} \eps\right)+ \sqrt{\frac{12^2d^8\log(20)}{M}} 
   \\&\le  \cO\left(\sum_{\ell=1}^N t_{\ell} \eps\right) + 1
	\end{align*}
for $n\ge 11$ since $M= \exp\left(\frac{d^2}{38400}\right)$. Finally:
 \begin{align*}
     \sum_{\ell=1}^N t_{\ell}\ge \Omega\left(\frac{d^2}{\eps}\right). 
 \end{align*}
\end{proof}

\section*{Acknowledgments}

MCC thanks Marcel Hinsche and Marios Ioannou for helpful discussions.
We would like to thank Omar Fawzi and Daniel Stilck Fran{\c c}a for useful discussions.
Moreover, AB would like to thank Mehdi Mhalla for a helpful discussion on stabilizer states. 
Finally, the authors thank several anonymous reviewers for helpful feedback and suggestions that helped us remove the assumption of bounded ancilla for coherent algorithms in Theorem \ref{inf-thm:hardness-hamiltonian-learning}.
AB is supported by the French National Research Agency in the framework of the ``France 2030” program (ANR-11-LABX-0025-01) for the LabEx PERSYVAL. MCC was supported by a DAAD PRIME fellowship. 
AO acknowledges funding by the European Research Council (ERC Grant Agreement No. 948139).

\newpage
\setcounter{secnumdepth}{0}
\defbibheading{head}{\section{References}}
%\nocite{*}
\sloppy
\printbibliography[heading=head]

\newpage
\appendix
\setcounter{secnumdepth}{2}

\section{Relating Hamiltonian distances to time evolution distances}\label{appendix:hamiltonian-distance-vs-time-evollution-distance}

In this appendix, we give operational interpretations for the operator norm distance between Hamiltonians as a worst-case distance and for the normalized Frobenius norm distance between Hamiltonians as an average-case distance. We do so by relating them to corresponding distances between the associated unitary evolutions in the limit of short times.

\subsection{Operator norm distance between Hamiltonians}\label{appendix:operator-norm}

The operator norm distance between two Hamiltonians is connected to the following distance measures on the level of associated time evolution unitaries:
\begin{itemize}
    \item[(a)] Worst-case fidelity between (pure) output states of the unitaries,
    \item[(b)] $1$-to-$1$ norm distance between the unitary channels (i.e., worst-case $1$-norm distance between output states over all input states without auxiliary system),
    \item[(c)] Diamond norm distance between unitary channels (i.e., worst-case $1$-norm distance between output states over all input states with auxiliary system),
    \item[(d)] Operator norm distance between the unitaries up to a global phase, 
    \begin{equation*}
        \operatorname{dist}_\infty(U,V) = \min_{\varphi\in [0,2\pi)}\norm{U - \mathrm{e}^{i\varphi}V}_\infty\, .
    \end{equation*}
\end{itemize}
We first recall the distance measures involved  and the relations between them. Then, we demonstrate their connection to the operator norm distance between Hamiltonians.

First, we know that the $1$-norm distance between two pure states is closely connected to their fidelity via $\|\proj{\phi}-\proj{\psi}\|_1=2\sqrt{1-|\spr{\phi}{\psi}|^2}$, by the equality case in the  Fuchs-van de Graaf inequalities \cite{fuchs1999cryptographic}. This implies the following connection between the above distance measures (a) and (b):
\begin{align*}
    \lVert \mathcal{U} - \mathcal{V}\rVert_{1\to 1}
    = \max_{\ket{\psi}} \lVert U\proj{\psi}U^\dagger - V\proj{\psi}V^\dagger\rVert_1
    = 2\sqrt{1 - \min_{\ket{\psi}}\lvert\bra{\psi}U^\dagger V\ket{\psi}\rvert^2}\, .
\end{align*}
Next, as we're dealing with unitary channels, the $1$-to-$1$ and diamond norm distances coincide, $\lVert \mathcal{U} - \mathcal{V}\rVert_{1\to 1}=\lVert \mathcal{U} - \mathcal{V}\rVert_{\diamond}$ (see \cite[Theorem 3.55]{watrous2018theory}).
Finally, from \cite[Proposition I.6]{haah2023queryoptimal}, we know that for two unitaries $U,V$, the diamond norm distance between the associated channels $\mathcal{U},\mathcal{V}$ and the operator norm distance between the unitaries up to a global phase coincide up to multiplicative constants: 
\begin{equation*}
    \frac{1}{2}\norm{\mathcal{U} - \mathcal{V}}_\diamond
    \leq \operatorname{dist}_\infty (U,V)
    \leq \norm{\mathcal{U} - \mathcal{V}}_\diamond\, .
\end{equation*}
So, all four distance measures (a)-(d) above are equivalent.

To understand them on the level of Hamiltonians, let $H,\Tilde{H}$ be two $n$-qubit Hamiltonians with $\tr[H]=\tr[\Tilde{H}]$. We write the associated unitary time evolutions as $U_t = \mathrm{e}^{-\mathrm{i}tH}$ and $\Tilde{U}_t = \mathrm{e}^{-\mathrm{i}t\Tilde{H}}$. Then, as $t\to 0$, we have for any input state $\ket{\psi}$:
\begin{align*}
    \lvert\bra{\psi}U_t^\dagger \Tilde{U}_t\ket{\psi}\rvert^2
    &= \lvert\bra{\psi} \mathds{1} + \mathrm{i}tH - \frac{t^2}{2}H^2 -\mathrm{i}t\Tilde{H} - \frac{t^2}{2}\Tilde{H}^2 + t^2 H\Tilde{H}\ket{\psi}\rvert^2 + \mathcal{O}(t^3)\\
    &= \left(1 - \frac{t^2}{2}\bra{\psi} (H-\Tilde{H})^2\ket{\psi}\right)^2 + t^2 \left(\bra{\psi} (H-\Tilde{H})\ket{\psi}\right)^2+ \mathcal{O}(t^3)\\
    &= 1 - t^2\bra{\psi} (H-\Tilde{H})^2\ket{\psi} + t^2 \left(\bra{\psi} (H-\Tilde{H})\ket{\psi}\right)^2+ \mathcal{O}(t^3) \, .
\end{align*}
Therefore, we have
\begin{align}
    \lVert \mathcal{U} - \mathcal{V}\rVert_{\diamond}\nonumber
    &=\lVert \mathcal{U} - \mathcal{V}\rVert_{1\to 1}\nonumber\\
    &= 2\sqrt{1 - \min_{\ket{\psi}} \lvert\bra{\psi}U_t^\dagger \Tilde{U}_t\ket{\psi}\rvert^2}\nonumber\\
    &= 2\sqrt{t^2\max_{\ket{\psi}}\left( \bra{\psi} (H-\Tilde{H})^2\ket{\psi} - \left(\bra{\psi} (H-\Tilde{H})\ket{\psi}\right)^2\right) + \mathcal{O}(t^3)}\nonumber\\
    &\begin{cases}\label{eq:case-distinction-bound}
        \leq 2\sqrt{t^2 \max_{\ket{\psi}} \bra{\psi} (H-\Tilde{H})^2\ket{\psi}+ \mathcal{O}(t^3)}\\
        \geq 2\sqrt{t^2\left(\frac{1}{2}(\lambda_{\mathrm{max}}^2 + \lambda_{\mathrm{min}}^2) - \left(\frac{1}{2}(\lambda_{\mathrm{max}} + \lambda_{\mathrm{min}})\right)^2\right)+ \mathcal{O}(t^3)}
    \end{cases}\\
    & \begin{cases}
        = 2\sqrt{t^2 \lVert H - \Tilde{H}\rVert^2_\infty+ \mathcal{O}(t^3)}\\
        = 2\sqrt{t^2\cdot\frac{(\lambda_{\mathrm{max}} - \lambda_{\mathrm{min}})^2}{4} + \mathcal{O}(t^3)}
    \end{cases}\nonumber\\
    &\begin{cases}
        = 2t \lVert H - \Tilde{H}\rVert_\infty + \mathcal{O}(t^{2})\\
        \geq 2\sqrt{t^2\cdot\frac{\max(\lambda_{\mathrm{max}}^2,\lambda_{\mathrm{min}}^2 )}{4} + \mathcal{O}(t^3)}
    \end{cases}\nonumber\\
    & \begin{cases}
        = 2t \lVert H - \Tilde{H}\rVert_\infty + \mathcal{O}(t^{2})\\
        = t \lVert H - \Tilde{H}\rVert_\infty + \mathcal{O}(t^{2})
    \end{cases}\, . \nonumber
\end{align}
Here, we used $\lambda_{\mathrm{max}/\mathrm{min}}$ to denote the maximal and minimal eigenvalues of $H-\tilde{H}$. 
For the lower bound in \Cref{eq:case-distinction-bound}, we made the specific choice $\ket{\psi}=\frac{1}{\sqrt{2}}(\ket{\psi_{\mathrm{max}}}+\ket{\psi_{\mathrm{min}}})$, where $\ket{\psi_{\mathrm{max}/\mathrm{min}}}$ are the (orthogonal) eigenvectors of $H-\tilde{H}$ for the eigenvalues $\lambda_{\mathrm{max}/\mathrm{min}}$.
In the second-to-last step, we used that $\tr[H-\Tilde{H}]=0$ implies that $\lambda_{\mathrm{max}}\geq 0$ and $\lambda_{\mathrm{min}}\leq 0$.
Additionally, in the second-to-last as well as in the last step, we used the Taylor expansion $\sqrt{1+x}=1+\mathcal{O}(x)$ to get $\sqrt{t^2 \lVert H - \Tilde{H}\rVert^2_\infty+ \mathcal{O}(t^3)}= t\lVert H - \Tilde{H}\rVert_\infty\sqrt{1 + \mathcal{O}(t)} = t\lVert H - \Tilde{H}\rVert_\infty \left(1 + \mathcal{O}(t)\right)$ and similarly $\sqrt{t^2\cdot\frac{\lambda_{\mathrm{max}}^2}{4} + \mathcal{O}(t^3)}=t\cdot \frac{\lambda_{\mathrm{max}}}{2}\sqrt{1 + \mathcal{O}(t)}= t\cdot \frac{\lambda_{\mathrm{max}}}{2} \left(1 + \mathcal{O}(t)\right)$.
This derivation tells us that, for short times, we can understand the worst-case distances (a)-(d) via the operator norm distance between the two Hamiltonians underlying the unitary evolutions.

\subsection{Normalized Frobenius norm distance between Hamiltonians}\label{appendix:normalized-frobenius-norm}

The normalized Frobenius norm distance between two Hamiltonians is connected to the following tightly related distance measures on the level of the associated time evolution unitaries:
\begin{itemize}
    \item[(a)] Frobenius norm distance between (normalized) Choi states, 
    \item[(b)] Normalized Frobenius norm distance between unitaries up to a global phase, 
    \item[(c)] Average-case squared Frobenius norm distance over Haar-random input states, 
    \item[(d)] Average-case squared trace norm distance over Haar-random input states, 
    \item[(e)] Average-case fidelity between output states over Haar-random input states.
\end{itemize}
Note that due to our focus on unitary evolutions, pure input states lead to pure output states, so that the relevant average-case squared Frobenius and trace norm distances are related by constant factors. 
Thus, (c) and (d) are immediately related. 
And via the standard relation between the trace distance and the fidelity between pure states, we can immediately translate between (d) and (e).
We now review the connections between (a), (b), and (c) established in prior work, and then relate (a) to the normalized Frobenius norm distance between Hamiltonians.

For two $n$-qubit channels $\mathcal{N}$ and $\mathcal{M}$, we define $D(\mathcal{N},\mathcal{M}) =\frac{1}{\sqrt{2}}\norm{C(\mathcal{N})-C(\mathcal{M})}_2$, where $C(\mathcal{N}) = (\mathcal{N}\otimes \operatorname{id})(\Omega)=(\mathcal{N}\otimes \operatorname{id})(\proj{\Omega})$ denotes the (normalized) Choi state obtained by applying the channel to a canonical maximally entangled state. From \cite[Proposition 15]{bao2023testing}, we know: 
If $\mathcal{N},\mathcal{M}$ are unital, then
\begin{equation*}
    \mathbb{E}_{\ket{\psi}\sim\Haar_n}\left[\norm{\mathcal{N}(\proj{\psi}) - \mathcal{M}(\proj{\psi})}_2^2\right] 
    = \frac{2^{n+1}}{2^{n}+1} D(\mathcal{N},\mathcal{M})^2\, .
\end{equation*}
Moreover, if we define, for two unitaries $U,V$,
\begin{equation*}
    \operatorname{dist}_2(U,V)
    = \frac{1}{\sqrt{2^{n}}}\min_{\varphi\in [0,2\pi)}\norm{U - \mathrm{e}^{i\varphi}V}_2\, ,
\end{equation*}
and if we denote the associated unitary channels by $\mathcal{U}$ and $\mathcal{V}$, then \cite[Lemma 14]{bao2023testing} tells us
\begin{equation*}
    \frac{1}{\sqrt{2}}\operatorname{dist}_2(U,V)
    \leq D(\mathcal{U},\mathcal{V})
    \leq \operatorname{dist}_2(U,V)\, .
\end{equation*}
Now, consider two $n$-qubit Hamiltonians $H,\tilde{H}$ with $\tr[H]=\tr[\Tilde{H}]$, giving rise to unitary time evolutions $U_t = \mathrm{e}^{-\mathrm{i}tH}$ and $\Tilde{U}_t = \mathrm{e}^{-\mathrm{i}t\Tilde{H}}$. Then, as $t\to 0$, we have:
\begin{align*}
    D(\mathcal{U}_t, \Tilde{\mathcal{U}}_t)
    &= \frac{1}{\sqrt{2}}\norm{(U_t\otimes \mathds{1})\Omega (U_t^\dagger \otimes \mathds{1}) - (\Tilde{U}_t \otimes \mathds{1})\Omega (\Tilde{U}_t^\dagger \otimes \mathds{1})}_2\\
    &= \frac{1}{\sqrt{2}}\norm{\Omega - \mathrm{i}t[H\otimes \mathds{1}, \Omega]- \left(\Omega - \mathrm{i}t[\Tilde{H}\otimes \mathds{1}, \Omega]\right)}_2 + \mathcal{O}(t^2)\\
    &= \frac{t}{\sqrt{2}} \norm{[(H - \Tilde{H})\otimes \mathds{1},\Omega]}_2 + \mathcal{O}(t^2)\\
    &= \frac{t}{\sqrt{2^n}} \norm{H - \Tilde{H}}_2 + \mathcal{O}(t^2)\, .
\end{align*}
Thus, in the limit of short times, $\frac{1}{\sqrt{2^n}} \norm{H - \Tilde{H}}_2$ closely relates to $D(\mathcal{U}_t, \Tilde{\mathcal{U}}_t)$ and thereby also to $\operatorname{dist}(U_t,\Tilde{U}_t)$ as well as to $\sqrt{\mathbb{E}_{\ket{\psi}\sim\Haar_n}\left[\norm{U_t \proj{\psi} U_t - \Tilde{U}_t\proj{\psi}\Tilde{U}_t}_2^2\right]}$.
This tells us that for short times, the different average-case distance measures (a)-(e) are connected to the normalized Frobenius distance between the underlying Hamiltonians.

\section{More general variants of Hamiltonian property testing}

In this appendix, we discuss three variants and extensions of the Hamiltonian property testing procedure presented in \Cref{sec:randomized-upper-bounds}. First, we demonstrate how a small modification to our original procedure allows us to test many properties simultaneously. Second, we show that we can test arbitrarily large properties when allowing the testing procedure to use auxiliary qubits. Third, we present a variant of the procedure that can be used for tolerant testing.

\subsection{Testing many properties}\label{testing many ppts}

\SetKwComment{Comment}{/* }{ */}
\SetKwInOut{Input}{Input}
        \SetKwInOut{Output}{Output}
\begin{algorithm}[t!]
\caption{Testing  Multiple Properties for Hamiltonian Evolutions }\label{alg:multiple property testing Hamiltonian}
\LinesNumbered
\Input{A Hamiltonian $H$, a set of properties $\Pi_{S_m}$ for $m \in \{1, \ldots, M\}$, and an accuracy parameter $\varepsilon\in (0,1)$}
\Output{The null hypothesis $H^m_0$ or the alternate hypothesis $H^m_1$ for each $m \in \{1, \ldots, M\}$}
$t \gets \frac{\eps}{6}$\;
$N \gets  \left\lceil\frac{100\log(M/\delta)}{t^2\eps^2}\right\rceil$\;

    \For{$s\gets 1$ \KwTo $N$}{
    Sample $i_s\sim\unif[d]$, $j_s\sim \unif[d+1]$\;
     Input state : $ \rho_s= \proj{\phi_{i_s,j_s}}$\;
     Evolve under $H$ for time $t$\;
     Measurement : $\cM_s = \{\proj{\phi_{i_s,\ell}}\}_{\ell}$ and observe $\ell_s\gets \cM_s(\cU_t(\rho_s))$\;
    }
    \For{$m \gets 1$ \KwTo $M$}{
     \eIf{$\frac{1}{N}\sum_{s=1}^N \mathbf{1}(\{\ket{\phi_{i_s,\ell_s}} \nsim_{S_m} \ket{\phi_{i_s,j_s}}\})\le \frac{3}{8}t^2\eps^2$}
        {
            \Return $H^m_0$
        }
     {
            \Return $H^m_1$
        }
}

\end{algorithm}

If we want to test many properties $S_1, \dots , S_M$ with the same data acquired, we need to change the algorithm decision rule. The new decision rule is 
\begin{center}
    At step $x=1,\dots, N$, let $\cE_x=\{ \ket{\phi_{i_x,\ell_x}} \sim_S \ket{\phi_{i_x,j_x}}\}$ and let $\bar{\cE}_x$ be its complement. Answer the null hypothesis iff
	\begin{align*}
		\frac{1}{N}\sum_{x=1}^N \mathbf{1}(\{\bar{\cE}_x\})\le \frac{3}{8}t^2\eps^2.
	\end{align*}
\end{center}
As we will see below, the error probability can be shown to be at most $\delta$ by Chernoff-Hoeffding's inequality~\cite{hoeffding_probability_1963} for the following set of parameters:

    \begin{itemize}
        \item Evolution time at each step $t=\frac{\eps}{6}$,
    	\item Number of independent experiments $N= \left\lceil\frac{100\log(1/\delta)}{t^2\eps^2}\right\rceil= \left\lceil\frac{60^2\log(1/\delta)}{\eps^4}\right\rceil$,
    	\item Total evolution time $Nt = \left\lceil\frac{60^2\log(1/\delta)}{\eps^4}\right\rceil\cdot \frac{\eps}{6} \leq  \frac{600 \log(1/\delta)}{\eps^3} + \underbrace{\frac{\eps}{6}}_{\leq 1/6}$.
    \end{itemize}

So, to test $M$ properties we can take the error probability to be $\delta\rightarrow \frac{\delta}{M}$ and apply a union bound. The new complexity can be taken to be 

\begin{itemize}
        \item Evolution time at each step $t=\frac{\eps}{6}$,
    	\item Number of independent experiments $N= \left\lceil\frac{100\log(M/\delta)}{t^2\eps^2}\right\rceil= \left\lceil\frac{60^2\log(M/\delta)}{\eps^4}\right\rceil$,
    	\item Total evolution time
        \begin{align*}
            Nt 
            &= \left\lceil\frac{60^2\log(M/\delta)}{\eps^4}\right\rceil\cdot \frac{\eps}{6} \leq  \frac{600\log(M/\delta)}{\eps^3} + \underbrace{\frac{\eps}{6}}_{\leq 1/6}\, .
        \end{align*}
    \end{itemize}
The logarithmic scaling with $M$ and $1/\delta$ is good, however, the dependency in $\eps$ might be sub-optimal.

With multiple properties, there is a null and alternate hypothesis for each property. The null hypothesis $H_0^i$ at $i \in \{1, \ldots, M\}$ becomes that $H \in \Pi_{S_i}$ and the alternate hypothesis $H_1^i$ is that $H$ is $\eps$ far from being in $\Pi_{S_i}$. The algorithm for the testing of multiple properties is therefore \Cref{alg:multiple property testing Hamiltonian}.

\begin{theorem}\label{thm:upper bound on multiple property testing} Let $S_i\subset \mathds{P}_n$ such that  $|S_i\cup\{\dI\}|\le\frac{(2^n+1)\eps^{4}}{144}$ for all $i \in \{1, \ldots, M\}$. 
    Suppose that the Hamiltonian $H$  satisfies $\tr(H)=0 $ and $\|H\|_\infty \le 1$. For each $i \in \{1, \ldots, M\}$,
    \Cref{alg:multiple property testing Hamiltonian} tests whether $H\in \Pi_{S_i}$ or whether $\frac{1}{\sqrt{2^n}}\|H-K\|_2>\eps$ for all  Hamiltonians $K\in \Pi_{S_i}$. The algorithm succeeds with probability at least $1-\delta$ and uses a total evolution time $\mathcal{O}\left(\frac{\log(M/\delta)}{\eps^3}\right)$, a number of independent experiments $N= \mathcal{O}\left(\frac{\log(M/\delta)}{\eps^4}\right)$ , and a total classical processing time $\tilde{\cO}\left(n^2 \sum_{i=1}^M \frac{|S_i\cup\{\dI\}|}{\eps^4}\right)$. Each experiment uses efficiently implementable states and measurements.
\end{theorem}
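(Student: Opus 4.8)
The plan is to keep the data-collection phase of \Cref{thm:upper bound on testing} unchanged and only revisit the decision rule, replacing ``reject as soon as one violation is seen'' by ``reject property $S_m$ iff the empirical violation frequency exceeds the threshold $\tau:=\tfrac38 t^2\eps^2$''. Fix $m\in[M]$ and set $Z_s:=\mathbf{1}(\{\ket{\phi_{i_s,\ell_s}}\nsim_{S_m}\ket{\phi_{i_s,j_s}}\})$ for $s\in[N]$, so that the statistic used in \Cref{alg:multiple property testing Hamiltonian} is $\hat p_m=\tfrac1N\sum_{s=1}^N Z_s$. Since the $N$ experiments are non-adaptive and each uses an independent fresh invocation of $\cU_t$, the $Z_s$ are i.i.d.\ Bernoulli with parameter $p_m=\exs{i,j,\ell}{\pr{\ket{\phi_{i,\ell}}\nsim_{S_m}\ket{\phi_{i,j}}}}$, which is exactly the per-round quantity already controlled inside the proof of \Cref{thm:upper bound on testing}. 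Reading those computations off: if $H\in\Pi_{S_m}$ then $p_m\le t^4$, while if $H$ is $\eps$-far from $\Pi_{S_m}$ then $p_m\ge\tfrac12 t^2\eps^2$ (here one uses $t=\eps/6$ and the size hypothesis $|S_m\cup\{\dI\}|\le\frac{(2^n+1)\eps^4}{144}$, precisely as in the estimate leading to \Cref{eq:err_under_H_1}). For $t=\eps/6$ one checks $t^4=\tfrac1{36}t^2\eps^2<\tau<\tfrac12 t^2\eps^2$, so the threshold sits strictly between the two hypotheses with constant-factor slack on both sides.

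The second step is a concentration bound for each $m$, then a union bound. In the far case $\hat p_m$ has mean $p_m\ge\tfrac12 t^2\eps^2\ge\tfrac43\tau$, so the multiplicative Chernoff lower tail gives $\pr{\hat p_m\le\tau}\le\pr{\hat p_m\le\tfrac34 p_m}\le\exp(-Np_m/32)\le\exp(-Nt^2\eps^2/64)$. In the null case $p_m\le t^4$ but $\tau\ge\tfrac{27}{2}t^4$, so the large-deviation (upper) Chernoff tail gives $\pr{\hat p_m\ge\tau}\le\exp(-c\,N\tau)$ for an absolute constant $c$; equivalently one may invoke Bernstein's inequality, since the variance $p_m(1-p_m)\le t^4$ is of order $\tau^2/(t^2\eps^2)$. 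Substituting $N=\lceil 100\log(M/\delta)/(t^2\eps^2)\rceil$ turns each exponent into a (sufficiently large) absolute constant times $\log(M/\delta)$, so each per-property failure probability is at most $\delta/M$, and a union bound over $m\in[M]$ yields overall success probability $\ge 1-\delta$. It is worth stressing that a plain additive Hoeffding bound is \emph{not} sufficient: the separation $p_m-\tau$ is only of order $t^2\eps^2$, so additive Hoeffding over $N=\Theta(\log(M/\delta)/(t^2\eps^2))$ rounds gives an exponent of order $\log(M/\delta)\cdot t^2\eps^2\ll\log(M/\delta)$; exploiting that $p_m$ itself is $O(t^2\eps^2)$ (indeed $O(t^4)$ under the null), i.e.\ using a relative/Bernstein-type tail, is what makes the argument close with the stated $N$.

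The resource accounting is then immediate because all $M$ tests share the same data. The per-round evolution time is $t=\eps/6$ and there are $N=\mathcal{O}(\log(M/\delta)/\eps^4)$ rounds, giving total evolution time $Nt=\mathcal{O}(\log(M/\delta)/\eps^3)$. For the classical post-processing, in each round $s$ and for each property $S_m$ one decides whether $\ket{\phi_{i_s,\ell_s}}\nsim_{S_m}\ket{\phi_{i_s,j_s}}$ via \Cref{eq:overlap-check}: for each $Q\in S_m\cup\{\dI\}$ one checks whether $r^{i_s}_{\ell_s}r^{i_s}_{j_s}q\in G_{i_s}$, i.e.\ whether it commutes with the $n$ generators of $G_{i_s}$, in time $\mathcal{O}(n)$ each; summing over rounds, properties and Pauli labels gives $\mathcal{O}\big(Nn^2\sum_{m=1}^M|S_m\cup\{\dI\}|\big)=\tilde{\mathcal{O}}\big(n^2\sum_{m=1}^M|S_m\cup\{\dI\}|/\eps^4\big)$, plus $\mathcal{O}(NM)$ threshold comparisons. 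The input states are stabilizer states and the measurements are in stabilizer bases, each implementable by a Clifford circuit with $\mathcal{O}(n^2/\log n)$ two-qubit gates \cite{aaronson2004improved}, exactly as in the proof of \Cref{thm:upper bound on testing}. The main obstacle is the concentration step: choosing the right (multiplicative/Bernstein) tail inequality rather than Hoeffding, and verifying that $\tau=\tfrac38 t^2\eps^2$ lies between the $O(t^4)$ null-case mean and the $\Omega(t^2\eps^2)$ far-case mean with adequate constant-factor room when $t=\eps/6$.
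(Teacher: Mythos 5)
Your proposal is correct and follows essentially the same route as the paper: keep the data-collection phase of \Cref{thm:upper bound on testing}, reuse its per-round bounds $p_m\le t^4$ under the null and $p_m\ge \tfrac12 t^2\eps^2$ in the far case, threshold the empirical violation frequency at $\tfrac38 t^2\eps^2$, apply a relative-deviation concentration bound with $N=\Theta(\log(M/\delta)/(t^2\eps^2))$, and union bound over the $M$ properties. The only cosmetic difference is the choice of tail inequality (multiplicative Chernoff/Bernstein in your write-up versus the paper's Chernoff--Hoeffding bound in KL form with explicit KL lower bounds), and your observation that an additive Hoeffding bound would not suffice is exactly the point the paper's KL-form bound addresses.
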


As in \Cref{thm:upper bound on testing}, we in fact have stronger guarantees than just efficient implementability for the states and measurements. Also here, the input states are $n$-qubit stabilizer states and the output measurements are \edit{w.r.t.~}{with respect to }some $n$-qubit stabilizer state bases.

\begin{proof}[Proof Sketch]
    From the proof of correctness in \Cref{sec:randomized-upper-bounds} we have the following inequalities, conditioned on either the null or alternate hypothesis being true:
    \begin{align*}
    	\mathbb{P}_{H_1}(\cE)=\mathbb{P}_{H_1}(\ket{\phi_{i,\ell}} \sim_S \ket{\phi_{i,j}}) &\le 1-\frac{t^2\eps^2}{2}\, ,
       \\ \mathbb{P}_{H_0}(\bar{\cE})=\mathbb{P}_{H_0}(\ket{\phi_{i,\ell}} \nsim_S \ket{\phi_{i,j}} ) &\le t^4 \le \frac{t^2\eps^2}{4}\, ,
    \end{align*}
    for $t=\frac{\eps}{6}$ and $|S\cup\{\dI\}|\le\frac{(d+1)^{}\eps^{4}}{144}$. Now we have a linear behavior of the KL divergence: 
    \begin{align*}
        \KL\left(\frac{3}{8}t^2\eps^2\Big\| \frac{1}{2}t^2\eps^2\right),\;\KL\left(\frac{3}{8}t^2\eps^2\Big\| \frac{1}{4}t^2\eps^2\right)\ge \frac{1}{100}\cdot t^2\eps^2.
    \end{align*}
    Here, we have used that 
    \begin{align*}
        \KL(p\|\alpha p) &\geq (-\log{\alpha}+\alpha - 1) p\, , \\
        \KL(\alpha p\|p) &\geq (\alpha \log{\alpha}+ 1-\alpha) p\, ,
    \end{align*}
    for $\alpha \in (0,1)$ and $p \in [0,1]$, which can be inferred easily from $\log(1+x) \geq x/(1+x)$ for $x > - 1$. Under the null hypothesis, since $\frac{3}{8}t^2\eps^2> \frac{1}{4}t^2\eps^2\ge  \prs{H_0}{	\bar{\cE}}$, we can apply the Chernoff-Hoeffding inequality (see \cite{hoeffding_probability_1963} or \cite[Theorem 2.1]{mulzer2018five}):
    \begin{align*}
    	\prs{H_0}{	\frac{1}{N}\sum_{x=1}^N \mathbf{1}(\{\bar{\cE}_x\})> \frac{3}{8}t^2\eps^2}&\le \exp\left(-N\KL\left(\frac{3}{8}t^2\eps^2\Big\| \prs{H_0}{	\bar{\cE}}\right)\right)
    	\\&\le \exp\left(-N\KL\left(\frac{3}{8}t^2\eps^2\Big\| \frac{1}{4}t^2\eps^2\right)\right)
    	\\&\le \exp\left(-N\cdot \tfrac{1}{100}\cdot t^2\eps^2\right)
        \le \delta
    \end{align*}
    for $N=\left\lceil\frac{100\log(1/\delta)}{t^2\eps^2}\right\rceil$. Here, we have used in the second inequality that $\alpha \mapsto \KL(p\|\alpha p)$ is decreasing on $(0,1)$ for any $p \in [0,1]$, which can be verified by differentiating in $\alpha$. 
    
    The alternate hypothesis case is similar. 
    The bounds on the classical processing time and on the quantum circuit sizes required for state preparation and measurements follow as in the proof of \Cref{thm:upper bound on testing}.
\end{proof}

\begin{remark}\label{remark:application-sparse}
    We highlight two applications of \Cref{thm:upper bound on multiple property testing}.
    The first is Hamiltonian sparsity testing, that is, the task of testing whether an unknown Hamiltonian $H$ has an at most $k$-sparse Pauli basis expansion or is $\varepsilon$-far \edit{w.r.t.~}{with respect to }$\frac{1}{\sqrt{d}}\|\cdot\|_2$ from all such Hamiltonians, where $k=\mathcal{O}(1)$.
    This can be embedded into the scenario of \Cref{thm:upper bound on multiple property testing} by testing $M=\binom{4^n-1}{k}=\mathcal{O}(4^{nk})$ many properties of size $k$ simultaneously, each corresponding to a possible set of at most $k$ Pauli terms that appear with non-zero coefficients. As our bounds in \Cref{thm:upper bound on multiple property testing} scale logarithmically with the number of properties, \Cref{alg:multiple property testing Hamiltonian} solves this sparsity testing problem with an efficient number of queries and an efficient total total evolution time.
    If $\frac{1}{\sqrt{2^n}}\norm{H}_2=1$, meaning $\sum_{P\in\mathbb{P}_n} \lvert\alpha_P\rvert^2 = 1$, then \Cref{alg:multiple property testing Hamiltonian} can even be used to estimate the support of $H$. Namely, in this case, each of the $M$ properties that gets assigned the corresponding null hypothesis in the second for-loop corresponds to a Pauli support such that the corresponding coefficients of $H$ have a squared $\ell_2$-norm of $\geq 1-\varepsilon^2$.

    Second, we can use \Cref{thm:upper bound on multiple property testing} to test whether an unknown $H$ is a low-intersection Hamiltonian. This property served as an important assumption in recent work on Hamiltonian learning, for instance in \cite{haah2022optimal, huang2023heisenberg}.
    We call a Hamiltonian $H=\sum_{P\in\mathbb{P}_n} \alpha_P P$ a $(k,\mathfrak{d})$-intersection Hamiltonian if $|P|\leq k$ for all $P\in\mathbb{P}_n$ with $\alpha_P\neq 0$ and if $\lvert\{Q\in\mathbb{P}_n~|~\alpha_Q\neq 0~\wedge~ \mathrm{supp}(P)\cap\mathrm{supp}(Q)\neq\emptyset\}\rvert\leq \mathfrak{d}$ for all $P\in \mathbb{P}_n$ with $\alpha_P\neq 0$. Here, $\mathrm{supp}(P)=\{1\leq i\leq n~|~ P_i\neq \dI\}$ denotes the support of an $n$-qubit Pauli string.
    We speak of a low-intersection Hamiltonian if both $k,\mathfrak{d}=\mathcal{O}(1)$.
    Given fixed $k$ and $\mathfrak{d}$, there are at most $2^{\mathcal{O}(n^{\mathrm{poly}(k,\mathfrak{d})})}$ many different dual interaction graphs (see \cite {haah2022optimal} for a definition) that a $(k,\mathfrak{d})$-intersection Hamiltonian can have. (A loose bound that does not take $\mathfrak{d}$ into account can be seen as follows: By the locality constraint, admissible dual interaction graphs have at most $\sum_{\ell=0}^k \binom{n}{\ell} 3^\ell\leq \mathcal{O}(n^{k+1})$ vertices and thus at most $\mathcal{O}(n^{2(k+1)})$ edges. As each edge can either be present or not, there are at most $2^{\mathcal{O}(n^{2(k+1)})}$ many admissible dual interaction graphs.) Therefore, by simultaneously testing all the size-$\mathcal{O}(n^{k+1})$ properties corresponding to different valid dual interaction graphs, \Cref{alg:multiple property testing Hamiltonian} can test whether an unknown $H$ is a $(k,\mathfrak{d})$-intersection Hamiltonian; and thanks to \Cref{thm:upper bound on multiple property testing}, the query complexity and total evolution time for doing so can be bounded in terms of $\log 2^{\mathcal{O}(n^{\mathrm{poly}(k,\mathfrak{d})})}= \mathcal{O}(n^{\mathrm{poly}(k,\mathfrak{d})})$.
\end{remark}

\subsection{Testing arbitrarily large properties}\label{assumption on S}

\SetKwComment{Comment}{/* }{ */}
\SetKwInOut{Input}{Input}
        \SetKwInOut{Output}{Output}
\begin{algorithm}[t!]
\caption{Testing Arbitrary Properties for Hamiltonian Evolutions}\label{alg:property testing Hamiltonian big S}
\LinesNumbered
\Input{A Hamiltonian $H$, a property $\Pi_S$, and an accuracy parameter $\varepsilon\in (0,1)$}
\Output{The null hypothesis $H_0$ or the alternate hypothesis $H_1$}
$n_{\mathrm{aux}}\gets \left\lceil\log_2\left(\frac{144\cdot  |S \cup \{\dI\}|}{2^n\eps^4}\right)\right\rceil $ \;
$S_{n_{\mathrm{aux}}}\gets (S\cup\{\dI\})\otimes \dI_{2^{n_{\mathrm{aux}}}}$\;
$t \gets \frac{\eps}{6}$\;
$N \gets  \left\lceil\frac{2\log(3)}{t^2\eps^2}\right\rceil$\;
\For{$s\gets 1$ \KwTo $N$}{
Sample $i_s\sim\unif[2^{n+{n_{\mathrm{aux}}}}]$, $j_s\sim \unif[2^{n+{n_{\mathrm{aux}}}}+1]$\;
 Input state : $ \rho_s= \proj{\phi_{i_s,j_s}}$\;
 Evolve the first $n$ qubits under $H$ for time $t$\;
 Measurement : $\cM_s = \{\proj{\phi_{i_s,\ell}}\}_{\ell}$ and observe $\ell_s\gets \cM_s\left[(\cU_t\otimes \id_{n_{\mathrm{aux}}})(\rho_s)\right]$\;
  \If{$\ket{\phi_{i_s,j_s}} \nsim_{S_{n_{\mathrm{aux}}}} \ket{\phi_{i_s,\ell_s}}$}
    {
        \Return $H_1$ and \textbf{stop}
    }
}
\Return $H_0$
\end{algorithm}

In \Cref{thm:upper bound on testing}, we can test a property $\Pi_S$ if the size of the set $S$ satisfies:
\[|S\cup\{\dI\}|\le\frac{(d+1)^{}\eps^{4}}{144}.\]
To lift this assumption, we propose to use  auxiliary systems. The idea is that if we can add an $n_{\mathrm{aux}}$-qubit ancilla, then we can query the unitary
\begin{align*}
	\mathrm{e}^{\mathrm{i}tH}\otimes \dI_{2^{n_{\mathrm{aux}}}}= 	\mathrm{e}^{\mathrm{i}tH\otimes \dI_{2^{n_{\mathrm{aux}}}} }. 
\end{align*}
So we can think of testing $H\otimes \dI_{2^{n_{\mathrm{aux}}}}$ instead of $H$, and the relevant property now is $S_{n_{\mathrm{aux}}}=(S\cup\{\dI\})\otimes \dI_{2^{n_{\mathrm{aux}}}}$. This is not exactly true, because we cannot enforce that $H\otimes \dI_{2^{n_{\mathrm{aux}}}} $ is $\eps$-far from $\Pi_{S_{n_{\mathrm{aux}}}} $ given that $H $ is $\eps$-far from $\Pi_{S} $. 
However, in terms of distance, we only need  that $\sum_{P\notin S} \alpha_P^2\ge \eps^2$ for our proof. To see this, let us bound $\pr{\cE}$ under both the null and alternate hypothesis, where we write again $\cE=\{ \ket{\phi_{i,\ell}} \sim_S \ket{\phi_{i,j}}\}$ and $\bar{\cE}$ for its complement.

First, under the null hypothesis, the main ingredient in the proof in \Cref{sec:randomized-upper-bounds} is the observation that if $H\in \Pi_S$ and $\ket{\phi_{i,\ell}}\nsim_{S} \ket{\phi_{i,j}}$, then for all $j\neq l$ we have $\bra{\phi_{i,\ell}}H^m \ket{\phi_{i,j}}=0$ for $m=0,1$. This is not affected if we add ancilla as
$\ket{\phi_{i,\ell}}\nsim_{S_{n_{\mathrm{aux}}}} \ket{\phi_{i,j}} \Rightarrow \forall P\in S_{n_{\mathrm{aux}}} :|\bra{\phi_{i,\ell}}P \ket{\phi_{i,j}}|\neq 1  \Rightarrow \forall P\in S\cup\{\dI\} :|\bra{\phi_{i,\ell}}P\otimes \dI_{2^{n_{\mathrm{aux}}}} \ket{\phi_{i,j}}|\neq 1$, thus $\bra{\phi_{i,\ell}}(H\otimes \dI_{2^{n_{\mathrm{aux}}}})^m \ket{\phi_{i,j}} = \bra{\phi_{i,\ell}}H^m\otimes \dI_{2^{n_{\mathrm{aux}}}} \ket{\phi_{i,j}}=0$ for $m=0,1$. Hence, under the condition that the null hypothesis is true, 
\begin{align*}
\prs{H_0}{\bar{\cE}} &\le t^4\, ,
\end{align*}
by the same reasoning as before.
Next, under the alternate hypothesis, we have (by the previous calculations): 
\begin{align*}
		\prs{H_1}{\cE} &\le  \sum_{P\in  S_{n_{\mathrm{aux}}} \cup \{\dI\}} \frac{2^{n+{n_{\mathrm{aux}}}}}{2^{n+{n_{\mathrm{aux}}}}(2^{n+{n_{\mathrm{aux}}}}+1)}  \\ & \qquad+\sum_{P\in  S_{n_{\mathrm{aux}}} \cup \{\dI\}} \frac{1}{2^{n+{n_{\mathrm{aux}}}}(2^{n+{n_{\mathrm{aux}}}}+1)} \left|\sum_{m\ge 0} \frac{(\mathrm{i}t)^m}{m!} \tr(P(H\otimes \dI_{2^{n_{\mathrm{aux}}}})^m) \right|^2
		\\&=   \frac{|S \cup \{\dI\}|}{(2^{n+{n_{\mathrm{aux}}}}+1)}+  \sum_{P\in  S\cup \{\dI\}} \frac{1}{2^{n+{n_{\mathrm{aux}}}}(2^{n+{n_{\mathrm{aux}}}}+1)} \left|\sum_{m\ge 0} \frac{(\mathrm{i}t)^m}{m!} \tr(PH^m)2^{n_{\mathrm{aux}}} \right|^2
		\\&\le    \frac{|S \cup \{\dI\}|}{(2^{n+{n_{\mathrm{aux}}}}+1)}+  \sum_{P\in  S \cup \{\dI\}} \frac{1}{2^{2n}} \left|\sum_{m\ge 0} \frac{(\mathrm{i}t)^m}{m!} \tr(PH^m) \right|^2\, ,
\end{align*}
since $P\in S\Leftrightarrow P\otimes \dI_{2^{n_{\mathrm{aux}}}}\in S_{n_{\mathrm{aux}}}$. Then we can continue as in the proof of \Cref{thm:upper bound on testing} to have 
\begin{align*}
	\prs{H_1}{\cE} &\le    \frac{|S\cup \{\dI\}|}{(2^{n+{n_{\mathrm{aux}}}}+1)}+  \sum_{P\in  S \cup \{\dI\}} \frac{1}{2^{2n}} \left|\sum_{m\ge 0} \frac{(\mathrm{i}t)^m}{m!} \tr(PH^m) \right|^2
	\\&\le  1-t^2\eps^2+ \frac{|S\cup \{\dI\}|}{(2^{n+{n_{\mathrm{aux}}}}+1)}+ 7t^4\, .
\end{align*}
Hence, if we choose $t= \frac{\eps}{6}$, then we have $\prs{H_0}{\bar{\cE}} \le \frac{t^2\eps^2}{4} $. Thus, we have to fulfill the following in order to ensure $\prs{H_1}{\cE} \le1-\frac{t^2\eps^2}{2}$:
\begin{align*}
	\frac{|S\cup \{\dI\}|}{2^{n+{n_{\mathrm{aux}}}}}\le \frac{t^2\eps^2}{4}= \frac{\eps^4}{144} \quad \text{and} \quad 7t^4\le  \frac{t^2\eps^2}{4} \, .
\end{align*}
The second inequality holds with our choice of $t$ as $|S \cup \{\dI\}|\ge 1$, and the first inequality holds if 
\begin{align*}
	n_{\mathrm{aux}}\ge   \log_2\left(\frac{144\cdot  |S \cup \{\dI\}|}{2^n\eps^4}\right) \, .
\end{align*}
As we have always $|S \cup \{\dI\}|\le 2^{2n}$, a number of auxiliary qubits 
\begin{align*}
    n_{\mathrm{aux}} \ge  \log_2\left(\frac{144\cdot  2^{n}}{\eps^4}\right)= n + \log_2\left(\frac{144}{\eps^4}\right)
\end{align*}
is enough to ensure $\prs{H_1}{\cE} \le1-\frac{t^2\eps^2}{2}$.
With this amount of ancilla qubits, we can test \emph{any} set $S$, regardless of its size. 

Since we have the inequalities 
\begin{align*}
	\mathbb{P}_{H_1}(\cE) \le 1-\frac{t^2\eps^2}{2}\quad \text{and} \quad 
 \mathbb{P}_{H_0}(\bar{\cE}) \le \frac{t^2\eps^2}{4} \quad \text{for} \quad t=\frac{\eps}{6}\, ,
\end{align*}
the query complexity and the total evolution time are the same as for the ancilla-free algorithm. 
The classical processing time changes compared to \Cref{thm:upper bound on testing} in that we now need to check properties of $(n+{n_{\mathrm{aux}}})$-qubit Paulis. 
With the same arguments as in the proof of \Cref{thm:upper bound on testing}, we see that the relation $\sim_{S_{n_{\mathrm{aux}}}}$ can be checked with $\mathcal{O}((n+{n_{\mathrm{aux}}})^2 \lvert S_{n_{\mathrm{aux}}}\rvert) = \Tilde{\cO}(n^2 |S\cup\{\dI\}|)$ classical processing time. Here, we used $\Tilde{\cO}$ as a notational simplification that hides a factor $\mathrm{polylog}(1/\varepsilon)$.
Finally, the input states are now $(n+{n_{\mathrm{aux}}})$-qubit stabilizer states and the measurements are in some bases of $(n+{n_{\mathrm{aux}}})$-qubit stabilizer states, so the quantum circuit sizes can be bounded as before upon replacing $n$ by $n+{n_{\mathrm{aux}}}$, leading to quantum circuit complexities of $\cO\left(\frac{(n+{n_{\mathrm{aux}}})^2}{\log(n+{n_{\mathrm{aux}}})}\right) = \Tilde{\cO}\left(\frac{n^2}{\log n}\right)$, where the $\Tilde{\cO}$ again hides factors that are polylogarithmic in $\frac{1}{\varepsilon}$.

Thus, we have shown the following theorem in this section:

\begin{theorem}\label{thm:upper bound on testing big S} 
    Let $S\subset \mathds{P}_n$. Suppose that the Hamiltonian $H$  satisfies $\tr(H)=0 $ and $\|H\|_\infty \le 1$. Using $n_{\mathrm{aux}}=  \left\lceil\log_2\left(\frac{144\cdot  |S \cup \{\dI\}|}{2^n\eps^4}\right)\right\rceil$ ancilla qubits, 
    \Cref{alg:property testing Hamiltonian big S} tests whether $H\in \Pi_S$  or $\frac{1}{\sqrt{2^n}}\|H-K\|_2>\eps$ for all  Hamiltonians $K\in \Pi_S$ with probability at least $2/3$ using a total evolution time $\mathcal{O}\left(\frac{1}{\eps^3}\right)$, a total number of independent experiments $N=\mathcal{O}\left(\frac{1}{\eps^4}\right)$, and a total classical processing time $\tilde{\cO}\left(\frac{n^2 |S\cup\{\dI\}|}{\eps^4}\right)$. Each experiment uses efficiently implementable states and measurements on $\cO (n + \log(1/\varepsilon))$ many qubits.
\end{theorem}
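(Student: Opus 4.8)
The plan is to reduce the general case to the already-analyzed bounded-size case of \Cref{thm:upper bound on testing} by moving to an enlarged system. The key observation is that querying the time evolution of $H$ while carrying along an $n_{\mathrm{aux}}$-qubit ancilla is the same as querying the time evolution of the $(n+n_{\mathrm{aux}})$-qubit Hamiltonian $H' := H\otimes\dI_{2^{n_{\mathrm{aux}}}}$, since $\mathrm{e}^{-\mathrm{i}tH}\otimes\dI_{2^{n_{\mathrm{aux}}}} = \mathrm{e}^{-\mathrm{i}tH'}$. I would therefore simply run \Cref{alg:property testing Hamiltonian} on the $(n+n_{\mathrm{aux}})$-qubit system, using the stabilizer MUBs for $n+n_{\mathrm{aux}}$ qubits and the lifted property $S_{n_{\mathrm{aux}}} := (S\cup\{\dI\})\otimes\dI_{2^{n_{\mathrm{aux}}}}$; this is exactly \Cref{alg:property testing Hamiltonian big S}. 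Note that $H'$ still satisfies $\tr(H')=0$ and $\|H'\|_\infty\le 1$, and that its Pauli coefficients are $\alpha_{P'\otimes\dI}=\alpha_{P'}$ for $P'\in\dP$ and zero on all Paulis acting nontrivially on the ancilla.

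The subtlety, and the reason this is not a black-box invocation of \Cref{thm:upper bound on testing}, is that $H$ being $\eps$-far from $\Pi_S$ does not imply that $H'$ is $\eps$-far from $\Pi_{S_{n_{\mathrm{aux}}}}$: adding identity-on-ancilla terms to $H'$ could move it closer. So I would re-enter the proof of \Cref{thm:upper bound on testing} and observe that the only distance-dependent input it uses is the inequality $\sum_{P\notin S}\alpha_P^2\ge\eps^2$, which comes from $\eps$-farness via the choice $K=\sum_{P\in S}\tfrac{1}{d}\tr(PH)P$. This inequality \emph{is} ancilla-stable: $\sum_{P'\otimes\dI\notin S_{n_{\mathrm{aux}}}}\alpha_{P'\otimes\dI}^2 = \sum_{P'\notin S}\alpha_{P'}^2\ge\eps^2$. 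With that in hand I would re-run the two error estimates on the enlarged system. Under the null hypothesis, $\ket{\phi_{i,\ell}}\nsim_{S_{n_{\mathrm{aux}}}}\ket{\phi_{i,j}}$ still forces $\bra{\phi_{i,\ell}}(H')^m\ket{\phi_{i,j}}=0$ for $m\in\{0,1\}$ (because $\dI\in S_{n_{\mathrm{aux}}}$ and $H'\in\Pi_{S_{n_{\mathrm{aux}}}}$ in that case), so the Taylor-remainder argument gives $\mathbb{P}_{H_0}(\bar{\cE})\le t^4$ unchanged. Under the alternate hypothesis, the $2$-design computation goes through with $d$ replaced by $2^{n+n_{\mathrm{aux}}}$, and using $\tr\big(P(H')^m\big)=2^{n_{\mathrm{aux}}}\tr(P'H^m)$ for $P=P'\otimes\dI$, I would obtain $\mathbb{P}_{H_1}(\cE)\le 1-t^2\eps^2+\tfrac{|S\cup\{\dI\}|}{2^{n+n_{\mathrm{aux}}}+1}+7t^4$.

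Next I would fix the parameters. Taking $t=\eps/6$ makes $7t^4\le t^2\eps^2/4$, and demanding $\tfrac{|S\cup\{\dI\}|}{2^{n+n_{\mathrm{aux}}}}\le t^2\eps^2/4=\eps^4/144$ forces exactly $n_{\mathrm{aux}}\ge\log_2\!\big(\tfrac{144\,|S\cup\{\dI\}|}{2^n\eps^4}\big)$, which is the stated choice. This yields $\mathbb{P}_{H_1}(\cE)\le 1-t^2\eps^2/2$ and $\mathbb{P}_{H_0}(\bar{\cE})\le t^4\le t^2\eps^2/4$, so taking $N=\lceil 2\log(3)/(t^2\eps^2)\rceil=\cO(1/\eps^4)$ independent rounds drives the total error below $1/3$, just as in \Cref{thm:upper bound on testing}; the per-round time is still $t=\eps/6$, so the total evolution time remains $Nt=\cO(1/\eps^3)$. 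Since $|S\cup\{\dI\}|\le 4^n$ always, $n_{\mathrm{aux}}=\cO(n+\log(1/\eps))$, so all states, measurements, and Pauli checks live on $\cO(n+\log(1/\eps))$ qubits. The classical per-round cost is evaluating the overlap criterion of \Cref{eq:overlap-check} for each of the $|S_{n_{\mathrm{aux}}}|=|S\cup\{\dI\}|$ lifted Paulis on $n+n_{\mathrm{aux}}$ qubits, at cost $\cO((n+n_{\mathrm{aux}})^2)$ each, for a total of $\tilde{\cO}(n^2|S\cup\{\dI\}|/\eps^4)$; and the $(n+n_{\mathrm{aux}})$-qubit stabilizer-state inputs and stabilizer-basis measurements are implementable by Clifford circuits with $\cO((n+n_{\mathrm{aux}})^2/\log(n+n_{\mathrm{aux}}))=\tilde{\cO}(n^2/\log n)$ two-qubit gates \cite{aaronson2004improved}.

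The main obstacle is conceptual rather than computational: one cannot treat \Cref{thm:upper bound on testing} as a black box, because $\eps$-farness does not lift through the tensoring with the ancilla. The crux is to isolate, inside that theorem's proof, the single distance-dependent quantity $\sum_{P\notin S}\alpha_P^2\ge\eps^2$, verify it survives the lift, and then carefully bookkeep the new normalizing dimension $2^{n+n_{\mathrm{aux}}}$ together with the $2^{n_{\mathrm{aux}}}$ factors that enter the traces $\tr(P(H')^m)$ — after which the parameter choice for $n_{\mathrm{aux}}$ is precisely what is needed to absorb the extra term $|S\cup\{\dI\}|/(2^{n+n_{\mathrm{aux}}}+1)$ into the error budget.
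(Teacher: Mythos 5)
Your proposal is correct and follows essentially the same route as the paper's own proof in Appendix B.2: run the original tester on $H' = H\otimes\dI_{2^{n_{\mathrm{aux}}}}$ with the lifted property $S_{n_{\mathrm{aux}}}$, isolate the single distance-dependent ingredient $\sum_{P\notin S}\alpha_P^2\ge\eps^2$ (which survives the lift even though $\eps$-farness itself does not), track the $2^{n_{\mathrm{aux}}}$ factors in the $2$-design computation, and choose $n_{\mathrm{aux}}$ precisely to absorb the $|S\cup\{\dI\}|/(2^{n+n_{\mathrm{aux}}}+1)$ term into the error budget. The complexity bookkeeping (same $N$ and $Nt$, classical time $\tilde{\cO}(n^2|S\cup\{\dI\}|/\eps^4)$, Clifford circuits on $\cO(n+\log(1/\eps))$ qubits) also matches the paper.
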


\subsection{Tolerant Hamiltonian property testing}\label{sec:tolerant-testing}

So far, we have focused on the standard setting of property testing. Now, we turn our attention to tolerant testing \cite{parnas2006tolerant}. In our Hamiltonian case, we formulate a tolerant property testing problem as follows:

\begin{problem}[Tolerant Hamiltonian property testing]\label{def:tolerant-hamiltonian-testing}
    Given a property $\Pi_S$ associated to a subset $S\subseteq \mathds{P}_n$, a norm $\nnorm{\cdot}$, and two accuracy parameters $0\leq \varepsilon_1<\varepsilon_2<1$, we denote by $\mathcal{T}_{\nnorm{\cdot}}^{\Pi_S}(\varepsilon_1,\varepsilon_2)$ the following Hamiltonian property testing problem:
    Given access to the time evolution according to an unknown Hamiltonian $H$, decide, with success probability $\geq 2/3$, whether
    \begin{enumerate}
        \item[(i)] $H$ is $\varepsilon_1$-close to having property $\Pi_S$, that is, there exists $\tilde{H}\in \Pi_S$ such that $\nnorm{ H - \Tilde{H}}\leq\varepsilon_1$, or 
        \item[(ii)] $H$ is $\varepsilon_2$-far from having property $\Pi_S$, that is, $\forall \Tilde{H}\in \Pi_S$: $\nnorm{ H - \Tilde{H}}\geq \varepsilon_2$\, .
    \end{enumerate}
     If $H$ satisfies neither (i) nor (ii), then any output of the tester is considered valid. 
\end{problem}

Clearly, the tolerant testing problem $\mathcal{T}_{\nnorm{\cdot}}^{\Pi_S}(\varepsilon_1,\varepsilon_2)$ is at least as hard as $\mathcal{T}_{\nnorm{\cdot}}^{\Pi_S}(\varepsilon_2)$, so the lower bounds of \Cref{sec:hamiltonian-testing-lower-bounds} for locality testing \edit{w.r.t.~}{with respect to }unnormalized Schatten $p$-norms carry over straightforwardly. More interestingly, in this section, we show, via a variant of the analysis from \Cref{sec:randomized-upper-bounds}, that an analogue of \Cref{thm:upper bound on testing} holds for \Cref{alg:tolerant property testing Hamiltonian}.

\SetKwComment{Comment}{/* }{ */}
\SetKwInOut{Input}{Input}
        \SetKwInOut{Output}{Output}
\begin{algorithm}[t!]
\caption{Tolerantly Testing Properties for Hamiltonian Evolutions}\label{alg:tolerant property testing Hamiltonian}
\LinesNumbered
\Input{A Hamiltonian $H$, a property $\Pi_S$, and accuracy parameters $0\leq \varepsilon_1<\varepsilon_2<1$}
\Output{The null hypothesis $H_0$ or the alternate hypothesis $H_1$}
$t \gets \sqrt{\frac{1}{20}(\eps_2^2-\eps_1^2)}$\;
$N \gets \left\lceil\frac{30\log(3) (\eps_2^2+\eps_1^2)}{t^2(\eps_2^2-\eps_1^2)^2}\right\rceil  $\;
\For{$s\gets 1$ \KwTo $N$}{
Sample $i_s\sim\unif[d]$, $j_s\sim \unif[d+1]$\;
 Input state : $ \rho_s= \proj{\phi_{i_s,j_s}}$\;
 Evolve under $H$ for time $t$\;
 Measurement : $\cM_s = \{\proj{\phi_{i_s,\ell}}\}_{\ell}$ and observe $\ell_s\gets \cM_s(\cU_t(\rho_s))$\;
}
 \eIf{$\frac{1}{N}\sum_{s=1}^N \mathbf{1}(\{\ket{\phi_{i_s,\ell_s}} \nsim_{S} \ket{\phi_{i_s,j_s}}\})\le  \frac{1}{5}t^2(2\eps_2^2+3\eps_1^2)$}
        {
            \Return $H_0$
        }
     {
            \Return $H_1$
        }
\end{algorithm}

\begin{theorem}\label{thm:upper bound on tolerant testing} 
    Let $0\leq \varepsilon_1<\varepsilon_2<1$, and let $S\subset \mathds{P}_n$ such that  $|S\cup\{\dI\}|\le\frac{(2^n+1)^{} (\eps_2^2-\eps_1^2)^{2}}{400}$.
    Suppose that the Hamiltonian $H$ satisfies $\tr(H)=0 $ and $\|H\|_\infty \le 1$. 
    \Cref{alg:tolerant property testing Hamiltonian} tests whether there exists $\tilde{H}\in\Pi_S$ with $\frac{1}{\sqrt{2^n}}\|H-\Tilde{H}\|_2\leq\varepsilon_1$ or $\frac{1}{\sqrt{2^n}}\|H-K\|_2>\eps_2$ for all Hamiltonians $K\in \Pi_S$ with probability at least $2/3$ using a total evolution time $\mathcal{O}\left(\frac{1}{(\eps_2-\eps_1)^{2.5} \eps_2^{0.5}}\right)$, a total number of independent experiments $N=  \mathcal{O}\left(\frac{1}{(\eps_2-\eps_1)^3\eps_2}\right)$, and a total classical processing time $\cO\left(\frac{n^2 |S\cup\{\dI\}|}{(\eps_2-\eps_1)^3 \eps_2}\right)$.
    Each experiment uses efficiently implementable states and measurements.
\end{theorem}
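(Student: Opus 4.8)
The plan is to adapt the proof of \Cref{thm:upper bound on testing} to the tolerant setting, replacing the exact vanishing of certain overlaps under the null hypothesis with a quantitative estimate, and then replacing the ``single violation'' decision rule with an empirical-count threshold rule analyzed via a concentration inequality. First I would set up notation as before: for a single round, let $\cE=\{\ket{\phi_{i,\ell}}\sim_S\ket{\phi_{i,j}}\}$ and $\bar{\cE}$ its complement, with the indices drawn as in \Cref{alg:tolerant property testing Hamiltonian}. The key quantity is $\prs{H}{\bar{\cE}}=\exs{i,j,\ell}{\pr{\ket{\phi_{i,\ell}}\nsim_S\ket{\phi_{i,j}}}}$, and the goal is to show it is noticeably smaller in case (i) than in case (ii), with a gap of order $t^2(\eps_2^2-\eps_1^2)$, so that $N=\mathcal{O}\!\left(\tfrac{\eps_2^2+\eps_1^2}{t^2(\eps_2^2-\eps_1^2)^2}\right)$ rounds suffice for a Chernoff--Hoeffding argument (using the KL form of the bound, exactly as in the proof sketch of \Cref{thm:upper bound on multiple property testing}).

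For the \emph{alternate hypothesis} (case (ii)), the bound is essentially identical to the one already derived for \Cref{eq:err_under_H_1}: if $\tfrac{1}{\sqrt{d}}\|H-K\|_2>\eps_2$ for all $K\in\Pi_S$ then $\sum_{P\notin S}|\alpha_P|^2\ge\eps_2^2$, and the $2$-design computation gives
\[
\prs{H_1}{\cE}\le 1-t^2\eps_2^2+\frac{|S\cup\{\dI\}|}{d+1}+7t^4,
\]
hence $\prs{H_1}{\bar{\cE}}\ge t^2\eps_2^2-\tfrac{|S\cup\{\dI\}|}{d+1}-7t^4$. For the \emph{null hypothesis} in the tolerant version (case (i)), the new ingredient is needed: we only know there is some $\tilde H\in\Pi_S$ with $\sum_{P\notin S}|\alpha_P|^2\le\eps_1^2$ (taking $\tilde H$ the Pauli-truncation of $H$ onto $S$, which is optimal). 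Here I would \emph{not} use the identity $\bra{\phi_{i,\ell}}H^m\ket{\phi_{i,j}}=0$; instead, I would run the same expansion as in the alternate-hypothesis computation but now \emph{upper} bound $\prs{H_1}{\bar{\cE}}$. Concretely, writing $\prs{}{\bar{\cE}}=1-\prs{}{\cE}$ and using the $2$-design identity in the form that produced \Cref{eq:2-design-used}, one gets $\prs{H_1}{\bar{\cE}}\le t^2\sum_{P\notin S}|\alpha_P|^2 + \mathcal{O}(t^4) \le t^2\eps_1^2+\mathcal{O}(t^4)$, after subtracting off the $\dI$-term and the first-order terms in $S$ and bounding the higher-order Taylor remainders exactly as in the proof of \Cref{thm:upper bound on testing} (the terms $\sum_{m\ge 2}t^m/m!\le t^2$, $|\tr(H^m)|\le d$, etc.). Thus in case (i), $\prs{H_1}{\bar{\cE}}\le t^2\eps_1^2+7t^4$, while in case (ii), $\prs{H_1}{\bar{\cE}}\ge t^2\eps_2^2-7t^4-\tfrac{|S\cup\{\dI\}|}{d+1}$. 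Imposing the stated size bound on $S$ and the choice $t^2=\tfrac{1}{20}(\eps_2^2-\eps_1^2)$ makes both the $7t^4$ and $\tfrac{|S\cup\{\dI\}|}{d+1}$ corrections at most $\tfrac{1}{10}t^2(\eps_2^2-\eps_1^2)$-size, so that the two cases are separated by the threshold $\theta:=\tfrac15 t^2(2\eps_2^2+3\eps_1^2)$, which lies strictly between $t^2\eps_1^2+(\text{corrections})$ and $t^2\eps_2^2-(\text{corrections})$ with a margin $\Omega(t^2(\eps_2^2-\eps_1^2))$ on each side.

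Finally I would conclude via the Chernoff--Hoeffding bound in relative-entropy form applied to the i.i.d. indicators $\mathbf 1(\bar{\cE}_s)$: in case (i), $\prs{}{\tfrac1N\sum_s\mathbf 1(\bar{\cE}_s)>\theta}\le\exp(-N\,\KL(\theta\|p_1))$ with $p_1=\prs{H_1}{\bar{\cE}}\le t^2\eps_1^2+\text{corr}$, and symmetrically in case (ii) with $p_2=\prs{H_1}{\bar{\cE}}\ge t^2\eps_2^2-\text{corr}$; using $\KL(a\|b)\ge (a-b)^2/(2\max(a,b))$ (or the one-sided bounds $\KL(p\|\alpha p)\ge(\alpha-1-\log\alpha)p$ used in the proof of \Cref{thm:upper bound on multiple property testing}) one gets $\KL(\theta\|p_i)\ge\Omega\!\left(\tfrac{t^2(\eps_2^2-\eps_1^2)^2}{\eps_2^2+\eps_1^2}\right)$, so $N=\mathcal{O}\!\left(\tfrac{(\eps_2^2+\eps_1^2)\log 3}{t^2(\eps_2^2-\eps_1^2)^2}\right)$ forces both error probabilities below $1/3$. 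Substituting $t=\sqrt{\tfrac{1}{20}(\eps_2^2-\eps_1^2)}$ gives $N=\mathcal{O}\!\left(\tfrac{1}{(\eps_2^2-\eps_1^2)^2}\right)\cdot\tfrac{\eps_2^2+\eps_1^2}{\eps_2^2-\eps_1^2}=\mathcal{O}\!\left(\tfrac{1}{(\eps_2-\eps_1)^3\eps_2}\right)$ (using $\eps_2^2-\eps_1^2=(\eps_2-\eps_1)(\eps_2+\eps_1)\ge(\eps_2-\eps_1)\eps_2$ and $\eps_2^2+\eps_1^2\le 2\eps_2^2$), total evolution time $Nt=\mathcal{O}\!\left(\tfrac{1}{(\eps_2-\eps_1)^{2.5}\eps_2^{0.5}}\right)$, and classical processing time $\mathcal{O}(Nn^2|S\cup\{\dI\}|)$ with the per-round commutation check of cost $\mathcal{O}(n)$ per Pauli via \Cref{eq:overlap-check}, exactly as in \Cref{thm:upper bound on testing}; the state/measurement implementability claims are inherited verbatim since the algorithm still uses only stabilizer inputs and stabilizer-basis measurements. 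The main obstacle I anticipate is the bookkeeping in the null-hypothesis direction: unlike the non-tolerant case, there is no exact cancellation, so I must carefully verify that the $\dI$-term contributes the ``right'' constant (the $d/(d(d+1))$ and the $-t^2d\sum_P|\alpha_P|^2$ piece) and that what remains after subtracting it and the $S$-supported first-order terms is genuinely controlled by $t^2\sum_{P\notin S}|\alpha_P|^2\le t^2\eps_1^2$ plus harmless $\mathcal{O}(t^4)$ and $\mathcal{O}(|S|/d)$ corrections — this is where the $\tfrac{400}{(\eps_2^2-\eps_1^2)^2}$ constant in the size constraint on $S$ and the precise threshold value $\theta$ must be chosen to make all the slack add up, and it requires redoing the chain of Taylor-remainder estimates from the proof of \Cref{thm:upper bound on testing} while tracking the new target gap.
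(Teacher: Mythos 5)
Your overall architecture coincides with the paper's: same algorithm and threshold $\frac{1}{5}t^2(2\eps_2^2+3\eps_1^2)$, same choice $t^2=\frac{1}{20}(\eps_2^2-\eps_1^2)$, same Chernoff--Hoeffding argument in KL form, same complexity accounting, and an unchanged alternate-hypothesis bound. The genuine gap is in your case-(i) bound. You propose to obtain $\prs{H_0}{\bar{\cE}}\le t^2\sum_{P\notin S}|\alpha_P|^2+\mathcal{O}(t^4)$ by writing $\pr{\bar{\cE}}=1-\pr{\cE}$ and invoking ``the $2$-design identity in the form that produced \Cref{eq:2-design-used}''. But that chain (steps $(a)$--$(b)$ in the proof of \Cref{thm:upper bound on testing}) is an \emph{upper} bound on $\pr{\cE}$: it replaces the sum over outcomes $\ell$ satisfying $|\bra{\phi_{i,\ell}}P\ket{\phi_{i,j}}|=1$ for some $P\in S\cup\{\dI\}$ by a sum over all $P\in S\cup\{\dI\}$, which can only over-count, since two distinct $P,P'\in S\cup\{\dI\}$ send $\ket{\phi_{i,j}}$ to the same basis vector of $\cB_i$ whenever $PP'$ lies (up to phase) in the stabilizer group $G_i$. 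An upper bound on $\pr{\cE}$ yields only a \emph{lower} bound on $\pr{\bar{\cE}}$, which is the wrong direction for the null hypothesis. To run your subtraction you would need a matching lower bound on $\pr{\cE}$, i.e.\ you would have to control the over-counting caused by these Pauli collisions; a naive pair-counting estimate of that defect is of order $|S\cup\{\dI\}|^2t^2/2^n$, which is not automatically small under the stated hypothesis $|S\cup\{\dI\}|\le(2^n+1)(\eps_2^2-\eps_1^2)^2/400$ (it may grow with $2^n$), and you give no argument closing this.

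The fix is essentially the tool you explicitly set aside. Let $\tilde H=\sum_{P\in S\cup\{\dI\}}\alpha_P P$ be the truncation of $H$ onto $S$, so $\frac{1}{\sqrt{2^n}}\|H-\tilde H\|_2\le\eps_1$ in case (i). On the event $\{\ket{\phi_{i,\ell}}\nsim_S\ket{\phi_{i,j}}\}$ one has $\bra{\phi_{i,\ell}}Q\ket{\phi_{i,j}}=0$ for every $Q\in S\cup\{\dI\}$, hence $\bra{\phi_{i,\ell}}\dI\ket{\phi_{i,j}}=\bra{\phi_{i,\ell}}\tilde H\ket{\phi_{i,j}}=0$. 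The paper therefore expands $\mathrm{e}^{\mathrm{i}tH}$ and rewrites the first-order term as $\mathrm{i}t\bra{\phi_{i,\ell}}(H-\tilde H)\ket{\phi_{i,j}}$; summing $t^2|\bra{\phi_{i,\ell}}(H-\tilde H)\ket{\phi_{i,j}}|^2$ over the full orthonormal bases gives $t^2\,\tr[(H-\tilde H)^2]/d\le t^2\eps_1^2$, while the cross term and the second-order remainder are bounded by $3t^4$ and $t^4$ respectively, yielding $\prs{H_0}{\bar{\cE}}\le t^2\eps_1^2+4t^4$ without ever converting the outcome sum into a sum over Paulis in $S$. With that bound in place, the remainder of your proposal (threshold separation, concentration, and the query/time/processing counts) goes through exactly as you describe and matches the paper.
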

\begin{proof}[Proof Sketch]
    First, note that the error probability under the alternate hypothesis (i.e., in case $H$ is $\varepsilon_2$-far from having property $\Pi_S$) can be upper bounded
    with exactly the same reasoning as in the proof of \Cref{thm:upper bound on testing} (Eq.~\ref{eq:err_under_H_1}) to obtain
    \begin{align*}
	\exs{i,j,\ell}{\pr{\ket{\phi_{i,\ell}} \sim_S \ket{\phi_{i,j}}} }&\le 1-t^2\eps_2^2 + \frac{|S\cup\{\dI\}|}{d+1}+ 7t^4 \le 1-t^2\eps_2^2 +  8t^4
\end{align*}
for $|S\cup\{\dI\}|\le (d+1) t^4$. 
So, we only have to adapt the analysis of the error probability under the null hypothesis.

Therefore, suppose that $H$ is $\varepsilon_1$-close to having property $\Pi_S$ and let $\Tilde{H}\in\Pi_S$ be such that $\frac{1}{\sqrt{2^n}}\|H-\Tilde{H}\|_2\leq\varepsilon_1$. Following the reasoning from the proof of \Cref{thm:upper bound on testing}, we have to upper bound the expression 
    \begin{equation*}
        \exs{i,j,\ell}{\pr{\ket{\phi_{i,\ell}} \nsim_S \ket{\phi_{i,j}} }}
        =\frac{1}{d(d+1)} \sum_{i=1}^{d+1}\sum_{j\neq \ell}|\bra{\phi_{i,\ell}} \mathrm{e}^{\mathrm{i}tH} \ket{\phi_{i,j}}|^2 \mathbf{1}\left(\left\{ \ket{\phi_{i,\ell}} \nsim_S \ket{\phi_{i,j}}  \right\}\right)\, .
    \end{equation*}
    Using that under the event $\left\{ \ket{\phi_{i,\ell}} \nsim_S \ket{\phi_{i,j}}  \right\}$, we have both $\bra{\phi_{i,\ell}} \dI \ket{\phi_{i,j}} =0$ and $\bra{\phi_{i,\ell}} \tilde{H} \ket{\phi_{i,j}} =0$, we can expand the exponential series and obtain:
    \begin{align}
        &\exs{i,j,\ell}{\pr{\ket{\phi_{i,\ell}} \nsim_S \ket{\phi_{i,j}} }}\\
        &=\frac{1}{d(d+1)} \sum_{i=1}^{d+1}\sum_{j\neq \ell}|\bra{\phi_{i,\ell}} \mathrm{e}^{\mathrm{i}tH} \ket{\phi_{i,j}}|^2 \mathbf{1}\left(\left\{ \ket{\phi_{i,\ell}} \nsim_S \ket{\phi_{i,j}}  \right\}\right) \notag
        \\
        &=\frac{1}{d(d+1)} \sum_{i=1}^{d+1}\sum_{j\neq \ell}\left|\sum_{m\ge 0} \frac{(\mathrm{i}t)^m}{m!} \bra{\phi_{i,\ell}} H^m \ket{\phi_{i,j}}\right|^2 \mathbf{1}\left(\left\{ \ket{\phi_{i,\ell}} \nsim_S \ket{\phi_{i,j}}  \right\}\right) \notag
        \\&\le   \frac{1}{d(d+1)} \sum_{i=1}^{d+1}\sum_{j, \ell}\left| \mathrm{i} t \bra{\phi_{i,\ell}}(H-\tilde{H})\ket{\phi_{i,j}} +  \sum_{m\ge 2} \frac{(\mathrm{i}t)^m}{m!} \bra{\phi_{i,\ell}} H^m \ket{\phi_{i,j}}\right|^2 \notag
        \\&= \frac{1}{d(d+1)} \sum_{i=1}^{d+1}\sum_{j, \ell} t^2 \left\lvert\bra{\phi_{i,\ell}}(H-\tilde{H})\ket{\phi_{i,j}}\right\rvert^2\label{eq:tolerant-proof-intermediate-1}
        \\&\hphantom{=}~ +2 \frac{1}{d(d+1)} \Re\sum_{i=1}^{d+1}\sum_{j, \ell} \mathrm{i} t\bra{\phi_{i,\ell}}(H-\tilde{H})\ket{\phi_{i,j}} \sum_{m\ge 2} \frac{(-\mathrm{i}t)^m}{m!} \bra{\phi_{i,j}} H^m \ket{\phi_{i,\ell}}\label{eq:tolerant-proof-intermediate-2}
        \\&\hphantom{=}~ +\left| \sum_{m\ge 2} \frac{(\mathrm{i}t)^m}{m!} \bra{\phi_{i,\ell}} H^m \ket{\phi_{i,j}}\right|^2\label{eq:tolerant-proof-intermediate-3} \, .
    \end{align}
    The term in \eqref{eq:tolerant-proof-intermediate-3} has already been upper bounded by $t^4$ in the proof of \Cref{thm:upper bound on testing}. Now, we address the remaining two terms. First, note that
    \begin{align*}
        \eqref{eq:tolerant-proof-intermediate-1}
        &= \frac{1}{d(d+1)} \sum_{i=1}^{d+1}\sum_{j, \ell} t^2 \tr[\ket{\phi_{i,\ell}}\bra{\phi_{i,\ell}}(H-\tilde{H})\ket{\phi_{i,j}}\bra{\phi_{i,j}}(H-\tilde{H})]\\
        &= \frac{1}{d(d+1)} \sum_{i=1}^{d+1} t^2 \tr[(H-\tilde{H})^2]
        = t^2 \frac{\tr[(H-\tilde{H})^2]}{d}
        \leq  t^2 \varepsilon_1^2\, ,
    \end{align*}
    where the second step used that $\{\ket{\phi_{i,j}}\}_j$ forms an ONB for every $i$, and where the last step used that $\frac{1}{\sqrt{2^n}}\|H-\Tilde{H}\|_2\leq\varepsilon_1$.
    Now, we bound the final remaining term:
    \begin{align*}
        \eqref{eq:tolerant-proof-intermediate-2}
        &=  2 \frac{1}{d(d+1)} \Re\sum_{i=1}^{d+1}\sum_{j, \ell} \sum_{m\ge 2} \frac{(-1)^m (\mathrm{i} t)^{m+1}}{m!} \tr\left[\ket{\phi_{i,\ell}}\bra{\phi_{i,\ell}}(H-\tilde{H})\ket{\phi_{i,j}}\bra{\phi_{i,j}} H^m\right]\\
        &= 2 \frac{1}{d} \Re\sum_{m\ge 2} \frac{(-1)^m (\mathrm{i} t)^{m+1}}{m!} \tr\left[(H-\tilde{H}) H^{m}\right]\\
        &= 2 \frac{1}{d} \sum_{m\ge 2,\; m \textrm{ odd}} \frac{(-1)^m (\mathrm{i} t)^{m+1}}{m!} \tr\left[(H-\tilde{H})H^{m}\right]\\
        &= 2 \frac{1}{d} \sum_{m\ge 3,\; m \textrm{ even}} \frac{(-1)^{m-1} (\mathrm{i} t)^{m}}{(m-1)!} \tr\left[(H-\tilde{H})H^{m-1}\right]\\
        &\leq 3 t^4\, .
    \end{align*}
   where the second step used that $\{\ket{\phi_{i,j}}\}_j$ forms an ONB for every $i$; the third step used that $H=H^\dagger$ and $\tilde{H}=\tilde{H}^\dagger$, so that $\tr\left[(H-\tilde{H})H^{m}\right]\in\mathbb{R}$ for all $m$; the fourth step used $|\tr[(H-\tilde{H})H^{m-1}]| \le \|H-\tilde{H}\|_1\|H\|^{m-1}_\infty \le\sqrt{d}\|H-\tilde{H}\|_2 \le d$ and thus $\sum_{m\ge 4} \frac{t^m}{(m-1)!}|\tr((H-\tilde{H})H^{m-1})|\le 0.22\cdot dt^4$ since $t\leq 1$.

    Combining the upper bounds on \eqref{eq:tolerant-proof-intermediate-1}, \eqref{eq:tolerant-proof-intermediate-2}, and \eqref{eq:tolerant-proof-intermediate-3}, we have shown that
    \begin{equation}
        \exs{i,j,\ell}{\pr{\ket{\phi_{i,\ell}} \nsim_S \ket{\phi_{i,j}} }}
        \leq  t^2 \varepsilon_1^2 + 3  t^4 + t^4
        \leq t^2 \varepsilon_1^2 + 4 t^4\, .
    \end{equation}
    Following the logic laid out in \Cref{testing many ppts}, we set $t^2=\frac{1}{20}(\eps_2^2-\eps_1^2)$ and 
    the new decision rule is: 
\begin{center}
    At step $s=1,\dots, N$, let $\cE_s=\{ \ket{\phi_{i_s,\ell_s}} \sim_S \ket{\phi_{i_s,j_s}}\}$ and let $\bar{\cE}_s$ be its complement. Answer the null hypothesis iff
	\begin{align*}
		\frac{1}{N}\sum_{s=1}^N \mathbf{1}(\{\bar{\cE}_s\})\le \frac{1}{5}t^2(2\eps_2^2+3\eps_1^2).
	\end{align*}
\end{center}
To control the error probability under the null hypothesis, we 
apply the  Chernoff-Hoeffding inequality  \cite{hoeffding_probability_1963}:
    \begin{align*}
    	\prs{H_0}{	\frac{1}{N}\sum_{s=1}^N \mathbf{1}(\{\bar{\cE}_s\})> \frac{1}{5}t^2(2\eps_2^2+3\eps_1^2)}
     &\le \exp\left(-N\KL\left(\frac{1}{5}t^2(2\eps_2^2+3\eps_1^2) \Big\| \prs{H_0}{	\bar{\cE}}\right)\right)
    	\\&\le \exp\left(-N\KL\left(\frac{1}{5}t^2(2\eps_2^2+3\eps_1^2) \Big\| t^2\eps_1^2+4t^4 \right)\right)
     \\&= \exp\left(-N\KL\left(\frac{1}{5}t^2(2\eps_2^2+3\eps_1^2)  \Big\| \frac{1}{5}t^2(\eps_2^2+4\eps_1^2) \right)\right)
        \\&=\exp\left(-N\cdot \frac{1}{10}\cdot t^2\cdot \frac{(\eps_1^2-\eps_2^2)^2}{2\eps_2^2+3\eps_1^2} \right)
        \\&\le \delta
    \end{align*}
    where we used our choice of $t$ and $\KL(x\|  y )\ge \frac{1}{2x}(x-y)^2  $
    for $x\ge y$ and $N=\left\lceil\frac{30\log(1/\delta) (\eps_2^2+\eps_1^2)}{t^2(\eps_2^2-\eps_1^2)^2}\right\rceil.$
    
    The alternate hypothesis case is similar. 
\begin{align*}
    	\prs{H_1}{	\frac{1}{N}\sum_{s=1}^N \mathbf{1}(\{\bar{\cE}_s\})\le \frac{1}{5}t^2(2\eps_2^2+3\eps_1^2)}
     &\le \exp\left(-N\KL\left(\frac{1}{5}t^2(2\eps_2^2+3\eps_1^2) \Big\| \prs{H_1}{	\bar{\cE}}\right)\right)
    	\\&\le \exp\left(-N\KL\left(\frac{1}{5}t^2(2\eps_2^2+3\eps_1^2) \Big\| t^2\eps_2^2-8t^4 \right)\right)
     \\&= \exp\left(-N\KL\left(\frac{1}{5}t^2(2\eps_2^2+3\eps_1^2)  \Big\| \frac{1}{5}t^2(3\eps_2^2+2\eps_1^2) \right)\right)
        \\&=\exp\left(-N\cdot \frac{1}{10}\cdot t^2\cdot \frac{(\eps_1^2-\eps_2^2)^2}{3\eps_2^2+2\eps_1^2} \right)
        \\&\le \delta
    \end{align*}
    where we used our choice of $t$ and $\KL(x\|  y )\ge \frac{1}{2y}(x-y)^2  $
    for $x\le y$  and $N=\left\lceil\frac{30\log(1/\delta) (\eps_2^2+\eps_1^2)}{t^2(\eps_2^2-\eps_1^2)^2}\right\rceil =\cO\left(\frac{\log(1/\delta) \eps_2^2}{(\eps_2^2-\eps_1^2)^3}\right)=\cO\left(\frac{\log(1/\delta) }{(\eps_2-\eps_1)^3 \eps_2}\right)$. 
    This leads to a total evolution time $Nt = \mathcal{O}\left(\frac{\log(1/\delta) \sqrt{(\eps_2-\eps_1)(\eps_2+\eps_1)}}{(\eps_2-\eps_1)^3 \eps_2}\right) = \mathcal{O}\left(\frac{\log(1/\delta) }{(\eps_2-\eps_1)^{2.5} \eps_2^{0.5}}\right)$.
\end{proof}

\section{Deferred proofs}\label{appendix:deferred-proofs}

\edit{The following Lemma uses the notation from the proof of Theorem~\ref{thm:lower-Schatten-indep}}{We now prove Lemma~\ref{lemma:gen-le-cam-main-text}, using the notation from the proof of Theorem~\ref{thm:lower-Schatten-indep}}.

\begin{lemma}[Generalized Le Cam \ins{-- Restatement of Lemma~\ref{lemma:gen-le-cam-main-text}}] \label{lem:gen-le-cam}
      Let $n\geq \Omega(1)$.
      Let $k\le \cO\left(\frac{n}{\log(n)}\right)$.
      \edit{We have for $\alpha\le \frac{1}{10N}:$}{For any $\alpha\le \frac{1}{10N}$, if there is an incoherent algorithm that correctly distinguishes between $P$ and $Q_V$ with success probability at least $2/3$, then}
     \begin{align*}
         \exs{V\sim \Haar(d)}{\TV(P, Q_V^\alpha)}\ge \frac{1}{18}.
     \end{align*}
 \end{lemma}
We use Le Cam's method \cite{lecam1973convergence}:

\begin{proof}[Proof of \Cref{lem:gen-le-cam}]
 Let $x_k=\lambda^{(k)}_{i_k}\bra{\phi^{(k)}_{i_k}}\rho_k \ket{\phi^{(k)}_{i_k}}  $ and  $y_k=\lambda^{(k)}_{i_k}\bra{\phi^{(k)}_{i_k}}U_{v,t_k}\rho_k U_{v,t_k}^\dagger\ket{\phi^{(k)}_{i_k}}$. Note that since $\tr(\rho_k)=1$ we have $\sum_{i_k } x_k= \sum_{i_k} y_k=1$.

Let $\cE$ be the event that  $\eta(V\proj{0}V^\dagger-\dI/d)  $ is not $(\eta/4)$-close to any $k$-local Hamiltonian. On the one hand, by the correctness of the algorithm and the data processing inequality, we have that:
\begin{align*}
    \TV(P, \exs{V\sim \Haar(d)|\cE}{Q_{V}})\ge \TV(\Ber(1/3)\| \Ber(2/3))=\frac{1}{3}. 
\end{align*}
On the other hand, since we have $\pr{\cE}\ge 1-\exp(-\Omega(d))$ by \Cref{lem:concentration-operator-norm}, we have by the triangle inequality:
{\small
\begin{align*}
   &\TV( \exs{V\sim \Haar(d)}{Q_V},  \exs{V\sim \Haar(d)|\cE}{Q_{V} })
   \\&=  \frac{1}{2}\sum_{i}\left|\exs{V\sim \Haar(d)}{Q_V(i)} - \exs{V\sim \Haar(d)}{Q_V(i)\frac{\mathbf{1}(\{\cE\})}{\pr{\cE}}} \right|  
    \\&=  \frac{1}{2}\frac{1}{\pr{\cE}}\sum_{i}\left|\pr{\cE}\exs{V\sim \Haar(d)}{Q_V(i)} - \exs{V\sim \Haar(d)}{Q_V(i)\mathbf{1}(\{\cE\})} \right| 
      \\&\le   \frac{1}{2}\frac{1}{\pr{\cE}}\sum_{i}\left|\pr{\cE^c}\exs{V\sim \Haar(d)}{Q_V(i)}\mathbf{1}(\{\cE\})\right|+  \frac{1}{2}\frac{1}{\pr{\cE}}\sum_{i}\left|\pr{\cE} \exs{V\sim \Haar(d)}{Q_V(i)\mathbf{1}(\{\cE^c\})} \right| 
      \\&\le  \frac{1}{2}\left(\frac{\pr{\cE^c}\pr{\cE}}{\pr{\cE}}+ \pr{\cE^c}\right)
      \\&\le \exp(-\Omega(d)). 
\end{align*}}
So by the triangle inequality
{\small
\begin{align*}
\exs{V\sim \Haar(d)}{  \TV(P, Q_{V})}  &\ge  \TV(P, \exs{V\sim \Haar(d)}{Q_{V}})  \\&\ge \TV(P, \exs{V\sim \Haar(d)|\cE}{Q_{V}})-\TV( \exs{V\sim \Haar(d)}{Q_V},  \exs{V\sim \Haar(d)|\cE}{Q_{V} })
  \\&\ge \frac{1}{3}-\exp(-\Omega(d))\ge \frac{2}{9}
\end{align*}}
for $d$ (or, equivalently, $n$) larger than some constant. 

The expected $\TV$ distance between $P$ and $ Q_V^\alpha$  can be expressed as follows: 
{\small
\begin{align*}
    &2\exs{V\sim \Haar(d)}{ \TV(P,  Q_V^\alpha)}
    \\&= \exs{V\sim \Haar(d)}{\sum_{i_1,\dots,i_N} \left| \prod_{k=1}^N \lambda^{(k)}_{i_k}\bra{\phi^{(k)}_{i_k}}(\alpha\rho_k+(1-\alpha) U_{v,t_k}\rho_k U_{v,t_k}^\dagger)\ket{\phi^{(k)}_{i_k}} - \prod_{k=1}^N \lambda^{(k)}_{i_k}\bra{\phi^{(k)}_{i_k}}\rho_k\ket{\phi^{(k)}_{i_k}}\right|}
    \\&= \exs{V\sim \Haar(d)}{ \sum_{i_1,\dots,i_N} \left| \prod_{k=1}^N (\alpha x_k+ (1-\alpha)y_k) - \prod_{k=1}^N (\alpha x_k+ (1-\alpha)x_k) \right|}
    \\&=\exs{V\sim \Haar(d)}{  \sum_{i_1,\dots,i_N} \left| \sum_{S\subset [N]} \alpha^{|S|}(1-\alpha)^{N-|S|}   \prod_{k\in S}  x_k\left(\prod_{k\notin S}y_k  - \prod_{k\notin S} x_k\right)\right|}
    \\&\ge \mathbb{E}_{V\sim \Haar(d)}\Bigg[\sum_{i_1,\dots,i_N} \left| \sum_{S= \emptyset} \alpha^{|S|}(1-\alpha)^{N-|S|}   \prod_{k\in S}  x_k\left(\prod_{k\notin S}y_k  - \prod_{k\notin S} x_k\right)\right|
    \\&\quad -\sum_{i_1,\dots,i_N} \left| \sum_{\emptyset \neq S\subset [N]} \alpha^{|S|}(1-\alpha)^{N-|S|}   \prod_{k\in S}  x_k\left(\prod_{k\notin S}y_k  - \prod_{k\notin S} x_k\right)\right|\Bigg]
\end{align*}}
When $S=\emptyset$, we recover the $\TV$ distance between $P$ and $Q$ up to a factor as follows:
\begin{align*}
    &\exs{V\sim \Haar(d)}{ \sum_{i_1,\dots,i_N} \left| \sum_{S= \emptyset} \alpha^{|S|}(1-\alpha)^{N-|S|}   \prod_{k\in S}  x_k\left(\prod_{k\notin S}y_k  - \prod_{k\notin S} x_k\right)\right|}
    \\&=\exs{V\sim \Haar(d)}{ \sum_{i_1,\dots,i_N} \left| (1-\alpha)^{N}  \left(\prod_{k=1}^N y_k  - \prod_{k=1}^N x_k\right)\right|}
    \\&=\exs{V\sim \Haar(d)}{ (1-\alpha)^{N}  \sum_{i_1,\dots,i_N} \left|  \prod_{k=1}^N \lambda^{(k)}_{i_k}\bra{\phi^{(k)}_{i_k}}U_{v,t_k}\rho_k U_{v,t_k}^\dagger)\ket{\phi^{(k)}_{i_k}}  - \prod_{k=1}^N \lambda^{(k)}_{i_k}\bra{\phi^{(k)}_{i_k}}\rho_k \ket{\phi^{(k)}_{i_k}}\right|}
    \\&= 2(1-\alpha)^N\exs{V\sim \Haar(d)}{ \TV(P, Q_V)}\ge \frac{4(1-\alpha)^N}{9}.
\end{align*}
When $S\neq\emptyset$, we can use the triangle inequality:
\begin{align*}
    &\sum_{i_1,\dots,i_N} \left| \sum_{\emptyset \neq S\subset [N]} \alpha^{|S|}(1-\alpha)^{N-|S|}   \prod_{k\in S}  x_k\left(\prod_{k\notin S}y_k  - \prod_{k\notin S} x_k\right)\right|
    \\&\le  \sum_{i_1,\dots,i_N}  \sum_{\emptyset \neq S\subset [N]} \alpha^{|S|}(1-\alpha)^{N-|S|}   \prod_{k\in S}  x_k\left(\prod_{k\notin S}y_k  +\prod_{k\notin S} x_k\right)
    \\&=  \sum_{\emptyset \neq S\subset [N]} \alpha^{|S|}(1-\alpha)^{N-|S|}  \sum_{i_1,\dots,i_N}  \prod_{k\in S}  x_k\left(\prod_{k\notin S}y_k  +\prod_{k\notin S} x_k\right)
    \\&=  \sum_{\emptyset \neq S\subset [N]} \alpha^{|S|}(1-\alpha)^{N-|S|}    \prod_{k\in S} \sum_{i_k} x_k\left(\prod_{k\notin S}\sum_{i_k} y_k  +\prod_{k\notin S} \sum_{i_k}  x_k\right)
    \\&= 2(1-(1-\alpha)^N).
\end{align*}
Therefore:
\begin{align*}
    2\exs{V\sim \Haar(d)}{ \TV(P,  Q_V^\alpha)}&\ge  \frac{4(1-\alpha)^N}{9} - 2(1-(1-\alpha)^N).
\end{align*}
Let $\alpha= \frac{c}{N}$ where $0 \leq c\leq 1/10$ is a small constant. We have:
 \begin{align*}
     \frac{4}{9}(1-\alpha)^N-2(1-(1-\alpha)^N)=   \frac{22}{9}\left( 1-\frac{c}{N}\right)^N-2\ge \frac{22}{9}\left( 1-c\right)-2= \frac{4-22c}{9}\geq \frac{1}{9}. 
 \end{align*}
 Therefore 
 \begin{align*}
     \exs{V\sim \Haar(d)}{ \TV(P,  Q_V^\alpha)}&\ge  \frac{1}{18}.
 \end{align*}
 \end{proof}

\begin{lemma}\label{lem:elem-calculations}
    For all $x \in [\frac{1}{10N}, \infty)$,
\begin{align*}
    (-\log)(x)\le -(x-1)+2\log(10N)(x-1)^2\, .
\end{align*}
\end{lemma}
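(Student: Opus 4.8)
The statement is an elementary one-variable inequality, so the natural approach is to reduce it to a standard fact about the logarithm and then handle the interval of validity by brute force. Write $u = x - 1$, so that $x = 1+u$ ranges over $[\tfrac{1}{10N}-1, \infty)$, and the claim becomes
\begin{equation*}
    -\log(1+u) \le -u + 2\log(10N)\, u^2 .
\end{equation*}
The plan is to split the range of $u$ (equivalently of $x$) into two regions: a region where $u$ is bounded (say $x \ge 1/2$, i.e.\ $u \ge -1/2$), where one can use a clean Taylor-type bound on $-\log(1+u)$, and the complementary region $\tfrac{1}{10N} \le x < 1/2$, where the right-hand side is large because of the $\log(10N)$ factor multiplying a quadratic that is bounded below away from $0$.

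\textbf{Step 1 (bounded region).} For $x \ge 1/2$, equivalently $u \ge -1/2$, I would invoke the standard inequality $\log(1+u) \ge u - u^2$ valid for all $u \ge -1/2$ (this follows, e.g., from the integral representation or from the fact that $g(u) = \log(1+u) - u + u^2$ satisfies $g(0)=0$ and $g'(u) = \tfrac{1}{1+u} - 1 + 2u = \tfrac{u(2u+1)}{1+u} \ge 0$ for $u \ge -1/2$). Multiplying by $-1$ gives $-\log(1+u) \le -u + u^2 \le -u + 2\log(10N) u^2$, where the last step uses $2\log(10N) \ge 1$, which holds since $N \ge 1$ (indeed $2\log 10 > 1$). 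This settles the region $x \ge 1/2$.

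\textbf{Step 2 (small region).} For $\tfrac{1}{10N} \le x < 1/2$, I would argue that the quadratic term alone already dominates. Here $u = x - 1 \in [\tfrac{1}{10N} - 1, -1/2)$, so $|u| > 1/2$ and hence $u^2 > 1/4$, giving $2\log(10N) u^2 > \tfrac{1}{2}\log(10N)$. On the other hand, $-\log(x) \le \log(10N)$ (since $x \ge \tfrac{1}{10N}$) and $-(x-1) = 1-x \le 1$. So it suffices to check $\log(10N) \le \tfrac12\log(10N) - 1 + \big(\text{slack}\big)$; more carefully, I want $-\log(x) + (x-1) \le 2\log(10N) u^2$, and since $-\log(x) + (x-1) \le \log(10N)$ while $2\log(10N) u^2 > \tfrac12 \log(10N)$, this is \emph{not} immediately enough — so I would instead use the sharper bound $u^2 = (1-x)^2 \ge (1 - 1/2)^2 = 1/4$ together with a slightly more careful accounting, or simply note $2\log(10N)u^2 \ge 2\log(10N)\cdot\tfrac14 = \tfrac12\log(10N)$ and separately that on this interval $-\log x \le \log(10N)$ but also $-\log x + (x-1)$ can be bounded using $x \ge \tfrac{1}{10N}$ and monotonicity; if the constant is too tight one enlarges the first region's cutoff or uses that for $N\ge 1$ one actually has $u^2 \ge (1-x)^2$ with $x< 1/2$ giving room. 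The cleanest fix: choose the split point $x_0$ so that on $[x_0,\infty)$ Step 1 works and on $[\tfrac1{10N},x_0]$ one has $2\log(10N)(1-x)^2 \ge 2\log(10N)(1-x_0)^2$, then pick $x_0$ close enough to $1$ (e.g.\ depending only on an absolute constant) that $2(1-x_0)^2 \ge 1$... but that forces $x_0 \le 1 - 1/\sqrt2 < 1/2$, so in fact the single cutoff $x_0 = 1 - 1/\sqrt{2}$ works for \emph{both} steps simultaneously, and Step 1's inequality $\log(1+u) \ge u - u^2$ holds there too since $1 - 1/\sqrt2 > 1/2$.

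\textbf{Main obstacle.} There is no deep obstacle; the only delicate point is getting the constants to line up in the small-$x$ region so that the single factor $2\log(10N)$ (rather than something larger) suffices. I expect to resolve this by choosing the cutoff between the two regions to be an absolute constant $x_0 \in (1/2, 1)$ with $2(1-x_0)^2 \ge 1$ — wait, that is impossible for $x_0 > 1/2$ — so more precisely the cutoff should be $x_0 = 1 - 1/\sqrt{2} \approx 0.293$, and then verify: for $x \ge x_0$ use $-\log x \le -(x-1) + (x-1)^2 \le -(x-1) + 2\log(10N)(x-1)^2$ (valid since $x_0 > -1/2$... trivially, and $2\log(10N)\ge 1$); for $\tfrac{1}{10N} \le x \le x_0$, use $(1-x)^2 \ge (1-x_0)^2 = 1/2$ so $2\log(10N)(x-1)^2 \ge \log(10N) \ge -\log x \ge -\log x - (x-1) \cdot(\text{wrong sign})$... since $-(x-1) = 1-x > 0$ here, I need $-\log x + (1 - x) \le 2\log(10N)(1-x)^2$, and $-\log x \le \log(10N)$, $1 - x \le 1 \le \log(10N)$ (as $10N \ge 10 > e$), so the left side is $\le 2\log(10N) \le 2\log(10N)(1-x)^2 / (1/2) \cdot (1/2)$... the bound $(1-x)^2 \ge 1/2$ gives exactly $2\log(10N)(1-x)^2 \ge \log(10N) \ge \tfrac12(-\log x + 1-x)$ which is off by a factor $2$. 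The honest resolution is to take the region split at a point even closer to $1$ only for the \emph{comparison of the two positive terms}, but since $(1-x)^2$ is maximized as $x \to \tfrac{1}{10N}$ where it is nearly $1$, one should just note $-\log x + (1-x) \le \log(10N) + 1 \le 2\log(10N)$ and, for this to be $\le 2\log(10N)(1-x)^2$, it suffices that $(1-x)^2 \ge 1$, i.e.\ $x \le 0$, which never happens — so ultimately the correct split is not at a constant but one verifies directly via calculus that $\varphi(x) := -\log x + (x-1) - 2\log(10N)(x-1)^2$ satisfies $\varphi(x) \le 0$ on all of $[\tfrac1{10N},\infty)$ by computing $\varphi(1) = 0$, $\varphi'(x) = -1/x + 1 - 4\log(10N)(x-1)$, checking $\varphi'(1) = 0$, and showing $\varphi$ is concave near $1$ and controlling the endpoints $x = \tfrac1{10N}$ (where $\varphi(\tfrac1{10N}) = \log(10N) + \tfrac1{10N} - 1 - 2\log(10N)(1-\tfrac1{10N})^2 \le \log(10N) - 2\log(10N)(1 - \tfrac15)^2 + 0 = \log(10N)(1 - 2\cdot\tfrac{16}{25}) = \log(10N)(1 - \tfrac{32}{25}) < 0$ for $N \ge 1$) and $x \to \infty$ (where the $-2\log(10N)(x-1)^2$ term dominates). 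Assembling these — $\varphi \le 0$ at both endpoints, a single interior critical point at $x=1$ with $\varphi(1) = 0$ which one checks is a local max — yields $\varphi \le 0$ throughout, which is the claim.
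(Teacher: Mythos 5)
Your final argument (the only part you commit to, after correctly abandoning the region-split attempts) is essentially the same route as the paper's: reduce to studying the sign of the auxiliary function $\varphi(x)=-\log x+(x-1)-2\log(10N)(x-1)^2$ (equivalently $f=-\varphi$ in the paper) via its derivative. However, your closing step contains a genuine gap: you assert that there is ``a single interior critical point at $x=1$,'' but this is false. Factoring
\begin{align*}
\varphi'(x) = -\frac{1}{x}+1-4\log(10N)(x-1) = (x-1)\left(\frac{1}{x}-4\log(10N)\right)
\end{align*}
shows there is a second critical point at $x=c:=\frac{1}{4\log(10N)}$, and for $N\ge 1$ one has $c>\frac{1}{10N}$ (since $10N>4\log(10N)$), so $c$ lies strictly inside your interval. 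Your deduction ``$\varphi\le 0$ at both endpoints plus a unique interior critical point which is a local max at height $0$ implies $\varphi\le 0$ throughout'' relies on that uniqueness and is therefore not established as written.

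The gap is repairable: the sign of $\varphi'$ is that of $(x-1)(c-x)$ (for $x>0$), so $\varphi$ is decreasing on $(0,c)$, increasing on $(c,1)$, decreasing on $(1,\infty)$. Thus $x=c$ is a local \emph{minimum}, and the maximum of $\varphi$ on $[\tfrac{1}{10N},1]$ is attained at an endpoint; you already checked $\varphi(\tfrac{1}{10N})<0$ and $\varphi(1)=0$. On $[1,\infty)$, $\varphi$ decreases from $0$. So the conclusion $\varphi\le 0$ holds, but you need to explicitly locate both critical points and classify them to make the argument sound. The paper's proof does exactly this, via the factorization of $f'=-\varphi'$, showing $f$ has a single zero $c'<\tfrac{1}{10N}$ and $f\ge 0$ on $[c',\infty)$, then verifying $f(\tfrac{1}{10N})>0$.
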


\begin{proof}
   Let $f(x)= \log(x) -(x-1) + 2\log(10N)(x-1)^2$. We have for $c= \frac{1}{4\log(10N)}<1$: $$f'(x)= \frac{1}{x}-1+4\log(10N)(x-1)= \frac{(1-x)}{cx}\left(c-x  \right)\, ,$$  
which is positive for $x \in (0,c) \cup (1, \infty)$ and negative for $c < x< 1$, $\lim_{x\rightarrow 0^+} f(x)=-\infty$ and $f(1)=0$. Hence, there is a $0<c'<c$ such that: $$x\ge c' \Longleftrightarrow f(x) \ge 0.$$ But we have $$f\left(\frac{1}{10N}\right)= -\log(10N)-\frac{1}{10N}+1 +2\log(10N)\left(1-\frac{1}{10N}\right)^2\ge 1-\frac{1}{10N}  >0 \, .$$  Thus, we can take $c'<\frac{1}{10N}$ and for all $x\ge \frac{1}{10N}>c'$ we have $f(x)\ge 0$.
\end{proof}

\section{Weingarten Calculus} \label{sec:weingarten facts}
As we use a random Hamiltonian constructed from sampling a $\Haar$-random unitary matrix in our lower bound proofs, we need some facts from Weingarten calculus in order to compute the corresponding expectation values with respect to the Haar measure.  
If $\pi$ is a permutation of $[n]$, let $\W(\pi,d)$ denote the Weingarten function of dimension $d$. The following lemma is useful for our results. 
\begin{lemma}[{\cite{gu2013moments}}]\label{lem:Wg} Let $U$ be a $\Haar$-distributed unitary $(d\times d)$-matrix and let $\{A_i,B_i\}_{i=1}^n$ be a sequence of complex $(d\times d)$-matrices. We have the following formula for the expectation value:
\begin{align*}
&\ex{\tr(UB_1U^*A_1U\dots UB_nU^*A_n)}
\\&=\sum_{\alpha,\beta \in \fS_n}\W(\beta\alpha^{-1},d)\tr_{\beta^{-1}}(B_1,\dots,B_n)\tr_{\alpha\gamma_n}(A_1,\dots,A_n),
\end{align*}where $\gamma_n=(12\dots n)$ and $\tr_{\sigma}(M_1,\dots,M_n)=\Pi_j \tr(\Pi_{i\in C_j} M_i)$ for $\sigma=\Pi_j C_j $ and $C_j$ are cycles. 

\end{lemma}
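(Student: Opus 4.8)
The statement to prove is the Weingarten-calculus formula in \Cref{lem:Wg}. Since this is a classical result in random matrix theory (due to Collins--{\'S}niady and Gu, as cited), the plan is to present a clean self-contained derivation rather than invoke it as a black box.

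\textbf{Setup and strategy.} The plan is to first reduce the trace expression to a sum over products of matrix entries, then use the fundamental Weingarten integration formula for monomials in the entries of a Haar-random unitary, and finally reorganize the resulting combinatorial sum using the cycle structure of permutations. Concretely, writing everything in components, we have
\begin{align*}
\tr(UB_1U^*A_1U\cdots UB_nU^*A_n)
= \sum U_{i_1 j_1}\overline{U_{k_1 l_1}}\cdots U_{i_n j_n}\overline{U_{k_n l_n}}\cdot (B_1)_{j_1 k_1}(A_1)_{l_1 i_2}\cdots(B_n)_{j_n k_n}(A_n)_{l_n i_1},
\end{align*}
where the outer sum ranges over all the index variables. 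The key input is the standard fact that for $U$ Haar-distributed,
\begin{align*}
\ex{U_{i_1 j_1}\cdots U_{i_n j_n}\overline{U_{k_1 l_1}}\cdots\overline{U_{k_n l_n}}}
= \sum_{\alpha,\beta\in\fS_n} \W(\beta\alpha^{-1},d)\prod_{m=1}^n \delta_{i_m, k_{\alpha(m)}}\delta_{j_m, l_{\beta(m)}},
\end{align*}
which is the definition/characterization of the Weingarten function $\W$. I would either cite this as the defining property of $\W$ (consistent with how the paper already uses $\W$ and its bounds from \cite{collins2006integration}) or sketch that it follows from Schur--Weyl duality applied to the commutant of the diagonal $U^{\otimes n}\otimes\bar U^{\otimes n}$ action.

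\textbf{Main combinatorial step.} After substituting the Weingarten formula, the expectation becomes a sum over $\alpha,\beta\in\fS_n$ of $\W(\beta\alpha^{-1},d)$ times a big sum of products of entries of the $A_i$ and $B_i$ with index identifications dictated by $\alpha$ and $\beta$. The task is then to show that the sum over the $B$-entries factors as $\tr_{\beta^{-1}}(B_1,\dots,B_n)$ and the sum over the $A$-entries (together with the cyclic ``wiring'' $l_m i_{m+1}$ coming from the trace structure, which contributes the cycle $\gamma_n=(12\cdots n)$) factors as $\tr_{\alpha\gamma_n}(A_1,\dots,A_n)$. This is a bookkeeping exercise: one tracks which index of which matrix is summed against which, observes that $(B_m)_{j_m k_m}$ with the constraint $j_m = l_{\beta(m)}$ and the appearance of $k_m$ in an $A$-term glues the $B$'s into cycles governed by $\beta^{-1}$, while the $A$'s get glued by the composition of $\alpha$ with the cyclic shift $\gamma_n$ inherent in the original trace. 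I would carefully do the $n=1$ and $n=2$ cases explicitly to make the pattern transparent, then state the general identification.

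\textbf{Anticipated main obstacle.} The genuinely delicate part is getting the permutation bookkeeping exactly right --- in particular, pinning down whether the $A$-cycles are governed by $\alpha\gamma_n$ versus $\gamma_n\alpha$ or $\alpha^{-1}\gamma_n$, and whether the $B$-cycles use $\beta$ or $\beta^{-1}$. These depend on conventions for how $\gamma_n$ acts and how one reads off cycles of a permutation as a product of traces; a sign error or inversion here is the classic pitfall. The plan is to fix conventions once at the start (matching \Cref{lem:Wg} as stated, with $\gamma_n=(12\dots n)$ and $\tr_\sigma$ defined via the cycles of $\sigma$), verify them against the $n=1$ sanity check $\ex{\tr(UBU^*A)}=\frac{1}{d}\tr(B)\tr(A)$ and the $n=2$ case against the known formula $\ex{(U^\dagger X U)^{\otimes 2}}$ already used elsewhere in the paper, and only then assert the general statement. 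Everything else --- the reduction to entries, the substitution, the factorization --- is routine once the conventions are locked down.
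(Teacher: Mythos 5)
The paper does not prove this lemma at all: it is imported verbatim from the cited reference \cite{gu2013moments} (with the Weingarten values taken from \cite{collins2006integration}), so there is no in-paper argument to compare against. Your sketch is the standard derivation of such trace formulas and is sound in outline: expand the trace in matrix entries, apply the elementary Collins--\'Sniady moment formula $\ex{U_{i_1j_1}\cdots U_{i_nj_n}\overline{U_{k_1l_1}}\cdots\overline{U_{k_nl_n}}}=\sum_{\alpha,\beta}\W(\beta\alpha^{-1},d)\prod_m\delta_{i_m,k_{\alpha(m)}}\delta_{j_m,l_{\beta(m)}}$, and then resum the delta-constrained products of $A$- and $B$-entries into $\tr_{\beta^{-1}}$ and $\tr_{\alpha\gamma_n}$. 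What this buys over the paper's citation is a self-contained proof from the defining property of $\W$; the cost is that the entire content of the lemma sits in the step you label ``routine bookkeeping.''

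Two cautions on that step. First, verifying $n=1$ and $n=2$ fixes conventions but does not prove the general identification; a complete argument must show, for arbitrary $n$, that the constraints coming from $\alpha$ and $\beta$ glue the $B_m$ into products of traces along the cycles of $\beta^{-1}$ and the $A_m$ (together with the cyclic wiring $l_m\to i_{m+1}$, which is exactly the extra factor $\gamma_n$) into products of traces along the cycles of $\alpha\gamma_n$. This is a short but genuine argument --- e.g.\ follow one index around: starting from $(A_m)_{l_m i_{m+1}}$, the constraint $i_{m+1}=k_{\alpha(m+1)}$ feeds into $(B_{\alpha(m+1)})$, whose other index is tied by $\beta$ back into an $A$-factor, and iterating shows the $A$-trace closes precisely when the orbit of $\alpha\gamma_n$ closes --- and it should be written out rather than asserted. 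Second, watch the adjoint convention: with $U^*$ the conjugate transpose, $(U^*)_{k_ml_m}=\overline{U_{l_mk_m}}$, whereas your component expansion pairs $\overline{U_{k_ml_m}}$ with $(B_m)_{j_mk_m}(A_m)_{l_mi_{m+1}}$; this transposition either must be absorbed by relabelling the dummy indices consistently or it will silently swap the roles of $\alpha$ and $\beta$ (equivalently produce $\alpha^{-1}\gamma_n$ or $\beta$ in place of the stated $\alpha\gamma_n$ and $\beta^{-1}$). With those two points nailed down, your route is a valid replacement for the citation.
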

We will also need some values of Weingarten function:
\begin{lemma}[\cite{collins2006integration}]\label{lem:wg2}
The function $\W(\pi,d)$ has the following values:
\begin{itemize}
    \item $\W((1),d)=\frac{1}{d}$,
    \item $\W((12),d)=\frac{-1}{d(d^2-1)}$,
    \item $\W((1)(2),d)=\frac{1}{d^2-1}$,
      \item $\W((123),d)=\frac{2}{d(d^2-1)(d^2-4)}$,
    \item $\W((12)(3),d)=\frac{-1}{(d^2-1)(d^2-4)}$,
    \item $\W((1)(2)(3),d)=\frac{d^2-2}{d(d^2-1)(d^2-4)}$,
    \item $\W((1234),d)= -\frac{5}{d^7-14d^5+49d^3-36d}$,
     \item $\W((12)(34),d)= \frac{d^2+6}{d^8-14d^6+49d^4-36d^2}$,
       \item $\W((123)(4),d)= \frac{2d^2-3}{d^8-14d^6+49d^4-36d^2}$,
       \item $\W((12)(3)(4),d)= -\frac{1}{d^5-10d^3+9d}$,
    \item $\W((1)(2)(3)(4),d)= \frac{d^4-8d^2+6}{d^8-14d^6+49d^4-36d^2}$.
\end{itemize}
\end{lemma}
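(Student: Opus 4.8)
The plan is to derive every value from the standard character-sum formula for the unitary Weingarten function (see \cite{collins2006integration}). Recall that for $d\ge n$ and $\pi\in\fS_n$,
\begin{equation*}
  \W(\pi,d)=\frac{1}{(n!)^2}\sum_{\lambda\vdash n}\frac{f_\lambda^{\,2}\,\chi_\lambda(\pi)}{D_\lambda(d)},
\end{equation*}
where the sum is over partitions $\lambda$ of $n$, $f_\lambda$ is the number of standard Young tableaux of shape $\lambda$ (the dimension of the associated $\fS_n$-irreducible), $\chi_\lambda$ is its character, and $D_\lambda(d)=\prod_{(i,j)\in\lambda}\frac{d+j-i}{h_\lambda(i,j)}$ is the hook-content polynomial giving the dimension of the $\mathrm{GL}_d$-irreducible of highest weight $\lambda$. (Equivalently, one may use that $\W(\,\cdot\,,d)$ is the entrywise inverse of the Gram matrix $G_{\sigma\tau}=d^{\,c(\sigma^{-1}\tau)}$ on $\fS_n$, where $c(\cdot)$ counts cycles; the two routes agree.) So the task reduces to plugging in the character tables and the polynomials $D_\lambda(d)$ for $n\le 4$ and simplifying.

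First I would dispatch the small cases. For $n=1$, $\W((1),d)=1/d$ is immediate. For $n=2$ the Gram matrix is $\left(\begin{smallmatrix} d^2 & d\\ d & d^2\end{smallmatrix}\right)$ with determinant $d^2(d^2-1)$, whose inverse gives $\W((1)(2),d)=\frac{1}{d^2-1}$ and $\W((12),d)=\frac{-1}{d(d^2-1)}$. For $n=3$ I would feed in the character table of $\fS_3$ (trivial: $\chi\equiv1$; sign: $+1,-1,+1$ on the classes $1^3,\,2\,1,\,3$; standard: $2,0,-1$), the dimensions $f_{(3)}=f_{(1^3)}=1$, $f_{(2,1)}=2$, and the polynomials $D_{(3)}(d)=\tfrac{d(d+1)(d+2)}{6}$, $D_{(2,1)}(d)=\tfrac{d(d^2-1)}{3}$, $D_{(1^3)}(d)=\tfrac{d(d-1)(d-2)}{6}$. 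Placing the three resulting fractions over the common denominator $d(d^2-1)(d^2-4)$ collapses the numerators to the claimed values; for instance, for $\pi=e$ the numerator is $6(d-1)(d-2)+24(d^2-4)+6(d+1)(d+2)=36(d^2-2)$, giving $\W((1)(2)(3),d)=\frac{d^2-2}{d(d^2-1)(d^2-4)}$, and the transposition and $3$-cycle entries come out analogously.

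For $n=4$ the same recipe uses the character table of $\fS_4$ on the five classes $1^4,\,2\,1^2,\,2^2,\,3\,1,\,4$, the dimensions $f_{(4)}=f_{(1^4)}=1$, $f_{(3,1)}=f_{(2,1^2)}=3$, $f_{(2,2)}=2$, and the hook-content polynomials $D_{(4)}=\tfrac{1}{24}d(d+1)(d+2)(d+3)$, $D_{(3,1)}=\tfrac{1}{8}(d-1)d(d+1)(d+2)$, $D_{(2,2)}=\tfrac{1}{12}d^2(d^2-1)$, $D_{(2,1^2)}=\tfrac{1}{8}(d-2)(d-1)d(d+1)$, $D_{(1^4)}=\tfrac{1}{24}d(d-1)(d-2)(d-3)$. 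Their least common denominator is $d^2(d^2-1)(d^2-4)(d^2-9)=d^8-14d^6+49d^4-36d^2$, which is exactly the denominator appearing in the statement, and combining the five fractions yields each listed numerator; as a representative check, for $\pi=e$ one obtains numerator $48d^2(d^2+11)+432d^2(d^2-9)+96(d^4-13d^2+36)=576(d^4-8d^2+6)$, so $\W((1)(2)(3)(4),d)=\frac{d^4-8d^2+6}{d^8-14d^6+49d^4-36d^2}$.

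The only real obstacle is bookkeeping: correctly recording the five Schur dimension polynomials, the $\fS_4$ character values, and the signs, and then carrying out the common-denominator simplification for each of the $3+5$ entries. One conceptual point worth flagging in the write-up is that the displayed formula expresses the Weingarten function as a rational function of $d$ valid in the regime $d\ge n$ (so that every $D_\lambda(d)\neq0$ and all irreducibles occur in the relevant Schur--Weyl decomposition); this is the analytic continuation recorded in the statement, and it is precisely the regime in which these values are invoked elsewhere in the paper.
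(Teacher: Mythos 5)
The paper offers no proof of this lemma: it is a table of known values imported verbatim from the cited reference, so there is nothing in the text to compare against. Your derivation via the character expansion $\W(\pi,d)=\frac{1}{(n!)^2}\sum_{\lambda\vdash n} f_\lambda^2\,\chi_\lambda(\pi)/D_\lambda(d)$ (equivalently, inverting the Gram matrix $G_{\sigma\tau}=d^{\,c(\sigma^{-1}\tau)}$, which is legitimate in the regime $d\ge n$ relevant here) is exactly the standard route behind those tabulated values, and your ingredients check out: the $\fS_3$ and $\fS_4$ character tables, the dimensions $f_\lambda$, and the hook-content polynomials $D_\lambda(d)$ are all recorded correctly, the common denominators $d(d^2-1)(d^2-4)$ and $d^2(d^2-1)(d^2-4)(d^2-9)=d^8-14d^6+49d^4-36d^2$ are right, and your two representative simplifications (the identity entries for $n=3$ and $n=4$) are numerically correct; spot-checking a further entry, e.g.\ $\W((12)(3)(4),d)$, by the same method indeed yields $-1/\bigl(d(d^2-1)(d^2-9)\bigr)=-1/(d^5-10d^3+9d)$. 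The only thing a complete write-up still owes is the analogous arithmetic for the remaining entries; a convenient global sanity check that avoids redoing each character sum is the convolution-inverse identity $\sum_{\tau\in\fS_n}\W(\sigma\tau^{-1},d)\,d^{\,c(\tau)}=\delta_{\sigma,e}$, which the listed values must (and do) satisfy.
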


\end{document}